\definecolor{darkgreen}{rgb}{0,0.45,0}
\def\@cite#1#2{[{#1\if@tempswa ,~#2\fi}]}
\DeclareMathAlphabet{\mathbf}{OT1}{cmr}{b}{n}
\DeclareFontFamily{U}{matha}{\hyphenchar\font45}
\DeclareFontShape{U}{matha}{m}{n}{
      <5> <6> <7> <8> <9> <10> gen * matha
      <10.95> matha10 <12> <14.4> <17.28> <20.74> <24.88> matha12
      }{}
\DeclareSymbolFont{matha}{U}{matha}{m}{n}
\DeclareMathSymbol{\gg}            {2}{matha}{"22}
\def\matrixobject@{%
  \edef \next@{={\DirectionfromtheDirection@ }}%
  \expandafter \toks@ \next@ \plainxy@
  \let\xy@@ix@=\xyq@@toksix@
  \xyFN@ \OBJECT@}
\let\xy@entry@@norm=\entry@@norm
\def\entry@@norm@patched{%
  \let\object@=\matrixobject@
  \xy@entry@@norm }
\newcommand{\twocong}[2][0.5]{\ar@{}[#2] \save ?(#1)*{\cong}\restore}
\newcommand{\twoeq}[2][0.5]{\ar@{}[#2] \save ?(#1)*{=}\restore}
\newcommand{\rtwocell}[3][0.5]{\ar@{}[#2] \ar@{=>}?(#1)+/l 0.2cm/;?(#1)+/r 0.2cm/^{#3}}
\newcommand{\ltwocell}[3][0.5]{\ar@{}[#2] \ar@{=>}?(#1)+/r 0.2cm/;?(#1)+/l 0.2cm/^{#3}}
\newcommand{\ltwocello}[3][0.5]{\ar@{}[#2] \ar@{=>}?(#1)+/r 0.2cm/;?(#1)+/l 0.2cm/_{#3}}
\newcommand{\dtwocell}[3][0.5]{\ar@{}[#2] \ar@{=>}?(#1)+/u  0.2cm/;?(#1)+/d 0.2cm/^{#3}}
\newcommand{\dltwocell}[3][0.5]{\ar@{}[#2] \ar@{=>}?(#1)+/ur  0.2cm/;?(#1)+/dl 0.2cm/^{#3}}
\newcommand{\drtwocell}[3][0.5]{\ar@{}[#2] \ar@{=>}?(#1)+/ul  0.2cm/;?(#1)+/dr 0.2cm/^{#3}}
\newcommand{\dthreecell}[3][0.5]{\ar@{}[#2] \ar@3{->}?(#1)+/u  0.2cm/;?(#1)+/d 0.2cm/^{#3}}
\newcommand{\utwocell}[3][0.5]{\ar@{}[#2] \ar@{=>}?(#1)+/d 0.2cm/;?(#1)+/u 0.2cm/_{#3}}
\newcommand{\dtwocelltarg}[3][0.5]{\ar@{}#2 \ar@{=>}?(#1)+/u  0.2cm/;?(#1)+/d 0.2cm/^{#3}}
\newcommand{\utwocelltarg}[3][0.5]{\ar@{}#2 \ar@{=>}?(#1)+/d  0.2cm/;?(#1)+/u 0.2cm/_{#3}}
\newcommand{\cat}[1]{\mathrm{\mathcal #1}}
\newcommand{\thg}{{\mathord{\text{--}}}}
\DeclarePairedDelimiter{\abs}{|}{|}
\DeclarePairedDelimiter{\dbr}{\llbracket}{\rrbracket}
\DeclarePairedDelimiter{\spn}{\langle}{\rangle}
\newcommand{\res}[2]{\left.{#1}\right|_{#2}}
\newcommand{\defeq}{\mathrel{\mathop:}=}
\newcommand{\cd}[2][]{\vcenter{\hbox{\xymatrix#1{#2}}}}
\newcommand{\quot}{\delimiter"502F30E\mathopen{}}
\renewcommand{\phi}{\varphi}
\newcommand{\A}{{\mathcal A}}
\newcommand{\C}{{\mathcal C}}
\newcommand{\D}{{\mathcal D}}
\newcommand{\E}{{\mathcal E}}
\newcommand{\F}{{\mathcal F}}
\newcommand{\N}{{\mathcal N}}
\renewcommand{\P}{{\mathcal P}}
\let\sec=\S
\renewcommand{\S}{{\mathcal S}}
\newcommand{\T}{{\mathcal T}}
\newcommand{\X}{{\mathcal X}}
\newcommand{\xtor}[1]{\cd[@1]{{}
    \ar[r]|-{\object@{|}}^{#1} & {}}}
\def\hookleftarrowfill@{\arrowfill@\leftarrow\relbar{\relbar\joinrel\rhook}}
\def\twoheadleftarrowfill@{\arrowfill@\twoheadleftarrow\relbar\relbar}
\def\leftbararrowfill@{\arrowdoublefill@{\leftarrow\mkern-5mu}\relbar\mapstochar\relbar\relbar}
\def\Leftbararrowfill@{\arrowdoublefill@{\Leftarrow\mkern-2mu}\Relbar\Mapstochar\Relbar\Relbar}
\def\leftringarrowfill@{\arrowdoublefill@{\leftarrow\mkern-3mu}\relbar{\mkern-3mu\circ\mkern-2mu}\relbar\relbar}
\def\lefttriarrowfill@{\arrowfill@{\mathrel\triangleleft\mkern0.5mu\joinrel\relbar}\relbar\relbar}
\def\Lefttriarrowfill@{\arrowfill@{\mathrel\triangleleft\mkern1mu\joinrel\Relbar}\Relbar\Relbar}
\def\hookrightarrowfill@{\arrowfill@{\lhook\joinrel\relbar}\relbar\rightarrow}
\def\twoheadrightarrowfill@{\arrowfill@\relbar\relbar\twoheadrightarrow}
\def\rightbararrowfill@{\arrowdoublefill@{\relbar\mkern-0.5mu}\relbar\mapstochar\relbar\rightarrow}
\def\Rightbararrowfill@{\arrowdoublefill@{\Relbar\mkern-2mu}\Relbar\Mapstochar\Relbar\Rightarrow}
\def\rightringarrowfill@{\arrowdoublefill@\relbar\relbar{\mkern-2mu\circ\mkern-3mu}\relbar{\mkern-3mu\rightarrow}}
\def\righttriarrowfill@{\arrowfill@\relbar\relbar{\relbar\joinrel\mkern0.5mu\mathrel\triangleright}}
\def\Righttriarrowfill@{\arrowfill@\Relbar\Relbar{\Relbar\joinrel\mkern1mu\mathrel\triangleright}}
\def\leftrightarrowfill@{\arrowfill@\leftarrow\relbar\rightarrow}
\def\mapstofill@{\arrowfill@{\mapstochar\relbar}\relbar\rightarrow}
\renewcommand*\xleftarrow[2][]{\ext@arrow 20{20}0\leftarrowfill@{#1}{#2}}
\providecommand*\xLeftarrow[2][]{\ext@arrow 60{22}0{\Leftarrowfill@}{#1}{#2}}
\providecommand*\xhookleftarrow[2][]{\ext@arrow 10{20}0\hookleftarrowfill@{#1}{#2}}
\providecommand*\xtwoheadleftarrow[2][]{\ext@arrow 60{20}0\twoheadleftarrowfill@{#1}{#2}}
\providecommand*\xleftbararrow[2][]{\ext@arrow 10{22}0\leftbararrowfill@{#1}{#2}}
\providecommand*\xLeftbararrow[2][]{\ext@arrow 50{24}0\Leftbararrowfill@{#1}{#2}}
\providecommand*\xleftringarrow[2][]{\ext@arrow 10{26}0\leftringarrowfill@{#1}{#2}}
\providecommand*\xlefttriarrow[2][]{\ext@arrow 80{24}0\lefttriarrowfill@{#1}{#2}}
\providecommand*\xLefttriarrow[2][]{\ext@arrow 80{24}0\Lefttriarrowfill@{#1}{#2}}
\renewcommand*\xrightarrow[2][]{\ext@arrow 01{20}0\rightarrowfill@{#1}{#2}}
\providecommand*\xRightarrow[2][]{\ext@arrow 04{22}0{\Rightarrowfill@}{#1}{#2}}
\providecommand*\xhookrightarrow[2][]{\ext@arrow 00{20}0\hookrightarrowfill@{#1}{#2}}
\providecommand*\xtwoheadrightarrow[2][]{\ext@arrow 03{20}0\twoheadrightarrowfill@{#1}{#2}}
\providecommand*\xrightbararrow[2][]{\ext@arrow 01{22}0\rightbararrowfill@{#1}{#2}}
\providecommand*\xRightbararrow[2][]{\ext@arrow 04{24}0\Rightbararrowfill@{#1}{#2}}
\providecommand*\xrightringarrow[2][]{\ext@arrow 01{26}0\rightringarrowfill@{#1}{#2}}
\providecommand*\xrighttriarrow[2][]{\ext@arrow 07{24}0\righttriarrowfill@{#1}{#2}}
\providecommand*\xRighttriarrow[2][]{\ext@arrow 07{24}0\Righttriarrowfill@{#1}{#2}}
\providecommand*\xmapsto[2][]{\ext@arrow 01{20}0\mapstofill@{#1}{#2}}
\providecommand*\xleftrightarrow[2][]{\ext@arrow 10{22}0\leftrightarrowfill@{#1}{#2}}
\providecommand*\xLeftrightarrow[2][]{\ext@arrow 10{27}0{\Leftrightarrowfill@}{#1}{#2}}
\numberwithin{equation}{section}
\theoremstyle{plain}
\newtheorem*{Thm*}{Theorem}
\newtheorem{Thm}{Theorem}
\newtheorem{Prop}[Thm]{Proposition}
\newtheorem{Cor}[Thm]{Corollary}
\newtheorem{Lemma}[Thm]{Lemma}
\theoremstyle{definition}
\newtheorem{Defn}[Thm]{Definition}
\newtheorem{Not}[Thm]{Notation}
\newtheorem{Ex}[Thm]{Example}
\newtheorem{Rk}[Thm]{Remark}
\newcommand{\alg}[1][X]{{\boldsymbol{#1}}}
\begin{document}
\leftmargini=2em \title[The costructure--cosemantics adjunction]{The
  costructure--cosemantics adjunction\\for comodels for computational
  effects} \author{Richard Garner} \address{Department of Mathematics
  \& Statistics,
  Macquarie University, NSW 2109, Australia}
\email{richard.garner@mq.edu.au}

\subjclass[2000]{Primary: }
\date{\today}

\thanks{The support of Australian Research Council grants
  FT160100393 and DP190102432 is gratefully acknowledged.}

\begin{abstract}
  It is well established that equational algebraic theories, and the
  monads they generate, can be used to encode computational effects.
  An important insight of Power and Shkaravska is that \emph{comodels}
  of an algebraic theory $\mathbb{T}$---i.e., models in the opposite
  category $\cat{Set}^\mathrm{op}$---provide a suitable environment
  for evaluating the computational effects encoded by $\mathbb{T}$. As
  already noted by Power and Shkaravska, taking comodels yields a
  functor from accessible monads to accessible comonads on
  $\cat{Set}$. In this paper, we show that this functor is part of an
  adjunction---the ``costructure--cosemantics adjunction'' of the
  title---and undertake a thorough investigation of its properties.

  We show that, on the one hand, the cosemantics functor takes its
  image in what we term the \emph{presheaf comonads} induced by small
  categories; and that, on the other, costructure takes its image in
  the \emph{presheaf monads} induced by small categories. In
  particular, the cosemantics comonad of an accessible monad will be
  induced by an explicitly-described category called its
  \emph{behaviour category} that encodes the static and dynamic
  properties of the comodels. Similarly, the costructure monad of an
  accessible comonad will be induced by a behaviour category encoding
  static and dynamic properties of the comonad coalgebras. We tie
  these results together by showing that the
  costructure--cosemantics adjunction is \emph{idempotent}, with
  fixpoints to either side given precisely by the presheaf monads and
  comonads. Along the way, we illustrate the value of our results with
  numerous examples drawn from computation and mathematics.
\end{abstract}
\maketitle
\section{Introduction}
\label{sec:introduction}

It is a well-known story that the category-theoretic approach to
computational effects originates with Moggi~\cite{Moggi1991Notions}.
Given a cartesian closed category $\C$, providing a denotational
semantics for a base notion of computation, this approach allows
additional language features such as input/output, interaction with
the store, or non-determinism---all falling under the general rubric
of ``effects''---to be encoded in terms of algebraic structure borne
by objects of $\C$.

In Moggi's treatment, this structure is specified via a strong monad
$\mathsf{T}$ on $\C$; a disadvantage of this approach is that, in
taking as primitive the objects $T(A)$ of \emph{computations} with
effects from $\mathsf{T}$ and values in $A$, it gives no indication of
how the effects involved are to be encoded as language features. An
important thread~\cite{Plotkin2001Adequacy,
  Plotkin2002Notions,Plotkin2003Algebraic} in John Power's work with
Gordon Plotkin has sought to rectify this, by identifying a
computational effect not with a strong monad \emph{per se}, but rather
with a set of (computationally meaningful) algebraic operations and
equations that \emph{generate} a strong monad, in the sense made
precise by Power and Kelly in~\cite{Kelly1993Adjunctions}.

The most elementary case of the above takes $\C = \cat{Set}$: then a
computational effect in Moggi's sense is simply a monad on
$\cat{Set}$, while an effect in the Plotkin--Power sense is an
equational \emph{algebraic theory}, involving a signature of (possibly
infinitary) operations and a set of equations between terms in the
signature. As the same monad may admit many different presentations,
the Plotkin--Power approach is more refined; but it is also slightly
narrower in scope, as not every monad on $\cat{Set}$ is engendered by
an algebraic theory, but only the \emph{accessible} ones---also called
\emph{monads with rank}. (A well-known inaccessible monad is the
continuation monad $V^{V^{(\thg)}}$).

The Plotkin--Power approach makes it particularly easy to define
\emph{models} in any category $\A$ with powers: they are $\A$-objects
endowed with interpretations of the given operations which satisfy the
given equations. When $\A = \cat{Set}$, such models are the same as
algebras for the associated monad; computationally, these can be
interpreted as sets of effectful computations which have been
identified up to a notion of equivalence which respects the effect
semantics.

It is another important insight of Power and
Shkaravska~\cite{Power2004From} that there is also a computational
interpretation of \emph{comodels}. A comodel in $\A$ is simply a model
in $\A^\mathrm{op}$, and again an important case is where
$\A = \cat{Set}$. The idea is that, given an algebraic theory
$\mathbb{T}$ for effects which interact with an
``environment'', a comodel of
$\mathbb{T}$ in $\cat{Set}$ provides the kind of environment with
which such programs interact. The underlying set $S$ of such a comodel
is the set of possible states of the environment; while each
generating $A$-ary operation $\sigma$ of the theory---which requests
an element of $A$ from the environment and binds it---is
co-interpreted as a function
$\dbr{\sigma} \colon S \rightarrow A \times S$ which answers the
request, and moves to a new state. While, in the first instance, we
co-interpret only the generating $\mathbb{T}$-operations, we can
extend this inductively to all $A$-ary computations $t \in T(A)$ of
the associated monad; whereupon we can see the co-interpretation
$\dbr{t} \colon S \rightarrow A \times S$ as a way of \emph{running}
(c.f.~\cite{Uustalu2015Stateful}) the computation $t \in T(A)$
starting from some state $s \in S$ to obtain a return value in $A$ and
a final state in $S$.

The computational perspective on comodels is powerful, and has
achieved significant traction in computer science; see, for
example,~\cite{Plotkin2008Tensors, Mogelberg2014Linear,
  Uustalu2015Stateful, Pattinson2015Sound, Pattinson2016Program,
  Ahman2020Runners}. However, our objective in this paper is to return
to the original~\cite{Power2004From} and settle some of the unanswered
questions posed there. Power and
Shkaravska observe that the category of comodels of any algebraic
theory is \emph{comonadic} over $\cat{Set}$, so that, if we choose to
identify algebraic theories with monads, then we have a process which
associates to each accessible monad on $\cat{Set}$ a certain comonad
on $\cat{Set}$. This leads them to ask:
\begin{quote}
  ``Does the construction of a comonad on $\cat{Set}$ from a monad
  with rank on $\cat{Set}$ yield an interesting relationship between
  monads and comonads?''
\end{quote}

We will answer this question in the affirmative, by showing that this
construction provides the right adjoint part of an adjunction
\begin{equation}\label{eq:51}
  \cd{
    {\cat{Mnd}_a(\cat{Set})^\mathrm{op}} \ar@<-4.5pt>[r]_-{\mathrm{Cosem}} \ar@{<-}@<4.5pt>[r]^-{\mathrm{Costr}} \ar@{}[r]|-{\bot} &
    {\cat{Cmd}_a(\cat{Set})}
  }
\end{equation}
between accessible monads and accessible comonads on $\cat{Set}$. The
corresponding left adjoint can be seen as taking a comonad
$\mathsf{Q}$ on $\cat{Set}$ to the largest algebraic theory for which
any $\mathsf{Q}$-coalgebra is a comodel.
Following~\cite{Lawvere1963Functorial, Dubuc1970Kan-extensions}, we
refer to the two functors in this adjunction as ``cosemantics'' and
``costructure''.

In fact, the mere existence of the adjunction~\eqref{eq:51} is not
hard to establish. Indeed, as we will see, its two directions were
already described in~\cite{Katsumata2019Interaction}, with our
costructure corresponding to their \emph{dual monad} of a comonad, and
our cosemantics being their \emph{Sweedler dual comonad} of a monad.
Our real contribution is that we do not seek merely to
construct~\eqref{eq:51}, but also to understand it thoroughly and
concretely.

In one direction, we will explicitly calculate the cosemantics
functor; this will, among other things, answer~\cite{Power2004From}'s
request that ``we should very much like to be able to characterise
those comonads, at least on $\cat{Set}$, that arise from categories of
comodels''. These comonads will be what we term
\emph{presheaf comonads}; for a small category $\mathbb{B}$, the
associated presheaf comonad $\mathsf{Q}_\mathbb{B}$ is that induced by
the adjunction to the left in:
\begin{equation*}
  \cd{
    {\cat{Set}^{\mathbb{B}}}
    \ar@<-4.5pt>[r]_-{\mathsf{res}_{J}}
    \ar@{<-}@<4.5pt>[r]^-{\mathsf{ran}_{J}} \ar@{}[r]|-{\top} &
    {{\cat{Set}^{\mathrm{ob}(\mathbb{B})}}} \ar@<-4.5pt>[r]_-{\Sigma} \ar@{<-}@<4.5pt>[r]^-{\Delta} \ar@{}[r]|-{\top} &
    {\cat{Set}} 
  } \qquad \qquad  Q_\mathbb{B}(A) = \sum_{b \in
    \mathbb{B}}\prod_{c \in \mathbb{B}} A^{\mathbb{B}(b,c)}\rlap{ ,}
\end{equation*}
with the explicit formula as to the right. These comonads are known
entities in computer science: they are precisely the interpretations
of \emph{directed containers} as introduced in~\cite{Ahman2012When},
and in~\cite{Ahman2017Taking} were termed \emph{dependently typed
  coupdate comonads}.

In fact, we do more than merely characterising the image of the
cosemantics functor: we prove for each accessible monad $\mathsf{T}$
on $\cat{Set}$ that its image under cosemantics is the presheaf
comonad of an \emph{explicitly} given category
$\mathbb{B}_\mathsf{T}$, which we term the \emph{behaviour category}
of $\mathsf{T}$. Since the category of Eilenberg--Moore
$\mathsf{Q}_{\mathbb{B}_\mathsf{T}}$-coalgebras is equivalent to the
functor category $[\mathbb{B}_\mathsf{T}, \cat{Set}]$, we may also
state this result as:

\begin{Thm*}
  Given an accessible monad $\mathsf{T}$ with behaviour category
  $\mathbb{B}_\mathsf{T}$, the category of $\mathsf{T}$-comodels 
  is equivalent to $[\mathbb{B}_\mathbb{T}, \cat{Set}]$ via an equivalence commuting with the forgetful
  functors to $\cat{Set}$.
\end{Thm*}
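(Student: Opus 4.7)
The plan is to obtain the theorem as a composition of three equivalences. The first is the Power--Shkaravska comonadicity theorem for comodels: the forgetful functor $\mathsf{T}\text-\cat{Comod} \to \cat{Set}$ has a right adjoint (cofree comodel) and satisfies Beck's conditions, hence yields an equivalence $\mathsf{T}\text-\cat{Comod} \simeq \mathsf{G}_\mathsf{T}\text-\cat{Coalg}$ over $\cat{Set}$, with $\mathsf{G}_\mathsf{T}$ the induced comonad. By the very definition of the functor in~\eqref{eq:51}, $\mathsf{G}_\mathsf{T} = \mathrm{Cosem}(\mathsf{T})$. The third, equally formal, step is that for any small category $\mathbb{B}$, the two-stage adjunction displayed in the introduction is comonadic on the right, so that $\mathsf{Q}_\mathbb{B}\text-\cat{Coalg}$ is canonically equivalent to $[\mathbb{B}, \cat{Set}]$ over $\cat{Set}$, with the forgetful functor on the presheaf side being the coproduct $F \mapsto \sum_{b} F(b)$. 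This can be proved by applying Beck's theorem to the restriction-to-objects functor $\mathsf{res}_J \colon \cat{Set}^\mathbb{B} \to \cat{Set}^{\mathrm{ob}(\mathbb{B})}$, or by direct inspection of the coalgebra data.

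The substantive part is the middle step: proving $\mathrm{Cosem}(\mathsf{T}) \cong \mathsf{Q}_{\mathbb{B}_\mathsf{T}}$ as comonads on $\cat{Set}$. To do this, one writes down the cofree $\mathsf{T}$-comodel on a set $A$ and recognises its underlying set as $\sum_{b \in \mathbb{B}_\mathsf{T}} \prod_{c \in \mathbb{B}_\mathsf{T}} A^{\mathbb{B}_\mathsf{T}(b,c)}$. Conceptually, a state of this cofree comodel is determined by specifying (i) its observable behaviour $b \in \mathbb{B}_\mathsf{T}$, and (ii) for each transition $f \colon b \to c$ that the state could undergo, the value in $A$ that is ``returned'' when $f$ fires. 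Equipping this set with the evident co-operations --- a generating operation advances the state along the associated $\mathbb{B}_\mathsf{T}$-morphism and reads off the stored $A$-value --- one then verifies the required universal property: by the defining property of $\mathbb{B}_\mathsf{T}$ as the behaviour category, a $\mathsf{T}$-comodel map from $(S, \alpha)$ into this object is the same datum as a function $S \to A$.

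Composing the three equivalences yields $\mathsf{T}\text-\cat{Comod} \simeq \mathrm{Cosem}(\mathsf{T})\text-\cat{Coalg} \simeq \mathsf{Q}_{\mathbb{B}_\mathsf{T}}\text-\cat{Coalg} \simeq [\mathbb{B}_\mathsf{T}, \cat{Set}]$, each step manifestly commuting with the underlying-set functors. The main obstacle is the second step: one must have a sufficiently concrete description of $\mathbb{B}_\mathsf{T}$ --- objects being equivalence classes of states of $\mathsf{T}$-comodels under behavioural equivalence, with morphisms recording the possible transitions available when running a computation --- in order to exhibit the formula $\sum_{b}\prod_{c} A^{\mathbb{B}_\mathsf{T}(b,c)}$ as carrying a comodel structure with the cofree universal property. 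Once this identification is in hand, the remaining two pieces are essentially adjunctional recognition results and carry no further content.
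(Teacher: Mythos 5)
Your route is viable but genuinely different from the paper's. The paper proves this statement as Theorem~\ref{thm:2}, by constructing a \emph{direct} isomorphism ${}^\mathsf{T}\cat{Set} \cong \mathbb{B}_\mathsf{T}\text-\cat{Set}$ over $\cat{Set}$ with no appeal to comonadicity: a comodel $\alg[S]$ is shown to be completely determined by its behaviour map $s \mapsto \beta_s$ together with the right action of the commands $m \in T(1)$, and the left-$\mathbb{B}_\mathsf{T}$-set axioms are exactly the compatibilities these two data must satisfy. Indeed the paper deliberately declines to prove the comonadicity of ${}^\mathsf{T}U$ (Proposition~\ref{prop:2}) by abstract means, remarking that Theorem~\ref{thm:2} supplies an independent elementary argument for it; your proof instead takes that comonadicity as its first input, so you must import it separately (Power--Shkaravska in the finitary case, or locally presentable category theory in general). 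Your outer two steps are then genuinely formal, and your formula for the cofree comodel is correct. The caveat is that your middle step --- verifying that $\sum_{\beta}\prod_{\beta'} A^{\mathbb{B}_\mathsf{T}(\beta,\beta')}$ carries a comodel structure with the cofree universal property --- is only sketched, and it is precisely where all of the paper's work reappears: you need (a) that every state of every comodel has an \emph{admissible} behaviour, so the behaviour map lands in $\ob(\mathbb{B}_\mathsf{T})$ (Proposition~\ref{prop:33}); (b) that commands $m,n \in T(1)$ act identically on all states of behaviour $\beta$ iff $m \sim_\beta n$ (Lemma~\ref{lem:13} and Corollary~\ref{cor:1}), so that the stored $A$-values indexed by $\mathbb{B}_\mathsf{T}(\beta,\thg)$ record the dynamics faithfully and without redundancy; and (c) that, conversely, a behaviour map plus a compatible $T(1)\quot\sim_\beta$-action assembles into a structure satisfying the comodel axioms~\eqref{eq:21} --- this last being needed even to see that your candidate cofree object is a comodel, and being the least obvious check. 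So the decomposition is sound but relocates rather than discharges the substantive content. What each approach buys: the paper's direct isomorphism yields the comonadicity of ${}^\mathsf{T}U$ as a corollary, whereas yours, once completed, delivers the identification $\mathrm{Cosem}(\mathsf{T}) \cong \mathsf{Q}_{\mathbb{B}_\mathsf{T}}$ head-on without the intermediate notion of left $\mathbb{B}$-set.
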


Our description of the behaviour category $\mathbb{B}_\mathsf{T}$ is
quite intuitive. Objects $\beta \in \mathbb{B}_\mathsf{T}$ are
elements of the final $\mathsf{T}$-comodel in $\cat{Set}$, which may
be described in many ways; we give a novel presentation as what we
term \emph{admissible behaviours} of $\mathsf{T}$. These comprise
families $(\beta_A \colon T(A) \rightarrow A)_{A \in \cat{Set}}$ of
functions, satisfying two axioms expressing that $\beta$ acts like a
state of a comodel in providing a uniform way of running
$\mathsf{T}$-computations to obtain values. As for morphisms of
$\mathbb{B}_\mathsf{T}$, these will be transitions between admissible
behaviours determined by \emph{$\mathsf{T}$-commands}, i.e., unary
operations $m \in T(1)$. We will see that maps with domain $\beta$ in
$\mathbb{B}_\mathsf{T}$ are $\mathsf{T}$-commands identified up to an
equivalence relation $\sim_\beta$ which identifies commands which act
in the same way on all states of behaviour $\beta$.

We also describe the action of the cosemantics functor on morphisms.
Thus, given a map of monads
$f \colon \mathsf{T}_1 \rightarrow \mathsf{T}_2$---which encodes an
\emph{interpretation} or \emph{compilation} of effects---we describe
the induced map of presheaf comonads
$\mathsf{Q}_{\mathbb{B}_{\mathsf{T}_2}} \rightarrow
\mathsf{Q}_{\mathbb{B}_{\mathsf{T}_1}}$. As explained
in~\cite{Ahman2017Taking}, maps of presheaf comonads do not correspond
to functors, but rather to to so-called
\emph{cofunctors}~\cite{Higgins1993Duality, Aguiar1997Internal} of the
corresponding categories, involving a mapping \emph{forwards} at the
level of objects, and mappings \emph{backwards} on morphisms. We are
able to give an explicit description of the cofunctor on behaviour
categories induced by a monad morphism, and on doing so we obtain our
second main result:

\begin{Thm*}
  The functor
  $\mathbb{B}_{(\thg)} \colon \cat{Mnd}_a(\cat{Set})^\mathrm{op}
  \rightarrow \cat{Cof}$ taking each accessible monad to its behaviour
  category, and each map of accessible monads to the induced cofunctor
  on behaviour categories yields a within-isomorphism factorisation
\begin{equation*}
  \cd[@C+0.6em@-0.3em]{
     & \cat{Cof} \ar[d]^-{\mathsf{Q}_{(\thg)}} \\
    \cat{Mnd}_a(\cat{Set})^\mathrm{op} \ar[r]^-{\mathrm{Cosem}}
    \ar@{-->}[ur]^-{\mathbb{B}_{(\thg)}} & \cat{Cmd}_a(\cat{Set})\rlap{ .}
  }
\end{equation*}
\end{Thm*}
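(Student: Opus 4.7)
The plan is to exploit the preceding theorem, which already supplies an isomorphism $\mathrm{Cosem}(\mathsf{T}) \cong \mathsf{Q}_{\mathbb{B}_\mathsf{T}}$ for each accessible monad $\mathsf{T}$. It remains to extend $\mathbb{B}_{(\thg)}$ to monad morphisms as a functor $\cat{Mnd}_a(\cat{Set})^\mathrm{op} \to \cat{Cof}$, and to check that the object-level isomorphisms are natural in the monad argument.

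For a monad morphism $f \colon \mathsf{T}_1 \to \mathsf{T}_2$, I would define a cofunctor $\mathbb{B}_f \colon \mathbb{B}_{\mathsf{T}_2} \to \mathbb{B}_{\mathsf{T}_1}$ as follows. On objects, send an admissible behaviour $\beta = (\beta_A \colon T_2 A \to A)_{A \in \cat{Set}}$ to the family $(\beta_A \circ f_A)_{A \in \cat{Set}}$; admissibility of the latter follows from that of $\beta$ together with the fact that $f$ preserves units and multiplications. For the cofunctor's lifting, recall that morphisms out of an admissible behaviour $\beta$ in $\mathbb{B}_\mathsf{T}$ are represented by commands $m \in T(1)$ modulo $\sim_\beta$: given a morphism $[m]$ out of $\mathbb{B}_f(\beta) = \beta \circ f$ in $\mathbb{B}_{\mathsf{T}_1}$, lift it to the morphism $[f_1(m)]$ out of $\beta$ in $\mathbb{B}_{\mathsf{T}_2}$. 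The requisite verifications — well-definedness on equivalence classes, agreement of the codomain after applying $\mathbb{B}_f$, and the cofunctor axioms for identities and composites — all reduce to diagram chases using naturality of $f$ together with the description of $\sim_\beta$. Functoriality $\mathbb{B}_{\mathrm{id}} = \mathrm{id}$ and $\mathbb{B}_{g f} = \mathbb{B}_f \circ \mathbb{B}_g$ is then immediate from associativity of function composition, both on behaviours and on commands.

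It remains to promote the object-level isomorphisms to a natural one, i.e., to verify that $\mathsf{Q}_{\mathbb{B}_f}$ corresponds to $\mathrm{Cosem}(f)$ under them. Using the explicit formula $Q_\mathbb{B}(A) = \sum_b \prod_c A^{\mathbb{B}(b,c)}$, an element of $Q_{\mathbb{B}_{\mathsf{T}_2}}(A)$ may be written as a pair $(\beta, \phi)$ with $\beta$ an admissible behaviour of $\mathsf{T}_2$ and $\phi$ a function on $\sim_\beta$-classes of $\mathsf{T}_2$-commands valued in $A$; the induced map $\mathsf{Q}_{\mathbb{B}_f}$ sends $(\beta, \phi)$ to $(\beta \circ f, \phi \circ \bar{f}_1)$, where $\bar{f}_1$ is the map on equivalence classes induced by $f_1$. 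A direct comparison with the Sweedler-dual description of $\mathrm{Cosem}(f)$ shows that this agrees with the action of cosemantics on $f$, transported across the chosen isomorphisms.

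The main obstacle is the translation between the two descriptions of $\mathbb{B}_\mathsf{T}$ — as admissible behaviours equipped with $\sim_\beta$-classes of commands, versus as the final $\mathsf{T}$-comodel with its dynamic structure — and the attendant bookkeeping for the equivalence relations $\sim_\beta$ as the monad varies. Once this dictionary is in place, all the remaining checks are essentially algebraic consequences of the monad-morphism axioms.
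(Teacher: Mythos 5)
Your construction of the cofunctor $\mathbb{B}_f$ is exactly the paper's (Proposition~\ref{prop:4}): objects go to $f^\ast\beta = \beta \circ f$, and a command $[m]$ out of $f^\ast\beta$ lifts to $[f(m)]$ out of $\beta$, with well-definedness checked against atomic $\beta$-equivalence using that $f$ preserves substitution. The one place where your route diverges, and where it would need repair, is the final naturality check: you propose to compare $\mathsf{Q}_{\mathbb{B}_f}$ against ``the Sweedler-dual description of $\mathrm{Cosem}(f)$'', but the Sweedler dual comonad is only characterised abstractly (as the value of a right adjoint), so there is no explicit formula to compare against. The workable handle on $\mathrm{Cosem}(f)$ is that, under the full fidelity of Eilenberg--Moore semantics (Lemma~\ref{lem:3}), it corresponds to the functor $f^\ast \colon {}^{\mathsf{T}_2}\cat{Set} \rightarrow {}^{\mathsf{T}_1}\cat{Set}$; accordingly the paper verifies naturality by showing that the square
\begin{equation*}
  \cd{
    {{}^{\mathsf{T}_2}\cat{Set}} \ar[r]^-{\cong} \ar[d]_{f^\ast} &
    {\mathbb{B}_{\mathsf{T}_2}\text-\cat{Set}} \ar[d]^{\Sigma_{\mathbb{B}_f}} \\
    {{}^{\mathsf{T}_1}\cat{Set}} \ar[r]^-{\cong} &
    {\mathbb{B}_{\mathsf{T}_1}\text-\cat{Set}}
  }
\end{equation*}
commutes, using Lemma~\ref{lem:5} (a state of $\mathsf{T}_2$-behaviour $\beta$ has $\mathsf{T}_1$-behaviour $f^\ast\beta$ in $f^\ast\alg[S]$) to compute both composites on an arbitrary comodel. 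Your explicit formula $(\beta,\phi) \mapsto (\beta\circ f, \phi\circ \bar f_1)$ for $\mathsf{Q}_{\mathbb{B}_f}$ is correct, and the comparison you want amounts to the same computation as the paper's, just performed on the cofree comodels rather than on all of them; so the substance is right, but the appeal to the Sweedler dual should be replaced by the appeal to Eilenberg--Moore full fidelity.
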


For the other direction of the adjunction~\eqref{eq:51}, we
will in an analogous manner give an explicit
calculation of the image of the costructure functor. On objects, the monads in this image are what we term \emph{presheaf
monads}; here, for a small category $\mathbb{B}$, the presheaf
monad $\mathsf{T}_\mathbb{B}$ is that induced by the adjunction to the
left in:
\begin{equation*}
  \cd{
    {\cat{Set}^{\mathbb{B}^\mathrm{op}}}
    \ar@<-4.5pt>[r]_-{\mathsf{res}_{J}}
    \ar@{<-}@<4.5pt>[r]^-{\mathsf{lan}_{J}} \ar@{}[r]|-{\bot} &
    {{\cat{Set}^{\mathrm{ob}(\mathbb{B})}}} \ar@<-4.5pt>[r]_-{\Pi} \ar@{<-}@<4.5pt>[r]^-{\Delta} \ar@{}[r]|-{\bot} &
    {\cat{Set}}
  } \qquad \qquad 
  T_\mathbb{B}(A) = \prod_{b \in
    \mathbb{B}}\sum_{c \in \mathbb{B}} {\mathbb{B}(b,c)} \times A\rlap{ ,}
\end{equation*}
with the explicit formula as to the right (note the duality with the
comonad $\mathsf{Q}_\mathsf{B}$). Much like before, we will prove for
each accessible comonad $\mathsf{Q}$ on $\cat{Set}$ that its image
under costructure is of the form $\mathsf{T}_{\mathbb{B}_\mathsf{Q}}$
for an explicitly given ``behaviour category''
$\mathbb{B}_\mathsf{Q}$. The picture is perhaps less compelling in
this direction, but the objects of $\mathbb{B}_\mathsf{Q}$ can again
be described as ``behaviours'', by which we now mean elements of the
final $\mathsf{Q}$-coalgebra $Q(1)$; while morphisms of
$\mathbb{B}_\mathsf{Q}$ are uniform ways of transitioning between
$\mathsf{Q}$-behaviours. Like before, we also compute the
costructure functor on morphisms, and again find that each
comonad morphism induces a cofunctor between behaviour categories; so yielding our third main result:
\begin{Thm*}
    The functor
  $\mathbb{B}_{(\thg)} \colon \cat{Cmd}_a(\cat{Set})
  \rightarrow \cat{Cof}$ taking an accessible comonad to its
  behaviour category, and a map of accessible comonads
  to the induced cofunctor on behaviour categories yields a within-isomorphism~factorisation 
\begin{equation*}
  \cd[@C+0.6em@-0.3em]{
    & \cat{Cof} \ar[d]^-{\mathsf{Q}_{(\thg)}} \\
    \cat{Cmd}_a(\cat{Set}) \ar[r]^-{\mathrm{Costr}}
    \ar@{-->}[ur]^-{\mathbb{B}_{(\thg)}} & \cat{Mnd}_a(\cat{Set})^\mathrm{op}\rlap{ .}
  }
\end{equation*}
\end{Thm*}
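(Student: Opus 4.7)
The plan is to mirror the strategy used for the cosemantics factorisation, dualised to comonads and their coalgebras. For an accessible comonad $\mathsf{Q}$, I would define the behaviour category $\mathbb{B}_\mathsf{Q}$ to have as objects the elements of the final $\mathsf{Q}$-coalgebra $Q(1)$---which the excerpt already identifies as the ``behaviours''---and morphisms encoding uniform transitions between behaviours, arising from the comultiplication of $\mathsf{Q}$ evaluated at $1$ and suitably quotiented so that parallel transitions are identified when they act identically on every $\mathsf{Q}$-coalgebra of the given behaviour.

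First I would construct $\mathbb{B}_\mathsf{Q}$ explicitly and prove that the presheaf monad $\mathsf{T}_{\mathbb{B}_\mathsf{Q}}$ is naturally isomorphic to $\mathrm{Costr}(\mathsf{Q})$ as a monad. The explicit formula $T_\mathbb{B}(A) = \prod_b \sum_c \mathbb{B}(b,c) \times A$ is the guiding heuristic: an element of $\mathrm{Costr}(\mathsf{Q})(A)$ should be a uniform rule that, given any behaviour $\beta \in Q(1)$, produces a transition from $\beta$ to some $\beta'$ together with a value in $A$, which matches the product-of-coproducts shape above. To make this rigorous, I would use the universal property of $\mathrm{Costr}$ as left adjoint to $\mathrm{Cosem}$, combined with the description of $Q(1)$ as final coalgebra, to verify that the canonical comparison $\mathsf{T}_{\mathbb{B}_\mathsf{Q}} \to \mathrm{Costr}(\mathsf{Q})$ is componentwise bijective and preserves unit and multiplication.

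For the morphism level, given a comonad map $g\colon \mathsf{Q}_1 \to \mathsf{Q}_2$, I would show it induces a cofunctor $\mathbb{B}_{\mathsf{Q}_1} \to \mathbb{B}_{\mathsf{Q}_2}$. On objects, one takes the function $g_1\colon Q_1(1) \to Q_2(1)$ arising from finality of $Q_2(1)$ applied to the $\mathsf{Q}_1$-coalgebra structure transported across $g$. The backward action on morphisms pulls a transition out of $g_1(\beta)$ in $\mathbb{B}_{\mathsf{Q}_2}$ back to a transition out of $\beta$ in $\mathbb{B}_{\mathsf{Q}_1}$, using the coalgebra-morphism interpretation of $g$ at $1$. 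The cofunctor axioms then follow from naturality of $g$ and its compatibility with the two comultiplications; a final compatibility check confirms that applying $\mathsf{Q}_{(\thg)}$ to this cofunctor recovers $\mathrm{Costr}(g)$, using the established description of presheaf-monad morphisms in terms of cofunctors.

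The main obstacle, I expect, is pinning down the correct equivalence relation defining morphisms of $\mathbb{B}_\mathsf{Q}$, so that the resulting presheaf monad is exactly $\mathrm{Costr}(\mathsf{Q})$---neither a proper quotient nor a proper refinement. In the cosemantics direction, the intuitive story of ``running commands on states'' guided the definition of $\sim_\beta$; here, $\mathrm{Costr}$ is given abstractly as a left adjoint and so is less directly computable. One must argue either from the universal property, exhibiting the unit and counit of the adjunction componentwise and checking that the putative isomorphism is the comparison induced by the universal property, or else by a direct analysis of how $\mathsf{Q}$-coalgebra structures on arbitrary sets factor through $Q(1)$. Once this identification is in hand, the morphism-level claims reduce to naturality manipulations that dualise those in the cosemantics case.
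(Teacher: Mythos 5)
Your plan has the right overall shape (objects of $\mathbb{B}_\mathsf{Q}$ are the elements of $Q1$; prove $\mathsf{T}_{\mathbb{B}_\mathsf{Q}} \cong \mathrm{Costr}(\mathsf{Q})$; comonad maps induce cofunctors acting backwards on homs), but the central definition is missing, and your guess about how to supply it points in the wrong direction. The paper does not obtain the morphisms of $\mathbb{B}_\mathsf{Q}$ from ``$\delta$ evaluated at $1$, suitably quotiented'': there is no quotient at all. Writing $Q_x \subseteq Q$ for the subfunctor of elements lying over $x \in Q1$, one has $Q = \sum_{x \in Q1} Q_x$ with each $Q_x$ connected, in the sense that any natural transformation from $Q_x$ into a coproduct factors through exactly one injection (because $Q_x 1$ is a singleton). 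A morphism of $\mathbb{B}_\mathsf{Q}$ with domain $x$ is then simply a natural transformation $\tau \colon Q_x \Rightarrow \mathrm{id}$; its codomain is extracted from the comultiplication via the unique factorisation $\tau Q \circ \delta_x = \iota_{\mathrm{cod}(\tau)} \circ \tau^\sharp$, the identity on $x$ is $\varepsilon_x$, and composition is $\upsilon \circ \tau^\sharp$. ``Uniformity'' is naturality, and the equivalence relation you flag as the main obstacle simply does not arise. (Your criterion---identify transitions that act identically on every coalgebra of the given behaviour---would, if made precise via cofree coalgebras, collapse back to exactly this set of natural transformations; but you have not made it precise, and you name it as the step you cannot complete.)

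The second missing ingredient is that $\mathrm{Costr}(\mathsf{Q})$ need not be approached through its universal property as a left adjoint: the paper has already identified it concretely with the dual monad $\mathsf{Q}^\circ$, where $Q^\circ(A) = [\cat{Set},\cat{Set}](Q, A \cdot \mathrm{id})$ (Proposition~\ref{prop:20}). Granted this, the isomorphism $Q^\circ(A) \cong \prod_{x \in Q1}\bigl(A \times (\mathbb{B}_\mathsf{Q})_x\bigr) = T_{\mathbb{B}_\mathsf{Q}}(A)$ falls out of the coproduct decomposition and connectedness just described: restrict $\tau \colon Q \Rightarrow A \cdot \mathrm{id}$ to each $Q_x$ and factor it through the unique injection $\nu_{a_x}$ to obtain a pair $(a_x, \tau_x)$. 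What remains is the routine verification that this bijection respects units and multiplications, and that for a comonad map $f \colon \mathsf{P} \rightarrow \mathsf{Q}$ the cofunctor with object part $f_1$ and hom part $\tau \mapsto \tau \circ f_x$ (where $f_x \colon P_x \rightarrow Q_{fx}$ is the restriction of $f$) makes the evident square of monad morphisms commute. Your morphism-level sketch is compatible with this, but without the two identifications above the argument cannot be carried through.
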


It remains only to understand how costructure and cosemantics interact
with each other. The crucial observation is that~\eqref{eq:51} is an
example of a so-called \emph{idempotent} (or \emph{Galois})
adjunction. Here, an adjunction $F \dashv G \colon \D \rightarrow \C$
is \emph{idempotent} if any application of $F$ yields a \emph{fixpoint
  to the left}, i.e., an object of $\D$ at which the adjunction counit
is invertible, while any application of $G$ yields a fixpoint to the
right, i.e., an object of $\C$ at which the adjunction unit is
invertible. In these terms, our final main result can be stated as:
\begin{Thm*}
  The costructure--cosemantics adjunction~\eqref{eq:51} is idempotent.
  Its fixpoints to the left and the right are the presheaf monads and
  presheaf comonads.
\end{Thm*}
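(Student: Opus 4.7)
The plan is to exploit the factorisations $\mathrm{Cosem} = \mathsf{Q}_{(\thg)} \circ \mathbb{B}_{(\thg)}$ and $\mathrm{Costr} = \mathsf{T}_{(\thg)} \circ \mathbb{B}_{(\thg)}$ through the cofunctor category $\cat{Cof}$ established by the two preceding factorisation theorems. By the standard criterion for idempotent adjunctions, it suffices to show that the unit $\eta_\mathsf{Q}$ is invertible for every $\mathsf{Q}$ in the essential image of $\mathrm{Cosem}$; since that image consists (up to isomorphism) precisely of the presheaf comonads, the task reduces to proving that $\eta_{\mathsf{Q}_\mathbb{B}}$ is invertible for every small category $\mathbb{B}$.

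The computational heart of the proof is the pair of identifications $\mathbb{B}_{\mathsf{Q}_\mathbb{B}} \simeq \mathbb{B}$ and $\mathbb{B}_{\mathsf{T}_\mathbb{B}} \simeq \mathbb{B}$, both natural in $\mathbb{B} \in \cat{Cof}$. For $\mathsf{Q}_\mathbb{B}$, the object-level identification is immediate since $Q_\mathbb{B}(1) = \mathrm{ob}(\mathbb{B})$ and objects of a behaviour category are elements of the final coalgebra; for hom-sets, one unfolds the description of $\mathsf{Q}_\mathbb{B}$-commands together with the state-dependent equivalence on them, and checks that the quotient at $b \in \mathbb{B}$ is $\coprod_{c} \mathbb{B}(b,c)$, partitioned according to target object. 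A dual computation for $\mathsf{T}_\mathbb{B}$ proceeds via the admissible-behaviour presentation, starting from $T_\mathbb{B}(1) = \prod_b \sum_c \mathbb{B}(b,c)$. Composing these yields a canonical isomorphism $\phi \colon \mathrm{Cosem}(\mathrm{Costr}(\mathsf{Q}_\mathbb{B})) \to \mathsf{Q}_\mathbb{B}$, and the triangle identity $\mathrm{Cosem}(\epsilon) \circ \eta\,\mathrm{Cosem} = \mathrm{id}$, together with naturality, forces $\phi$ to be a two-sided inverse to $\eta_{\mathsf{Q}_\mathbb{B}}$.

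With idempotency established, the identification of fixpoints is automatic from general nonsense: on an idempotent adjunction the fixpoints to the right are by definition those objects at which the unit is invertible, and these coincide with the essential image of the right adjoint---the presheaf comonads, by the cosemantics factorisation theorem. Dually for the left fixpoints and the presheaf monads.

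The principal obstacle is the explicit calculation $\mathbb{B}_{\mathsf{Q}_\mathbb{B}} \simeq \mathbb{B}$ and its dual: the object-level bijection is trivial, but identifying morphisms requires careful bookkeeping of the equivalence relation on commands and of the transition rule that converts each $\mathsf{Q}_\mathbb{B}$-command into a morphism of the behaviour category, and then checking that composition and identities are preserved. One must also verify that the identifications are cofunctorial, so that they are natural with respect to the behaviour-category assignment $\mathbb{B}_{(\thg)}$ on morphisms. Once this dictionary is in place, the categorical packaging delivers the rest of the theorem cleanly.
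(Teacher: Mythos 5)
Your overall strategy is the same as the paper's: reduce idempotency to invertibility of the unit at presheaf comonads (via the fact that $\mathrm{Cosem}$ lands in presheaf comonads), establish isomorphisms $\mathrm{Costr}(\mathsf{Q}_\mathbb{B}) \cong \mathsf{T}_\mathbb{B}$ and $\mathrm{Cosem}(\mathsf{T}_\mathbb{B}) \cong \mathsf{Q}_\mathbb{B}$, and then read off the fixpoint characterisation from the general theory of idempotent adjunctions. The reduction and the final ``general nonsense'' step are fine, and the two identifications $\mathbb{B}_{\mathsf{Q}_\mathbb{B}} \cong \mathbb{B}$ and $\mathbb{B}_{\mathsf{T}_\mathbb{B}} \cong \mathbb{B}$, though only sketched, are genuinely computable along the lines you indicate.

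The gap is in the step where you pass from the existence of an isomorphism $\phi \colon \mathrm{Cosem}(\mathrm{Costr}(\mathsf{Q}_\mathbb{B})) \to \mathsf{Q}_\mathbb{B}$ to the invertibility of $\eta_{\mathsf{Q}_\mathbb{B}}$. The triangle identity only tells you that $\eta$ is a \emph{split monomorphism} at any object of the form $\mathrm{Cosem}(\mathsf{T})$, and a split monomorphism whose source and target happen to be abstractly isomorphic need not be invertible (consider any non-surjective split injection $\mathbb{N} \to \mathbb{N}$). So ``the triangle identity together with naturality forces $\phi$ to be a two-sided inverse to $\eta$'' is not an argument: you must exhibit a concrete relationship between $\phi$ and $\eta$, and composing two independently constructed isomorphisms does not supply one. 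The paper avoids this by arranging for the two isomorphisms to be \emph{adjoint transposes} of one another: it builds $\alpha \colon \mathsf{Q}_\mathbb{B} \to \mathrm{Cosem}(\mathsf{T}_\mathbb{B})$ explicitly, proves separately (by a Yoneda argument on representable left $\mathbb{B}$-sets) that its transpose $\bar\alpha \colon \mathsf{T}_\mathbb{B} \to \mathrm{Costr}(\mathsf{Q}_\mathbb{B})$ is invertible, and then uses the defining equation $\alpha = \mathrm{Cosem}(\bar\alpha) \circ \eta_{\mathsf{Q}_\mathbb{B}}$ of the transpose to conclude. Your argument can be repaired either by doing the same --- i.e.\ verifying that your isomorphism $\mathrm{Costr}(\mathsf{Q}_\mathbb{B}) \cong \mathsf{T}_\mathbb{B}$ really is the transpose of your isomorphism $\mathsf{Q}_\mathbb{B} \cong \mathrm{Cosem}(\mathsf{T}_\mathbb{B})$ --- or by a slicker abstract route: the monad $\mathrm{Cosem} \circ \mathrm{Costr}$ restricts to the full subcategory of presheaf comonads, your natural isomorphism identifies its underlying endofunctor there with the identity, and since the monoid of natural endomorphisms of an identity functor is commutative, the one-sided unit law $\mu' \circ \eta' = \mathrm{id}$ for the transported monad already forces $\eta'$ (hence $\eta$) to be invertible. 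Either repair requires an extra argument that your write-up currently omits.
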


Let us note that the results of this paper are only the first step in
a larger investigation. On the one hand, to deal with recursion, we
will require a comprehensive understanding of \emph{enriched} versions
of the costructure--cosemantics adjunction. On the other hand, even in
the unenriched world of equational algebraic theories, we may be
interested in understanding costructure and cosemantics for comodels in other categories than $\cat{Set}$: for
example, \emph{topological} comodels, which encode information not
only about behaviours of states, but also about finitistic,
computable observations of such behaviour.  We hope to
pursue these avenues in future work.

We conclude this introduction with a brief overview of the contents of
the paper. We begin in Section~\ref{sec:algebr-theor-effects} with
background material on algebraic theories, their models and comodels,
and the relation to monads on $\cat{Set}$, along with relevant
examples relating to computational effects. In
Section~\ref{sec:presh-monads-comon}, we prepare the ground for our
main results by investigating the classes of presheaf monads and
comonads. Then in Section~\ref{sec:monad-comon-adjunct}, we give the
construction of the costructure--cosemantics adjunction~\eqref{eq:51},
and explain how its two directions encapsulate constructions
of~\cite{Katsumata2019Interaction}.

In Section~\ref{sec:char-image-cosem}, we calcuate the values of the
cosemantics functor. We begin with a general category-theoretic
argument that shows that its must take values in presheaf comonads; we
then give a concrete calculation of the presheaf comonad associated to
a given accessible monad, in other words, to the calculation of the
behaviour category of the given monad. We also describe the cofunctors
between behaviour categories induced by monad morphisms.

In Section~\ref{sec:calc-costr-funct}, we turn to the costructure
functor, showing by a direct calculation that it sends each accessible
comonad to the presheaf monad of an appropriate behaviour category. As
before, we also describe the cofunctor on behaviour categories induced
by each map of accessible comonads. Then in
Section~\ref{sec:idemp-monad-comon}, we tie these results together by
exhibiting the idempotency of the costructure--cosemantics adjunction,
and characterising its fixpoints as the presheaf monads and comonads.
Finally, Sections~\ref{sec:cosem-exampl-appl}
and~\ref{sec:costr-exampl} are devoted to examples of behaviour
categories and cofunctors calculated using our main results.

\section{Algebraic theories and their (co)models}
\label{sec:algebr-theor-effects}

\subsection{Algebraic theories}
\label{sec:algebraic-theories}

In this background section, we recall the definition of (possibly infinitary)
algebraic theory; the notions of model and comodel in any suitable
category; and the relation to monads on $\cat{Set}$. We also recall
the applications of these notions in the study of computational
effects.

\begin{Defn}[Algebraic theory]
  \label{def:1}
  A \emph{signature} comprises a set $\Sigma$ of \emph{function
    symbols}, and for each $\sigma \in \Sigma$ a set $\abs \sigma$,
  its \emph{arity}.   Given a signature $\Sigma$ and a set $A$, we define the set
  $\Sigma(A)$ of \emph{$\Sigma$-terms with variables in
    $A$} by the inductive clauses
  \begin{equation*}
    a \in A \implies a \in \Sigma(A)
    \quad \text{and} \quad
    \sigma \in \Sigma\text{, }t \in \Sigma(A)^{\abs \sigma} \implies
    \sigma(t)
    \in \Sigma(A)\rlap{ .}
  \end{equation*}
  An \emph{equation} over a signature $\Sigma$ is a triple $(A,t,u)$
  with $A$ a set and $t,u \in \Sigma(A)$.  An \emph{algebraic theory} $\mathbb{T}$ is a
  signature $\Sigma$ and a set $\E$ of equations over it.
\end{Defn}

\begin{Defn}[$\mathbb{T}$-terms]
  \label{def:8}
  Given a signature $\Sigma$ and terms $t \in \Sigma(A)$ and
  $u \in \Sigma(B)^A$, we define the \emph{substitution}
  $t(u) \in \Sigma(B)$ recursively by:
  \begin{equation*}
    \quad a \in A \implies a(u) = u_a \quad \text{and} \quad 
    \sigma \in
    \Sigma\text{, } t \in \Sigma(A)^{\abs \sigma} \implies
    (\sigma(t))(u) = \sigma(\lambda i.\,t_i(u))\rlap{ .}
  \end{equation*}
  Given an algebraic theory $\mathbb{T} = (\Sigma, \E)$ and a set $B$,
  we define \emph{$\mathbb{T}$-equivalence} to be the smallest
  equivalence relation $\equiv_\mathbb{T}$ on $\Sigma(B)$ such that:
  \begin{enumerate}[(i)]
  \item If $(A, t, u) \in \E$ and $v \in \Sigma(B)^A$, then
    $t(v) \equiv_\mathbb{T} u(v)$;
  \item If $\sigma \in \Sigma$ and $t_i \equiv_\mathbb{T} u_i$ for all
    $i \in \abs \sigma$, then $\sigma(t) \equiv_\mathbb{T} \sigma(u)$.
  \end{enumerate}
  The set $T(A)$ of \emph{$\mathbb{T}$-terms with variables in $A$} is
  the quotient $\Sigma(A) \quot \equiv_\mathbb{T}$.
\end{Defn}

When an algebraic theory $\mathbb{T}$ is thought of as specifying a
computational effect, we think of $T(A)$ as giving the set of
computations with effects from $\mathbb{T}$ and returning a value in
$A$. The following standard examples illustrate this.

\begin{Ex}[Input]
  \label{ex:6} Given a set $V$, the theory of \emph{$V$-valued input}
  comprises a single $V$-ary function symbol $\mathsf{read}$,
  satisfying no equations, whose action we think of as:
  \begin{equation*}
    (t \colon V \rightarrow X) \mapsto 
    \text{\textsf{let read$()$ be $v$.\,$t(v)$}}\rlap{ .}
  \end{equation*}
  For this theory, terms $t \in T(A)$ are computations which can
  request $V$-values from an external source and use them to determine
  a return value in $A$. For example, when $V = \mathbb{N}$, the
  program which requests two input values and returns their sum is
  encoded by
  \begin{equation}\label{eq:15}
    \text{\textsf{let read$()$ be $n$.\,let read$()$ be $m$.\,$n+m$}}
    \ \in \ T(\mathbb{N})\rlap{ .}
  \end{equation}
\end{Ex}

For an algebraic theory \emph{qua} computational effect, it is
idiomatic that its function symbols are read in continuation-passing
style; so the domain of a function symbol $X^V \rightarrow X$ is a
scope in which an element of $V$ is available to determine a
continuation, and applying the operation binds this element to yield a
continuation \emph{simpliciter}.

\begin{Ex}[Output]
  \label{ex:9}
  Given a set $V$, the theory of \textit{$V$-valued output} comprises
  an $V$-indexed family of unary function symbols
  $(\mathsf{write}_v : v \in V)$ subject to no equations. In the
  continuation-passing style, we denote the action of
  $\mathsf{write}_v$ by
  \begin{equation*}
    t  \mapsto 
    \textsf{let write$(v)$ be $x$.\,$t$} \qquad \text{or, more simply,} \qquad
    t  \mapsto 
    \text{\textsf{write}$(v)$\textsf{;} }t\rlap{ .}
  \end{equation*}
\end{Ex}

\begin{Ex}[Read-only state]
  \label{ex:7}
  Given a set $V$, the theory of \emph{$V$-valued read-only state} has
  a single $V$-ary operation $\mathsf{get}$, satisfying the equations
  \begin{equation}\label{eq:23}
    \mathsf{get}(\lambda v.\, x) \equiv x \qquad \text{and} \qquad
    \mathsf{get}(\lambda v.\, \mathsf{get}(\lambda w.\,
    x_{vw})) \equiv \mathsf{get}(\lambda v.\, x_{vv})\rlap{ .}
  \end{equation}
  These equations express that reading from read-only state should not
  change that state, and that repeatedly reading the state should
  always yield the same answer. In universal algebra, these are the
  so-called ``rectangular band'' identities; while in another
  nomenclature they express that $\mathsf{get}$ is \emph{copyable} and
  \emph{discardable}; see~\cite{Thielecke1997Categorical}.
\end{Ex}
  
\begin{Ex}[State,~\cite{Plotkin2002Notions}]
  \label{ex:10}
  Given a set $V$, the theory of \emph{$V$-valued state} comprises an
  $V$-ary operation $\mathsf{get}$ and a $V$-indexed family of unary
  operations $\mathsf{put}_v$, subject to the following equations:
  \begin{equation*}
    \mathsf{get}(\lambda v.\, \mathsf{put}_v (x)) \equiv x \quad
    \mathsf{put}_u(\mathsf{put}_{v}(x)) \equiv \mathsf{put}_{v}(x)
    \quad \mathsf{put}_u(\mathsf{get}(\lambda v.\, x_v)) \equiv
    \mathsf{put}_u(x_u)\rlap{ .}
  \end{equation*}
  Read in continuation-passing style, these axioms capture the
  semantics of reading and updating a store containing an element of
  $V$.
\end{Ex}

We now describe the appropriate morphisms between algebraic theories.

\begin{Defn}[Category of algebraic theories]
  \label{def:24}
  Let $\mathsf{T}_1 = (\Sigma_1, \E_1)$ and
  $\mathsf{T}_2 = (\Sigma_2, \E_2)$ be algebraic theories. An
  \emph{interpretation
    $f \colon \mathbb{T}_1 \rightarrow \mathbb{T}_2$} is given by
  specifying, for each $\sigma \in \Sigma_1$, a term
  $\sigma^f \in \Sigma_2(\abs \sigma)$ such that, on defining for each
  $t \in \Sigma_1(A)$ the term $t^f \in \Sigma_2(A)$ by the recursive
  clauses
  \begin{equation}\label{eq:2}
    a \in A \Rightarrow a^f = a \quad \text{and} \quad
    \sigma \in \Sigma\text{, }t \in \Sigma(A)^{\abs \sigma} \Rightarrow
    (\sigma(t))^f = \sigma^f(\lambda a.\, (t_a)^f)\rlap{ ,}
  \end{equation}
  we have that $t^f \equiv_{\mathbb{T}_2} u^f$ for all
  $(A,t,u) \in \E_1$. With the obvious composition, we obtain a
  category $\cat{Alg\T h}$ of algebraic theories and interpretations.
\end{Defn}

An interpretation $\mathbb{T}_1 \rightarrow \mathbb{T}_2$ between
theories can be understood as a way of translating computations with
effects from $\mathbb{T}_1$ into ones with effects from
$\mathbb{T}_2$.

\begin{Ex}
  \label{ex:4}
  Let $h \colon V \rightarrow W$ be a function between sets, let
  $\mathbb{T}_1$ be the theory of $V$-valued output, and let
  $\mathbb{T}_2$ be the theory of $W$-valued state. We have an
  interpretation $f \colon \mathbb{T}_1 \rightarrow \mathbb{T}_2$
  defined by $\mathsf{write}^f = \mathsf{put}_{h(v)}$.
\end{Ex}

\begin{Ex}
  \label{ex:5}
  Let $h \colon W \rightarrow V$ be a function between sets, let
  $\mathbb{T}_1$ be the theory of $V$-valued read-only state, and let
  $\mathbb{T}_2$ be the theory of $W$-valued state. We have an
  interpretation $f \colon \mathbb{T}_1 \rightarrow \mathbb{T}_2$
  defined by $\mathsf{get}^f = \mathsf{get}(\lambda w.\, h(w))$.
\end{Ex}

\subsection{Models and comodels}
\label{sec:models-comodels}

We now describe the notion of \emph{model} of an algebraic theory in a
suitable category. Recall that a category $\C$ has \emph{powers} if,
for every $X \in \C$ and set $A$, the $A$-fold self-product
$(\pi_a \colon X^A \rightarrow X)_{a \in A}$ exists in $\C$.

\begin{Defn}[$\Sigma$-structure]
  \label{def:17}
  Let $\Sigma$ be a signature. A \emph{$\Sigma$-structure
    $\alg = (X,\dbr{\thg}_{\alg})$} in a category $\C$ with powers
  comprises an \emph{underlying object} $X \in \C$ and
  \emph{operations}
  $\dbr {\sigma}_{\alg} \colon X^{\abs \sigma} \rightarrow X$ for each
  $\sigma \in \Sigma$. Given a $\Sigma$-structure
  $\alg \in \C^\Sigma$, we define for each $t \in \Sigma(A)$ the
  \emph{derived operation} $\dbr{t}_{\alg} \colon X^A \rightarrow X$
  by the following recursive clauses:
  \begin{equation}\label{eq:24}
    \dbr{a}_{\alg} = \pi_a \qquad \text{and} \qquad
    \smash{\dbr{\sigma(t)}_{\alg} = X^A
      \xrightarrow{(\dbr{t_i}_{\alg})_{i \in \abs \sigma}} X^{\abs
        \sigma} \xrightarrow{\dbr{\sigma}_{\alg}} X}\rlap{ .}
  \end{equation}
\end{Defn}

\begin{Defn}[$\mathbb{T}$-model]
  \label{def:18}
  Let $\mathbb{T} = (\Sigma, \E)$ be an algebraic theory. A
  \emph{$\mathbb{T}$-model} in a category with powers $\C$ is a
  $\Sigma$-structure $\alg$ such that
  $\dbr{t}_{\alg} = \dbr{u}_{\alg} \colon X^A \rightarrow X$ for all
  $(A,t,u) \in \E$. We write $\C^\mathbb{T}$ for the category whose
  objects are $\mathbb{T}$-models in $\C$, and whose maps
  $\alg \rightarrow \alg[Y]$ are $\C$-maps $f \colon X \rightarrow Y$
  such that
  $\dbr{\sigma}_{\alg[Y]} \circ f^{\abs{\sigma}} = f \circ
  \dbr{\sigma}_{\alg}$ for all $\sigma \in \Sigma$. We write
  $U^\mathbb{T} \colon \C^\mathbb{T} \rightarrow \C$ for the obvious
  forgetful functor.
\end{Defn}

Dual to the notion of model is the notion of \emph{comodel}. Recall
that a category $\C$ has \emph{copowers} if every $A$-fold
self-coproduct $(\nu_a \colon X \rightarrow A \cdot X)_{a \in A}$
exists in $\C$.

\begin{Defn}[Comodel]
  \label{def:3}
  Let $\mathbb{T}$ be an algebraic theory. A
  \emph{$\mathbb{T}$-comodel} in a category $\C$ with copowers is a
  model of $\mathbb{T}$ in $\C^\mathrm{op}$; it thus comprises an
  object $X \in \C$ and ``co-operations''
  $\dbr{\sigma}_{\alg} \colon X \rightarrow \abs \sigma \cdot X$,
  subject to the equations of $\mathbb{T}$. We write ${}^\mathbb{T}\C$
  for the category of $\mathbb{T}$-comodels in $\C$ and
  ${}^\mathbb{T}U \colon {}^\mathbb{T}\C \rightarrow \C$ for the
  forgetful functor.
\end{Defn}

If we say simply ``model'' or ``comodel'', we will by default mean
model or comodel in $\cat{Set}$. As explained in~\cite{Power2004From,
  Plotkin2008Tensors}, set-based comodels provide deterministic
environments suitable for evaluating computations with effects from
$\mathbb{T}$.

\begin{Ex}
  \label{ex:38} A comodel $\alg[S]$ of the theory of $V$-valued input
  is a state machine which responds to requests for $V$-characters; it
  comprises a set of states $S$ and a map
  $\dbr{\mathsf{read}}_{\alg[S]} = (g,n) \colon S \rightarrow V \times
  S$ assigning to each $s \in S$ a character $g(s) \in V$ to be read
  and a new state $n(s) \in S$.
\end{Ex}

\begin{Ex}
  \label{ex:37}
  A comodel ${\alg[S]}$ of the theory of $V$-valued output is a state
  machine which changes its state in response to $V$-characters: it
  comprises a set of states $S$ and maps
  $\dbr{\mathsf{write}_v}_{{\alg[S]}} \colon S \rightarrow S$ for each
  $v \in V$, or equally, a single map
  $p \colon V \times S \rightarrow S$, providing for each character
  $v \in V$ and state $s \in S$ a new state $p(v,s) \in S$.
\end{Ex}
\begin{Ex}
  \label{ex:1}
    A comodel $\alg[S]$ of the theory of $V$-valued read-only state comprises a set $S$
    together with a function
    $\dbr{\mathsf{get}}_{\alg[S]} = (g,n) \colon S \rightarrow V \times S$ satisfying
    \begin{equation}\label{eq:22}
      n(s) = s \qquad \text{and} \qquad \bigl(g(s), g(n(s)), n(n(s))\bigr) = \bigl(g(s),g(s),n(s)\bigr)\rlap{ .}
    \end{equation}
    The first axiom allows us to ignore $n$, and moreover implies the
    second axiom; whence a comodel amounts to nothing more than a set
    $S$ and a function $g \colon S \rightarrow V$. 
\end{Ex}

\begin{Ex}
  \label{ex:36}
  A comodel of the theory of $V$-valued state involves a set of states
  $S$ together with maps $(g,n) \colon S \rightarrow V \times S$ and
  $p \colon S \times V \rightarrow S$ rendering commutative the
  diagrams
  \begin{equation*}
    \cd[@C-1em]{
      & V \times S \ar@{<-}[d]^-{(g,n)} & 
      V \times S \ar@{<-}[d]_-{V \times p} \ar@{<-}[rr]^-{(g,n)} & & V \ar@{<-}[d]^-{p} &
      S \ar@{<-}[d]_-{p} \ar@{<-}[r]^-{p} & V \times S\rlap{ .}
      \ar@{<-}[d]_-{V \times p} &
      \\
      S \ar@{<-}[ur]^-{p} \ar@{=}[r]^-{} & V &
      V \times (V \times S) \ar@{<-}[r]^-{\cong} & (V \times V) \times S
      \ar@{<-}[r]^-{\Delta \times S} & V \times S &
      V \times S \ar@{<-}[r]^-{V \times \pi_1} & V \times (V \times S)
    }
  \end{equation*}
  These are equally the conditions that
  \begin{equation*}
    p(g(s),n(s)) = s\text{,} \ \ n(p(v,s)) = p(v,s)\text{,} \ \ 
    g(p(v,s)) = v \ \ \text{and} \ \ p(v',p(v,s)) = p(v',s)\rlap{ ,}
  \end{equation*}
  the first two of which imply $n = \mathrm{id}_S$. Thus, to give the
  comodel is to give the set $S$ together with maps
  $g \colon S \rightarrow V$ and $p \colon V \times S \rightarrow S$
  satisfying the axioms
  \begin{equation*}
    p(g(s),s) = s \qquad g(p(v,s)) = v \quad \text{and} \quad p(v',p(v,s)) = p(v',s)\rlap{ ;}
  \end{equation*}
  as noted in~\cite{Uustalu2015Stateful}, this is precisely a (very
  well-behaved) \emph{lens} in the sense
  of~\cite{Foster2007Combinators}.
\end{Ex}

A $\mathbb{T}$-comodel allows us to evaluate $\mathbb{T}$-computations
by way of the derived operations of Definition~\ref{def:17}. Indeed,
given a comodel $\alg[S] \in {}^\mathbb{T} \cat{Set}$ and a term
$t \in T(A)$, the co-operation
$\dbr{t} \colon S \rightarrow A \times S$ is defined by the recursive
clauses
\begin{equation}\label{eq:19}
\begin{aligned}
  a \in A &\implies \dbr{a} (s) = (a,s)  \qquad \ \ \,
  \\ \text{and }\sigma 
  \in \Sigma, t \in T(A)^{\abs \sigma} &\implies
  \dbr{\sigma(t)}(s)= \dbr{t_i}(s') \text{ where }
  \dbr{\sigma}(s) = (i,s')\rlap{ .}
\end{aligned}
\end{equation}

(Here, and henceforth, we drop the subscript $\alg[S]$ where confusion
seems unlikely.) The semantics of this is clear: given a
$\mathbb{T}$-computation $\sigma(t) \in T(A)$ and starting state $s$,
we respond to the request for a $\abs \sigma$-element posed by the
outermost operation symbol of $\sigma(t)$ by evaluating
$\dbr{\sigma}(s)$ to obtain $i \in \abs \sigma$ along with a new state
$s'$; substituting this $i$ into $t$ yields the simpler computation
$t_i$ which we now run from state $s'$. When we hit a value $a \in A$
we return it along with the final state reached.

\begin{Ex}
  \label{ex:39}
  Consider the theory of $\mathbb{N}$-valued input, and the term
  from~\eqref{eq:15}, which we may equally write as
  $t = \mathsf{read}(\lambda n.\, \mathsf{read}(\lambda m.\, n+m)) \in
  T(\mathbb{N})$. If $\alg[S]$ is the comodel with $S = \{s,s',s''\}$
  and $\dbr{\mathsf{read}} \colon S \rightarrow \mathbb{N} \times S$
  given as to the left in:
  \begin{align*}
    s & \mapsto (7,s') & s &\mapsto (18,s'') \\
    s' &\mapsto (11,s'') & s' & \mapsto (24,s'') \\
    s'' &\mapsto (13,s'') &
    s'' & \mapsto (26,s'')\rlap{ ,}
  \end{align*}
  then $\dbr{t} \colon S \rightarrow \mathbb{N} \times S$ is given as
  to the right. For example, for $\dbr{t}(s)$ we calculate that
  $\dbr{\mathsf{read}(\lambda n.\, \mathsf{read}(\lambda m.\, n+m))}(s)
  = \dbr{\mathsf{read}(\lambda m.\, 7+m)}(s') = \dbr{7+11}(s'') =
  (18,s'')$.
\end{Ex}

We conclude our discussion of models and comodels by describing the
functoriality of the assignment $\mathbb{T} \mapsto \C^\mathbb{T}$.

\begin{Defn}[Semantics and cosemantics]
  \label{def:47}
  For a category $\C$ with powers, the \emph{semantics functor}
  $\cat{Sem}_\C \colon \cat{Alg\T h}^\mathrm{op} \rightarrow \cat{CAT}
  / \C$ is given by
  $\mathbb{T} \mapsto (U^{\mathbb{T}} \colon \C^\mathbb{T} \rightarrow
  \C)$ on objects; while on maps, an interpretation
  $f \colon \mathbb{T}_1 \rightarrow \mathbb{T}_2$ is taken to the
  functor
  $f^\ast \colon \C^{\mathbb{T}_2} \rightarrow \C^{\mathbb{T}_1}$ over
  $\C$ acting via
  $(X, \dbr{\thg}_{\alg}) \mapsto (X, \dbr{(\thg)^f}_{\alg})$.

  Dually, for any category $\C$ with copowers, we define the
  \emph{cosemantics functor}
  $\mathrm{Cosem}_\C \colon \cat{Alg\T h}^\mathrm{op} \rightarrow
  \cat{CAT} / \C$ by
  $\mathbb{T} \mapsto ({}^\mathbb{T} U \colon {}^\mathbb{T} \C
  \rightarrow \C)$ on objects, and on morphisms in the same manner as
  above; more formally, we have
  $\mathrm{Cosem}_\C =
  \mathrm{Sem}_{\C^\mathrm{op}}(\thg)^\mathrm{op}$.
\end{Defn}

The functoriality of (co)semantics implies that theories $\mathbb{T}$ and
$\mathbb{T}'$ which are isomorphic in $\cat{Alg\T h}$ have the same (co)models in any
category with (co)powers $\C$; we call such theories \emph{Morita equivalent}.

\begin{Ex}
  \label{ex:8}
  Let $h \colon V \rightarrow W$ be a function, and let
  $f \colon \mathbb{T}_1 \rightarrow \mathbb{T}_2$ be the
  interpretation of $V$-valued output into $W$-valued state of
  Example~\ref{ex:4}. For each comodel
  $\alg[S] = (S, g \colon S \rightarrow W, p \colon S \times W
  \rightarrow S)$ of $W$-valued state, the associated comodel
  $f^\ast\alg[S]$ of $V$-valued output is
  $(S, p \circ (1 \times h) \colon S \times V \rightarrow S)$.
\end{Ex}

\begin{Ex}
  \label{ex:12}
  Let $h \colon W \rightarrow V$ be a function, and let
  $f \colon \mathbb{T}_1 \rightarrow \mathbb{T}_2$ be the
  interpretation of $V$-valued read-only state into $W$-valued state
  of Example~\ref{ex:5}. For each comodel
  $\alg[S] = (S, g \colon S \rightarrow W, p \colon S \times W
  \rightarrow S)$ of $W$-valued state, the associated comodel
  $f^\ast\alg[S]$ of $V$-valued read-only state is
  $(S, hg \colon S \rightarrow V)$.
\end{Ex}

\subsection{The associated monad}
\label{sec:associated-monad}

Finally in this section, we recall how an algebraic theory gives rise
to a monad on $\cat{Set}$, and the manner in which this interacts with
semantics. We specify our monads as Kleisli triples in the style
of~\cite[Exercise~1.3.12]{Manes1976Algebraic}.

\begin{Defn}[Associated monad]
  \label{def:15}
  The \emph{associated monad} $\mathsf{T}$ of an algebraic theory
  $\mathbb{T}$ 
  has action on objects $A \mapsto T(A)$; unit maps
  $\eta_A \colon A \rightarrow T(A)$ given by inclusion of variables;
  and Kleisli extension $u^\dagger \colon T(A) \rightarrow T(B)$ of
  $u \colon A \rightarrow T(B)$ given by $t \mapsto t(u)$. The
  assignment $\mathbb{T} \mapsto \mathsf{T}$ is the action on objects
  of the \emph{associated monad functor}
  $\mathrm{Ass} \colon \cat{Alg\T h} \rightarrow
  \cat{Mnd}(\cat{Set})$, which on morphisms takes an interpretation
  $f \colon \mathbb{T}_1 \rightarrow \mathbb{T}_2$ to the monad
  morphism $\mathsf{T}_1 \rightarrow \mathsf{T}_2$ whose components
  $T_1(A) \rightarrow T_2(A)$ are the assignments $t \mapsto t^f$
  defined as in~\eqref{eq:2}.
\end{Defn}

\begin{Prop}
  \label{prop:1}
  The associated monad functor
  $\cat{Alg\T h} \rightarrow \cat{Mnd}(\cat{Set})$ is full and
  faithful, and a monad $\mathsf{T}$ is in its essential image just
  when it is accessible.\qed
\end{Prop}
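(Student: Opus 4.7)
The plan is to treat full-faithfulness and the characterisation of the essential image as independent halves. For the former, I would exhibit an explicit inverse to the action of $\mathrm{Ass}$ on hom-sets; for the latter, I would invoke the standard presentation of accessible monads on $\cat{Set}$ by infinitary algebraic theories.

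For full-faithfulness, the key observation is that any monad morphism $\phi\colon \mathsf{T}_1\to\mathsf{T}_2$ is determined by its action on the generic elements $[\sigma(\eta)] \in T_1(\abs \sigma)$ for $\sigma \in \Sigma_1$: a straightforward induction using unit- and Kleisli-extension-compatibility of $\phi$ shows that the clauses~\eqref{eq:2}, with $\sigma^f$ chosen as any representative of $\phi_{\abs \sigma}[\sigma(\eta)]$, recover $\phi_A$ on all of $T_1(A)$. Conversely, any assignment $\sigma \mapsto \sigma^f$ extended via~\eqref{eq:2} descends to a well-defined family $T_1(A) \to T_2(A)$ precisely when $t^f \equiv_{\mathbb{T}_2} u^f$ for every $(A,t,u) \in \E_1$, i.e.\ precisely when it defines an interpretation; and the resulting family is automatically a monad morphism, since~\eqref{eq:2} is by construction compatible with inclusion-of-variables and formal substitution. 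Mutual inverseness of these two constructions is then a routine verification, modulo the observation that two interpretations induce the same monad morphism exactly when they agree $\equiv_{\mathbb{T}_2}$-pointwise---an identification one must impose anyway for composition in $\cat{Alg\T h}$ to be well-defined.

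For the essential image, the direction that associated monads are accessible is straightforward: if $\kappa$ is a regular cardinal strictly greater than $\abs \sigma$ for all $\sigma \in \Sigma$, then the inductive definition of $\Sigma(A)$ shows that every $\Sigma$-term uses fewer than $\kappa$ variables, so $\Sigma(\thg)$ and hence its quotient $T(\thg)$ preserve $\kappa$-filtered colimits. For the converse, given a $\kappa$-accessible monad $\mathsf{T}$, I would construct a presenting theory in Linton style: take as function symbols the pairs $(A,t)$ with $\abs A<\kappa$ and $t \in T(A)$, of arity $A$, and as equations those encoding the monad structure on small arities---namely, for each such $t$ and each $u \colon A \to T(B)$ with $\abs B<\kappa$, an equation identifying the operation for the Kleisli extension $t^\dagger(u) = t(u)$ with the formal substitution of the $u_a$'s into the operation for $t$, together with equations declaring that the operation for $\eta_A(a)$ is projection onto the $a$th variable. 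One then checks that the induced monad agrees with $\mathsf{T}$ on $\kappa$-small sets, and $\kappa$-accessibility of both sides extends this to an isomorphism on all of $\cat{Set}$.

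The main obstacle, I anticipate, is the converse direction of the essential image: verifying that the equations listed above capture \emph{every} equality holding in $\mathsf{T}$, with no residual collapses beyond those forced by the monad laws. This is exactly where $\kappa$-accessibility enters essentially---one wants the monad $\mathsf{T}$ to be the left Kan extension of its restriction to $\kappa$-small sets, so that all of its structure is visibly present in the small-arity data---and it is also what blocks monads like the continuation monad $V^{V^{(\thg)}}$ from admitting such a presentation.
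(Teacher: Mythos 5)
The paper states this proposition without proof, citing it as standard, so there is no in-paper argument to compare against; your outline is precisely the classical Linton structure--semantics correspondence for infinitary theories that the paper is implicitly invoking, and both halves are sound. The hom-set bijection via the generic elements $[\sigma(\lambda a.\,a)] \in T_1(\abs{\sigma})$, the regular-cardinal bound on arities for one direction of the essential-image claim, and the Linton presentation (function symbols $(A,t)$ with $\abs{A}<\kappa$, equations for unit and Kleisli extension, agreement on $\kappa$-small sets extended by accessibility) for the other are all exactly the right steps, and the place you identify as carrying the real work---showing the listed equations generate all identities of $\mathsf{T}$ on small arities---is indeed where the content lies. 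One small correction: you are right that faithfulness forces interpretations $f,g$ with $\sigma^f \equiv_{\mathbb{T}_2} \sigma^g$ for all $\sigma$ to be identified, but your stated justification (that composition otherwise fails to be well-defined) is not accurate---composition of raw-term assignments is well-defined on the nose, since $(\thg)^g$ commutes strictly with formal substitution, and the unit and associativity laws hold for raw terms. The identification is needed only so that $\mathrm{Ass}$ is faithful; Definition~\ref{def:24} is silently assuming it.
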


Here, a monad on $\cat{Set}$ is \emph{accessible} if its underlying
endofunctor is accessible, in the sense of being a small colimit of
representable functors. There are well-known monads on $\cat{Set}$
which are not accessible, for example the power-set monad $\mathbb{P}$
and the continuation monad $V^{V^{(\thg)}}$; nonetheless, we may treat
any monad $\mathsf{T}$ on $\cat{Set}$ ``as if it were induced by an
algebraic theory'' by adopting the following conventions: if
$a \in A$, then we may write $a \in T(A)$ in place of
$\eta_A(a) \in T(A)$, and if $t \in T(A)$ and $u \in T(B)^A$, then we
may write $t(u)$ in place of $u^\dagger(t)$.

We now discuss how the model and comodel semantics of an algebraic
theory can be expressed in terms of the associated monad.

\begin{Defn}[$\mathsf{T}$-models and comodels]
  \label{def:20}
  Let $\mathsf{T}$ be a monad on $\cat{Set}$ and let $\C$ be a
  category with powers. A \emph{$\mathsf{T}$-model $\alg$ in $\C$} is
  an object $X \in \C$ together with operations
  $\dbr{t}_{\alg} \colon X^A \rightarrow X$ for every set $A$ and
  $t \in T(A)$, subject to the axioms
  \begin{equation}\label{eq:8}
    \dbr{a}_{\alg} = \pi_a \qquad \text{and} \qquad
    \smash{\dbr{t(u)}_{\alg} = X^B
      \xrightarrow{(\dbr{u_a}_{\alg})_{a \in A}} X^{A} \xrightarrow{\dbr{t}_{\alg}} X}\rlap{ .}
  \end{equation}
  for all $a \in A$ and all $t \in T(A)$ and $u \in T(B)^A$. We write
  $\C^\mathsf{T}$ for the category of $\mathsf{T}$-models in $\C$,
  whose maps $ \alg[X] \rightarrow \alg[Y]$ are $\C$-maps
  $f \colon X \rightarrow Y$ with
  $\dbr{t}_{\alg[Y]} \circ f^A = f \circ \dbr{t}_{\alg}$ for all sets
  $A$ and all $t \in T(A)$; we write
  $U^\mathsf{T} \colon \C^\mathsf{T} \rightarrow \C$ for the forgetful
  functor.

  If $\C$ is a category with copowers then a $\mathsf{T}$-comodel in
  $\C$ is a $\mathsf{T}$-model in $\C^\mathrm{op}$, involving
  co-operations $\dbr{t}_{\alg} \colon X \rightarrow A \cdot X$ subject
  to the dual axioms
  \begin{equation}\label{eq:21}
    \dbr{a}_{\alg} = \nu_a \qquad \text{and} \qquad
    \smash{\dbr{t(u)}_{\alg} = X
      \xrightarrow{\dbr{t}_{\alg}} A \cdot X
      \xrightarrow{\spn{\dbr{u_a}_{\alg}}_{a \in A}} B \cdot X}\rlap{ .}
  \end{equation}
  We write ${}^\mathsf{T} U \colon {}^\mathsf{T} \C \rightarrow \C$ for
  the forgetful functor from the category of $\mathsf{T}$-comodels.
\end{Defn}

\begin{Defn}[Semantics and cosemantics]
  \label{def:4}
  For any category $\C$ with powers, the \emph{semantics functor}
  $\mathrm{Sem}_\C \colon \cat{Mnd}(\cat{Set})^\mathrm{op} \rightarrow
  \cat{CAT} / \C$ is given by
  $\mathsf{T} \mapsto (U^{\mathsf{T}} \colon \C^\mathsf{T} \rightarrow
  \C)$ on objects; while a monad morphism
  $f \colon \mathsf{T}_1 \rightarrow \mathsf{T}_2$ is taken to the
  functor
  $f^\ast \colon \C^{\mathsf{T}_2} \rightarrow \C^{\mathsf{T}_1}$ over
  $\C$ acting via
  $(X, \dbr{\thg}_{\alg}) \mapsto (X, \dbr{f(\thg)}_{\alg})$. For a
  category $\C$ with copowers, we define the \emph{cosemantics
    functor} by
  $\mathrm{Cosem}_\C \defeq
  \mathrm{Sem}_{\C^\mathrm{op}}(\thg)^\mathrm{op} \colon
  \cat{Mnd}(\cat{Set})^\mathrm{op} \rightarrow \cat{CAT} /
  \C$.
\end{Defn}

The following result, which again is entirely standard, tells us that
we lose no semantic information in passing from an algebraic theory to
the associated monad.

\begin{Prop}
  \label{prop:37}
  For any category $\C$ with powers (respectively, copowers), the
  triangle to the left (respectively, right) below commutes to within
  natural isomorphism:
  \begin{equation*}
    \cd[@C-2.4em@!C]{
      {\cat{Alg\T h}^\mathrm{op}} \ar[rr]^-{\mathrm{Ass}^\mathrm{op}}
      \ar[dr]_-{\mathrm{Sem}_\C} & &
      {\cat{Mnd}(\cat{Set})^\mathrm{op}} \ar[dl]^-{\mathrm{Sem}_\C} \\ &
      {\cat{CAT} / \C}
    } \qquad 
    \cd[@C-2.4em@!C]{
      {\cat{Alg\T h}^\mathrm{op}} \ar[rr]^-{\mathrm{Ass}^\mathrm{op}}
      \ar[dr]_-{\mathrm{Cosem}_\C} & &
      {\cat{Mnd}(\cat{Set})^\mathrm{op}\rlap{ .}} \ar[dl]^-{\mathrm{Cosem}_\C} \\ &
      {\cat{CAT} / \C}
    }
  \end{equation*}
\end{Prop}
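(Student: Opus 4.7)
The plan is to prove the model version first; the comodel version then follows formally by applying the definitional identity $\mathrm{Cosem}_\C = \mathrm{Sem}_{\C^\mathrm{op}}(\thg)^\mathrm{op}$ to both sides of the triangle.

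For the model version, for each algebraic theory $\mathbb{T} = (\Sigma, \E)$ with associated monad $\mathsf{T}$, I would construct an isomorphism $I_\mathbb{T} \colon \C^\mathbb{T} \to \C^\mathsf{T}$ of categories over $\C$, natural in $\mathbb{T}$. In one direction, a $\mathsf{T}$-model $(X, \dbr{\thg}_\alg)$ restricts to a $\Sigma$-structure by taking $\dbr{\sigma}_\alg \colon X^{\abs\sigma} \to X$ to be the operation $\dbr{[\sigma]}_\alg$ that the $\mathsf{T}$-model assigns to the canonical element $[\sigma] \in T(\abs\sigma)$ obtained by viewing $\sigma$ as applied to the tuple of its own variables. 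A straightforward induction on $\Sigma$-terms, using the composition axiom of~\eqref{eq:8}, then shows that the derived operation of Definition~\ref{def:17} satisfies $\dbr{t}_\alg = \dbr{[t]}_\alg$ for every $t \in \Sigma(A)$; in particular, whenever $(A,t,u) \in \E$ we have $[t] = [u]$ in $T(A)$, so the $\mathbb{T}$-model equations hold automatically.

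Conversely, starting from a $\mathbb{T}$-model $(X, \dbr{\thg}_\alg)$, I would use the recursion~\eqref{eq:24} to define $\dbr{t}_\alg \colon X^A \to X$ for every $t \in \Sigma(A)$, and then show this factors through the quotient $T(A) = \Sigma(A) \quot \equiv_\mathbb{T}$. The key technical step is a substitution lemma
\[
\dbr{t(u)}_\alg = \dbr{t}_\alg \circ (\dbr{u_a}_\alg)_{a \in A} \colon X^B \to X\rlap{ ,}
\]
proved by induction on $t$. Granted this, descent to $T(A)$ follows by a second induction, over the clauses generating $\equiv_\mathbb{T}$ in Definition~\ref{def:8}: the base case uses the $\mathbb{T}$-model equations, closure under substitution of equivalent terms is exactly the substitution lemma, and congruence under function symbols together with the equivalence-relation axioms are immediate from the recursion. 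The $\mathsf{T}$-model axioms~\eqref{eq:8} for the resulting structure then reduce respectively to the base case $\dbr{a}_\alg = \pi_a$ and to the substitution lemma.

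These two constructions are manifestly inverse and fix underlying $\C$-objects and morphisms, so they assemble into an isomorphism $I_\mathbb{T}$ in $\cat{CAT} / \C$. For naturality in $\mathbb{T}$, given an interpretation $f \colon \mathbb{T}_1 \to \mathbb{T}_2$ whose associated monad morphism $\mathrm{Ass}(f)$ acts by $[t] \mapsto [t^f]$, a final induction on $t$ using the clauses~\eqref{eq:2} confirms that restricting a $\mathsf{T}_2$-model along $\mathrm{Ass}(f)$ and then applying $I_{\mathbb{T}_1}^{-1}$ coincides with first applying $I_{\mathbb{T}_2}^{-1}$ and then applying $f^\ast$. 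The only genuinely substantive step in the whole argument is the substitution lemma, which is the engine driving both the descent to $T(A)$ and the verification of the $\mathsf{T}$-model composition axiom; every other step amounts to unfolding the recursive definitions in Definition~\ref{def:8}, \eqref{eq:24}, and~\eqref{eq:8}.
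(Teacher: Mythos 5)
Your proposal is correct. Note that the paper gives no proof of Proposition~\ref{prop:37} at all --- it is asserted as ``entirely standard'' --- so there is nothing to compare against; your argument is exactly the standard one that the paper is implicitly invoking, with the substitution lemma correctly identified as the one substantive step (it drives both the descent of the derived operations to $T(A)=\Sigma(A)\quot\equiv_{\mathbb{T}}$ and the verification of the second axiom of~\eqref{eq:8}). The only point you gloss over is the identification of hom-sets --- that a $\C$-map commuting with the generating operations $\dbr{\sigma}$ automatically commutes with every derived $\dbr{t}$ for $t\in T(A)$ --- but this is the same routine induction on terms and does not affect the correctness of the argument.
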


In light of this result, we will henceforth prefer to deal with
accessible monads, though noting as we go along any 
simplifications afforded by having available a presentation
via an algebraic theory.

\section{Presheaf monads and comonads}
\label{sec:presh-monads-comon}
\subsection{Presheaf monads and comonads}
\label{sec:presheaf-comonads}
In this section, we describe and study the presheaf monads
and comonads which will be crucial to our main results. This
is largely revision from the literature, though
Propositions~\ref{prop:22} and~\ref{prop:13} are novel.

\begin{Defn}[Presheaf monad and comonad]
  \label{def:10}
  Let $\mathbb{B}$ be a small category. The \emph{presheaf monad}
  $\mathsf{T}_\mathbb{B} \in \cat{Mnd}_a(\cat{Set})$ and the
  \emph{presheaf comonad}
  $\mathsf{Q}_\mathbb{B} \in \cat{Cmd}_a(\cat{Set})$ are the monad and
  comonad induced by the respective adjunctions:
  \begin{equation}\label{eq:31}
    \cd{
      {\cat{Set}^{\mathbb{B}^\mathrm{op}}}
      \ar@<-4.5pt>[r]_-{\mathsf{res}_{J^\mathrm{op}}}
      \ar@{<-}@<4.5pt>[r]^-{\mathsf{lan}_{J^\mathrm{op}}} \ar@{}[r]|-{\bot} &
      {{\cat{Set}^{\mathrm{ob}(\mathbb{B})}}} \ar@<-4.5pt>[r]_-{\Pi} \ar@{<-}@<4.5pt>[r]^-{\Delta} \ar@{}[r]|-{\bot} &
      {\cat{Set}} & &
      {\cat{Set}^{\mathbb{B}}}
      \ar@<-4.5pt>[r]_-{\mathsf{res}_{J}}
      \ar@{<-}@<4.5pt>[r]^-{\mathsf{ran}_{J}} \ar@{}[r]|-{\top} &
      {{\cat{Set}^{\mathrm{ob}(\mathbb{B})}}} \ar@<-4.5pt>[r]_-{\Sigma} \ar@{<-}@<4.5pt>[r]^-{\Delta} \ar@{}[r]|-{\top} &
      {\cat{Set}} \rlap{ ,}
    }
  \end{equation}
  where $J \colon \mathrm{ob}(\mathbb{B}) \rightarrow \mathbb{B}$ is
  the inclusion of objects, and where $\mathrm{res}$,
  $\mathrm{lan}$ and $\mathrm{ran}$ denote restriction, left Kan
  extension and right Kan extension. If we write $\mathbb{B}_b$ for the
  set of all $\mathbb{B}$-maps with domain $b$, then the underlying
  endofunctors 
  are given by:
  \begin{equation*}
    T_{\mathbb{B}}(A) = \textstyle\prod_{b \in \mathbb{B}} \mathbb{B}_b \times A
    \qquad \text{and} \qquad Q_{\mathbb{B}}(A) = \sum_{b \in \mathbb{B}} A^{\mathbb{B}_b}\rlap{ ;}
  \end{equation*}
  the unit and multiplication for $\mathsf{T}_\mathbb{B}$ are given by:
  \begin{align*}
    \eta_A \colon A &\rightarrow \textstyle\prod_{b}
    (\mathbb{B}_b \times A) & \mu_A \colon \textstyle\prod_{b}
    (\mathbb{B}_b \times \prod_{b'}
    (\mathbb{B}_{b'} \times A))
    &\rightarrow \textstyle\prod_{b}
    (\mathbb{B}_b \times A)\\
    a & \mapsto \lambda b.\, (1_b, a) & \lambda b.\, (f_b, \lambda b'.\,
    (g_{bb'}, a_{bb'})) &\mapsto
    \lambda b.\, (g_{b,\mathrm{cod}(f_b)} \circ f_b, a_{b,\mathrm{cod}(f_b)})\text{ .}
  \end{align*}
  while the counit and comultiplication for $\mathsf{Q}_\mathbb{B}$ are
  given by:
  \begin{equation}\label{eq:30}
    \begin{aligned}
      \varepsilon_A \colon \textstyle\sum_{b} A^{\mathbb{B}_b} & \rightarrow A & \quad
      \delta_A \colon \textstyle\sum_{b} A^{\mathbb{B}_b} &
      \rightarrow \textstyle\sum_{b} \bigl(\sum_{b'}
      A^{\mathbb{B}_{b'}}\bigr)^{\mathbb{B}_b}\\
      (b,\varphi) & \mapsto \varphi(1_b) & (b, \varphi) & \mapsto \bigl(b,
      \lambda f.\, (\mathrm{cod}(f), \lambda g.\, \varphi(gf))\bigr)\rlap{ .}
    \end{aligned}
  \end{equation}
  We call a general $\mathsf{T} \in \cat{Mnd}_a(\cat{Set})$ a
  \emph{presheaf monad} if it is isomorphic to some
  $\mathsf{T}_\mathbb{B}$, and correspondingly on the comonad side.
\end{Defn}
Presheaf monads and comonads have been considered in computer
science; in~\cite{Ahman2017Taking} they are termed
``dependently typed update monads'' and ``dependently typed coupdate
comonads'' respectively, but both have a longer history, as we now
recall.

To the comonad side, we note that the underlying endofunctor of a
presheaf comonad is \emph{polynomial}, i.e., a coproduct of
representable functors. Such endofunctors are exactly those which
arise as the interpretations of set-based
\emph{containers}~\cite{Abbott2005Containers:}, and
in~\cite{Ahman2012When}, this was enhanced to a characterisation of
polynomial comonads as the interpretations of so-called \emph{directed
  containers}. Now, as observed in \cite{Ahman2017Taking}, directed
containers correspond bijectively to small categories, and so we
conclude that the presheaf comonads on $\cat{Set}$ are precisely the
polynomial comonads. For self-containedness, we include a short direct
proof of this fact.

\begin{Prop}
  \label{prop:6}
  For a comonad $\mathsf{Q}$ on $\cat{Set}$, the following  conditions
  are equivalent:
  \begin{enumerate}[(i)]
  \item $\mathsf{Q}$ is a presheaf comonad;
  \item The underlying endofunctor $Q$ is a coproduct of representables;
  \item The underlying endofunctor $Q$ preserves connected limits.
  \end{enumerate}
\end{Prop}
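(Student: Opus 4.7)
The plan is to prove the cycle (i) $\Rightarrow$ (ii) $\Rightarrow$ (iii) $\Rightarrow$ (i), with the substantive work concentrated in the last step. For (i) $\Rightarrow$ (ii), the explicit formula $Q_\mathbb{B}(A) = \sum_b A^{\mathbb{B}_b}$ already presents $Q_\mathbb{B}$ as a coproduct of representables. For (ii) $\Rightarrow$ (iii), each representable $\cat{Set}(S, \thg)$ preserves all limits, and in $\cat{Set}$ coproducts commute with connected limits; combining the two yields (iii).

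For (iii) $\Rightarrow$ (i), I would first show, independently of the comonad structure, that any $F \colon \cat{Set} \rightarrow \cat{Set}$ preserving connected limits is already a coproduct of representables --- so that (iii) $\Rightarrow$ (ii) --- and then upgrade the resulting decomposition to a comonad isomorphism using the counit and comultiplication. For the decomposition, set $B \defeq F(1)$ and let $F_b(A)$ be the fibre of $F(!_A) \colon F(A) \rightarrow F(1)$ over $b \in B$; by construction $F(A) \cong \sum_{b \in B} F_b(A)$, naturally in $A$. Each $F_b$ preserves connected limits fibrewise. Non-empty products $\prod_{i \in I} A_i$ arise as the connected wide pullback of $(A_i \rightarrow 1)_{i \in I}$ and are therefore preserved by $F$ and fibrewise by $F_b$; while $F_b(1) = 1$ trivially. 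Thus $F_b$ preserves all small limits, so by the special adjoint functor theorem it has a left adjoint $L_b$, whence $F_b \cong \cat{Set}(L_b(1), \thg) = (\thg)^{S_b}$ with $S_b \defeq L_b(1)$.

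It remains to extract a small category $\mathbb{B}$ with $\mathrm{ob}\,\mathbb{B} = B$ and $\mathbb{B}_b = S_b$ from the comonad structure on $Q(A) = \sum_{b \in B} A^{S_b}$. The counit $\varepsilon_A$, restricted to the $b$th summand, is a natural transformation $A^{S_b} \rightarrow A$, and by Yoneda it is evaluation at a distinguished element $1_b \in S_b$ --- the identities. The comultiplication $\delta_A$, restricted to the $b$th summand, is a natural transformation $A^{S_b} \rightarrow \sum_{b''} \bigl(\sum_{b'} A^{S_{b'}}\bigr)^{S_{b''}}$; evaluating at $A = 1$ and exploiting naturality in $A$, it is determined by a single object $b'' \in B$ (constant in $A$), an assignment $\chi \colon S_{b''} \rightarrow B$ (the codomain-of-a-morphism map) and, for each $s \in S_{b''}$, a Yoneda-arrow $S_{\chi(s)} \rightarrow S_b$ (precomposition with $s$). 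The counit laws $(\varepsilon Q) \circ \delta = \mathrm{id} = (Q\varepsilon) \circ \delta$ force $b'' = b$ and the usual left/right identity axioms for composition, while coassociativity of $\delta$ translates to associativity. This assembles into a small category $\mathbb{B}$, and matching against the explicit formulas in~\eqref{eq:30} identifies $\mathsf{Q}$ with $\mathsf{Q}_\mathbb{B}$.

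The main obstacle is the familial representability step: carefully verifying that each fibre $F_b$ preserves \emph{all} small limits --- so that the special adjoint functor theorem applies --- requires distinguishing connected limits, which are preserved by $F$ and hence fibrewise, from non-empty products, which must first be recognised as connected wide pullbacks. Once the decomposition $Q(A) \cong \sum_b A^{S_b}$ is in hand, the extraction of $\mathbb{B}$ from $(Q, \varepsilon, \delta)$ is a formal, Yoneda-driven bookkeeping exercise.
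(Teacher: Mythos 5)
Your proof is correct and follows essentially the same route as the paper: the substantive step is the Yoneda-driven extraction of a small category from the counit and comultiplication of a coproduct of representables, which you carry out exactly as the paper does. The only difference is that you additionally sketch the familial-representability step (iii) $\Rightarrow$ (ii) via fibres over $F(1)$ and the adjoint functor theorem, where the paper simply cites Diers; your sketch of that step is also correct.
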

\begin{proof}
  Clearly (i) $\Rightarrow$ (ii), and (ii) $\Leftrightarrow$ (iii) is
  standard category theory due to Diers~\cite{Diers1978Spectres}; so
  it remains to show (ii) $\Rightarrow$ (i). Suppose, then, that
  $Q = \sum_{b \in B} (\thg)^{E_b}$ is a coproduct of representables.
  By the Yoneda lemma, giving $\varepsilon \colon Q \Rightarrow 1$ is
  equivalent to giving the elements
  $1_b \defeq \varepsilon_{E_b}(b, \lambda f.\, f) \in E_b$ for each
  $b \in B$. Similarly, giving $\delta \colon Q \Rightarrow QQ$ is
  equivalent to giving for each $b \in B$ an element of $QQ(E_b)$,
  i.e., elements $\alpha(b) \in B$ and
  $\lambda f.\, (c(f), \rho_f) \colon E_{\alpha(b)} \rightarrow
  \sum_{b'} E_b^{E_{b'}}$. Now the three comonad axioms correspond
  under the Yoneda lemma to the following assertions:
  \begin{itemize}
  \item The axiom $\varepsilon Q \circ \delta = 1_Q$ asserts that
    $\alpha(b) = b$ and $\rho_f(1_{c(f)}) = f$;
  \item The axiom $Q \varepsilon \circ \delta = 1_Q$ asserts that
    $c(1_b) = b$ and $\rho_{1_b} = \mathrm{id}_{E_b}$;
  \item The axiom $Q\delta \circ \delta = \delta Q \circ \delta$
    asserts that $c(g) = c(\rho_f(g))$ and
    $\rho_f \circ \rho_g = \rho_{\rho_f(g)}$.
  \end{itemize}
  But these are precisely the data and axioms of a small category
  $\mathbb{B}$ with object-set $B$, with $\mathbb{B}_b = E_b$, with
  identities $1_b$, with codomain map $c$, and with precomposition by
  $f$ given by $\rho_f$; and on defining $\mathbb{B}$ in this way, we
  clearly have $\mathsf{Q} = \mathsf{Q}_\mathbb{B}$.
\end{proof}

To the monad side, presheaf monads seem to have been first considered
in~\cite[Example~8.7]{Johnstone1990Collapsed}, in terms
of a presentation as an algebraic theory. We now recall this
presentation, though framing it in terms of the applications
of~\cite{Ahman2014Update}. 
\begin{Not}
  \label{not:1}
  Let $\Sigma$ be a signature, and $q \in \Sigma(A)$ and $t,u \in
  \Sigma(B)$ terms where without loss of generality  $A$
  is disjoint from $B$. For any $i \in A$, we write $t
  \equiv_{q,i} u$ (read as ``$t$ and $u$ are equal in the $i$th place
  of $q$'') as an abbreviation for the equation
  \begin{equation*}
    q\Bigl(\lambda a.\,
    \begin{cases}
      t(\vec b) & \text{if $a = i$}\\
      a & \text{if $a \neq i$}
    \end{cases}
\Bigr) \equiv q\Bigl(\lambda a.\,
    \begin{cases}
      u(\vec b) & \text{if $a = i$}\\
      a & \text{if $a \neq i$}
    \end{cases}\Bigr)
\ \qquad \text{in $\Sigma(A \cup B \setminus \{i\})$}\rlap{ .}
  \end{equation*}
\end{Not}

\begin{Ex}[Dependently typed update]
  \label{def:27}
  Let $\mathbb{B}$ be a small category, whose objects we view as
  \emph{values}, and whose arrows
  $b \rightarrow b'$ we view as \emph{updates} from $b$ to $b'$. 
  The theory of \emph{$\mathbb{B}$-valued dependently typed update} is
  generated by an $\mathrm{ob}(\mathbb{B})$-ary operation $\mathsf{get}$ satifying the axioms
  of read-only state, together with a unary operation
  $\mathsf{upd}_f$ for each morphism $f \colon b \rightarrow b'$
  in $\mathbb{B}$, subject to the
  equations 
  \begin{align}
  \mathsf{upd}_{f}(x) &\equiv_{\mathsf{get},c} x & \text{for $f
    \colon b \rightarrow b'$ and $c \neq b$ in $\mathbb{B}$;}\label{eq:3}\\
    \mathsf{upd}_{f}(\mathsf{get}(\lambda a.\, x_a))
    & \equiv_{\mathsf{get},b} \mathsf{upd}_f(x_{b'}) & \text{for $f \colon
    b \rightarrow b'$ in $\mathbb{B}$;}\label{eq:5}\\
  \mathsf{upd}_{1_b}(x) &\equiv x &\text{for $b \in \mathrm{ob}(\mathbb{B})$;}\label{eq:6}\\
    \mathsf{upd}_{f}(\mathsf{upd}_{g}(x)) &\equiv_{\mathsf{get},b}
    \mathsf{upd}_{gf}(x) & \text{for $f \colon b \rightarrow b'$, $g
      \colon b' \rightarrow b''$ in $\mathbb{B}$.}\label{eq:7}
  \end{align}
  The intended semantics is that $\mathsf{get}$ reads a value
  associated to the current state; while $\mathsf{upd}_f$, for
  $f \colon b\rightarrow b'$ in $\mathbb{B}$, attempts to update the
  value $b$ to $b'$ via $f$. If the value of the current state is not
  $b$, then the update fails (the first axiom above); while if the
  value \emph{is} $b$, then we move to a new state with associated
  value $b'$ (the second axiom). The remaining axioms assert that
  updates compose as expected.
\end{Ex}

We now justify our nomenclature by showing that the theory of
$\mathbb{B}$-valued dependently typed update generates the presheaf
monad $\mathsf{T}_\mathbb{B}$, which as we have explained, is equally
an example of a dependently typed update monad as
in~\cite{Ahman2014Update}.

\begin{Prop}
  \label{prop:21}
  For any small category $\mathbb{B}$, the  theory of
  $\mathbb{B}$-valued dependently typed
  update generates 
  the presheaf monad $\mathsf{T}_\mathbb{B}$.
\end{Prop}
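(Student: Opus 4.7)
The plan is to identify a normal form for terms in the dependently typed update theory $\mathbb{T}$, exhibit a bijection $T(A) \cong T_\mathbb{B}(A)$, and then verify under this bijection that the unit and substitution of $\mathbb{T}$ agree with the unit and multiplication of $\mathsf{T}_\mathbb{B}$ from Definition~\ref{def:10}. The normal forms are those of shape
\begin{equation*}
N(\phi) \defeq \mathsf{get}\bigl(\lambda b.\,\mathsf{upd}_{f_b}(a_b)\bigr) \qquad \text{for }\phi = \lambda b.\,(f_b,a_b) \in \textstyle\prod_{b \in \mathbb{B}} \mathbb{B}_b \times A\rlap{ .}
\end{equation*}

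The first step is to prove by structural induction that every $\Sigma$-term in $\Sigma(A)$ is $\equiv_\mathbb{T}$-equivalent to some $N(\phi)$. A variable $a \in A$ reduces to $N(\lambda b.\,(1_b, a))$ using the discardability of $\mathsf{get}$ (first clause of~\eqref{eq:23}) together with~\eqref{eq:6}. Given $\mathsf{get}(\lambda b.\,t_b)$ with each $t_b$ in normal form $\mathsf{get}(\lambda c.\,\mathsf{upd}_{f_{b,c}}(a_{b,c}))$, the copyability axiom (second clause of~\eqref{eq:23}) collapses the nested $\mathsf{get}$s diagonally to $N(\lambda b.\,(f_{b,b},a_{b,b}))$. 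For $\mathsf{upd}_g$ with $g \colon b \rightarrow b'$ applied to a normal child $\tau = \mathsf{get}(\lambda c.\,\mathsf{upd}_{f_c}(a_c))$, I first rewrite $\mathsf{upd}_g(\tau) \equiv \mathsf{get}(\lambda c.\,\mathsf{upd}_g(\tau))$ by discardability, then simplify branchwise: axiom~\eqref{eq:3} erases $\mathsf{upd}_g$ in every branch $c \neq b$, while in the branch $c = b$, axiom~\eqref{eq:5} pushes $\mathsf{upd}_g$ past the inner $\mathsf{get}$ to its $b'$-th branch and then~\eqref{eq:7} composes the two $\mathsf{upd}$s to yield $\mathsf{upd}_{f_{b'} \circ g}(a_{b'})$. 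A final application of copyability collapses the $\mathsf{get}$s that remain nested in the $c \neq b$ branches, delivering the normal form $N(\psi)$ with $\psi(b) = (f_{b'} \circ g, a_{b'})$ and $\psi(c) = (f_c, a_c)$ for $c \neq b$.

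The second step, injectivity, will be obtained by exhibiting a $\mathbb{T}$-model $\alg$ on the set $X = T_\mathbb{B}(A)$ such that $\dbr{N(\phi)}_{\alg}$ evaluated at the canonical family $(\eta_A(a))_{a \in A} \in X^A$ returns $\phi$ itself; distinct $\phi \neq \phi'$ then yield distinct values and hence inequivalent terms. The operations are the obvious ones: $\dbr{\mathsf{get}}_{\alg}(\psi)(b) \defeq \psi(b)(b)$, and for $g \colon b \rightarrow b'$, set $\dbr{\mathsf{upd}_g}_{\alg}(\phi)(c) \defeq \phi(c)$ for $c \neq b$ and $\dbr{\mathsf{upd}_g}_{\alg}(\phi)(b) \defeq (h \circ g,\,a)$ when $\phi(b') = (h,a)$; a routine case analysis confirms all DTU axioms. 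Given the bijection $T(A) \cong T_\mathbb{B}(A)$, matching the monad structures is then direct: the unit sends $a \in A$ to the class of the variable $a$, whose normal form is $\lambda b.\,(1_b,a) = \eta_A(a)$; and substituting a family of normal forms $u_a = N(\psi_a)$ into a normal form $N(\phi)$ and renormalising by the same rewrites produces precisely $\lambda b.\,(g_{a_b,\mathrm{cod}(f_b)} \circ f_b,\,c_{a_b,\mathrm{cod}(f_b)})$, agreeing with the multiplication of $\mathsf{T}_\mathbb{B}$.

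The main obstacle lies in the $\mathsf{upd}_g$ step of the normalisation: equations~\eqref{eq:3}, \eqref{eq:5} and~\eqref{eq:7} are all of the form $\equiv_{\mathsf{get},c}$, governing only a single branch of an ambient $\mathsf{get}$, so they can only be applied after embedding the term under an outer $\mathsf{get}$, and the branchwise rewrites must then be stitched together via the congruence clause (ii) of Definition~\ref{def:8}. Once this bookkeeping is discharged, the remaining identifications, the construction of the separating model, and the monad-structure match are essentially direct verifications.
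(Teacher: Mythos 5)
Your proof is correct, and it establishes the same two key constructions as the paper --- the $\mathbb{T}$-model structure on $T_\mathbb{B}(A)$ (your operations $\dbr{\mathsf{get}}$, $\dbr{\mathsf{upd}_g}$ are identical to the paper's) and the map $\phi \mapsto \mathsf{get}(\lambda b.\,\mathsf{upd}_{f_b}(a_b))$ --- but it organises the bijection $T(A) \cong T_\mathbb{B}(A)$ differently. The paper observes that both maps are model homomorphisms and then gets $i_A \circ p_A = 1_{T(A)}$ for free from the universal property of the free model $T(A)$: a homomorphic endomorphism fixing the generators is the identity. This sidesteps any induction over term structure. You instead prove surjectivity of the normal-form map by explicit structural induction (your normalisation of $\mathsf{upd}_g(\tau)$ via \eqref{eq:3}, \eqref{eq:5}, \eqref{eq:7} is exactly the content of the paper's verification that $i_A$ commutes with $\mathsf{upd}_f$) and injectivity by evaluating normal forms in the separating model at the family $(\eta_A(a))_{a\in A}$, which is a concrete instance of what the paper's $p_A i_A = 1$ calculation does. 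The trade-off: your route is more elementary and makes the rewriting system visible, at the cost of the bookkeeping you acknowledge --- in particular, your ``final application of copyability'' in the $\mathsf{upd}_g$ case needs the $c=b$ branch to first be wrapped in a trivial $\mathsf{get}$ via discardability before the second equation of \eqref{eq:23} (whose left-hand side requires \emph{every} branch to be a $\mathsf{get}$) can be instantiated; and each branchwise rewrite is really a substitution instance (clause (i) of Definition~\ref{def:8}) of a $\equiv_{\mathsf{get},c}$ axiom rather than a congruence step. These are fixable details, not gaps. The paper's freeness argument buys a shorter proof; yours buys an explicit normal-form theorem for the theory, which is independently useful.
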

\begin{proof}
  For each set $A$, we make $T_\mathbb{B}(A) = \prod_b (\mathbb{B}_b
  \times A)$ a model of the theory $\mathbb{T}$ of dependently-typed update
  by taking $\dbr{\mathsf{get}}\bigl(\lambda b. (\lambda c.\, g_{bc}, \lambda c.\,
    a_{bc})\bigr) = \lambda b.\, (g_{bb}, a_{bb})$ and 
  \begin{equation*}
    \dbr{\mathsf{upd}_f}(\lambda c.\, (g_c, a_c)) = \lambda c.\,
    \begin{cases}
      (g_c, a_c) & \text{if $c \neq b$;}\\
      (g_{b'}f, a_{b'}) & \text{if $c = b$} 
    \end{cases} \qquad \text{for $f \colon
        b \rightarrow b'$ in $\mathbb{B}$.}
  \end{equation*}
  It is easy to verify that an equation $t \equiv_{\mathsf{get}, b} u$
  holds in $T_\mathbb{B}(A)$ precisely when the interpretations
  $\dbr{t}$ and $\dbr{u}$ have the same postcomposition with the
  projection
  $\pi_b \colon \prod_b (\mathbb{B}_b \times A) \rightarrow
  \mathbb{B}_b \times A$; whence the axioms for dependently typed
  update are satisfied. Thus we may extend
  $\eta_A \colon A \rightarrow \prod_b (\mathbb{B}_b \times A)$
  uniquely to a homomorphism
  $p_A \colon T(A) \rightarrow \prod_b (\mathbb{B}_b \times A)$; we
  also have a function
  $i_A \colon \prod_b (\mathbb{B}_b \times A) \rightarrow T(A)$ given
  by
  $\lambda c.\,(g_c, a_c) \mapsto \mathsf{get}(\lambda c.\,
  \mathsf{upd}_{g_c}(a_c))$ which we claim is also a model
  homomorphism. It commutes with $\mathsf{get}$ since
  $\mathsf{get}(\lambda b.\, \mathsf{get}(\lambda c.\,
  \mathsf{upd}_{g_{bc}}(a_{bc}))) = \mathsf{get}(\lambda c.\,
  \mathsf{upd}_{g_{cc}}(a_{cc}))$. To see it commutes with
  $\mathsf{upd}_f$ for $f \colon b \rightarrow b'$, we observe that
  \begin{align*}
  \mathsf{upd}_f(\mathsf{get}(\lambda c.\,
    \mathsf{upd}_{g_c}(a_{c})) &\equiv_{\mathsf{get},b}
    \mathsf{upd}_f(\mathsf{upd}_{g_{b'}}(a_{b'}))
    \equiv_{\mathsf{get},b}
    \mathsf{upd}_{g_b' f}(a_{b'})\\
    \text{and }   \mathsf{upd}_f(\mathsf{get}(\lambda c.\,
    \mathsf{upd}_{g_c}(a_{c})) &\equiv_{\mathsf{get},c}  \mathsf{get}(\lambda c.\,
    \mathsf{upd}_{g_c}(a_{c})) \equiv_{\mathsf{get},c}  \mathsf{upd}_{g_c}(a_{c}) \text{ for $c \neq b$,}
  \end{align*}
  from which it follows that
  \begin{equation*}
    \mathsf{upd}_f(\mathsf{get}(\lambda c.\,
    \mathsf{upd}_{g_c}(a_{c})) = \mathsf{get}\Bigl(\lambda c.\,     \begin{cases}
      \mathsf{upd}_{g_c}(a_c) & \text{if $c \neq b$;}\\
      \mathsf{upd}_{g_{b'}f}(a_{b'}) & \text{if $c = b$}
    \end{cases} \Bigr)
  \end{equation*}
  as desired.
  Since $\mathsf{get}(\lambda b.\, \mathsf{upd}_{1_b}(a))
  \equiv \mathsf{upd}_{1_b}(a) \equiv a$, we have
  $i_A(\eta_A(a)) = a$, from which it follows that $i_Ap_A = 1_{T(A)}$. On
  the other hand, $p_Ai_A = 1$ by a short calculation, 
  and so $p_A$ and $i_A$ are mutually inverse. In this way, we obtain a natural isomorphism $T \cong T_\mathbb{B}$,
  which by construction is compatible with the units of the monads
  $\mathsf{T}$ and $\mathsf{T}_\mathbb{B}$. Compatibility
  with the multiplications follows since:
  \begin{align*}
    \mathsf{get}(\lambda b.\,\mathsf{upd}_{f_b}(\mathsf{get}(\lambda
    c.\, \mathsf{upd}_{g_{bc}}(a_{bc})))) &\equiv_\mathbb{T}
    \mathsf{get}(\lambda b.\,\mathsf{upd}_{f_b}(
    \mathsf{upd}_{g_{b, \mathsf{c}(f_b)}}(a_{b, \mathsf{c}(f_b)}))) \\&\equiv_\mathbb{T}
    \mathsf{get}(\lambda b.\,\mathsf{upd}_{g_{b,\mathsf{c}(f_b)} \circ f_b}(a_{b,
      \mathsf{c}(f_b)}))\rlap{ .}\qedhere
  \end{align*}
\end{proof}

\subsection{Morphisms of presheaf monads and comonads}
\label{sec:morph-presh-monads}
We now examine morphisms between presheaf monads and
comonads. Beginning again on the comonad side, we observe, as
in~\cite{Ahman2017Taking}, that comonad morphisms between presheaf comonads
do not correspond to functors, but rather to \emph{cofunctors}:

\begin{Defn}[Cofunctor]
  \label{def:44}\cite{Higgins1993Duality, Aguiar1997Internal}
  A \emph{cofunctor} $F \colon \mathbb{B} \rightsquigarrow \mathbb{C}$
  between small categories comprises an action on objects
  $F \colon \mathrm{ob}(\mathbb{B}) \rightarrow
  \mathrm{ob}(\mathbb{C})$ together with actions on morphisms
  $F_{b} \colon \mathbb{C}_{Fb} \rightarrow \mathbb{B}_b$
  for each $b \in \mathbb{B}$, subject to the axioms:
  \begin{enumerate}[(i)]
  \item $F(\mathrm{cod}(F_b(f))) = \mathrm{cod}(f)$ for all $f \in
    \mathbb{C}_{Fb}$;
  \item $F_b(1_{F b}) = 1_b$ for all $b \in \mathbb{B}$;
  \item $F_b(gf) = F_{\mathrm{cod}(F_b f)}(g) \circ F_b(f)$ for all $f
    \in \mathbb{C}_{Fb}$ and $g \in
    \mathbb{C}_{\mathrm{cod}(f)}$.
  \end{enumerate}
  We write $\cat{Cof}$ for the category of small categories and
  cofunctors.
\end{Defn}
\begin{Prop}
  \label{prop:16}
  Taking presheaf comonads is the action on objects of a fully faithful functor
  $\mathsf{Q}_{(\thg)} \colon \cat{Cof} \rightarrow
  \cat{Cmd}_a(\cat{Set})$, which on morphisms 
  sends a cofunctor $F \colon \mathbb{B} \rightsquigarrow \mathbb{C}$ 
  to the comonad morphism
  $\mathsf{Q}_F \colon \mathsf{Q}_\mathbb{B} \rightarrow \mathsf{Q}_\mathbb{C}$ with components
  \begin{equation}\label{eq:28}
  \begin{aligned}
    \textstyle\sum_{b \in \mathbb{B}} A^{\mathbb{B}_b} & \rightarrow
    \textstyle\sum_{c \in \mathbb{C}} A^{\mathbb{C}_c} \\
    (b, \varphi) & \mapsto (Fb, \varphi \circ F_b)\rlap{ .}
  \end{aligned}
  \end{equation}
\end{Prop}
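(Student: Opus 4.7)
The plan is to establish the proposition in three stages: verifying that each cofunctor $F$ yields a comonad morphism $\mathsf{Q}_F$ as claimed, verifying functoriality of the assignment $F \mapsto \mathsf{Q}_F$, and then establishing full faithfulness by reverse-engineering a cofunctor from an arbitrary comonad morphism between presheaf comonads.

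For the first stage, I would begin by noting that the formula $(b, \varphi) \mapsto (Fb, \varphi \circ F_b)$ is manifestly natural in $A$, since postcomposition with $F_b$ does not involve $A$. To see that $\mathsf{Q}_F$ respects the counit, I would compute $\varepsilon_A(Fb, \varphi \circ F_b) = \varphi(F_b(1_{Fb})) = \varphi(1_b)$ using cofunctor axiom (ii), which matches $\varepsilon_A(b,\varphi)$. For comultiplication, one traces $(b,\varphi)$ along the pentagon $\delta_\mathbb{C} \circ \mathsf{Q}_F = (\mathsf{Q}_F \ast \mathsf{Q}_F) \circ \delta_\mathbb{B}$; reading off the inner codomain labels forces axiom (i), while comparing the innermost functions $\mathbb{C}_{\mathrm{cod}(\tilde f)} \to A$ on both sides forces the identity $F_b(g\tilde f) = F_{\mathrm{cod}(F_b\tilde f)}(g) \circ F_b(\tilde f)$, which is axiom (iii). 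The functoriality $\mathsf{Q}_{GF} = \mathsf{Q}_G \circ \mathsf{Q}_F$ and $\mathsf{Q}_{1_\mathbb{B}} = 1_{\mathsf{Q}_\mathbb{B}}$ then drops out by direct substitution.

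For full faithfulness, the key is to invert the recipe~\eqref{eq:28}. Given any comonad morphism $\alpha \colon \mathsf{Q}_\mathbb{B} \to \mathsf{Q}_\mathbb{C}$, I would apply it to the element $(b, \mathrm{id}_{\mathbb{B}_b}) \in Q_\mathbb{B}(\mathbb{B}_b)$; this yields a pair that I \emph{define} to be $(Fb, F_b) \in \sum_{c \in \mathbb{C}} \mathbb{B}_b^{\mathbb{C}_c}$, thereby specifying the object assignment $F$ and backward arrow action $F_b \colon \mathbb{C}_{Fb} \to \mathbb{B}_b$. Since $Q_\mathbb{B}$ is a coproduct of representables, naturality of $\alpha$ in $A$---applied to the map $\varphi \colon \mathbb{B}_b \to A$, which sends $(b, \mathrm{id}_{\mathbb{B}_b})$ to $(b, \varphi)$---forces $\alpha_A(b,\varphi) = (Fb, \varphi \circ F_b)$. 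Thus $\alpha$ already has the shape of $\mathsf{Q}_F$ for a unique pair $(F, F_b)$, and it remains only to verify that $(F, F_b)$ satisfies the cofunctor axioms. But this is exactly the content of the calculation from the first stage, read in the opposite direction: the counit axiom on $\alpha$ forces (ii), and the comultiplication axiom on $\alpha$ forces (i) and (iii).

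The main obstacle is bookkeeping rather than conceptual depth: the pentagon for comultiplication produces a rather intricate expression involving $\mathrm{cod}(F_b(\tilde f))$ and doubly-indexed actions $F_{\mathrm{cod}(F_b\tilde f)}$, and care is needed to match indices on both sides. Once one organises the computation by evaluating at $(b, \mathrm{id}_{\mathbb{B}_b}) \in Q_\mathbb{B}(\mathbb{B}_b)$ (and subsequently at suitable test elements in $\mathbb{B}_b \times \mathbb{B}_{b'}$ to pin down $F_b(gf)$), the cofunctor axioms fall out cleanly and both directions of the equivalence are established simultaneously.
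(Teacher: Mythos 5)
Your proposal is correct and follows essentially the same route as the paper: both extract the data $(Fb, F_b)$ by evaluating the natural transformation at the generic element $(b, \mathrm{id}_{\mathbb{B}_b}) \in Q_\mathbb{B}(\mathbb{B}_b)$ via the Yoneda lemma, deduce that any such transformation has the shape of~\eqref{eq:28}, and then match the counit and comultiplication compatibilities against the three cofunctor axioms. Your write-up merely fills in the bookkeeping that the paper delegates to ``similar arguments to the proof of Proposition~\ref{prop:6}''.
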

This result is again due to~\cite{Ahman2017Taking},
but we sketch a proof for self-containedness.
\begin{proof}
  Let $\mathbb{B}$ and $\mathbb{C}$ be small categories. As
  $Q_\mathbb{B} = \sum_{b \in B} (\thg)^{\mathbb{B}_b}$, we see once
  again by the
  Yoneda lemma that giving
  a natural transformation 
  $\alpha \colon Q_\mathbb{B} \Rightarrow Q_\mathbb{C}$ is equivalent to giving elements
  $\alpha_{\mathbb{B}_b}(b, 1_b) \in Q_\mathbb{C}(\mathbb{B}_b)$; and
  if we write these elements as pairs
  $(Fb \in \mathbb{C}, F_b \colon \mathbb{C}_{Fb} \rightarrow
  \mathbb{B}_b)$, then $\alpha$ itself must have components given as
  in~\eqref{eq:28}. Similar arguments to the proof of
  Proposition~\ref{prop:6} now
  show that $\alpha$ commutes with the comonad counits and
  comultiplication precisely when the assignments
  $b \mapsto (Fb, F_b)$ satisfy the axioms (i)--(iii) for a cofunctor.
\end{proof}

On the monad side, it turns out that monad morphisms between presheaf
monads are also cofunctors between the corresponding categories. This
is not quite as straightforward to see, and for the moment we record
only the weaker statement that cofunctors induce morphisms of presheaf
monads; the full claim will be proved in Proposition~\ref{prop:26} below.

\begin{Prop}
  \label{prop:22}
  Taking presheaf monads is the action on objects of a functor
  $\mathsf{T}_{(\thg)} \colon \cat{Cof} \rightarrow
  \cat{Mnd}_a(\cat{Set})^\mathrm{op}$, which on morphisms 
  sends a cofunctor $F \colon \mathbb{B} \rightsquigarrow \mathbb{C}$ 
  to the monad morphism
  $\mathsf{T}_F \colon \mathsf{T}_\mathbb{C} \rightarrow \mathsf{T}_\mathbb{B}$ with components
  \begin{equation}\label{eq:32}
  \begin{aligned}
    \textstyle\prod_{c \in \mathbb{C}} (\mathbb{C}_c \times A) & \rightarrow
    \textstyle\prod_{b \in \mathbb{B}} (\mathbb{B}_b \times A) \\
    \lambda c.\, (f_c, a_c) & \mapsto \lambda b.\, (F_b(f_{Fb}),
    a_{Fb})\rlap{ .}
  \end{aligned}
  \end{equation}
\end{Prop}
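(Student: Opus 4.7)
The plan is to verify directly that formula~\eqref{eq:32} defines a natural transformation $\mathsf{T}_F \colon T_\mathbb{C} \Rightarrow T_\mathbb{B}$ satisfying the two monad-morphism axioms, and that the assignment $F \mapsto \mathsf{T}_F$ is functorial into $\cat{Mnd}_a(\cat{Set})^\mathrm{op}$. In contrast to Proposition~\ref{prop:16}, no clean Yoneda argument is available, since $T_\mathbb{B}$ is a product rather than a coproduct of representables; so I expect the proof to proceed by a direct computation in which each cofunctor axiom plays an identifiable role.

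Naturality of $\mathsf{T}_F$ in $A$ is immediate from the componentwise description: for $h \colon A \rightarrow A'$, both routes around the naturality square send $\lambda c.\,(f_c, a_c)$ to $\lambda b.\,(F_b(f_{Fb}), h(a_{Fb}))$. For the unit axiom, the $\mathsf{T}_\mathbb{C}$-unit carries $a \in A$ to $\lambda c.\,(1_c, a)$, which $(\mathsf{T}_F)_A$ then maps to $\lambda b.\,(F_b(1_{Fb}), a)$; cofunctor axiom~(ii) reduces $F_b(1_{Fb})$ to $1_b$, recovering the $\mathsf{T}_\mathbb{B}$-unit.

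The substantive step is the multiplication axiom, which asserts the coincidence of the composites $\mu^\mathbb{B} \circ (\mathsf{T}_F)T_\mathbb{B} \circ T_\mathbb{C}(\mathsf{T}_F)$ and $\mathsf{T}_F \circ \mu^\mathbb{C}$ on a doubly-nested element $\lambda c.\,(f_c, \lambda c'.\,(g_{cc'}, a_{cc'}))$. Expanding both using the multiplication formula from Definition~\ref{def:10}, the $b$-component of the top path will have first coordinate $F_{\mathrm{cod}(F_b(f_{Fb}))}(g_{Fb,\,F(\mathrm{cod}(F_b(f_{Fb})))}) \circ F_b(f_{Fb})$, while the bottom path will give $F_b(g_{Fb,\,\mathrm{cod}(f_{Fb})} \circ f_{Fb})$. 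Cofunctor axiom~(i) supplies the identity $F(\mathrm{cod}(F_b(f_{Fb}))) = \mathrm{cod}(f_{Fb})$, reconciling the indexing of $g$ and of $a$ in the two expressions; axiom~(iii) then rewrites the outer $F_b$-application of the bottom expression to match the top. This is essentially routine, but keeping the subscripts straight will be the main obstacle.

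Functoriality is the final step. The identity case is immediate, since $F_b = \mathrm{id}_{\mathbb{B}_b}$ for $F = \mathrm{id}_\mathbb{B}$. For composition $\mathbb{B} \xrightarrow{F} \mathbb{C} \xrightarrow{G} \mathbb{D}$, cofunctor composition is given by $(GF)(b) = G(Fb)$ on objects and $(GF)_b = F_b \circ G_{Fb}$ on morphisms; substituting into~\eqref{eq:32} and comparing directly with the composite $\mathsf{T}_F \circ \mathsf{T}_G$ yields equality componentwise, which in $\cat{Mnd}_a(\cat{Set})^\mathrm{op}$ reads $\mathsf{T}_{GF} = \mathsf{T}_F \circ \mathsf{T}_G$ as required.
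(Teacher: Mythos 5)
Your proposal is correct and takes the same route as the paper, which simply asserts that ``it is easy to check that the components are compatible with the units and multiplications'' of the two presheaf monads; you have carried out that check in full, correctly locating cofunctor axiom~(ii) in the unit axiom and axioms~(i) and~(iii) in the multiplication axiom, and your functoriality verification is likewise sound.
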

\begin{proof}
  It is easy to check that the components~\eqref{eq:32} are compatible
  with the units and multiplications of the presheaf monads
  $\mathsf{T}_\mathbb{B}$ and $\mathsf{T}_\mathbb{C}$.
\end{proof}

\subsection{Semantics}
\label{sec:coalg-diagr-comon}

We now consider the semantics associated to presheaf comonads and
monads. Starting again on the comonad side, it turns out that
the adjunction~\eqref{eq:31} inducing the presheaf comonad
$\mathsf{Q}_\mathbb{B}$ is comonadic, but not \emph{strictly} so;
thus, $\mathsf{Q}_\mathbb{B}$-coalgebras are not
exactly presheaves $\mathbb{B} \rightarrow \cat{Set}$, but only
something equivalent:

\begin{Defn}[Left $\mathbb{B}$-set]
  \label{def:39}
  Let $\mathbb{B}$ be a small category. A \emph{left $\mathbb{B}$-set}
  is a set $X$ endowed with a projection map
  $p \colon X \rightarrow \mathrm{ob}(\mathbb{B})$ and an action
  $\mathord\ast \colon \textstyle\sum_{x \in X} \mathbb{B}_{p(x)}
  \rightarrow X$, notated as $(x,f) \mapsto f \ast x$, satisfying the
  typing axiom $p(f \ast x) = \mathrm{cod}(f)$ and the functoriality
  axioms $\mathrm{id} \ast x = x$ and
  $g \ast (f \ast x) = (g \circ f) \ast x$. We write
  $\mathbb{B}\text-\cat{Set}$ for the category of left
  $\mathbb{B}$-sets, whose maps are functions commuting with the
  projections and actions. We write
  $U^\mathbb{B} \colon \mathbb{B}\text-\cat{Set} \rightarrow
  \cat{Set}$ for the forgetful functor $(X,p,\ast) \mapsto X$.

  Given a cofunctor $F \colon \mathbb{B} \rightsquigarrow \mathbb{C}$
  between small categories, we define the functor
  $\Sigma_F \colon \mathbb{B}\text-\cat{Set} \rightarrow
  \mathbb{C}\text-\cat{Set}$ over $\cat{Set}$ to act as follows, where
  we write $f \ast^F x \defeq F_{p(x)}(f) \ast x$:
  \begin{equation*}
    (X \xrightarrow{p} \mathrm{ob}(\mathbb{B}),\,
    \Sigma_{x \in X} \mathbb{B}_{p(x)} \xrightarrow{\smash{\ast}} X) \ \mapsto\ (X
    \xrightarrow{\smash{Fp}} \mathrm{ob}(\mathbb{C}), \Sigma_{x \in X} \mathbb{C}_{F(p(x))} \xrightarrow{\smash{\ast^F}} X)\rlap{ .}
  \end{equation*}
\end{Defn}

In~\cite{Ahman2014Coalgebraic}, 
what we call a left $\mathbb{B}$-set was termed a \emph{coalgebraic
  update lens}; the following result, which is immediate from the
definitions, was also observed there.

\begin{Prop}
  \label{prop:19}
  For any small category $\mathbb{B}$, the category of
  Eilenberg--Moore $\mathsf{Q}_\mathbb{B}$-coalgebras is isomorphic to
  $\mathbb{B}\text-\cat{Set}$ via an isomorphism commuting with the
  functors to $\cat{Set}$. These isomorphisms are natural with respect
  to cofunctors $F \colon \mathbb{B} \rightsquigarrow \mathbb{C}$.
\end{Prop}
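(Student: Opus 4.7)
The plan is to unpack the coalgebra data and axioms and observe that they match the definition of a left $\mathbb{B}$-set term for term. Given a coalgebra structure $\alpha \colon X \to Q_{\mathbb{B}}(X) = \sum_{b} X^{\mathbb{B}_b}$, I would first rewrite it as the pair consisting of its first projection $p \colon X \to \mathrm{ob}(\mathbb{B})$ and, for each $x \in X$, a function $\mathbb{B}_{p(x)} \to X$, which by currying is equivalent to an action $\ast \colon \sum_{x \in X}\mathbb{B}_{p(x)} \to X$. This bijection between $\cat{Set}$-maps $X \to Q_\mathbb{B}(X)$ and pairs $(p,\ast)$ of a projection and a fibrewise action is entirely formal; the content of the proof is to check that the two coalgebra axioms correspond exactly to the three left $\mathbb{B}$-set axioms.

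I would then compute each axiom in turn using the formulas~\eqref{eq:30} for $\varepsilon$ and $\delta$. The counit axiom $\varepsilon_X \circ \alpha = 1_X$ says that, if $\alpha(x) = (p(x), \varphi_x)$, then $\varphi_x(1_{p(x)}) = x$; under the currying this is precisely the unit law $1_{p(x)} \ast x = x$. For the comultiplication, one unfolds $\delta_X \circ \alpha$ and $Q_\mathbb{B}\alpha \circ \alpha$: the former sends $x$ to $(p(x), \lambda f.\,(\mathrm{cod}(f), \lambda g.\,\varphi_x(gf)))$, while the latter sends $x$ to $(p(x), \lambda f.\,\alpha(\varphi_x(f)))$. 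Equating these two in $Q_\mathbb{B} Q_\mathbb{B}(X)$ gives, first, that $p(f \ast x) = \mathrm{cod}(f)$ (matching the first components), and second, that $g \ast (f \ast x) = (gf) \ast x$ (matching the second). Conversely, any left $\mathbb{B}$-set defines a coalgebra by reversing this recipe, and the two constructions are mutually inverse on the nose. A parallel and easier computation shows that morphisms of coalgebras are exactly functions commuting with the projections and actions, so we obtain an isomorphism of categories over $\cat{Set}$.

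For the naturality statement, let $F \colon \mathbb{B} \rightsquigarrow \mathbb{C}$ be a cofunctor. Using the explicit formula~\eqref{eq:28} for $\mathsf{Q}_F$, one verifies that the induced functor $(\mathsf{Q}_F)_\ast$ on coalgebra categories takes a coalgebra $\alpha = (p, \ast)$ to the coalgebra with projection $Fp$ and action $f \ast' x = \varphi_x(F_{p(x)}(f)) = F_{p(x)}(f) \ast x$; this is precisely the definition of $\Sigma_F$ in Definition~\ref{def:39}. The main point to check here is purely bookkeeping in the formula $(Fb, \varphi \circ F_b) \mapsto$ action, with no further axiomatic content.

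I do not expect any substantive obstacle: the proof is a direct translation via the Yoneda lemma in exactly the same spirit as the proof of Proposition~\ref{prop:6}, and indeed one could shortcut the entire computation by invoking that proposition and noting that the three comonad axioms there, applied with $E_b = \mathbb{B}_b$ and with the $\mathbb{B}$-category structure implicit in $\mathsf{Q}_\mathbb{B}$, literally are the three axioms defining a left $\mathbb{B}$-set once the coalgebra structure map is curried. The only mildly delicate bookkeeping is keeping the direction of composition straight in the comultiplication axiom; writing $f \ast x$ with $f \in \mathbb{B}_{p(x)}$ and using the convention of~\eqref{eq:30} that $\delta$ uses $gf$ (post-composition by $g$) ensures that the functoriality condition comes out as $g \ast (f \ast x) = (gf) \ast x$ and not its opposite.
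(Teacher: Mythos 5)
Your proof is correct and follows exactly the route the paper intends: the paper simply declares the result ``immediate from the definitions'', and your term-by-term unpacking of the counit and comultiplication axioms into the unit, typing, and functoriality laws of a left $\mathbb{B}$-set (together with the check that $\mathsf{Q}_F$ induces $\Sigma_F$ via~\eqref{eq:28}) is precisely that verification, carried out correctly.
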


Turning now to the presheaf monad $\mathsf{T}_\mathbb{B}$, it follows
from the results of~\cite{Johnstone1990Collapsed} that the category of
$\mathsf{T}_\mathbb{B}$-models is equivalent to the category of
presheaves $X \colon \mathbb{B}^\mathrm{op} \rightarrow \cat{Set}$
which \emph{either} have each $X(b)$ empty, \emph{or} each $X(b)$
non-empty. However, we will be less interested in characterising the
$\mathsf{T}_\mathbb{B}$-models in $\cat{Set}$ than the
$\mathsf{T}_\mathbb{B}$-\emph{comodels}. We may exploit the fact that
$\mathsf{T}_\mathbb{B}$ is generated by the theory of
$\mathbb{B}$-valued dependently typed update to obtain such a
characterisation.

\begin{Prop}
  \label{prop:13}
  For any small category $\mathbb{B}$, the category of comodels of the
  theory $\mathbb{T}_\mathbb{B}$ of $\mathbb{B}$-valued
  dependently-typed update is isomorphic to
  $\mathbb{B}\text-\cat{Set}$ via a functor
  \begin{equation}\label{eq:10}
    \cd{
      {\mathbb{B}\text-\cat{Set}} \ar[rr]^-{} \ar[dr]_-{U^\mathbb{B}} & &
      {{}^{\mathbb{T}_\mathbb{B}} \cat{Set}} \ar[dl]^-{{}^{\mathbb{T}_\mathbb{B}} U} \\ &
      {\cat{Set}}
    }
  \end{equation}
  which sends a left $\mathbb{B}$-set $(X,p,\mathord \ast)$ to the
  $\mathbb{T}_\mathbb{B}$-comodel $\alg = (X, \dbr{\thg}_{\alg})$ with
  \begin{equation*}
    \dbr{\mathsf{get}}_{\alg}(x) = (p(x), x) \qquad
    \dbr{\mathsf{upd}_f}_{\alg}(x) =
    \begin{cases}
      x & \text{ if $p(x) \neq \mathrm{dom}(f)$;}\\
      f \ast x & \text{ if $p(x) = \mathrm{dom}(f)$.}
    \end{cases}
  \end{equation*}
\end{Prop}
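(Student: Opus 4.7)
The plan is to construct an explicit inverse functor to the one described, and to verify that the two are mutually inverse; the triangle~\eqref{eq:10} then commutes on the nose, since both forgetful functors send the relevant structure to the underlying set $X$. In the forward direction, I must check that the two read-only state axioms on $\mathsf{get}$ from~\eqref{eq:23}, together with the four equations~\eqref{eq:3}--\eqref{eq:7}, are satisfied by the given co-operations on $(X, p, \mathord\ast)$, and that any map $h \colon (X, p, \mathord\ast) \rightarrow (X', p', \mathord{\ast'})$ of left $\mathbb{B}$-sets commutes with each co-operation; both are direct once the bookkeeping for $\equiv_{\mathsf{get}, b}$ is set up.

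The key to the backward direction is that the read-only state axiom $\mathsf{get}(\lambda v.\, x) \equiv x$, read co-algebraically, forces the second component of $\dbr{\mathsf{get}}_{\alg} \colon X \rightarrow \mathrm{ob}(\mathbb{B}) \times X$ to be the identity on $X$; hence $\dbr{\mathsf{get}}_{\alg} = (p, \mathrm{id}_X)$ for a unique projection $p$, and the second axiom of~\eqref{eq:23} is then automatic. I then set $f \ast x \defeq \dbr{\mathsf{upd}_f}_{\alg}(x)$ for $f \in \mathbb{B}_{p(x)}$. The typing requirement $p(f \ast x) = \mathrm{cod}(f)$ follows by unpacking~\eqref{eq:5}; axiom~\eqref{eq:3} guarantees that $\dbr{\mathsf{upd}_f}_{\alg}$ acts as the identity outside the domain fibre of $f$, so that the entire comodel structure on $X$ is determined by $(p, \mathord\ast)$; and~\eqref{eq:6} and~\eqref{eq:7} translate directly into the unit and associativity axioms of the action. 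That the two assignments are mutually inverse is then a routine verification on generating operations, and functoriality of the inverse in $\mathbb{T}_\mathbb{B}$-comodel morphisms reduces to the observation that a comodel homomorphism must commute with both $\dbr{\mathsf{get}}$ and each $\dbr{\mathsf{upd}_f}$.

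The main obstacle is the clean translation of the abbreviation $\equiv_{\mathsf{get}, b}$ into comodel-theoretic language. The point is that each side of $t \equiv_{\mathsf{get}, b} u$, as a $\Sigma$-term, co-interprets to a map $X \rightarrow A \cdot X$ built by first applying $\dbr{\mathsf{get}} = (p, \mathrm{id}_X)$ to split $x$ into its value $p(x)$ and the element $x$ itself, and only then applying $\dbr{t}$ or $\dbr{u}$ on the $b$-fibre, while leaving all other fibres untouched. Hence $t \equiv_{\mathsf{get}, b} u$ holds in $\alg$ if and only if $\dbr{t}_{\alg}(x) = \dbr{u}_{\alg}(x)$ for every $x$ with $p(x) = b$. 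Once this dictionary is in place, each of the six axioms of $\mathbb{T}_\mathbb{B}$ becomes transparently equivalent to one of the clauses defining a left $\mathbb{B}$-set, together with the identity action of $\mathsf{upd}_f$ outside the domain fibre of $f$; both directions of the bijection then fall out by direct computation.
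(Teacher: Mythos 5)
Your proposal is correct and follows essentially the same route as the paper's proof: identify a comodel as a read-only-state comodel $(X,p)$ (with the first axiom of~\eqref{eq:23} forcing $\dbr{\mathsf{get}} = (p,\mathrm{id}_X)$) equipped with maps $\dbr{\mathsf{upd}_f}$, and then translate~\eqref{eq:3}--\eqref{eq:7} one by one into the identity-off-fibre, typing, unit and associativity conditions of a left $\mathbb{B}$-set. Your explicit dictionary rendering $t \equiv_{\mathsf{get},b} u$ as ``$\dbr{t}(x) = \dbr{u}(x)$ for all $x \in p^{-1}(b)$'' is exactly the translation the paper uses implicitly, so the two arguments coincide.
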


\begin{proof}
  A comodel of $\mathbb{B}$-valued dependently typed state is firstly,
  a comodel of $\mathrm{ob}(\mathbb{B})$-valued read-only state, i.e.,
  a set $X$ endowed with a function
  $p \colon X \rightarrow \mathrm{ob}(\mathbb{B})$. On top of this, we
  have functions $\dbr{\mathsf{upd}_f} \colon X \rightarrow X$ for
  each $f \colon b \rightarrow b'$ in $\mathbb{B}$ which satisfy the
  equations~\eqref{eq:3}--\eqref{eq:7}. The first forces
  $\dbr{\mathsf{upd}_f}$ to act trivially on the fibre $p^{-1}(c)$ for
  all $c \neq b$, while the second forces it to map $p^{-1}(b)$ into
  $p^{-1}(b')$. So to give the $\dbr{\mathsf{upd}_f}$'s
  satisfying~\eqref{eq:3} and~\eqref{eq:5} is equally to give
  functions $f \ast (\thg) \colon p^{-1}(b) \rightarrow p^{-1}(b')$
  for each $f \colon b \rightarrow b'$ in $\mathbb{B}$. Now the last
  two axioms~\eqref{eq:6} and~\eqref{eq:7} impose the functoriality
  constraints $1_b \ast x = x$ and
  $g \ast (f \ast x) = (g \circ f) \ast x$, so that, in sum, a
  comodel of $\mathbb{B}$-valued dependently typed update can be
  identified with a left $\mathbb{B}$-set, via
  the identification given in the statement of the result.
\end{proof}

\section{The costructure--cosemantics adjunction}
\label{sec:monad-comon-adjunct}

In this section, we construct the adjunction
which is the main object of study of this paper. We begin by
explaining how taking comodels
yields a \emph{cosemantics} functor from accessible monads to
accessible comonads on $\cat{Set}$. We then show that this functor has
a left adjoint, as displayed below, which we term the
\emph{costructure} functor; and finally, we explain how this relates
to the material of~\cite{Katsumata2019Interaction}.
\begin{equation}\label{eq:4}
  \cd{
    {\cat{Mnd}_a(\cat{Set})^\mathrm{op}} \ar@<-4.5pt>[r]_-{\mathrm{Cosem}} \ar@{<-}@<4.5pt>[r]^-{\mathrm{Costr}} \ar@{}[r]|-{\bot} &
    {\cat{Cmd}_a(\cat{Set})}
  }
\end{equation}

\subsection{The cosemantics comonad of an accessible monad}
\label{sec:assoc-comon-an}

Our first task is to show that the cosemantics functor of
Definition~\ref{def:4} yields the right adjoint functor
in~\eqref{eq:4}. We begin with the basic facts about Eilenberg--Moore
semantics for comonads.

\begin{Defn}[Eilenberg--Moore semantics]
  \label{def:9}
  Let $\cat{Cmd}(\C)$ be the category of comonads in $\C$. The
  \emph{Eilenberg--Moore semantics functor}
  $\mathrm{EM} \colon \cat{Cmd}(\C) \rightarrow \cat{CAT} / \C$ sends
  a comonad $\mathsf{Q} = (Q, \varepsilon, \delta)$ to the forgetful
  functor $U^\mathsf{Q} \colon \cat{Coalg}(\mathsf{Q}) \rightarrow \C$
  from its category of Eilenberg--Moore coalgebras, and sends
  $f \colon \mathsf{Q} \rightarrow \mathsf{P}$ to the functor
  $\cat{Coalg}(\mathsf{Q}) \rightarrow \cat{Coalg}(\mathsf{P})$ over
  $\C$ acting by
  $(X, x \colon X \rightarrow QX) \mapsto (X, f_X \circ x \colon X
  \rightarrow PX)$.
\end{Defn}

\begin{Lemma}
  \label{lem:3}
  For any category $\C$, the semantics functor
  $\mathrm{EM} \colon \cat{Cmd}(\C) \rightarrow \cat{CAT} / \C$ is
  full and faithful, and its essential image comprises the strictly
  comonadic functors.
\end{Lemma}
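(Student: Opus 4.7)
The plan is to dualise the classical theorem that the Eilenberg--Moore semantics functor $\cat{Mnd}(\C) \to \cat{CAT}/\C$ is fully faithful with essential image the strictly monadic functors; I will outline the three pieces (faithfulness, fullness, and the image characterisation).

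First I would establish faithfulness. Given two comonad morphisms $f, g \colon \mathsf{Q} \to \mathsf{P}$ with $\mathrm{EM}(f) = \mathrm{EM}(g)$, evaluating on every cofree coalgebra $F^{\mathsf{Q}}(X) = (QX, \delta_X)$ forces the equality $f_{QX} \circ \delta_X = g_{QX} \circ \delta_X$. Post-composing with $P\varepsilon^{\mathsf{Q}}_X \colon PQX \to PX$ and using naturality of $f,g$ together with the counit law $\varepsilon^{\mathsf{Q}}_{QX} \circ \delta_X = 1_{QX}$ yields $f_X = g_X$, so $f = g$.

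For fullness, suppose $H \colon \cat{Coalg}(\mathsf{Q}) \to \cat{Coalg}(\mathsf{P})$ is a functor over $\C$. Applying $H$ to the cofree coalgebra $(QX, \delta_X)$ produces a $\mathsf{P}$-coalgebra structure $\phi_X \colon QX \to PQX$ on the \emph{same} underlying object (since $H$ commutes with the forgetful functors), and I would define $f_X \defeq P\varepsilon^{\mathsf{Q}}_X \circ \phi_X \colon QX \to PX$. Naturality of $f$ follows by applying $H$ to the coalgebra morphisms $Qh \colon (QX, \delta_X) \to (QY, \delta_Y)$ coming from arrows $h \colon X \to Y$ in $\C$. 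Verifying the comonad-morphism axioms for $f$ is a diagram chase using the $\mathsf{P}$-coalgebra axioms on $(QX, \phi_X)$ together with the counit and coassociativity laws for $\mathsf{Q}$; this is the main obstacle, but it reduces to standard triangle-identity manipulations. To see $\mathrm{EM}(f) = H$, observe that any $\mathsf{Q}$-coalgebra $(X, x)$ comes equipped with a canonical coalgebra morphism $x \colon (X, x) \to (QX, \delta_X)$ into the cofree coalgebra over its carrier; functoriality of $H$ then forces $H(X, x)$ to have coalgebra structure $f_X \circ x$, which is exactly $\mathrm{EM}(f)(X, x)$.

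Finally, for the essential image, $U^{\mathsf{Q}} = \mathrm{EM}(\mathsf{Q})$ is strictly comonadic by construction, so the image lies in the strictly comonadic functors. Conversely, if $U \colon \A \to \C$ is strictly comonadic, it admits a right adjoint $F$ inducing a comonad $\mathsf{Q} = UF$ for which the canonical comparison functor $K \colon \A \to \cat{Coalg}(\mathsf{Q})$ is an isomorphism of categories over $\C$; this exhibits $U$ as isomorphic in $\cat{CAT}/\C$ to $\mathrm{EM}(\mathsf{Q})$, completing the characterisation.
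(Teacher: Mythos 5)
Your proof is correct, but it takes a genuinely different route from the paper: the paper simply cites Barr--Wells (Theorem~6.3 of \emph{Toposes, Triples and Theories}) for full fidelity and declares the image characterisation ``easy from the definitions'', whereas you reconstruct the whole argument from scratch by dualising the classical structure--semantics proof for monads. Your key device --- probing a functor over $\C$ on the cofree coalgebras $(QX,\delta_X)$, extracting $f_X = P\varepsilon^{\mathsf{Q}}_X\circ\phi_X$, and then using the canonical coalgebra morphisms $x\colon (X,x)\to(QX,\delta_X)$ and $\delta_X\colon(QX,\delta_X)\to(QQX,\delta_{QX})$ to pin down both the comonad-morphism axioms and the equality $\mathrm{EM}(f)=H$ --- is exactly the content of the cited theorem, so nothing is lost; what your version buys is self-containedness and an explicit inverse to $\mathrm{EM}$ on hom-sets, which is in fact the form of the correspondence the paper later uses implicitly (e.g.\ in Remark~\ref{rk:1} and in Proposition~\ref{prop:16}, where comonad morphisms are recovered from their action on cofree/representable coalgebras via Yoneda-style evaluation). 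One cosmetic slip: in the faithfulness step, postcomposing $f_{QX}\circ\delta_X$ with $P\varepsilon^{\mathsf{Q}}_X$ and using naturality of $f$ leaves you needing the counit law $Q\varepsilon^{\mathsf{Q}}_X\circ\delta_X = 1_{QX}$, not $\varepsilon^{\mathsf{Q}}_{QX}\circ\delta_X = 1_{QX}$ as written; both are comonad axioms, so this does not affect the argument.
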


\begin{proof}
  The first part is~\cite[Theorem~6.3]{Barr1985Toposes}; the second is
  easy from the definitions.
\end{proof}

Here, a functor $V \colon \D \rightarrow \C$ is strictly comonadic if
it has a right adjoint $G$, and the canonical comparison functor from
$\D$ to the category of coalgebras for the comonad $VG$ is an
isomorphism. Concrete conditions for a functor to be strictly
comonadic are given by the Beck comonadicity
theorem~\cite[Theorem~3.14]{Barr1985Toposes}.

\begin{Prop}
  \label{prop:2}
  For an accessible $\cat{Set}$-monad $\mathsf{T}$, the forgetful
  functor
  ${{}^\mathsf{T} U \colon {}^\mathsf{T} \cat{Set} \rightarrow
    \cat{Set}}$ from the category of comodels is strictly comonadic
  for an accessible comonad.
\end{Prop}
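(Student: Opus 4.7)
The plan is to construct the right adjoint $R$ to ${}^\mathsf{T}U$ explicitly, and then check strict comonadicity by exhibiting an inverse to the canonical comparison $K \colon {}^\mathsf{T}\cat{Set} \to \cat{Coalg}({}^\mathsf{T}U \cdot R)$. As a preliminary, I would invoke Proposition~\ref{prop:1} to present $\mathsf{T}$ via an algebraic theory $\mathbb{T} = (\Sigma, \E)$; this shows that a comodel structure on any set $X$ is determined by its co-operations at generators $\dbr{\sigma}_{\alg} \colon X \to \abs{\sigma} \cdot X$, so that the collection of comodel structures on $X$ is a (small) set. It is easy to verify that ${}^\mathsf{T}U$ reflects isomorphisms (a standard argument transferring co-operations along the inverse bijection), and that ${}^\mathsf{T}U$ creates all small colimits, since in $\cat{Set}$ the functor $(\thg) \times A$ preserves colimits for every set $A$.

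For the cofree comodel on $X$, I would let $RX$ be the set of families $\beta = (\beta_A \colon T(A) \to A \times X)_{A \in \cat{Set}}$ satisfying analogues of the comodel axioms~\eqref{eq:21}: namely $\beta_A(\eta_A(a)) = (a,x_\beta)$ for a common $x_\beta \in X$, plus a Kleisli-compatibility condition. Accessibility of $\mathsf{T}$ is essential here: it guarantees that any such $\beta$ is determined by its values on $\kappa$-small sets for the accessibility rank $\kappa$, so $RX$ is a genuine set of bounded cardinality. The cofree comodel structure on $RX$ sends $\dbr{t}_{\alg[R X]}(\beta)$ to $(a, \beta')$, where $\beta_A(t) = (a, x_\beta)$ and $\beta'$ is obtained from $\beta$ by ``running $t$ first'' using Kleisli extension; the counit $\varepsilon_X \colon RX \to X$ projects onto the $X$-component. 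A short induction on $T$-terms verifies the universal property.

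With the adjunction ${}^\mathsf{T}U \dashv R$ in hand, strict comonadicity follows either via the dual Beck theorem, using the reflection of isomorphisms and creation of the relevant equalizers (which follows from the fact that ${}^\mathsf{T}U$-split copairs transport comodel structure along their sections), or directly by writing down an inverse to $K$: a coalgebra $\alpha \colon X \to URX$ transports the cofree comodel structure back along $\alpha$ to give a comodel structure on $X$, and by construction these two assignments are mutually inverse on the nose. Finally, the comonad $\mathsf{Q} = {}^\mathsf{T}U \cdot R$ is accessible because $Q(X) = URX$ comprises families $\beta$ determined by their restrictions to $\kappa$-small sets, whence $Q$ preserves $\kappa$-filtered colimits in $X$. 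The main obstacle is the bounded-size construction of $RX$---the only step genuinely requiring accessibility, and the step that fails for non-accessible monads such as the continuation monad $V^{V^{(\thg)}}$.
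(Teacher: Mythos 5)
Your overall strategy---constructing the cofree comodel $RX$ by hand and then invoking the dual Beck theorem---is legitimate and genuinely different from the paper's treatment, which handles the finitary case by citation and the general case by routing through the behaviour category (Proposition~\ref{prop:33} and Theorem~\ref{thm:2}). Your $RX$ is essentially the paper's set of admissible behaviours (Definition~\ref{def:35}) enriched with an $X$-component, so the two arguments are close in spirit, but yours is more direct in that it never mentions $\mathbb{B}_\mathsf{T}$; the surrounding steps (reflection of isomorphisms, creation of split equalizers, the explicit inverse to the comparison functor) are all fine in outline.

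There is, however, a genuine gap at exactly the step you single out as the crux. The unit condition together with the Kleisli-compatibility condition $\beta_A(t(u)) = \beta_A(t \gg u_{\pi_1\beta_B(t)})$ does \emph{not} imply that $\beta$ is determined by its values at $\kappa$-small arities, and accessibility of $\mathsf{T}$ alone will not rescue this. Take $X = 1$ and the theory with one unary operation $w$ and no equations, so $T(A) = \mathbb{N} \times A$: for \emph{any} choice of elements $c_A \in A$, one for each nonempty set $A$, the family with $\beta_A(0,a) = a$ and $\beta_A(n,a) = c_A$ for $n \geq 1$ satisfies both conditions, whereas the behaviour of every state of every comodel has $\beta_A(n,a) = a$. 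There is a proper class of such spurious families, so your $R1$ is not a set; worse, for such a $\beta$ the family $\partial_w\beta$ violates the unit condition, so the co-operation $\beta \mapsto (\pi_1\beta_A(t), \partial_t\beta)$ does not land in $RX$ and the cofree comodel structure is ill-defined. The missing hypothesis is \emph{naturality} of $\beta_A$ in $A$. Genuine behaviours are automatically natural, since the second axiom of~\eqref{eq:21} applied with $u = \eta_B \circ f$ gives $\dbr{T(f)(t)} = \spn{\nu_{f(a)}}_{a \in A} \circ \dbr{t}$; and once naturality is imposed, $\beta_A(t \gg a) = (a, {-})$ is forced, $RX$ is closed under the operators $\partial_t$, and accessibility of $T$ really does bound $RX$ by the components at arities below the rank. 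With that repair your construction and uniqueness argument go through. The same care is needed for your closing claim that $\mathsf{Q}$ is accessible: smallness in the arity variable does not by itself yield preservation of filtered colimits in $X$; the clean route is to identify $Q(X)$ as the small coproduct of representables $\sum_{\beta} X^{T(1)/\sim_\beta}$, or to argue as in Proposition~\ref{prop:40} that $Q$ preserves connected limits.
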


In the finitary case, this is~\cite[Theorem~2.2]{Power2004From}; this
more general form can be proven as a routine application of the theory
of locally presentable categories. We omit this, as
Theorem~\ref{thm:2} provides an independent elementary argument.

\begin{Cor}
  \label{cor:2}
  The cosemantics functor
  $\cat{Mnd}_a(\cat{Set})^\mathrm{op} \rightarrow \cat{CAT} /
  \cat{Set}$ factors as
  \begin{equation*}
    \cat{Mnd}_a(\cat{Set})^\mathrm{op} \xrightarrow{\mathrm{Cosem}}
    \cat{Cmd}_a(\cat{Set}) \xrightarrow{\mathrm{EM}} \cat{CAT} / \cat{Set}\rlap{ .}
  \end{equation*}
\end{Cor}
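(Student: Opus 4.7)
The plan is to combine the two results immediately preceding the corollary: Proposition~\ref{prop:2} says that for each accessible monad $\mathsf{T}$, the forgetful functor ${}^\mathsf{T}U \colon {}^\mathsf{T}\cat{Set} \to \cat{Set}$ is strictly comonadic for an accessible comonad, so lies in the essential image of the restriction of $\mathrm{EM}$ to $\cat{Cmd}_a(\cat{Set})$. Lemma~\ref{lem:3} says that $\mathrm{EM}$ is fully faithful. Together these two facts make $\mathrm{EM}$ an equivalence of $\cat{Cmd}_a(\cat{Set})$ with a full, replete subcategory of $\cat{CAT}/\cat{Set}$ through which the functor $\mathrm{Cosem}_{\cat{Set}}$ must evidently factor; the corollary is then just the abstract observation that a functor whose image lies in the essential image of a fully faithful functor factors through it.

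To make this completely explicit, I would first exploit the strict comonadicity of Proposition~\ref{prop:2} to define the factored functor on objects: for each accessible monad $\mathsf{T}$, set $\mathrm{Cosem}(\mathsf{T})$ to be the accessible comonad on $\cat{Set}$ generated by the adjunction ${}^\mathsf{T}F \dashv {}^\mathsf{T}U$. Strict comonadicity yields the on-the-nose equality $\mathrm{EM}(\mathrm{Cosem}(\mathsf{T})) = ({}^\mathsf{T}U \colon {}^\mathsf{T}\cat{Set} \to \cat{Set})$ in $\cat{CAT}/\cat{Set}$. On morphisms, a monad map $f \colon \mathsf{T}_1 \to \mathsf{T}_2$ is sent by the original cosemantics functor (contravariantly) to a morphism $f^\ast \colon {}^{\mathsf{T}_2}U \to {}^{\mathsf{T}_1}U$ in $\cat{CAT}/\cat{Set}$; fullness of $\mathrm{EM}$ (Lemma~\ref{lem:3}) produces a comonad map $\mathrm{Cosem}(\mathsf{T}_2) \to \mathrm{Cosem}(\mathsf{T}_1)$ mapping under $\mathrm{EM}$ to $f^\ast$, and faithfulness pins this lift down uniquely.

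Functoriality of the resulting assignment $\mathrm{Cosem} \colon \cat{Mnd}_a(\cat{Set})^{\mathrm{op}} \to \cat{Cmd}_a(\cat{Set})$ is then free: the unique lifts of identities and of composites are automatically identities and composites of lifts, by faithfulness of $\mathrm{EM}$ together with the functoriality of the original $\mathrm{Cosem}_{\cat{Set}}$. Commutativity on the nose of the factorisation triangle follows because the object assignment was chosen so that $\mathrm{EM}(\mathrm{Cosem}(\mathsf{T})) = {}^\mathsf{T}U$ and the morphism assignment was defined exactly as an $\mathrm{EM}$-preimage of $f^\ast$. The only real content of the corollary is Proposition~\ref{prop:2}; everything else is formal, so there is no genuine obstacle beyond correctly quoting that (unproven in the excerpt, and flagged by the authors as being reproved elementarily later via Theorem~\ref{thm:2}).
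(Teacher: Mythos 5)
Your proposal is correct and is exactly the argument the paper intends: the corollary is stated as an immediate consequence of Proposition~\ref{prop:2} (strict comonadicity of ${}^\mathsf{T}U$ for an accessible comonad) and Lemma~\ref{lem:3} (full fidelity of $\mathrm{EM}$ with essential image the strictly comonadic functors), and your explicit unwinding of the object and morphism assignments is just the standard lifting-through-a-fully-faithful-functor argument that the paper leaves implicit.
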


\subsection{The costructure monad of an accessible comonad}
\label{sec:costructure-monad-an}

We now show that the cosemantics functor
$\cat{Mnd}_a(\cat{Set})^\mathrm{op} \rightarrow
\cat{Cmd}_a(\cat{Set})$ has a left adjoint. This will arise from the
``structure--semantics adjointness''
of~\cite{Lawvere1963Functorial,Dubuc1970Kan-extensions}, which we now
recall.

\begin{Defn}[Endomorphism monad]
  \label{def:14}
  Let $\C$ be a category with powers which is not necessarily locally
  small. We say that $X \in \C$ is \emph{tractable} if, for any set
  $A$, the collection of maps $X^A \rightarrow X$ form a set. For such
  an $X$, the \emph{endomorphism monad} $\mathsf{End}_\C(X)$ on
  $\cat{Set}$ has action on objects $A \mapsto \C(X^A,X)$; unit
  functions $A \rightarrow \C(X^A,X)$ given by $a \mapsto \pi_a$; and
  Kleisli extension
  $u^\dagger \colon \C(X^A, X) \rightarrow \C(X^B, X)$ of
  $u \colon A \rightarrow \C(X^B, X)$ given by
  $t \mapsto t \circ (u_a)_{a \in A}$.
\end{Defn}

Note that endomorphism monads need \emph{not} be accessible; for
example, the endomorphism monad of $V \in \cat{Set}$ is the
non-accessible continuation monad $V^{V^{(\thg)}}$.

\begin{Lemma}
  \label{lem:4}
  Let $\C$ be a category with powers, not necessarily locally small,
  and let $X \in \C$ be tractable. There is a bijection, natural in
  $\mathsf{T}$, between monad morphisms
  $\mathsf{T} \rightarrow \mathsf{End}_\C(X)$ and $\mathsf{T}$-model
  structures on $X$.
\end{Lemma}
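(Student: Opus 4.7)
The plan is to observe that this lemma is essentially a direct unpacking of definitions: both sides are specified by the same family of data, and the two coherence axioms match up on the nose. So the proof will proceed by giving mutually inverse constructions and then verifying naturality.

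First, given a $\mathsf{T}$-model structure $\alg = (X, \dbr{\thg}_\alg)$ on $X$, I would define a family of functions $f_A \colon T(A) \to \C(X^A, X)$ by $f_A(t) \defeq \dbr{t}_\alg$ (this lands in a set because $X$ is tractable). The two axioms~\eqref{eq:8} for a $\mathsf{T}$-model translate, under this assignment, into exactly the statement that $f$ commutes with the units of $\mathsf{T}$ and $\mathsf{End}_\C(X)$ (since both sides of the unit axiom say $f_A(\eta_A(a)) = \pi_a$) and with the Kleisli extensions (since $f_B(u^\dagger(t)) = f_B(t(u)) = \dbr{t(u)}_\alg = \dbr{t}_\alg \circ (\dbr{u_a}_\alg)_{a \in A} = f_A(t) \circ (f_B(u_a))_{a \in A}$ is precisely the Kleisli-extension compatibility for $\mathsf{End}_\C(X)$, by Definition~\ref{def:14}). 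So $f$ is a monad morphism.

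Conversely, given a monad morphism $f \colon \mathsf{T} \to \mathsf{End}_\C(X)$, I would define $\dbr{t}_\alg \defeq f_A(t)$ for $t \in T(A)$, and the same calculation in reverse shows that the monad-morphism equations imply the model axioms~\eqref{eq:8}. By construction these two assignments are mutually inverse, establishing the bijection.

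For naturality in $\mathsf{T}$, I would take a monad morphism $g \colon \mathsf{T}_1 \to \mathsf{T}_2$ and check that both transports along $g$ agree. On the monad-morphism side, $g$ acts by precomposition: $(\mathsf{T}_2 \xrightarrow{f} \mathsf{End}_\C(X)) \mapsto f \circ g$. On the model side, $g$ acts via $g^\ast$ from Definition~\ref{def:4}, sending $(X, \dbr{\thg}_\alg)$ to $(X, \dbr{g(\thg)}_\alg)$. Under the bijection of the previous two steps, the model $(X, \dbr{g(\thg)}_\alg)$ corresponds to the family $t \mapsto \dbr{g_A(t)}_\alg = f_A(g_A(t))$, which is $f \circ g$ as required. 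I do not expect any real obstacle: the content of the lemma is simply that Definition~\ref{def:20} and Definition~\ref{def:14} are set up so that a model structure and a monad morphism into the endomorphism monad are literally the same data, subject to literally the same two equations.
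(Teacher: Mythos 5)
Your proof is correct and is essentially the paper's own argument: the paper likewise observes that a monad morphism $\mathsf{T} \to \mathsf{End}_\C(X)$ is a family of functions $T(A) \to \C(X^A,X)$ compatible with units and Kleisli extensions, and that these compatibilities are literally the model axioms~\eqref{eq:8}. Your explicit naturality check is a welcome addition that the paper leaves implicit.
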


\begin{proof}
  To give a monad map $\mathsf{T} \rightarrow
  \mathsf{End}_\C(X)$ is to give functions $T(A) \rightarrow \C(X^A,X)$
  for each set $A$, compatibly with units and Kleisli
  extensions. If we write the action of these functions as $t \mapsto
  \dbr{t}_{\alg}$, then these compatibilities are precisely the
  conditions~\eqref{eq:24} to make the $\dbr{t}_{\alg}$'s into a
  $\mathsf{T}$-model structure on $X$.
\end{proof}

In the following result, we call a functor $V \colon \A
\rightarrow \C$ \emph{tractable} if it is tractable as an object of
the (not necessarily locally small) functor category $[\A, \C]$.
\begin{Prop}[Structure/semantics]
  \label{prop:10}
  Let $\C$ be a category with powers. The semantics
  functor~$\mathrm{Sem}_\C \colon \cat{Mnd}(\cat{Set})^\mathrm{op} \rightarrow
  \cat{CAT} / \C$ of Definition~\ref{def:4} has a partial left adjoint
  at each tractable $V \colon \A \rightarrow \C$, given by the
  endomorphism monad $\mathsf{End}_{[\A,\C]}(V)$.
\end{Prop}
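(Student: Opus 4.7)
The plan is to exhibit, naturally in $\mathsf{T} \in \cat{Mnd}(\cat{Set})$, a bijection
\[
\cat{Mnd}(\cat{Set})(\mathsf{T},\, \mathsf{End}_{[\A,\C]}(V)) \;\cong\; (\cat{CAT}/\C)(V,\, U^\mathsf{T})\rlap{ ,}
\]
which is precisely the universal property of $\mathsf{End}_{[\A,\C]}(V)$ as a value at $V$ of a partial left adjoint to $\mathrm{Sem}_\C$. The first move is to apply Lemma~\ref{lem:4} inside the functor category $[\A, \C]$, which inherits powers pointwise from $\C$ so that $V$ is tractable there by hypothesis; this produces a natural-in-$\mathsf{T}$ bijection between the left-hand side above and $\mathsf{T}$-model structures on $V$ qua object of $[\A, \C]$.

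The second and main step is to identify such a $\mathsf{T}$-model structure with a functorial lift $\tilde V \colon \A \to \C^\mathsf{T}$ of $V$ through $U^\mathsf{T}$. A $\mathsf{T}$-model structure on $V \in [\A, \C]$ is a family of natural transformations $\dbr{t}_V \colon V^A \to V$ satisfying the axioms~\eqref{eq:8}. Evaluating at $a \in \A$, and using pointwiseness of powers in $[\A,\C]$, yields operations $\dbr{t}_{V(a)} \colon V(a)^A \to V(a)$ which equip each $V(a)$ with a $\mathsf{T}$-model structure $\tilde V(a)$; the naturality of each $\dbr{t}_V$ in $a$ then reads precisely as the statement that $V(h) \colon V(a) \to V(a')$ is a $\mathsf{T}$-model morphism for every $h \colon a \to a'$ in $\A$, so $h \mapsto V(h)$ defines a functor $\tilde V \colon \A \to \C^\mathsf{T}$ over $\C$. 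The reverse passage, extracting operations pointwise from any such lift, is immediate, and the two assignments are mutually inverse.

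Composing the two bijections proves the result, with naturality in $\mathsf{T}$ inherited from Lemma~\ref{lem:4} together with the evident action of a monad morphism on both sides (precomposition with the structure maps of $\mathsf{T}$). The main obstacle, such as it is, lies in the second step: checking that models of $\mathsf{T}$ in $[\A,\C]$ coincide with functors $\A \to \C^\mathsf{T}$ lifting the underlying functor. This ``distribution of model-taking over functor-category formation'' is the substantive observation; once it is in hand, the rest of the argument is bookkeeping wrapped around Lemma~\ref{lem:4}.
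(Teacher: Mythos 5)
Your proposal is correct and follows essentially the same route as the paper: apply Lemma~\ref{lem:4} in $[\A,\C]$, then use the componentwise computation of powers to identify $\mathsf{T}$-model structures on $V$ with liftings of $V$ through $U^\mathsf{T}$. The only difference is that you spell out in more detail the step the paper leaves implicit, namely that evaluating the operations at each $a \in \A$ and reading naturality as the model-morphism condition yields the lift.
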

\begin{proof}
  Let $V \colon \A \rightarrow \C$ be tractable, so that
  $\mathsf{End}_{[\A, \C]}(V)$ exists. By Lemma~\ref{lem:4}, there
  is a bijection, natural in $\mathsf{T}$,
  between monad morphisms
  $\mathsf{T} \rightarrow \mathsf{End}_{[\A, \C]}(V)$
  and $\mathsf{T}$-model structures on $V$ in $[\A, \C]$. Now since powers
  in $[\A, \C]$ are computed componentwise, $\mathsf{T}$-model
  structures on $V$ correspond, naturally in $\mathsf{T}$, with liftings
  \begin{equation*}
    \cd{
      & \C^\mathsf{T} \ar[d]^-{U^\mathsf{T}}\\
      \A \ar[r]^-{V} \ar@{-->}[ur] & \C
    }
  \end{equation*}
  of $V$ through $U^\mathsf{T}$, i.e., with maps $V \rightarrow
  \mathrm{Sem}_\C(\mathsf{T})$ in $\cat{CAT} / \C$.
\end{proof}

Because we are interested in comodels rather than models, we will
apply this result in its dual form: thus, we speak of the
\emph{cotractability} of a functor $V \colon \A \rightarrow \C$, meaning that each collection
$[\A, \C](V, A \cdot V)$ is a set, and the \emph{coendomorphism} monad
$\mathsf{Coend}_{[\A, \C]}(V)$ with action on objects $A \mapsto [\A, \C](V, A \cdot
V)$.

\begin{Lemma}
  \label{lem:2}
  Let $\mathsf{Q}$ be an accessible comonad on $\cat{Set}$. The
  forgetful functor from the category of Eilenberg--Moore coalgebras
  $U^\mathsf{Q} \colon \cat{Coalg}(\mathsf{Q}) \rightarrow \cat{Set}$
  is cotractable, and the coendomorphism monad
  $\mathsf{Coend}_{[\cat{Coalg}(\mathsf{Q}), \cat{Set}]}(U^\mathsf{Q})$ is accessible.
\end{Lemma}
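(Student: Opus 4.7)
The plan is to exploit the well-known fact that the Eilenberg--Moore category of an accessible comonad on a locally presentable category is again accessible. Fix a regular cardinal $\kappa$ such that $\mathsf{Q}$ is $\kappa$-accessible and $\cat{Coalg}(\mathsf{Q})$ is $\kappa$-accessible, and let $\mathcal{K} \subseteq \cat{Coalg}(\mathsf{Q})$ be a small full subcategory containing a representative of each isomorphism class of $\kappa$-presentable coalgebras. Then $\mathcal{K}$ is dense in $\cat{Coalg}(\mathsf{Q})$, and every $\kappa$-accessible functor $\cat{Coalg}(\mathsf{Q}) \rightarrow \cat{Set}$ is the left Kan extension along the inclusion $\mathcal{K} \hookrightarrow \cat{Coalg}(\mathsf{Q})$ of its restriction to $\mathcal{K}$.

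Next, I would observe that $U^\mathsf{Q}$ is left adjoint to the cofree functor $F^\mathsf{Q} \colon X \mapsto (QX, \delta_X)$ and so preserves all colimits; likewise, the copower functor $A \cdot (\thg) = A \times (\thg)$ on $\cat{Set}$ is itself a left adjoint, so $A \cdot U^\mathsf{Q}$ preserves all colimits for every set $A$. Both functors are therefore $\kappa$-accessible, whence by the Kan extension argument above, natural transformations $U^\mathsf{Q} \Rightarrow A \cdot U^\mathsf{Q}$ correspond bijectively to natural transformations of their restrictions to $\mathcal{K}$. Such transformations form a (small) set, establishing cotractability of $U^\mathsf{Q}$, and we obtain the end formula
\[ \mathsf{Coend}_{[\cat{Coalg}(\mathsf{Q}),\, \cat{Set}]}(U^\mathsf{Q})(A) \;\cong\; \int_{K \in \mathcal{K}} (A \times U^\mathsf{Q} K)^{U^\mathsf{Q} K}\rlap{ .} \]

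For accessibility of this assignment in $A$, I would choose a regular cardinal $\lambda$ strictly greater than $|U^\mathsf{Q} K|$ for every $K \in \mathcal{K}$ and greater than $|\mor\,\mathcal{K}|$---possible since $\mathcal{K}$ is small. For each fixed $K$, the functor $A \mapsto (A \times U^\mathsf{Q} K)^{U^\mathsf{Q} K}$ is $\lambda$-accessible in $A$, being a $\lambda$-small product of copies of the colimit-preserving functor $A \mapsto A \times U^\mathsf{Q} K$; and a $\lambda$-small limit of $\lambda$-accessible $\cat{Set}$-valued functors is again $\lambda$-accessible. The end displayed above is precisely such a small limit, so the coendomorphism monad is $\lambda$-accessible.

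The principal obstacle is the opening black box: that $\cat{Coalg}(\mathsf{Q})$ is accessible whenever $\mathsf{Q}$ is. This is a nontrivial but standard consequence of the theory of locally presentable categories---one can, for example, appeal to the fact that coalgebras for an accessible endofunctor on a locally presentable category form an accessible category, and observe that coalgebras for the comonad $\mathsf{Q}$ are then cut out from these by a small limit sketch encoding the counit and coassociativity equations. Once this is granted, the remainder is a routine density-plus-ends computation.
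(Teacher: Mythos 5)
Your argument is correct, but it takes a genuinely different and heavier route than the one in the paper. You rest everything on the accessibility (indeed local presentability) of $\cat{Coalg}(\mathsf{Q})$, then use density of the $\kappa$-presentable coalgebras to reduce natural transformations to their restrictions to a small subcategory $\mathcal{K}$, and finally bound the accessibility rank of the resulting end by cardinality estimates on $\mathcal{K}$. All of these steps go through, and the black box you flag is a true (if nontrivial) theorem. The paper avoids that black box entirely: since $U^\mathsf{Q} \dashv G^\mathsf{Q}$ with $U^\mathsf{Q} G^\mathsf{Q} = Q$, precomposition gives an adjunction $(\thg) \circ G^\mathsf{Q} \dashv (\thg) \circ U^\mathsf{Q}$, under which one transposes to get
\begin{equation*}
  [\cat{Coalg}(\mathsf{Q}), \cat{Set}](U^\mathsf{Q}, A \cdot U^\mathsf{Q}) \cong [\cat{Set}, \cat{Set}](Q, A \cdot \mathrm{id})\rlap{ ,}
\end{equation*}
a set because $Q$ is an accessible endofunctor of $\cat{Set}$; splitting $A \cdot \mathrm{id} = A \times \mathrm{id}$ and using the terminal object then gives the explicit formula $\mathrm{Coend}(U^\mathsf{Q}) \cong (\thg)^{Q1} \times [\cat{Set},\cat{Set}](Q, \mathrm{id})$, a small coproduct of representables. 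What the paper's route buys is both economy (only the accessibility of $Q$ itself is used, no theory of accessible categories of coalgebras) and precision: the coendomorphism monad is exhibited as a polynomial functor with an explicit rank $|Q1|$, which is exactly the form needed later for the identification with the dual monad $\mathsf{Q}^\circ$ and with presheaf monads. Your end formula over $\mathcal{K}$ establishes accessibility for some large $\lambda$ but loses this concrete description; if you intend to use the lemma downstream as the paper does, you would have to rederive the explicit formula anyway.
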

\begin{proof}
  For cotractability, we show that for each set $A$, the collection of
  natural transformations
  $U^\mathsf{Q} \Rightarrow A \cdot U^\mathsf{Q}$ form a
  set. If we write $G^\mathsf{Q}$ for the right adjoint of
  $U^\mathsf{Q}$, then transposing under the adjunction
  $(\thg) \circ G^\mathsf{Q} \dashv (\thg) \circ U^\mathsf{Q}$ yields
  \begin{equation}\label{eq:1}
    [\mathsf{Q}\text-\cat{Coalg}, \cat{Set}](U^\mathsf{Q}, A \cdot
    U^\mathsf{Q}) \cong [\cat{Set}, \cat{Set}](Q, A \cdot \mathrm{id})\rlap{ ,}
  \end{equation}
  whose right-hand side is a set since $Q$ is accessible; whence also
  the left-hand side.

  So $\mathsf{Coend}(U^\mathsf{Q})$ exists; to show accessibility,
  note that a natural transformation
  $Q \Rightarrow A \cdot \mathrm{id}$ is equally a pair of natural
  transformations $Q \Rightarrow \Delta A$ and 
  $Q \Rightarrow \mathrm{id}$; and since $\cat{Set}$ has a terminal
  object, to give $Q \Rightarrow \Delta A$ is equally to give a
  function $Q1 \rightarrow A$. We conclude that
  $\mathrm{Coend}(U^\mathsf{Q}) \cong (\thg)^{Q1} \times [\cat{Set}, \cat{Set}](Q,\mathrm{id})\rlap{ ,}$
  which is a small coproduct of representable functors, and hence accessible.
\end{proof}
\begin{Prop}
  \label{prop:7}
  The functor
  $\mathrm{Cosem} \colon \cat{Mnd}_a(\cat{Set})^\mathrm{op}
  \rightarrow \cat{Cmd}_a(\cat{Set})$ of Corollary~\ref{cor:2} admits
  a left adjoint
  $\mathrm{Costr} \colon \cat{Cmd}_a(\cat{Set}) \rightarrow
  \cat{Mnd}_a(\cat{Set})^\mathrm{op}$, whose value at the accessible
  comonad $\mathsf{Q}$ is given by the coendomorphism monad
  $\mathsf{Coend}(U^\mathsf{Q})$.
\end{Prop}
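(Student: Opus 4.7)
The plan is to use the dual of Proposition~\ref{prop:10} at the distinguished tractable functor $V = U^\mathsf{Q} \colon \cat{Coalg}(\mathsf{Q}) \rightarrow \cat{Set}$, which will immediately produce the desired adjoint on objects via the coendomorphism monad, and then to check naturality in $\mathsf{Q}$. By Lemma~\ref{lem:2}, $U^\mathsf{Q}$ is cotractable and $\mathsf{Coend}(U^\mathsf{Q})$ is an accessible monad on $\cat{Set}$, so the assignment $\mathsf{Q} \mapsto \mathsf{Coend}(U^\mathsf{Q})$ does land in $\cat{Mnd}_a(\cat{Set})^\mathrm{op}$ as required.

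The core calculation is a chain of natural bijections. First, by the dual of Lemma~\ref{lem:4} (equivalently, the dual of Proposition~\ref{prop:10}) applied to $U^\mathsf{Q}$, monad morphisms $\mathsf{T} \rightarrow \mathsf{Coend}(U^\mathsf{Q})$ correspond naturally to $\mathsf{T}$-comodel structures on $U^\mathsf{Q}$ in the functor category $[\cat{Coalg}(\mathsf{Q}), \cat{Set}]$. Since copowers in $[\cat{Coalg}(\mathsf{Q}), \cat{Set}]$ are computed pointwise, such structures are in turn equivalent to liftings of $U^\mathsf{Q}$ through ${}^\mathsf{T} U \colon {}^\mathsf{T}\cat{Set} \rightarrow \cat{Set}$, i.e., morphisms $U^\mathsf{Q} \rightarrow \mathrm{Cosem}(\mathsf{T})$ in $\cat{CAT}/\cat{Set}$. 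Finally, by Corollary~\ref{cor:2} the functor $\mathrm{Cosem}$ factors through $\mathrm{EM}$, and by Lemma~\ref{lem:3} $\mathrm{EM}$ is fully faithful; so morphisms in $\cat{CAT}/\cat{Set}$ from $U^\mathsf{Q} = \mathrm{EM}(\mathsf{Q})$ to $\mathrm{EM}(\mathrm{Cosem}(\mathsf{T}))$ correspond bijectively to comonad morphisms $\mathsf{Q} \rightarrow \mathrm{Cosem}(\mathsf{T})$. Composing these bijections yields
\begin{equation*}
  \cat{Mnd}_a(\cat{Set})(\mathsf{T}, \mathsf{Coend}(U^\mathsf{Q})) \;\cong\; \cat{Cmd}_a(\cat{Set})(\mathsf{Q}, \mathrm{Cosem}(\mathsf{T})),
\end{equation*}
naturally in $\mathsf{T}$; reading the left side as hom in $\cat{Mnd}_a(\cat{Set})^\mathrm{op}$ with source $\mathsf{Coend}(U^\mathsf{Q})$ gives exactly the adjunction $\mathrm{Costr} \dashv \mathrm{Cosem}$ with $\mathrm{Costr}(\mathsf{Q}) = \mathsf{Coend}(U^\mathsf{Q})$.

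To conclude, one must promote this object-level adjunction to a genuine pair of adjoint functors. This is automatic from the representability argument above: the bijection is natural in $\mathsf{T}$ by construction, and naturality in $\mathsf{Q}$ can be verified by tracking a map $g \colon \mathsf{Q} \rightarrow \mathsf{P}$ through each of the three bijections (pullback of comodel structures along $\cat{Coalg}(g) \colon \cat{Coalg}(\mathsf{Q}) \rightarrow \cat{Coalg}(\mathsf{P})$ on one side, and postcomposition with $g$ on the other), which in turn defines the action of $\mathrm{Costr}$ on morphisms.

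The step I expect to require the most care is ensuring that the bijection obtained from the dual of Proposition~\ref{prop:10}---phrased there only as a partial left adjoint at a fixed tractable object---can be assembled into an actual functor $\mathrm{Costr}$ natural in $\mathsf{Q}$; the hom-set level naturality follows formally, but one must be mindful of the size issue that $[\cat{Coalg}(\mathsf{Q}), \cat{Set}]$ is not locally small, so that the bijections live only between sets in the situations guaranteed by Lemma~\ref{lem:2}. Once accessibility on both sides is in hand, the adjunction restricts cleanly to $\cat{Mnd}_a(\cat{Set})^\mathrm{op}$ and $\cat{Cmd}_a(\cat{Set})$ as stated.
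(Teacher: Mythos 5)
Your proof is correct and follows essentially the same route as the paper: Lemma~\ref{lem:2} for existence and accessibility of $\mathsf{Coend}(U^\mathsf{Q})$, then the chain of natural bijections through $\cat{CAT}/\cat{Set}(U^\mathsf{Q}, {}^\mathsf{T}U)$ obtained from the dual of Proposition~\ref{prop:10} on one side and Lemma~\ref{lem:3} with Proposition~\ref{prop:2} on the other. Your added remarks on naturality in $\mathsf{Q}$ and on size are sound elaborations of points the paper leaves implicit.
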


\begin{proof}
  The preceding result shows that $\mathsf{Coend}(U^\mathsf{Q})$
  exists and is accessible for each accessible comonad $\mathsf{Q}$.
  Now by Lemma~\ref{lem:3}, Proposition~\ref{prop:2} and
  Proposition~\ref{prop:10}, we have natural isomorphisms
  \begin{equation*}
    \cat{Cmd}_a(\cat{Set})(\mathsf{Q}, \mathrm{Cosem}(\mathsf{T}))
    \cong
    \cat{CAT} / \cat{Set}(U^\mathsf{Q}, {}^\mathsf{T} U) \cong
    \cat{Mnd}_a(\cat{Set})(\mathsf{T},
    \mathsf{Coend}(U^\mathsf{Q}))\text{ .} \qedhere
  \end{equation*}
\end{proof}
\begin{Rk}
  \label{rk:1}
  For future use, we record the concrete form of the adjointness
  isomorphisms of the costructure--cosemantics adjunction. Given a
  comonad morphism
  $\alpha \colon \mathsf{Q} \rightarrow \mathrm{Cosem}(\mathsf{T})$,
  corresponding by Lemma~\ref{lem:3} and Proposition~\ref{prop:2} to a
  functor $H$ as in:
  \begin{equation*}
    \cd[@!C@C-2em@-0.5em]{
      {\cat{Coalg}(\mathsf{Q})} \ar[rr]^{H} \ar[dr]_-{U^\mathsf{Q}} & &
      {{}^\mathsf{T}\cat{Set}}\rlap{ ,} \ar[dl]^-{{}^\mathsf{T}U} \\ &
      {\cat{Set}} }
  \end{equation*}
  the adjoint transpose
  $\bar \alpha \colon \mathsf{T} \rightarrow
  \mathsf{Coend}(U^\mathsf{Q})$ of $\alpha$ sends $t \in T(A)$ to
  $\bar \alpha(t) \colon U^\mathsf{Q} \Rightarrow A \cdot U^\mathsf{Q}$
  with components
  $\bar \alpha(t)_{(X,x)} = \dbr{t}_{H(X, x)} \colon X \rightarrow A
  \times X$.
\end{Rk}

\subsection{Relation to duals and Sweedler duals}
\label{sec:relat-duals-sweedl}

This completes our construction of the costructure--cosemantics
adjunction~\eqref{eq:4}; and in the rest of this section, we explain
its relation to the notions of~\cite{Katsumata2019Interaction}. The
main objects of study in \emph{loc.\ cit.}~are the
\emph{interaction laws} between a monad $\mathsf{T}$ and a comonad
$\mathsf{Q}$ on a category with products; these are natural families
of maps $TX \times QY \rightarrow X \times Y$ which are compatible
with the monad and comonad structures. In Section~3.4
of~\cite{Katsumata2019Interaction}, the authors show that such
monad--comonad interaction laws can also be expressed in terms of:
\begin{itemize}
\item Monad morphisms $\mathsf{T} \rightarrow \mathsf{Q}^\circ$, where
  $\mathsf{Q}^\circ$ is the \emph{dual monad} of $\mathsf{Q}$;
\item Comonad morphisms $\mathsf{Q} \rightarrow \mathsf{T}^\bullet$,
  where $\mathsf{T}^\bullet$ is the \emph{Sweedler dual comonad} of
  $\mathsf{T}$.
\end{itemize}
It may or may not be the case that the dual monad of a comonad, or the
Sweedler dual comonad of a monad, exist; however, they do always exist
when we are dealing with accessible monads and comonads on
$\cat{Set}$, and the definitions are as follows:

\begin{Defn}[Dual monad]
  \label{def:7}
  The \emph{dual} of an accessible comonad $\mathsf{Q}$ on $\cat{Set}$
  is the accessible monad $\mathsf{Q}^\circ$ with
  $Q^\circ(A) = [\cat{Set}, \cat{Set}](Q, A \cdot \mathrm{id})$, with
  unit map $\eta_A \colon A \rightarrow Q^\circ A$ given by
  \begin{equation*}
    a \qquad \mapsto\qquad Q \xrightarrow{\varepsilon} \mathrm{id}
    \xrightarrow{\nu_a} A \cdot \mathrm{id}
  \end{equation*}
  and with the Kleisli extension
  $u^\dagger \colon [\cat{Set}, \cat{Set}](Q, A \cdot \mathrm{id})
  \rightarrow [\cat{Set}, \cat{Set}](Q, B \cdot \mathrm{id})$ of
  $u \colon A \rightarrow [\cat{Set}, \cat{Set}](Q, B \cdot
  \mathrm{id})$ given by
  \begin{equation*}
    Q \xrightarrow{\tau} A \cdot \mathrm{id} \qquad \mapsto \qquad Q
    \xrightarrow{\delta} QQ \xrightarrow{\tau Q} A \cdot
    Q \xrightarrow{\spn{u_a}_{a \in A}} B \cdot \mathrm{id}\rlap{ .}
  \end{equation*}
  The assignment $\mathsf{Q} \mapsto \mathsf{Q}^\circ$ is the action on
  objects of the \emph{dual monad} functor
  $\cat{Cmd}_a(\cat{Set}) \rightarrow
  \cat{Mnd}_a(\cat{Set})^\mathrm{op}$, whose action on morphisms takes
  a comonad map $f \colon \mathsf{Q} \rightarrow \mathsf{P}$ to the
  monad map $\mathsf{P}^\circ \rightarrow \mathsf{Q}^\circ$ with
  components $\alpha \mapsto \alpha f$.
\end{Defn}

\begin{Defn}[Sweedler dual comonad]
  \label{def:2}
  The \emph{Sweedler dual} of an accessible monad $\mathsf{T}$ on
  $\cat{Set}$ is the accessible comonad $\mathsf{T}^\bullet$ providing
  the value at $\mathsf{T}$ of a right adjoint to the dual monad
  functor.
\end{Defn}

We now show that, in fact, these constructions relating accessible
monads and comonads are precisely the two directions of our
adjunction~\eqref{eq:4}.

\begin{Prop}
  \label{prop:20}
  For each $\mathsf{Q} \in \cat{Cmd}_a(\cat{Set})$, there is a monad
  isomorphism $\mathsf{Q}^\circ \cong \mathrm{Costr}(\mathsf{Q})$
  taking $\alpha \colon Q \Rightarrow A \cdot \mathrm{id}$ in
  $Q^\circ A$ to
  $\tilde \alpha \colon U^\mathsf{Q} \Rightarrow A \cdot U^\mathsf{Q}$
  in $\mathsf{Coend}(U^\mathsf{Q})(A)$ with components
  \begin{equation*}
    \tilde \alpha_{(X,x)} = X \xrightarrow{x} QX \xrightarrow{\alpha_X}
    A \times X\rlap{ .}
  \end{equation*}
  It follows that
  $\mathrm{Costr} \cong (\thg)^\circ \colon \cat{Cmd}_a(\cat{Set})
  \rightarrow \cat{Mnd}_a(\cat{Set})^\mathrm{op}$ and, consequently,
  that
  $\mathrm{Cosem} \cong (\thg)^\bullet \colon
  \cat{Mnd}_a(\cat{Set})^\mathrm{op} \rightarrow
  \cat{Cmd}_a(\cat{Set})$.
\end{Prop}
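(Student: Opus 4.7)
The plan is to establish the result in three steps—bijectivity of $\alpha \mapsto \tilde\alpha$ on underlying sets, compatibility with the monad structure, and naturality in $\mathsf{Q}$—from which the functorial isomorphism $\mathrm{Costr} \cong (\thg)^\circ$ follows; the final assertion about $\mathrm{Cosem}$ and $(\thg)^\bullet$ will then be a formal consequence of the uniqueness of adjoints.

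For bijectivity I would simply invoke the adjoint isomorphism~\eqref{eq:1} already used in Lemma~\ref{lem:2}, which identifies $[\cat{Coalg}(\mathsf{Q}), \cat{Set}](U^\mathsf{Q}, A\cdot U^\mathsf{Q})$ with $[\cat{Set},\cat{Set}](Q, A\cdot\mathrm{id})$ via the transposition $(\thg)\circ G^\mathsf{Q} \dashv (\thg)\circ U^\mathsf{Q}$; unpacking the transpose at a coalgebra $(X,x)$ and using that the adjunction unit at $(X,x)$ is $x$ itself reproduces the explicit formula of the statement.

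Next I would verify the two monad-morphism axioms. Unit preservation is immediate from the coalgebra counit axiom $\varepsilon_X\circ x = \mathrm{id}_X$: the unit $\nu_a\circ\varepsilon$ of $\mathsf{Q}^\circ$ at $a$ transports to the transformation $(X,x)\mapsto \nu_a$, which is the unit of $\mathsf{Coend}(U^\mathsf{Q})$. Kleisli extension is the only step with real calculational content. Expanding $\widetilde{u^\dagger(\alpha)}_{(X,x)}$ via Definition~\ref{def:7} gives $\langle(u_a)_X\rangle\circ\alpha_{QX}\circ\delta_X\circ x$; I would then apply the coalgebra coassociativity law $\delta_X\circ x = Qx\circ x$, followed by naturality of $\alpha$ along $x\colon X\to QX$, to reduce this to $\langle\widetilde{u_a}_{(X,x)}\rangle_{a\in A}\circ \tilde\alpha_{(X,x)}$—precisely the Kleisli extension in $\mathsf{Coend}(U^\mathsf{Q})$. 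Naturality in $\mathsf{Q}$ is then a short check: for a comonad morphism $f\colon\mathsf{Q}\to\mathsf{P}$, the map $(\thg)^\circ(f)$ acts by precomposition with $f$, while $\mathrm{Costr}(f)$ acts by restriction along the coalgebra functor $F\colon(X,x)\mapsto(X,f_X\circ x)$ of Definition~\ref{def:9}; both paths around the naturality square evaluated at $(X,x)$ yield $\alpha_X\circ f_X\circ x$.

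Once $\mathrm{Costr} \cong (\thg)^\circ$ is established as functors $\cat{Cmd}_a(\cat{Set})\to \cat{Mnd}_a(\cat{Set})^\mathrm{op}$, the isomorphism $\mathrm{Cosem}\cong (\thg)^\bullet$ is automatic: $\mathrm{Cosem}$ is right adjoint to $\mathrm{Costr}$ by Proposition~\ref{prop:7}, while $(\thg)^\bullet$ is by Definition~\ref{def:2} right adjoint to $(\thg)^\circ$, so uniqueness of right adjoints forces the two to be naturally isomorphic. The main obstacle is the Kleisli-extension calculation, which hinges on combining the comodule law with naturality of $\alpha$; everything else amounts to unwinding definitions and citing previously established results.
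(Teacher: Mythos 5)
Your argument is correct, but it is organised quite differently from the paper's. You verify everything at the level of elements: you use the transposition isomorphism \eqref{eq:1} to get the bijection $\alpha \mapsto \tilde\alpha$ (correctly identifying the adjunction unit at $(X,x)$ as $x$ itself, so that the transpose is $\alpha_X \circ x$), and then check the monad-morphism axioms by hand --- the unit via $\varepsilon_X \circ x = \mathrm{id}_X$, and the Kleisli extension via the coassociativity law $\delta_X \circ x = Qx \circ x$ followed by naturality of $\alpha$ along $x$, which is indeed the one step with genuine calculational content and which you carry out correctly. The paper instead packages all of this into a structural device: it introduces the category $\X$ of endofunctors of $\cat{Set}$ with hom-sets $[\cat{Coalg}(\mathsf{Q}),\cat{Set}](FU^\mathsf{Q}, F'U^\mathsf{Q})$, observes that $\mathsf{Coend}(U^\mathsf{Q})$ is the coendomorphism monad of $\mathrm{id}_{\cat{Set}}$ in $\X$, and then uses the same transposition \eqref{eq:1} to identify $\X$ with the co-Kleisli category of the comonad $(\thg)\circ\mathsf{Q}$ on $[\cat{Set},\cat{Set}]$, in which the coendomorphism monad of $\mathrm{id}_{\cat{Set}}$ is $\mathsf{Q}^\circ$ by inspection. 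The monad-axiom checks then come for free, since coendomorphism monads are transported along isomorphisms of categories. So the trade-off is: your route is more elementary and makes the key identity ($\delta \circ x = Qx \circ x$ plus naturality) visible, at the cost of explicit verification; the paper's route hides that identity inside the co-Kleisli composition law and gets compatibility with the monad structure automatically. Your handling of functoriality in $\mathsf{Q}$ and the concluding deduction of $\mathrm{Cosem} \cong (\thg)^\bullet$ from uniqueness of right adjoints matches the paper's intent exactly.
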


\begin{proof}
  Consider the category $\X$ whose objects are endofunctors of
  $\cat{Set}$, and whose morphisms $F \rightarrow F'$ are natural
  transformations
  $FU^\mathsf{Q} \Rightarrow F'U^\mathsf{Q} \colon
  \cat{Coalg}(\mathsf{Q}) \rightarrow \cat{Set}$. It is easy to see
  that $\mathsf{Coend}(U^\mathsf{Q})$ is equally the coendomorphism
  monad of the object $\mathrm{id}_{\cat{Set}} \in \X$. By transposing
  under the adjunction
  $(\thg) \circ G^\mathsf{Q} \dashv (\thg)\circ U^\mathsf{Q}$, we see
  that $\A$ is isomorphic to the co-Kleisli category $\X'$ of the
  comonad $(\thg) \circ \mathsf{Q}$ on $[\cat{Set}, \cat{Set}]$, and
  the coendomorphism monad of $\mathrm{id}_{\cat{Set}}$ in $\X'$ is
  easily seen to be $\mathsf{Q}^\circ$. Thus
  $\mathsf{Q}^\circ \cong \mathsf{Coend}(U^\mathsf{Q})$, and tracing
  through the correspondences shows this isomorphism to be given as in
  the statement of the result.
\end{proof}

\section{Calculating the cosemantics functor}
\label{sec:char-image-cosem}

\subsection{Cosemantics is valued in presheaf comonads}
\label{sec:cosem-valu-diagr}
In this section, we give an explicit calculation of the values of the
cosemantics functor from monads to comonads. As a first step towards
this, we observe that:

\begin{Prop}
  \label{prop:40}
  The cosemantics functor
  $\mathrm{Cosem} \colon \cat{Mnd}_a(\cat{Set})^\mathrm{op} \rightarrow
  \cat{Cmd}_a(\cat{Set})$ sends every monad to a presheaf comonad;
  whence it admits a factorisation to within
  isomorphism through
  $\mathsf{Q}_{(\thg)} \colon \cat{Cof} \rightarrow
  \cat{Cmd}_a(\cat{Set})$.
\end{Prop}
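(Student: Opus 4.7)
The plan is to invoke Proposition~\ref{prop:6}: to conclude that $\mathrm{Cosem}(\mathsf{T})$ is a presheaf comonad, it suffices to verify that its underlying endofunctor preserves connected limits. Since $\mathrm{Cosem}(\mathsf{T})$ is induced by the strictly comonadic adjunction ${}^\mathsf{T} U \dashv R$, where $R \colon \cat{Set} \to {}^\mathsf{T}\cat{Set}$ is the cofree comodel functor, we may write $\mathrm{Cosem}(\mathsf{T}) = {}^\mathsf{T} U \circ R$. The right adjoint $R$ preserves all limits, so the task reduces to showing that the forgetful functor ${}^\mathsf{T} U$ preserves connected limits---a non-trivial property, since ${}^\mathsf{T} U$ is a priori only known to be a left adjoint.

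For this key step, I will prove the stronger claim that ${}^\mathsf{T} U$ \emph{creates} connected limits. Given a connected diagram $D \colon \mathcal{I} \to {}^\mathsf{T}\cat{Set}$, form the limit $L = \lim_{i \in \mathcal{I}} {}^\mathsf{T} U(D_i)$ in $\cat{Set}$, with projections $\pi_i \colon L \to {}^\mathsf{T} U(D_i)$. For each set $A$ and term $t \in T(A)$, the idea is to define a coaction $\dbr{t}_L \colon L \to A \times L$ as follows: given $s = (s_i) \in L$ with $\dbr{t}_{D_i}(s_i) = (a_i, s_i')$ in $A \times {}^\mathsf{T} U(D_i)$, set $\dbr{t}_L(s) = (a, (s_i'))$, where $a$ is the common value of the $a_i$. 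The existence of such a common value uses two ingredients: first, each arrow $f \colon i \to j$ in $\mathcal{I}$ yields a comodel morphism $D(f)$, and its compatibility with $\dbr{t}$ together with the cone condition $\pi_j(s) = D(f)(\pi_i(s))$ forces $a_i = a_j$ and $D(f)(s_i') = s_j'$ whenever there is an arrow $i \to j$; second, connectedness of $\mathcal{I}$ propagates this equality across all indices. The resulting $(s_i')$ is then again a compatible family, hence an element of $L$. The comodel axioms~\eqref{eq:21} for $\dbr{\thg}_L$ then follow componentwise from those of each $D_i$, and the projections $\pi_i$ become comodel morphisms, realising this as the limiting cone in ${}^\mathsf{T}\cat{Set}$.

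With $\mathrm{Cosem}(\mathsf{T})$ shown to preserve connected limits, and hence to be isomorphic to some presheaf comonad $\mathsf{Q}_{\mathbb{B}_\mathsf{T}}$, the factorisation follows from fully-faithfulness: Proposition~\ref{prop:16} tells us that $\mathsf{Q}_{(\thg)} \colon \cat{Cof} \to \cat{Cmd}_a(\cat{Set})$ is fully faithful, so a standard argument selects for each $\mathsf{T}$ an isomorphism $\mathrm{Cosem}(\mathsf{T}) \cong \mathsf{Q}_{\mathbb{B}_\mathsf{T}}$ and transports the action on morphisms across these isomorphisms to produce a functor $\mathbb{B}_{(\thg)}$ with $\mathsf{Q}_{(\thg)} \circ \mathbb{B}_{(\thg)} \cong \mathrm{Cosem}$ naturally. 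The principal obstacle is the bookkeeping of the middle step---verifying well-definedness of $\dbr{t}_L$ (which relies crucially on the connectedness hypothesis), the comodel axioms, and universality of the resulting cone---although each of these reduces to a componentwise calculation already guaranteed by the structure on each $D_i$.
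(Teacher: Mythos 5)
Your proposal is correct and follows essentially the same route as the paper: reduce via Proposition~\ref{prop:6} to showing the underlying endofunctor preserves connected limits, observe that the cofree right adjoint preserves all limits, prove that ${}^\mathsf{T}U$ creates connected limits, and obtain the factorisation from the full faithfulness of $\mathsf{Q}_{(\thg)}$ in Proposition~\ref{prop:16}. The only difference is presentational: where the paper justifies the creation of connected limits by noting that each $A \times ({-})$ preserves them, you carry out the same verification by hand on elements, which is a fine (if more laborious) way of filling in the paper's ``it follows easily''.
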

\begin{proof}
  Note that the second clause follows from the first and
  Proposition~\ref{prop:16}. To prove the first, let
  $\mathsf{T} \in \cat{Mnd}_a(\cat{Set})$. To show that
  $\mathrm{Cosem}(\mathsf{T})$ is a presheaf comonad, it suffices by
  Proposition~\ref{prop:6} to prove that its underlying endofunctor
  preserves connected limits. Since this endofunctor is engendered by
  ${}^\mathsf{T} U \colon {}^\mathsf{T} \cat{Set} \rightarrow
  \cat{Set}$ and its (limit-preserving) right adjoint, it suffices to
  show that ${}^\mathsf{T} U$ preserves connected limits. In fact, it
  creates them: for indeed, since a $\mathsf{T}$-comodel $\alg[S]$ in
  $\cat{Set}$ involves co-operations
  $\dbr{t} \colon S \rightarrow A \times S$ subject to suitable
  equations, and since each functor $A \times (\thg)$ preserves
  connected limits, it follows easily that, for any connected diagram
  of $\mathsf{T}$-comodels, the limit of the diagram of underlying
  sets bears a unique comodel structure making it the limit in the
  category of comodels.
\end{proof}

What we would like to do is to give an explicit description of the
factorisation of this proposition. Thus, at the level of
objects, we will describe for each accessible monad $\mathsf{T}$ on
$\cat{Set}$ a small category $\mathbb{B}_\mathsf{T}$ such that
$\mathrm{Cosem}(\mathsf{T}) \cong \mathsf{Q}_{\mathbb{B}_\mathsf{T}}$;
or equally, in light of Proposition~\ref{prop:19}, such that we have
an isomorphism in $\cat{CAT} / \cat{Set}$:
\begin{equation}\label{eq:16}
  \cd[@!C@C-1em@-0.5em]{
    {}^\mathsf{T} \cat{Set} \ar[dr]_-{{}^\mathbb{T} U}
    \ar[rr]^-{\cong} & & \mathbb{B}_\mathsf{T}\text-\cat{Set}\rlap{ .}
    \ar[dl]^-{U^{\mathbb{B}_\mathsf{T}}} \\ & \cat{Set}
  }
\end{equation}
We term this category $\mathbb{B}_\mathsf{T}$ the \emph{behaviour
  category} of $\mathsf{T}$.
\subsection{Behaviours and the final comodel}
\label{sec:oper-equiv}
The first step is to describe the object-set of the behaviour category
$\mathbb{B}_\mathsf{T}$ associated to an accessible monad
$\mathsf{T}$. By considering~\eqref{eq:16}, we see that this object-set can be found as the
image under $U^{\mathbb{B}_\mathsf{T}}$ of the final object of
$\mathbb{B}_\mathsf{T}\text-\cat{Set}$: and so equally as the underlying set of
the \emph{final comodel} of $\mathsf{T}$. While there are many
possible constructions of the final comodel---see, for
example,~\cite[Theorem~2.2]{Power2004From}
or~\cite[Lemma~4.6]{Pattinson2016Program}---we would like to give a
new one which fully exploits the fact that the structures we are
working with are comodels.

As is well known, when looking at coalgebraic structures,
characterising the final object is bound up with answering the
question of when states have the same observable behaviour. For
example, if $(g,n) \colon S \rightarrow V \times S$ and
$(g',n') \colon S' \rightarrow V \times S'$ are comodels of the theory
of $V$-valued input, we may say that states $s \in S$ and $s' \in S'$
are \emph{behaviourally equivalent} if they yield the same stream of
values:
\begin{equation*}
  (g(s), g(n(s)), g(n(n(s))), \dots) = 
  (g'(s'), g'(n'(s')), g'(n'(n'(s'))), \dots)\rlap{ .}
\end{equation*}
We may restate this property in more structural ways. Indeed, states
$s \in S$ and $s' \in S'$ are behaviourally equivalent just when any
of the following conditions holds:
\begin{itemize}
\item They are related by some \emph{bisimulation}, i.e., a relation
  $R \subseteq S \times S'$ whose projections $S \leftarrow R
  \rightarrow S'$ can be lifted to a span of comodels $\alg[S]
  \leftarrow \alg[R] \rightarrow \alg[S']$.
\item They become equal in some comodel $\alg[S]''$; i.e., there are
  comodel homomorphisms $q \colon \alg[S] \rightarrow \alg[S]'' \leftarrow
  \alg[S]' \colon q'$ such that $q(s) = q'(s')$;
\item They become equal in the final comodel
  $\alg[V^\mathbb{N}]$.
\end{itemize}
The correspondence between these conditions holds in much greater
generality; see, for example~\cite{Rutten1993On-the-foundations}.
However, for the comodels of an accessible monad
$\mathsf{T}$, there is a further, yet more intuitive, formulation: $s$
and $s'$ are behaviourally equivalent if, in running any
$\mathsf{T}$-computation $t \in T(A)$, we obtain the same $A$-value
by running $t$ with $\alg[S]$ from initial state $s$, as by running
$t$ with $\alg[S]'$ from initial state
$s'$. More formally, we have the following definition, which appears
to be novel---though it is closely related
to~\cite{Pattinson2015Sound}'s notion of \emph{comodel bisimulation}.

\begin{Defn}[Behaviour of a state]
  \label{def:29}
  Let $\mathsf{T}$ be an accessible monad and $\alg[S]$ a
  $\mathsf{T}$-comodel. The \emph{behaviour} $\beta_s$ of a state $s \in
  S$ is the family of functions
  \begin{equation*}
    (\beta_s)_A \colon T(A) \rightarrow A \qquad \qquad 
    t \mapsto \pi_1(\dbr{t}_{\alg[S]}(s))\rlap{ .}
  \end{equation*}
  Given $\mathsf{T}$-comodels $\alg[S]$ and $\alg[S]'$, we say that
  states $s \in S$ and $s' \in S'$ are \emph{operationally equivalent}
  (written $s \sim_o s'$) if $\beta_s = \beta_{s'}$.
\end{Defn}

We now show that operational equivalence has the same force as the
other notions of behavioural equivalence listed above.

\begin{Prop}
  \label{prop:24}
  Let $\mathsf{T}$ be an accessible monad and let $\alg[S], \alg[S]'$ be
  $\mathsf{T}$-comodels in $\cat{Set}$. For any states $s \in S$ and
  $s' \in S'$, the following conditions are equivalent:
  \begin{enumerate}[(i)]
  \item $s$ and $s'$ are operationally equivalent;
  \item $s \mathrel R s'$ for some bisimulation $R \subseteq S \times
    S'$ between $\alg[S]$ and $\alg[S]'$;
  \item $q(s) = q'(s')$ for some cospan of homomorphisms $q \colon
    \alg[S] \rightarrow \alg[S]'' \leftarrow \alg[S]' \colon q'$;
  \item $f(s) = f'(s')$ for $f \colon \alg[S] \rightarrow {\alg[B]}_\mathsf{T}
    \leftarrow \alg[S]' \colon f'$ the
    unique maps to the final comodel.
  \end{enumerate}
\end{Prop}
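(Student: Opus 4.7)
The plan is to prove the cycle (iv) $\Rightarrow$ (iii) $\Rightarrow$ (ii) $\Rightarrow$ (i) $\Rightarrow$ (iv), with only the last implication carrying substantive content. The first is immediate from the definition of the final comodel. For (iii) $\Rightarrow$ (ii), given a cospan $q \colon \alg[S] \rightarrow \alg[S]'' \leftarrow \alg[S]' \colon q'$ with $q(s) = q'(s')$, I would form the pullback $\alg[R]$ in ${}^\mathsf{T}\cat{Set}$: this is computed on underlying sets since ${}^\mathsf{T} U$ creates connected limits (as in the proof of Proposition~\ref{prop:40}), so $R = \{(x,x') \in S \times S' : q(x) = q'(x')\}$ is automatically a bisimulation, and it contains $(s, s')$. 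For (ii) $\Rightarrow$ (i), the key observation is that homomorphisms of $\mathsf{T}$-comodels preserve behaviour: the defining square $\dbr{t}_{\alg[Y]} \circ h = (A \times h) \circ \dbr{t}_{\alg[X]}$ for $h \colon \alg[X] \rightarrow \alg[Y]$ yields, on projecting to $A$, the identity $\beta_{h(x)} = \beta_x$. Applied to a bisimulation $s \leftarrow r \rightarrow s'$ this gives $\beta_s = \beta_r = \beta_{s'}$.

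The content lies in (i) $\Rightarrow$ (iv), which I would prove by constructing the final comodel explicitly from behaviours. Let $B$ denote the collection of all families $\beta = (\beta_A \colon T(A) \rightarrow A)_A$ of the form $\beta_s$ for some state $s$ of some $\mathsf{T}$-comodel. I would equip $B$ with the comodel structure $\dbr{t}_{\alg[B]}(\beta) = (\beta(t), \beta')$, where $\beta'$ is the behaviour of $\pi_2\dbr{t}_{\alg[S]}(s)$ for any witnessing state $s$. Well-definedness of $\beta'$ reduces to the identity
\begin{equation*}
  \beta_{\pi_2\dbr{t}(s)}(u) = \pi_1 \dbr{u}(\pi_2\dbr{t}(s)) = \pi_1 \dbr{t(\lambda a.\,u)}(s) = \beta_s(t(\lambda a.\,u))\rlap{ ,}
\end{equation*}
whose right-hand side depends only on $\beta_s$; the comodel axioms for $\alg[B]$ then follow pointwise from those in the witnessing comodels. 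The assignment $s \mapsto \beta_s$ is a homomorphism by construction, and because in $\alg[B]$ the behaviour of an element $\beta$ is literally $\beta$ itself (since $\beta_\beta(t) = \pi_1 \dbr{t}_{\alg[B]}(\beta) = \beta(t)$), any homomorphism $h \colon \alg[S] \rightarrow \alg[B]$ must satisfy $h(s) = \beta_{h(s)} = \beta_s$ by the preservation lemma. Thus $\alg[B]$ is final, and (i) $\Rightarrow$ (iv) follows at once since $f(s) = \beta_s = \beta_{s'} = f'(s')$.

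The main obstacle is that $B$ as defined above could a priori be a proper class. This is resolved by the accessibility of $\mathsf{T}$: if $\mathsf{T}$ is $\kappa$-accessible then each behaviour is determined by its restriction to sets of cardinality less than $\kappa$, placing $B$ inside a set-sized product. Alternatively, one can take the existence of a final comodel as given by Proposition~\ref{prop:2} and identify its carrier with $B$ post hoc, using the preservation lemma to see that the canonical map is injective on behaviour.
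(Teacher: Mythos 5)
Your proof is correct, but it traverses the cycle of implications in the opposite direction to the paper and locates the substantive work in a different place. The paper proves (i) $\Rightarrow$ (ii) $\Rightarrow$ (iii) $\Rightarrow$ (iv) $\Rightarrow$ (i): its key step is a direct coinductive argument that operational equivalence $\sim_o$ is itself a bisimulation, after which (ii) $\Rightarrow$ (iii) follows by pushing out the span, (iii) $\Rightarrow$ (iv) by postcomposing into the final comodel (whose existence is taken from Proposition~\ref{prop:2}), and (iv) $\Rightarrow$ (i) by the same behaviour-preservation observation you use. You instead make (i) $\Rightarrow$ (iv) the hard step, constructing the final comodel directly as the set of realized behaviours; this essentially anticipates Proposition~\ref{prop:33}, which the paper states afterwards in terms of the algebraically axiomatised \emph{admissible} behaviours rather than your witness-based description, and your well-definedness computation $\beta_{\pi_2\dbr{t}(s)}(u) = \beta_s(t(\lambda a.\, u))$ is exactly the paper's operator $\partial_t$. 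Your (iii) $\Rightarrow$ (ii) via pullbacks created by ${}^\mathsf{T}U$ is a clean dual to the paper's pushout argument and is fully justified by the creation of connected limits established in Proposition~\ref{prop:40}; arguably it is the more elementary of the two, since it needs no colimits of comodels. What your route buys is self-containedness: it does not presuppose the existence of the final comodel, and it handles the size issue explicitly via accessibility. What it costs is that the fact that $\sim_o$ is the largest bisimulation is obtained only indirectly, and the construction of the final comodel gets folded into this proposition rather than isolated as the separate, reusable result the paper makes of it.
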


\begin{proof}
  For (i) $\Rightarrow$ (ii), we show that operational equivalence
  $\sim_o$ is a bisimulation between $\alg[S]$ and $\alg[S]'$; this
  means showing that, if
  $u_1 \sim_o u_2$ and $t \in T(A)$, then the
  co-operations $\dbr{t}_{\alg[S]}(s_1) = (a_1,s_1')$ and
  $\dbr{t}_{\alg[S]'}(s_2) = (a_2,s_2')$ satisfy
  $a_1 = a_2 \in A$ and $s_1' \sim_o s_2' \in S$. We have
  $a_1 = a_2$ since $s_1 \sim_o s_2$. To show
  $s_1' \sim_o s_2'$, consider any term $u \in T(B)$, and observe by~\eqref{eq:19} that
  \begin{equation*}
    \dbr{t(\lambda a.\, u)}(s_1) = \dbr{u}(s_1') \ \
    \text{and} \ \ 
    \dbr{t(\lambda a.\, u)}(s_2) = \dbr{u}(s_2')\rlap{ .}
  \end{equation*}
  Since $s_1 \sim_o s_2$, the left-hand sides above have the same
  first component; whence the same is true for the right-hand sides,
  so that $s_1' \sim_o s_2'$ as desired.

  The next two implications are standard. For (ii) $\Rightarrow$
  (iii), we take $\alg[S] \rightarrow \alg[S]'' \leftarrow \alg[S]'$
  to be the pushout of
  $\alg[S] \leftarrow \alg[R] \rightarrow \alg[S]'$; and for (iii)
  $\Rightarrow$ (iv), we postcompose
  $\alg[S] \rightarrow \alg[S]'' \leftarrow \alg[S]'$ with the unique
  comodel map $\alg[S]'' \rightarrow {\alg[B]}_\mathsf{T}$. Finally, for (iv)
  $\Rightarrow$ (i), note by the definition of comodel
  homomorphism that if $h \colon \alg[S] \rightarrow \alg[S]'$ then $\beta_s =
  \beta_{h(s)}$ for all $s \in S$. So if $f(s) = f'(s')$ as in (iv) then $\beta_{s} =
  \beta_{f(s)} = \beta_{f'(s')} = \beta_{s'}$ and so $s \sim_o s'$ as desired.
\end{proof}

From this result, we see that a final $\mathsf{T}$-comodel
can have at most one element of a given behaviour $\beta$. In fact, in
the spirit of~\cite[Theorem~4]{Kupke2009Characterising}, we may
characterise the final comodel as having exactly one element of each
behaviour $\beta$ which is \emph{admissible}, in the sense of being
the behaviour of some element of some comodel. It turns out that this
requirement can be captured purely algebraically.

\begin{Not}
  Let $\mathsf{T}$ be an accessible monad. Given terms $t \in T(A)$
  and $u \in T(B)$, we write $t \gg u$ for the term
  $t(\lambda a.\, u) \in T(B)$. Noting that $\gg$ is an associative
  operation, we may write $t \gg u \gg v$ for $(t \gg u) \gg v = t \gg
  (u \gg v)$, and so on.
\end{Not}
The intuition is that, if $t$ and $u$ are programs returning values in
$A$ and $B$, then $t \gg u$ is the program which first performs $t$,
then discards the return value and continues as $u$. The notation we
use is borrowed from Haskell, where \verb$t >> u$ is used with exactly
this sense.
\begin{Defn}[Admissible behaviour]
  \label{def:35}
  By an
  \emph{admissible behaviour} for $\mathsf{T}$, we mean a family of
  functions $\beta_A \colon TA \rightarrow A$, as $A$ ranges over sets, such that
  \begin{equation}\label{eq:52}
    a \in A  \!\implies\! \beta_A(a) = a\text{,} \quad 
    t \in TB, u \in (TA)^B \!\implies\! \beta_A(t(u)) = \beta_A(t \gg 
    u_{\beta_B(t)})\rlap{ .}
  \end{equation}
  We may drop the subscript in ``$\beta_A$''
  where this does not lead to ambiguity.
\end{Defn}
These conditions are intuitively reasonable; for example, the second
condition says that, if the result of running $t \in T(B)$ is
$b \in B$, then the result of running $t(u) \in T(A)$ coincides with
that of running $t$, discarding the return value, and then running
$u_b$.

\begin{Rk}
  \label{rk:2}
  If we have a presentation of the accessible monad $\mathsf{T}$ by an
  algebraic theory $\mathbb{T}$, then we can use~\eqref{eq:52} and
  induction on the structure of $\mathbb{T}$-terms to show that an
  admissible behaviour $\beta$ is determined by the values
  $\beta(\sigma_1 \gg \dots \gg \sigma_n) \in \abs{\sigma_n}$ for each
  non-empty list $\sigma_1, \dots, \sigma_n$ of generating operations
  in $\Sigma$. This is practically useful in computing the admissible
  behaviours of a theory.
\end{Rk}

\begin{Prop}
  \label{prop:33}
  The final comodel ${\alg[B]}_\mathsf{T}$ of an accessible monad $\mathsf{T}$ is
  the set of admissible behaviours with 
  $\dbr{t}_{\alg[B]_\mathsf{T}}(\beta) = (\beta(t), \partial_t \beta)$, where
  $\partial_t \beta(u) = \beta(t \gg u)$.
\end{Prop}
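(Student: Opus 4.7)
The plan is to establish first that $\alg[B]_\mathsf{T}$, equipped with the proposed co-operations, is a well-defined $\mathsf{T}$-comodel, and then to verify its finality by exhibiting, for every $\mathsf{T}$-comodel $\alg[S]$, a unique comodel homomorphism into $\alg[B]_\mathsf{T}$.

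For well-definedness, the first task is to show that $\partial_t \beta$ is again an admissible behaviour whenever $\beta$ is, so that the formula for $\dbr{t}_{\alg[B]_\mathsf{T}}$ indeed lands in $A \cdot \alg[B]_\mathsf{T}$. The first admissibility axiom for $\partial_t \beta$ reduces to the identity $\beta(t \gg \eta(a)) = a$, which follows from naturality in $A$ of the family $\beta$; the second admissibility axiom follows by applying the second axiom for $\beta$ twice, combined with the identity $t(u) \gg v = t(\lambda b.\, u_b \gg v)$ expressing associativity of Kleisli composition. I would then verify the comodel axioms~\eqref{eq:21}: the unit axiom $\dbr{\eta(a)}(\beta) = (a, \beta)$ is immediate since $\eta(a) \gg u = u$ by the Kleisli unit axiom (forcing $\partial_{\eta(a)} \beta = \beta$) and $\beta(\eta(a)) = a$ by admissibility; while the co-associativity axiom $\dbr{t(u)}(\beta) = \spn{\dbr{u_b}}_{b \in B}(\dbr{t}(\beta))$ reduces, on unpacking both sides, to the equalities $\beta(t(u)) = \beta(t \gg u_{\beta(t)})$ and $\beta(t(u) \gg v) = \beta(t \gg u_{\beta(t)} \gg v)$, both of which are instances of the second admissibility axiom for $\beta$.

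For finality, given any $\mathsf{T}$-comodel $\alg[S]$, I would take $f \colon S \to \alg[B]_\mathsf{T}$ to be the assignment $s \mapsto \beta_s$ of Definition~\ref{def:29}. That each $\beta_s$ is admissible is a direct translation of the comodel axioms for $\alg[S]$; and $f$ being a comodel homomorphism reduces to showing that $\partial_t \beta_s = \beta_{s'}$ whenever $\dbr{t}_{\alg[S]}(s) = (a, s')$. This last identity comes from specialising~\eqref{eq:19} to $t \gg u = t(\lambda a.\, u)$, which gives $\dbr{t \gg u}_{\alg[S]}(s) = \dbr{u}_{\alg[S]}(s')$, and hence $\beta_s(t \gg u) = \beta_{s'}(u)$ after projecting onto the first coordinate. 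Uniqueness is then automatic: any homomorphism $f' \colon \alg[S] \to \alg[B]_\mathsf{T}$ satisfies $f'(s)(t) = \pi_1 \dbr{t}_{\alg[B]_\mathsf{T}}(f'(s)) = \pi_1 \dbr{t}_{\alg[S]}(s) = \beta_s(t)$ for every $t$, so $f'(s) = \beta_s$.

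The main obstacle is the preservation of admissibility under $\partial_t$; once this is in hand, every remaining step is a direct unfolding of the proposed formulas combined with the second admissibility axiom, or a routine translation between the comodel axioms on $\alg[S]$ and the admissibility conditions on $\beta_s$. Accessibility of $\mathsf{T}$ enters only to ensure that the collection of admissible behaviours forms a genuine set rather than a proper class, each such $\beta$ being determined, up to a small amount of data, by its restriction to $\kappa$-small sets for the rank $\kappa$ of $\mathsf{T}$.
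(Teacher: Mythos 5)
Your proposal follows the paper's own proof essentially step for step: the decomposition into (a) closure of admissibility under $\partial_t$, (b) verification of the comodel axioms~\eqref{eq:21}, and (c) the assignment $s \mapsto \beta_s$ together with the uniqueness argument via preservation of behaviour, is exactly the paper's, and the individual calculations agree.

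There is, however, one genuine gap, located precisely at the step you single out as the main obstacle. You correctly observe that the first admissibility axiom for $\partial_t\beta$ amounts to the identity $\beta_A(t \gg a) = a$, and you justify it by ``naturality in $A$ of the family $\beta$''. But Definition~\ref{def:35} imposes only the two conditions~\eqref{eq:52} on the family $(\beta_A)$; naturality is neither assumed there nor derivable from them, and neither is the identity $\beta_A(t \gg a) = a$. Concretely: let $\mathsf{T}$ be generated by a single unary operation $\sigma$ with no equations, so $T(A) = \mathbb{N} \times A$; choose an element $z_A$ of each nonempty set $A$ with no compatibility whatsoever, and set $\beta_A(0,a) = a$ and $\beta_A(n,a) = z_A$ for $n \geqslant 1$. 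Both conditions of~\eqref{eq:52} hold (for $n \geqslant 1$ both sides of the second condition equal $z_A$), yet $\beta_A(\sigma \gg a) = z_A \neq a$ in general, so $\partial_\sigma\beta$ fails the first axiom and $\beta$ is not the behaviour of any state. Thus your appeal to naturality begs the question: one must either add naturality of $\beta$ (equivalently, the condition $\beta(t \gg a) = a$) to the definition of admissible behaviour---harmless, since behaviours of states satisfy it automatically---or accept that the claim does not hold for the definition as literally stated. To be fair, the paper's own proof does not fare better: it checks only the second admissibility condition for $\partial_t\beta$ and omits the first entirely, so your writeup is the more honest of the two in flagging the issue. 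Note also that the same missing identity is what your closing remark on sethood relies on, since reducing $\beta_A(t)$ to values on $\kappa$-small sets requires $\beta(t_0 \gg a) = a$ in the base case.
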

\begin{proof}
  Since every accessible monad can be presented by some algebraic
  theory, it follows from Remark~\ref{rk:2} that the admissible
  behaviours of $\mathsf{T}$ do indeed form a set. We must also show
  $\dbr{t}$ is well-defined, i.e., that if $\beta$ is admissible, then
  so is $\partial_t \beta$. For this, we calculate
  \begin{equation*}
    (\partial_t \beta)(u(v)) \!=\! \beta(t \gg u(v)) \!=\!
    \beta((t \gg u)(v))
    \!=\! \beta(t \gg
    u \gg v_{\beta(t \gg u)})\! =\! (\partial_t
    \beta)(u \gg v_{\partial_t\beta(u)})\rlap{ .}
  \end{equation*}

  We now show ${\alg[B]}_\mathsf{T}$ is a comodel, i.e., that the
  conditions of~\eqref{eq:21} hold. For the first condition
  $\dbr{a}_{{\alg[B]}_\mathsf{T}} = \nu_a$, we must show that
  $(\beta(a), \partial_a \beta) = (a, \beta)$ for all
  $\beta \in B_\mathsf{T}$: but $\beta(a) = a$ by~\eqref{eq:52}, while
  that $\partial_a \beta = \beta$ is clear since $a \gg (\thg)$ is the
  identity operator. For the second condition in~\eqref{eq:21}, we must
  show for any $\beta \in B_\mathsf{T}$ that
  $\dbr{t(u)}(\beta) = (\beta(t(u)), \partial_{t(u)} \beta)$ is equal
  to
  \begin{equation*}
    \spn{\dbr{u_a}}_{a \in A}(\dbr{t}(\beta)) =
    \spn{\dbr{u_a}}_{a \in A}(\beta(t), \partial_t \beta) = 
    \dbr{u_{\beta(t)}}(\partial_t \beta) = (\partial_t \beta(u_{\beta(t)}),
    \partial_{u_{\beta(t)}}(\partial_t \beta))\rlap{ .}
  \end{equation*}
  But in the first component
  $\beta(t(u)) = \beta( t \gg u_{\beta(t)}) = \partial_t
  \beta(u_{\beta(t)})$; while in the second,
  \begin{align*}
    \partial_{t(u)} \beta(v) &= \beta(t(u) \gg v)
    =  \beta(t(\lambda a.\, u_a \gg v)) \\ &= \beta(t \gg u_{\beta(t)}
    \gg v)= \partial_t \beta(u_{\beta(t)} \gg v) =
    (\partial_{u_{\beta(t)}}(\partial_t \beta))(v)
  \end{align*}
  as desired. So ${\alg[B]}_\mathsf{T}$ is a comodel.

  We now show that, for any comodel $\alg[S]$, there is a homomorphism
  $\beta_{(\thg)} \colon \alg[S] \rightarrow {\alg[B]}_\mathsf{T}$ given
  by $s \mapsto \beta_s$. For this to be well-defined, each $\beta_s$
  must be an admissible behaviour; but for any $t \in T(A)$, we have
  $\dbr{t}_{\alg[S]}(s) = (\beta_s(t), \dbr{t \gg
    \mathrm{id}}_{\alg[S]}(s))$, and so
  $\dbr{t(u)}_{\alg[S]}(s) = \dbr{u_{\beta_s(t)}}_{\alg[S]}(\dbr{t \gg
    \mathrm{id}}_{\alg[S]}(s)) = \dbr{ t \gg
    u_{\beta_s(t)}}_{\alg[S]}(s)$; now taking first components yields
  $\beta_s(t(u)) = \beta_s(t \gg u_{\beta_s(t)})$ as required. To show
  that $\beta_{(\thg)}$ is a homomorphism
  $\alg[S] \rightarrow {\alg[B]}_\mathsf{T}$, we calculate that
  \begin{align*}
    (1
    \times \beta_{(\thg)})\dbr{t}_{\alg[S]}(s) &= (1 \times
    \beta_{(\thg)})(\beta_s(t), \dbr{t \gg \mathrm{id}}_{\alg[S]}(s)) =
    (\beta_s(t), \beta_{\dbr{t \gg \mathrm{id}}_{\alg[S]}(s)}) \\ &= (\beta_s(t), \partial_t
    (\beta_s)) = \dbr{t}_{{\alg[B]}_\mathsf{T}}(\beta_s)\rlap{ .}
  \end{align*}
  It remains to show that $\beta_{(\thg)}$ is the \emph{unique}
  homomorphism $\alg[S] \rightarrow {\alg[B]}_\mathsf{T}$. But since
  $\dbr{t}_{{\alg[B]}_\mathsf{T}}(\beta) = (\beta(t), \partial_t
  \beta)$, the behaviour of any $\beta \in {\alg[B]}_\mathsf{T}$ is
  $\beta$ itself; and since as in Proposition~\ref{prop:24},
  homomorphisms preserve behaviour, any homomorphism
  $\alg[S] \rightarrow {\alg[B]}_\mathsf{T}$ must necessarily send $s$
  to $\beta_s$.
\end{proof}

\begin{Ex}
  \label{ex:33}While the final comodels of the algebraic theories
  considered so far are well-known, we illustrate the construction via
  admissible behaviours, exploiting Remark~\ref{rk:2} to compute them
  in each case.
  \begin{itemize}[itemsep=0.25\baselineskip]
  \item For $V$-valued input, an admissible behaviour $\beta$ is
    determined by the values
    $W_n \defeq \beta((\mathsf{read} \gg {})^n \mathsf{read}) \in V$
    for each $n \in \mathbb{N}$. But since the theory has no
    equations, \emph{any} such choice of values $W \in V^\mathbb{N}$
    yields an admissible behaviour. Thus the final comodel is
    $V^\mathbb{N}$, and we can read off from Proposition~\ref{prop:33}
    that
    $\dbr{\mathsf{read}} \colon V^\mathbb{N} \rightarrow V \times
    V^\mathbb{N}$ is given by $W \mapsto (W_0, \partial W)$, where
    $(\partial W)_i = W_{i+1}$.
  \item For $V$-valued output, an admissible behaviour $\beta$ is
    determined by the trivial choices
    $\beta(\mathsf{put}_{v_1} \gg \dots \gg \mathsf{put}_{v_n}) \in
    1$; whence there is a unique admissible behaviour, and the final
    comodel is the one-element set with the trivial co-operations.
  \item For $V$-valued read-only state, since
    $\mathsf{get} \gg (\thg)$ is the identity operator, an admissible
    behaviour is uniquely determined by the value
    $\beta(\mathsf{get}) \in V$. Any such choice yields an admissible
    behaviour, and so the final comodel is $V$ with the co-operation
    $\dbr{\mathsf{get}} = \Delta \colon V \rightarrow V \times V$.
  \item For $V$-valued state, $\mathsf{get} \gg (\thg)$ is again the
    identity operator, and so an admissible behaviour $\beta$ is
    determined by the values
    $\beta(\mathsf{put}_{v_1} \gg \dots \gg \mathsf{put}_{v_n} \gg
    \mathsf{get}) \in V$ for $v_1, \dots, v_n \in V$. When $n > 0$,
    the $\mathsf{put}$ axioms force
    $\beta(\mathsf{put}_{v_1} \gg \dots \gg \mathsf{put}_{v_n} \gg
    \mathsf{get}) = v_n$ and so $\beta$ is uniquely determined by
    $\beta(\mathsf{get}) \in V$. Thus again the final comodel is $V$,
    with the same $\dbr{\mathsf{get}}$ as before, and with
    $\dbr{\mathsf{put}} = \pi_1 \colon V \times V \rightarrow V$.
  \end{itemize}
\end{Ex}

\subsection{The behaviour category of an accessible monad}
\label{sec:behav-categ-an-1}
We observed in the proof of
Proposition~\ref{prop:33} that if $\mathsf{T}$ is an accessible monad
and $\alg[S]$ is a $\mathsf{T}$-comodel in $\cat{Set}$, then
$\dbr{t}_{\alg[S]}(s) = (\beta_s(t), \dbr{t \gg \mathrm{id}}_{\alg[S]}(s))$
for all $s \in \alg[S]$   and $t \in T(A)$. Thus $\alg[S]$ is completely
determined by the two functions
\begin{equation}\label{eq:58}
  \begin{aligned}
    S &\rightarrow B_\mathsf{T} & & & & & S \times T(1) & \rightarrow S \\
    s & \mapsto \beta_s &  & & & & (s, m) & \mapsto \dbr{m}_{\alg[S]}(s)\rlap{ .}
  \end{aligned}
\end{equation}
giving the behaviour of each state, together with what we might call
the \emph{dynamics} of the comodel: the right action of unary
operations on states. However, these two structures are not
independent. One obvious restriction is that the right action by
$m \in T(1)$ must send elements of behaviour $\beta$ to elements of
behaviour $\partial_m \beta$. However, due to~\eqref{eq:52} there is a
further constraint; the following definition is intended to capture
this.

\begin{Defn}[$\beta$-equivalence]
  \label{def:36}
  Let $\mathsf{T}$ be an accessible monad, and let $\beta$ be an admissible
  $\mathsf{T}$-behaviour. We say that unary operations
  $m,n \in T(1)$ are \emph{atomically $\beta$-equivalent} if there
  exists $v \in T(A)$ and $m', n' \in T(1)^A$ such that
  \begin{equation}\label{eq:46}
    m = v(m') \quad \text{and}\quad n
    = v(n') \quad \text{and} \quad m'_{\beta(v)}
    = n'_{\beta(v)}\rlap{ .}
  \end{equation}
  We write $\sim_\beta$ for the smallest equivalence relation on
  $T(1)$ which identifies atomically $\beta$-equivalent terms.
  Alternatively, $\sim_\beta$ is the smallest equivalence relation
  such that $t(m) \sim_\beta (t \gg m_{\beta(t)})$ for
  all $t \in T(A)$ and $m \in T(1)^A$.
\end{Defn}

\begin{Rk}
  \label{rk:4}
  If we have a presentation of the accessible monad $\mathsf{T}$ by an
  algebraic theory $\mathbb{T} = (\Sigma, \E)$, then we may simplify the task of
  computing the equivalence relation $\sim_\beta$ by observing that,
  by induction on the structure of $\mathbb{T}$-terms, each
  $m \in T(1)$ is $\sim_\beta$-equivalent to 
  $\sigma_1 \gg \dots \gg \sigma_n \gg \mathrm{id}$ for some
  $\sigma_1, \dots, \sigma_n \in \Sigma$.
\end{Rk}

The motivation for this definition is that $\sim_\beta$ will identify
two unary operations if and only if they act in the same way on any
state of behaviour $\beta$. The ``only if'' direction is
part (iii) of the following lemma; the ``if'' will be proved in
Corollary~\ref{cor:1}.
\begin{Lemma}
  \label{lem:13}
  Let $\beta$ be an admissible behaviour of 
  the accessible monad $\mathsf{T}$, and let $m,n,p \in T(1)$.
  \begin{enumerate}[(i)]
  \item If $m \sim_\beta n$ then $m.p \sim_\beta n.p$;
  \item If $m \sim_{\partial_p \beta} n$ then $p.m
    \sim_{\beta} p.n$;
  \item If $s \in \alg[S]$ is a state of behaviour
    $\beta$ and $m \sim_\beta n$,  then $\dbr{m}_{\alg[S]}(s) = \dbr{n}_{\alg[S]}(s)$.
  \end{enumerate}
\end{Lemma}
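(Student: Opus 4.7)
Reading $m.p$ as the Kleisli composition $m \gg p$ in $T(1)$, the plan is to prove each part by reducing to the atomic case of Definition~5.12 and then extending to the generated equivalence relation $\sim_\beta$. The extension step is automatic in all three parts: the set of pairs $(m,n) \in T(1) \times T(1)$ satisfying the target conclusion is visibly an equivalence relation, so once it contains every atomic $\beta$-equivalence it must contain all of $\sim_\beta$. Thus the substance in each part is to verify the claim when $m = t(u)$ and $n = t(v)$ for some $t \in T(A)$ and $u,v \in T(1)^A$ with $u_{\beta(t)} = v_{\beta(t)}$.

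For (i), associativity of substitution yields $t(u) \gg p = t(\lambda a.\, u_a \gg p)$, so $m \gg p = t(u')$ and $n \gg p = t(v')$ where $u'_a \defeq u_a \gg p$ and $v'_a \defeq v_a \gg p$; the hypothesis $u_{\beta(t)} = v_{\beta(t)}$ forces $u'_{\beta(t)} = v'_{\beta(t)}$, so $m \gg p$ and $n \gg p$ are atomically $\beta$-equivalent via the same witness $t$. For (ii), the dual associativity $p \gg t(u) = (p \gg t)(u)$ lets me rewrite $p \gg m = (p \gg t)(u)$ and $p \gg n = (p \gg t)(v)$; since $\partial_p \beta(t) = \beta(p \gg t)$ by definition of $\partial_p\beta$, the hypothesis $u_{\partial_p\beta(t)} = v_{\partial_p\beta(t)}$ reads $u_{\beta(p \gg t)} = v_{\beta(p \gg t)}$, exhibiting $p \gg m$ and $p \gg n$ as atomically $\beta$-equivalent via the witness $p \gg t$.

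For (iii), the key input is the identity $\dbr{t}_{\alg[S]}(s) = (\beta_s(t),\dbr{t \gg \mathrm{id}}_{\alg[S]}(s))$ extracted in the proof of Proposition~5.9. Assuming $\beta_s = \beta$ and writing $s' \defeq \dbr{t \gg \mathrm{id}}_{\alg[S]}(s)$, the comodel axiom $\dbr{t(u)}_{\alg[S]} = \spn{\dbr{u_a}_{\alg[S]}}_{a \in A} \circ \dbr{t}_{\alg[S]}$ evaluates to $\dbr{m}_{\alg[S]}(s) = \dbr{u_{\beta(t)}}_{\alg[S]}(s')$ and symmetrically $\dbr{n}_{\alg[S]}(s) = \dbr{v_{\beta(t)}}_{\alg[S]}(s')$, and these agree by the hypothesis $u_{\beta(t)} = v_{\beta(t)}$. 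For the extension step, I fix $s$ of behaviour $\beta$ and observe that $\{(m,n) : \dbr{m}_{\alg[S]}(s) = \dbr{n}_{\alg[S]}(s)\}$ is an equivalence relation on $T(1)$ containing all atomic $\beta$-equivalences, hence all of $\sim_\beta$. No step is really an obstacle; the only mild subtlety is keeping the substitution identities $t(u) \gg p = t(\lambda a.\, u_a \gg p)$ and $p \gg t(u) = (p \gg t)(u)$ straight, both of which are immediate from the Kleisli-extension definition of $\gg$.
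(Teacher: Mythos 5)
Your proposal is correct and follows essentially the same route as the paper: reduce each part to the atomic $\beta$-equivalences via the observation that the relevant relation on $T(1)$ is an equivalence relation (a step the paper leaves implicit), then handle the atomic case by the same substitution identities $t(u)\gg p = t(\lambda a.\, u_a \gg p)$ and $p \gg t(u) = (p\gg t)(u)$ for (i) and (ii), and the same evaluation $\dbr{t(u)}(s) = \dbr{u_{\beta(t)}}(\dbr{t\gg\mathrm{id}}(s))$ for (iii). No gaps.
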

Here, and subsequently, if $m \in T(1)$ and $t \in T(A)$, we may write
$m.t$ for the substitution $m(t)$.
\begin{proof}
  For (i), if $m$ and $n$ are atomically
  $\beta$-equivalent via $v,m',n'$, then $m.p$ and $n.p$ are so via $v$, $(m'_a.p : a \in A)$ and $(n'_a.p : a
  \in A)$; whence $m \sim_\beta n$ implies $m(p) \sim_\beta
  n(p)$. (ii) is similar, observing that if $m,n$ are
  atomically $\partial_p \beta$-equivalent via $v,m',n'$, then $p.m)$
  and $p.n$ are atomically $\beta$-equivalent via $p.v, m',n'$.
  Finally, for (iii), if $m,n$ are atomically $\beta$-equivalent, then
  writing $s' = \dbr{v \gg \mathrm{id}}(s)$ we have 
  \begin{equation*}
    \dbr{m}(s) = \dbr{v(m')}(s) =
    \dbr{m'_{\beta(v)}}(s') = 
    \dbr{n'_{\beta(v)}}(s') = \dbr{v(n')}(s) =
    \dbr{n}(s)\text{ .}\qedhere \end{equation*}
\end{proof}

With this in place, we can now give the main definition of this section:

\begin{Defn}[Behaviour category of a monad]
  \label{def:38}
  Let $\mathsf{T}$ be an accessible monad.
  The \emph{behaviour category} $\mathbb{B}_\mathsf{T}$ of $\mathsf{T}$ has admissible
  behaviours as objects, and hom-sets 
  \begin{equation*}
    \mathbb{B}_\mathsf{T}(\beta, \beta') = \{ m \in T(1) \mid \beta' =
    \partial_m \beta \} \quot \mathord{\sim_\beta}\rlap{ .}
  \end{equation*}
  Identities are given by the neutral element of $T(1)$, and the
  composite of $m \colon \beta \rightarrow \beta'$ and
  $n \colon \beta' \rightarrow \beta''$ is
  $m.n \colon \beta \rightarrow \beta''$; note this is well-defined by
  Lemma~\ref{lem:13}(i--ii).
\end{Defn}

\begin{Thm}
  \label{thm:2}
  Given an accessible monad $\mathsf{T}$ with behaviour category
  $\mathbb{B}_\mathsf{T}$, the category of $\mathsf{T}$-comodels 
  is isomorphic to the category of left
  $\mathbb{B}_\mathsf{T}$-sets via an isomorphism commuting with the forgetful
  functors to $\cat{Set}$:
  \begin{equation}\label{eq:18}
    \cd[@!C@C-1em@-0.5em]{
      {}^\mathsf{T} \cat{Set} \ar[dr]_-{{}^\mathsf{T} U}
      \ar[rr]^-{\cong} & & \mathbb{B}_\mathsf{T}\text-\cat{Set}\rlap{ .} \ar[dl]^-{U^{\mathbb{B}_\mathsf{T}}} \\ & \cat{Set}
    }
  \end{equation}
  For a comodel $\alg[S]$, the corresponding left
  $\mathbb{B}_\mathsf{T}$-set $X_{\alg[S]}$ has underlying set $S$,
  projection to $B_\mathsf{T}$ given by the behaviour map $\beta_{(\thg)} \colon S \rightarrow B_\mathsf{T}$,
  and action given by
  \begin{equation}\label{eq:55}
    (s, \beta_s \xrightarrow{m} \partial_m \beta_s) \mapsto
    \dbr{m}_{\alg[S]}(s)\rlap{ ;}
  \end{equation}
  for a left $\mathbb{B}_\mathsf{T}$-set $(X, p, \ast)$, the
  corresponding comodel $\alg[S]_X$
  has underlying set $X$ and
  \begin{equation}\label{eq:56}
    \dbr{t}_{\alg[S]_X} \colon x \mapsto (\beta(t),
    t^\flat \ast x) \qquad \text{for all $x \in p^{-1}\beta$}
  \end{equation}
  where we write $t^\flat$ for $t \gg \mathrm{id}$.
\end{Thm}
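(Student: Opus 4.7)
The plan is to verify by direct computation that the two concrete assignments specified in the theorem---the map $\alg[S] \mapsto X_{\alg[S]}$ given by~\eqref{eq:55} and the map $(X,p,\ast) \mapsto \alg[S]_X$ given by~\eqref{eq:56}---define mutually inverse functors over $\cat{Set}$. Since both preserve underlying sets and act as identities on morphisms viewed as functions, the commutation with the forgetful functors and the reduction of morphism-bijectivity to a pointwise axiom check will be automatic; the entire content lies in verifying that the structural axioms transfer in both directions, and in one key calculation involving $\sim_\beta$.

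For the forward direction $\alg[S] \mapsto X_{\alg[S]}$, the action $(s, m) \mapsto \dbr{m}_{\alg[S]}(s)$ is well-defined on $\sim_\beta$-equivalence classes by Lemma~\ref{lem:13}(iii). The typing axiom $\beta_{\dbr{m}(s)} = \partial_m \beta_s$ is obtained by applying the second comodel axiom of~\eqref{eq:21} to $m \in T(1)$ and $t \in T(A)$, which gives $\dbr{m(t)}_{\alg[S]}(s) = \dbr{t}_{\alg[S]}(\dbr{m}_{\alg[S]}(s))$ (after identifying $1 \times S$ with $S$), and then projecting onto the first component. The two functoriality axioms for $\ast$ are precisely the two comodel axioms of~\eqref{eq:21} restricted to $A = 1$.

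For the reverse direction, the prescription~\eqref{eq:56} is meaningful because $\partial_{t^\flat}\beta = \partial_t\beta$, so $t^\flat$ genuinely represents a morphism $\beta \to \partial_t\beta$ in $\mathbb{B}_\mathsf{T}$ and $t^\flat \ast x$ lies in the fibre over $\partial_t\beta$. The first comodel axiom $\dbr{a}_{\alg[S]_X} = \nu_a$ holds since $\eta_A(a)^\flat = \eta_A(a)(\lambda a'.\, \eta_1(\ast)) = \eta_1(\ast)$ by the unit law, which is the identity of $\beta$ in $\mathbb{B}_\mathsf{T}$. The crucial step is the second comodel axiom: one must show, for $t \in T(A)$, $u \in T(B)^A$ and $x$ over $\beta$, that
\begin{equation*}
    (\beta(t(u)),\, (t(u))^\flat \ast x) = \bigl(\partial_t\beta(u_{\beta(t)}),\, u_{\beta(t)}^\flat \ast (t^\flat \ast x)\bigr)\rlap{ .}
\end{equation*}
The first components agree by admissibility of $\beta$. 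For the second, functoriality of $\ast$ reduces the claim to $(t(u))^\flat \sim_\beta t^\flat . u_{\beta(t)}^\flat$; expanding $(t(u))^\flat = t(\lambda a.\, u_a^\flat)$ and $t^\flat . u_{\beta(t)}^\flat = t \gg u_{\beta(t)}^\flat = t(\lambda a.\, u_{\beta(t)}^\flat)$ via the unit law, this becomes exactly the generating relation $t(m) \sim_\beta t \gg m_{\beta(t)}$ of Definition~\ref{def:36} applied to $m = \lambda a.\, u_a^\flat$. This is the main obstacle, and the place where the particular form of $\sim_\beta$ is genuinely used.

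Finally, the two assignments are mutually inverse on objects. From $\alg[S]$ to $X_{\alg[S]}$ and back, one recovers $\dbr{t}_{\alg[S]}$ because, as recorded in the proof of Proposition~\ref{prop:33}, $\dbr{t}_{\alg[S]}(s) = (\beta_s(t), \dbr{t^\flat}_{\alg[S]}(s))$, which matches~\eqref{eq:56} applied to the reconstructed $\mathbb{B}_\mathsf{T}$-set. From $(X,p,\ast)$ to $\alg[S]_X$ and back, the projection is recovered as $\beta_x(t) = \pi_1 \dbr{t}_{\alg[S]_X}(x) = p(x)(t)$, and the action by $m \in T(1)$ is recovered since $m^\flat = m$ by the unit law. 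For morphisms, a function $f \colon S \to S'$ preserves every $\dbr{t}$ iff it preserves behaviour and every unary action: one direction is trivial, the other follows by decomposing $\dbr{t}_{\alg[S]_X}(x) = (p(x)(t), t^\flat \ast x)$. This identifies comodel homomorphisms with $\mathbb{B}_\mathsf{T}$-set homomorphisms, completing the isomorphism over $\cat{Set}$.
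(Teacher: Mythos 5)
Your proposal is correct and follows essentially the same route as the paper's proof: well-definedness of the action via Lemma~\ref{lem:13}(iii), verification of the comodel axioms for $\alg[S]_X$ with the second axiom reducing via functoriality of $\ast$ to the atomic $\beta$-equivalence $t(\lambda a.\, u_a^\flat) \sim_\beta t \gg u^\flat_{\beta(t)}$, and the same mutual-inverse checks using $\dbr{t}_{\alg[S]}(s) = (\beta_s(t), \dbr{t^\flat}_{\alg[S]}(s))$. You merely spell out a little more of what the paper leaves implicit (the functoriality of the forward action as the $A=1$ case of the comodel axioms, and the reduction of morphism preservation to behaviour plus unary actions, which the paper leaves to the reader).
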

\begin{proof}
  We concentrate on giving the isomorphism of categories at the level
  of objects; indeed, since it is to commute with the faithful
  functors to $\cat{Set}$, to obtain the isomorphism on arrows we need
  only check that a function lifts along ${}^\mathsf{T} U$ just when
  it lifts along $U^{\mathbb{B}_\mathsf{T}}$, and we leave this to the reader.
  
  Now, on objects, one direction is easy: the action map~\eqref{eq:55} of the presheaf
  associated to a comodel is well-defined by Lemma~\ref{lem:13}(iii),
  and is clearly functorial. 
  In the converse direction, we must prove that the $\alg[S]_X$
  associated to a left $\mathbb{B}_\mathsf{T}$-set $X$ satisfies the comodel
  axioms in~\eqref{eq:21}. For the first axiom, we have that
  $\dbr{a}(s) = (\beta(a), a^\flat \ast s) = (a, s) =
  \nu_a(s)$ for all $s \in p^{-1}\beta$, since $\beta(a) = a$ and
  $a^\flat = \mathrm{id} \in T(1)$. For the second, given $s \in
  p^{-1}\beta$ we must show 
  $\dbr{t(u)}(s) = (\beta(t(u)), t(u)^\flat \ast
  s)$ is equal to
\begin{equation*}
    \spn{\dbr{u_a}}_{a \in A}(\dbr{t}(s)) =
    \spn{\dbr{u_a}}(\beta(t), t^\flat \ast s) = 
    \dbr{u_{\beta(t)}}(t^\flat \ast s) = (\partial_t \beta(u_{\beta(t)}),
    u_{\beta(t)}^\flat \ast (t^\flat \ast s))\rlap{ .}
  \end{equation*}
  But in the first component $\beta(t(u)) = \beta(
  t \gg u_{\beta(t)}) = \partial_t \beta(u_{\beta(t)})$; while in the
  second, since $u_{\beta(t)}^\flat \ast (t^\flat \ast s) =
  (u_{\beta(t)}^\flat \circ t^\flat) \ast s = 
  t^\flat(u_{\beta(t)}^\flat) \ast s$, 
  %
  it suffices  to show that we have  $
  t^\flat({u}_{\beta(t)}^\flat) = t(u)^\flat \colon \beta \rightarrow
  \partial_t \beta$ in $\mathbb{B}_\mathsf{T}$; but 
  \begin{equation*}
    t^\flat( u^\flat_{\beta(t)}) = t(\lambda a.\,
    u^\flat_{\beta(t)}) \quad \text{and} \quad 
    t(u)^\flat = t(\lambda a.\,
    u^\flat_a)
  \end{equation*}
  and these two terms are clearly atomically $\beta$-equivalent, and
  so equal as maps in $\mathbb{B}_\mathsf{T}$. This shows that  $\alg[S]_X$ is a
  $\mathsf{T}$-comodel. 

  It remains to show that these two assignments are mutually inverse.
  For any comodel $\alg[S]$, the comodel $\alg[S]_{X_{\alg[S]}}$
  clearly has the same underlying set, but also the same
  comodel structure, since 
  \begin{equation*}
    \dbr{t}_{\alg[S]_{X_{\alg[S]}}}(s) = (\beta_s(t),
    t^\flat \ast s) = (\beta_s(t),
    \dbr{t \gg \mathrm{id}}_{\alg[S]}(s)) = \dbr{t}_{\alg[S]}(s)\rlap{ .}
  \end{equation*}
  On the other hand, for any $\mathbb{B}_\mathsf{T}$-set $X$, the
  $\mathbb{B}_\mathsf{T}$-set $X_{\alg[S]_X}$ has the same underlying set, but
  also the same projection to $B_\mathsf{T}$, since~\eqref{eq:56} exhibits each
  $x \in \alg[S]_X$ as having behaviour $p(x)$; and the same
  action map, since
  $m \ast_{X_{\alg[S]_X}} s = \dbr{m}_{\alg[S]_X}(s) = m \ast_X s$.
\end{proof}

\begin{Cor}
\label{cor:3}
Let $\mathsf{T}$ be an accessible monad. For each $\mathsf{T}$-admissible behaviour
$\beta$, there exists a comodel $\alg[\beta]$ freely generated by a
state of behaviour $\beta$; it has underlying set 
$T(1) \quot \sim_\beta$, co-operations 
$\dbr{t}_{\alg[\beta]}(m) = (\beta(m.t), m.t^\flat)$, and 
generating state
$\mathrm{id} \in \alg[\beta]$. Morphisms $\beta \rightarrow \beta'$ in the behaviour
category of $\mathsf{T}$ are in functorial bijection with comodel homomorphisms
$\alg[\beta'] \rightarrow \alg[\beta]$.
\end{Cor}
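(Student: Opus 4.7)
The plan is to realise $\alg[\beta]$ as the image, under the isomorphism of Theorem~\ref{thm:2}, of the representable left $\mathbb{B}_\mathsf{T}$-set $Y\beta \defeq \sum_{\beta' \in \mathbb{B}_\mathsf{T}} \mathbb{B}_\mathsf{T}(\beta, \beta')$. Once this identification is in hand, both the universal property and the functorial bijection will reduce to the Yoneda lemma applied inside $\mathbb{B}_\mathsf{T}\text-\cat{Set}$.

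First I would confirm that the underlying set of $Y\beta$ is exactly $T(1)\quot\sim_\beta$ with projection $p([m]_\beta) = \partial_m \beta$. The only point requiring calculation is that $\partial_{(\thg)}\beta$ descends to the quotient: using the alternative characterisation of $\sim_\beta$ from Definition~\ref{def:36} together with the admissibility axiom~\eqref{eq:52}, for $m = t(m')$ and $n = t \gg m'_{\beta(t)}$ one computes
\[
\partial_m \beta(u) = \beta(t(\lambda a.\, m'_a . u)) = \beta(t \gg m'_{\beta(t)} . u) = \partial_n \beta(u)\rlap{ ,}
\]
so $\partial_m \beta = \partial_n \beta$ and the projection is well-defined. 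The action of $n \colon \partial_m \beta \to \partial_{m.n}\beta$ on $[m]_\beta$ is then $[m.n]_\beta$, well-defined and functorial by Lemma~\ref{lem:13}(i,ii).

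Next I would transport $Y\beta$ through Theorem~\ref{thm:2} via formula~\eqref{eq:56}, obtaining for each $[m]_\beta$ in the fibre over $\partial_m \beta$ the co-operation
\[
\dbr{t}([m]_\beta) = (\partial_m \beta(t),\, t^\flat \ast [m]_\beta) = (\beta(m.t),\, [m.t^\flat]_\beta)\rlap{ ,}
\]
matching the stated formula, with generating state $[\mathrm{id}]_\beta$ sitting in the fibre over $\partial_{\mathrm{id}}\beta = \beta$. The covariant Yoneda lemma for left $\mathbb{B}_\mathsf{T}$-sets now asserts that for any $X \in \mathbb{B}_\mathsf{T}\text-\cat{Set}$ with projection $q$, evaluation at $[\mathrm{id}]_\beta$ is a bijection between maps $Y\beta \to X$ and elements of $q^{-1}(\beta)$; via Theorem~\ref{thm:2} this is exactly the universal property of $\alg[\beta]$ as the free comodel on a state of behaviour $\beta$. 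Contravariant Yoneda then supplies the natural bijection $\mathbb{B}_\mathsf{T}(\beta, \beta') \cong \mathbb{B}_\mathsf{T}\text-\cat{Set}(Y\beta', Y\beta)$ given by precomposition, which transports to the claimed natural bijection with ${}^\mathsf{T}\cat{Set}(\alg[\beta'], \alg[\beta])$. The only non-trivial ingredient throughout is the descent of the codomain map and composition to $\sim_\beta$-classes; once that bookkeeping is out of the way, everything else is standard Yoneda packaged through the isomorphism of Theorem~\ref{thm:2}.
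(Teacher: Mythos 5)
Your proposal is correct and follows essentially the same route as the paper: the paper also obtains $\alg[\beta]$ as the image of the representable left $\mathbb{B}_\mathsf{T}$-set $\mathbb{B}_\mathsf{T}(\beta,\thg)$ under the isomorphism of Theorem~\ref{thm:2} and deduces the final clause from the Yoneda lemma. The extra bookkeeping you supply (well-definedness of $\partial_{(\thg)}\beta$ on $\sim_\beta$-classes via~\eqref{eq:52}, and the match with formula~\eqref{eq:56}) is accurate and merely makes explicit what the paper leaves implicit.
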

\begin{proof}
  We obtain $\alg[\beta]$ as the image of the representable functor
  $\mathbb{B}(\beta, \thg)$ under the equivalence of
  Theorem~\ref{thm:2}. The final
  clause follows from the Yoneda lemma.
\end{proof}

We now tie up a loose end by proving the converse to
Lemma~\ref{lem:13}(iii).

\begin{Cor}
  \label{cor:1}
  Let $\mathsf{T}$ be an algebraic theory, $\beta$ an admissible
  $\mathsf{T}$-behaviour, and $m,n \in T(1)$. We have $m \sim_\beta n$
  if, and only if, for every $\mathsf{T}$-comodel
  $\alg[S]$ and state $s \in \alg[S]$ of behaviour $\beta$, we
  have $\dbr{m}_{\alg[S]}(s) = \dbr{n}_{\alg[S]}(s)$.
\end{Cor}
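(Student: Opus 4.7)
The forward direction is precisely Lemma~\ref{lem:13}(iii), so the only work is the converse. My plan is to test the two unary operations $m,n$ on a single carefully chosen comodel with a single carefully chosen state: namely, the free comodel $\alg[\beta]$ on a state of behaviour $\beta$ produced in Corollary~\ref{cor:3}, evaluated at its distinguished generating state $\mathrm{id} \in T(1) \quot \sim_\beta$.

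First I would verify that $\mathrm{id}$ indeed has behaviour $\beta$ in $\alg[\beta]$. Using the formula $\dbr{t}_{\alg[\beta]}(m) = (\beta(m.t), m.t^\flat)$ from Corollary~\ref{cor:3}, and the observation that substituting the element $\mathrm{id} \in T(1)$ for the unique variable acts as the identity (so $\mathrm{id}.t = t$), the first component of $\dbr{t}_{\alg[\beta]}(\mathrm{id})$ is $\beta(t)$, as required. Next, I would specialise the formula to $t \in T(1)$ equal to $m$ or $n$; here $m.t^\flat$ with $t \in T(1)$ collapses to $m$ itself (since $m^\flat = m \gg \mathrm{id} = \eta_1^\dagger(m) = m$ for $m \in T(1)$), so that
\begin{equation*}
  \dbr{m}_{\alg[\beta]}(\mathrm{id}) = (\beta(m), [m]_{\sim_\beta}) \qquad \text{and} \qquad \dbr{n}_{\alg[\beta]}(\mathrm{id}) = (\beta(n), [n]_{\sim_\beta})\rlap{ ,}
\end{equation*}
where the brackets denote equivalence classes in $T(1) \quot \sim_\beta$.

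The hypothesis, applied to the comodel $\alg[\beta]$ and the state $\mathrm{id}$ of behaviour $\beta$, forces these two pairs to coincide; reading off the second component immediately yields $[m]_{\sim_\beta} = [n]_{\sim_\beta}$, i.e., $m \sim_\beta n$. The only step that requires care is the identification of $m^\flat$ with $m$ for $m \in T(1)$ and the bookkeeping of substitution with the generator $\mathrm{id}$; both are routine unwinding of Definition~\ref{def:15} and the convention $t \gg u = t(\lambda a.\, u)$. In particular, there is no delicate obstacle: the existence of $\alg[\beta]$ established in Corollary~\ref{cor:3} does the real work, by turning the definition of $\sim_\beta$ directly into the equivalence relation governing equality in this one universal comodel.
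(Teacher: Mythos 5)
Your proof is correct and follows essentially the same route as the paper: the forward direction is delegated to Lemma~\ref{lem:13}(iii), and the converse is obtained by applying the hypothesis to the classifying comodel $\alg[\beta]$ of Corollary~\ref{cor:3} at the generating state $\mathrm{id}$, whence $m = \dbr{m}_{\alg[\beta]}(\mathrm{id}) = \dbr{n}_{\alg[\beta]}(\mathrm{id}) = n$ in $T(1) \quot \sim_\beta$. Your extra bookkeeping (checking that $\mathrm{id}$ has behaviour $\beta$ and that $m^\flat = m$ for $m \in T(1)$) is accurate and merely makes explicit what the paper leaves implicit.
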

\begin{proof}
  The ``only if'' direction is Lemma~\ref{lem:13}(iii). For the ``if''
  direction, consider the $\mathsf{T}$-comodel $\alg[\beta]$
  classifying states of behaviour $\beta$ and the state $\mathrm{id}
  \in \alg[\beta]$. Then we have $m =
  \dbr{m}_{\alg[\beta]}(\mathrm{id}) =
  \dbr{n}_{\alg[\beta]}(\mathrm{id}) = n$ in $T(1) \quot \sim_\beta$
  and so $m \sim_\beta n$ as desired.
\end{proof}

\begin{Rk}
  \label{rk:3}
  In computing the comodel $\alg[\beta]$ classifying an
  admissible behaviour $\beta$, the main problem is to determine
  suitable equivalence-class representatives for elements of the
  underlying set $T(1) \quot \sim_\beta$. Suppose we have a subset
  $\{\mathrm{id}\} \subseteq S \subseteq T(1)$ which we believe
  constitutes a set of such representatives. By
  Corollary~\ref{cor:3}, a necessary condition for this belief to be
  correct is that $S$ underlie a
  comodel $\alg[S]$ with
  \begin{equation}\label{eq:37}
    \dbr{t}_{\alg[S]}(s) = (\beta(s.t), s') \qquad \text{for some $s' \in S$ with
      $s' \sim_\beta s.t^\flat$ ;}
  \end{equation}
  of course, this $\alg[S]$ will then be the desired classifier for
  states of behaviour $\beta$.

  In fact, the preceding result tells us that this necessary condition
  is also \emph{sufficient}. Indeed, if $S$ bears a comodel structure
  satisfying~\eqref{eq:37}, then for each
  $m \in T(1)$ we have $\dbr{m}_{\alg[S]}(\mathrm{id}) \in S$ in
  the $\sim_\beta$-equivalence class of $m$; moreover, since
  $\mathrm{id} \in \alg[S]$ clearly has behaviour $\beta$, if
  $s \neq s' \in S$, then
  $\dbr{s}_{\alg[S]}(\mathrm{id}) = s \neq s' =
  \dbr{s'}_{\alg[S]}(\mathrm{id})$, whence $s \nsim_\beta s'$ by
  Corollary~\ref{cor:1}.
\end{Rk}

\begin{Ex}
  \label{ex:35}
  For each of our running examples of algebraic theories, we compute
  the comodels classifying each admissible behaviour, and so the
  behaviour category. For these examples, it is simple enough to
  find the classifying comodels directly without exploiting
  Remark~\ref{rk:3}.
  \begin{itemize}[itemsep=0.25\baselineskip]
  \item For $V$-valued input, the object-set of the behaviour category
    is $V^\mathbb{N}$, and for each behaviour $W \in V^\mathbb{N}$,
    the comodel $\alg[W]$ classifying this behaviour may be taken to
    have underlying set $\mathbb{N}$ with co-operation
    $\dbr{\mathsf{read}}(n) = (W_n, n+1)$; the universal state of
    behaviour $W$ is then $0 \in \alg[W]$. Morphisms
    $W \rightarrow W'$ in the behaviour category correspond to states
    of $\alg[W]$ of behaviour $W'$, and these can be identified with
    natural numbers $i$ such that $W'_n = W_{n+i}$ for all $k \in \mathbb{N}$.
  \item For $V$-valued output, the behaviour category has a single
    object $\ast$, and the comodel $\alg[V^\ast]$ classifying this
    unique behaviour has as underlying set the free monoid $V^\ast$,
    with co-operations $\dbr{\mathsf{write}_v}(W) = Wv$; the universal
    state is the empty word $\varepsilon \in \alg[V^\ast]$. Endomorphisms
    of $\ast$ in the behaviour category correspond to states of
    $\alg[V^\ast]$, so that the behaviour category is precisely the the
    one-object category corresponding to the monoid $V^\ast$.
  \item For $V$-valued read-only state, the behaviour category has
    object-set $V$, and the comodel classifying $v \in V$ is the
    one-element comodel $\alg[v]$ with
    $\dbr{\mathsf{get}}(\ast) = (v, \ast)$. Clearly, there are no
    non-identity homomorphisms between such comodels, so that the
    behaviour category is the \emph{discrete} category on the set $V$.
  \item For $V$-valued state, the behaviour category again has
    object-set $V$, while the comodel $\alg[v]$ classifying \emph{any}
    $v \in V$ is the final comodel $\alg[V]$, with universal state
    $v$. Maps $v \rightarrow v'$ in the behaviour category thus
    correspond to comodel homomorphisms $\alg[V] \rightarrow \alg[V]$,
    and since $\alg[V]$ is final, the only such is the identity. Thus
    the behaviour category is the \emph{codiscrete} category on
    $V$, with a unique arrow between every two
    objects.
  \end{itemize}
\end{Ex}

\subsection{Functoriality}
\label{sec:functoriality-1}
By Theorem~\ref{thm:2}, the assignment
$\mathsf{T} \mapsto \mathbb{B}_\mathsf{T}$ is the action on objects
of the desired factorisation of the cosemantics functor
$\cat{Mnd}_a(\cat{Set})^\mathrm{op} \rightarrow
\cat{Cmd}_a(\cat{Set})$ through
$\mathsf{Q}_{(\thg)} \colon \cat{Cof} \rightarrow
\cat{Cmd}_a(\cat{Set})$. We conclude this section by describing the
corresponding action on morphisms.

\begin{Prop}
  \label{prop:4}
  Let $\mathsf{T}$ and $\mathsf{R}$ be accessible monads on
  $\cat{Set}$. For each monad morphism $f \colon \mathsf{T}
  \rightarrow \mathsf{R}$, there is a
  cofunctor $\mathbb{B}_f \colon \mathbb{B}_\mathsf{R} \rightsquigarrow
  \mathbb{B}_\mathsf{T}$ which acts on objects by $\beta \mapsto
  f^\ast \beta$; and which, on morphisms, given $\beta \in
  \mathbb{B}_\mathsf{R}$, acts by sending $m
  \colon f^\ast \beta \rightarrow \partial_m (f^\ast \beta)$ in
  $\mathbb{B}_\mathsf{T}$ to $(\mathbb{B}_f)_{\beta}(m) \defeq f(m) \colon
  \beta \rightarrow \partial_{f(m)}\beta$ in $\mathbb{B}_{\mathsf{R}}$. 
\end{Prop}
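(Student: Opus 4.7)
The plan is to construct $\mathbb{B}_f$ directly, leveraging that a monad morphism $f \colon \mathsf{T} \to \mathsf{R}$ provides components $f_A \colon T(A) \to R(A)$ that preserve the unit and satisfy $f(t(u)) = f(t)(f \circ u)$ for all $t \in T(A)$ and $u \in T(B)^A$. I interpret $f^\ast\beta$ as the componentwise composite $(f^\ast\beta)_A = \beta_A \circ f_A$; the content of the proposition then reduces to three verifications: (a) $f^\ast\beta$ is an admissible $\mathsf{T}$-behaviour when $\beta$ is an admissible $\mathsf{R}$-behaviour; (b) the assignment $m \mapsto f(m)$ descends to a well-defined function $(\mathbb{B}_\mathsf{T})_{f^\ast\beta} \to (\mathbb{B}_\mathsf{R})_\beta$ with codomain compatible with (a); (c) the axioms (i)--(iii) of Definition~\ref{def:44} hold.

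For (a), I verify the two clauses of~\eqref{eq:52} directly. The unit clause $(f^\ast\beta)(a) = \beta(f(a)) = \beta(a) = a$ is immediate from unit preservation, and the substitution clause follows by the calculation
\begin{equation*}
(f^\ast\beta)(t(u)) = \beta(f(t)(f \circ u)) = \beta\bigl(f(t) \gg (f \circ u)_{\beta(f(t))}\bigr) = (f^\ast\beta)\bigl(t \gg u_{(f^\ast\beta)(t)}\bigr)\rlap{ ,}
\end{equation*}
where the middle step uses admissibility of $\beta$ and the last step uses $\beta(f(t)) = (f^\ast\beta)(t)$ together with another application of $f(t(u)) = f(t)(f \circ u)$.

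The codomain compatibility demanded in (b)---which is exactly cofunctor axiom (i)---then falls out by unpacking both sides: for $t \in T(A)$,
\begin{equation*}
\partial_m(f^\ast\beta)(t) = \beta(f(m \gg t)) = \beta(f(m) \gg f(t)) = (\partial_{f(m)}\beta)(f(t)) = f^\ast(\partial_{f(m)}\beta)(t)\rlap{ ,}
\end{equation*}
so $f(m) \colon \beta \to \partial_{f(m)}\beta$ in $\mathbb{B}_\mathsf{R}$ indeed lies over the codomain $\partial_m(f^\ast\beta)$ of $m$ in $\mathbb{B}_\mathsf{T}$. Well-definedness modulo the equivalence relations then reduces by a routine induction to the atomic case: if $m = v(m')$, $n = v(n')$ with $m'_{(f^\ast\beta)(v)} = n'_{(f^\ast\beta)(v)}$, then applying $f$ gives $f(m) = f(v)(f \circ m')$ and $f(n) = f(v)(f \circ n')$, and since $\beta(f(v)) = (f^\ast\beta)(v)$ we obtain $(f \circ m')_{\beta(f(v))} = (f \circ n')_{\beta(f(v))}$, witnessing $f(m) \sim_\beta f(n)$ atomically.

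The remaining cofunctor axioms are immediate: axiom (ii) holds because $f$ preserves the neutral element of $T(1)$, and axiom (iii) because $f$ preserves substitution, giving $f(m.n) = f(m).f(n)$, which is precisely the composition in $\mathbb{B}_\mathsf{R}$ of $(\mathbb{B}_f)_\beta(m)$ followed by $(\mathbb{B}_f)_{\mathrm{cod}((\mathbb{B}_f)_\beta(m))}(n)$. I anticipate no substantive obstacle here; the only delicate point is bookkeeping---keeping the two behaviours $\beta$ (over $\mathsf{R}$) and $f^\ast\beta$ (over $\mathsf{T}$) cleanly distinguished when transporting atomic $\sim$-equivalences across $f$.
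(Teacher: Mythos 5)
Your proof is correct and follows essentially the same route as the paper's: the reduction of well-definedness on morphisms to the atomic case, the computation establishing cofunctor axiom (i), and the appeal to preservation of units and substitution for axioms (ii) and (iii) are the same arguments. The one (harmless) divergence is that where you verify the two admissibility axioms~\eqref{eq:52} for $f^\ast\beta$ by direct calculation, the paper deduces admissibility semantically from Lemma~\ref{lem:5}, observing that $f^\ast\beta$ is the behaviour of a state of the comodel $f^\ast\alg[S]$; your syntactic check is more self-contained, while the paper's lemma earns its keep by being reused in the proof of Theorem~\ref{thm:1}.
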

\begin{proof}
  $f^\ast \colon \mathbb{B}_\mathsf{R} \rightsquigarrow
  \mathbb{B}_\mathsf{T}$ is well-defined on objects by the lemma
  below. For well-definedness on maps, we must show that
  $m \sim_{f^\ast \beta} n$ in $T(1)$ implies $f(m) \sim_{\beta} f(n)$
  in $R(1)$. Clearly it suffices to do so when $m$ and $n$ are
  atomically $f^\ast\beta$-equivalent via terms $v \in T(A)$ and
  $m',n' \in T(1)^A$ satisfying
  \begin{equation*}
    m = v(m') \qquad \text{and} \qquad n = v(n') \qquad \text{and}
    \qquad m'_{f^\ast\beta(v)} = n'_{f^\ast\beta(v)}\rlap{ .}
  \end{equation*}
  But since $f^\ast \beta(v) = \beta(f(v))$, it follows that
  $f(v) \in R(A)$ and $f(m'_{(\thg)}), f(n'_{(\thg)}) \in R(1)^A$
  witness $f(m)$ and $f(n)$ as atomically $\beta$-equivalent, as
  required. We must also check the three cofunctor axioms. The first
  holds since
  $f^\ast(\partial_{f(m)}\beta)(t) = (\partial_{f(m)}\beta)(f(t)) =
  \beta(f(m).f(t)) = \beta(f(m.t)) = f^\ast\beta(m.t) =
  \partial_m(f^\ast \beta)(t)$. The other two are immediate since $f$
  preserves substitution.
\end{proof}

In the following lemma, recall from Definition~\ref{def:4} that each
monad morphism $f \colon \mathsf{T} \rightarrow \mathsf{R}$ induces a
functor on comodels
$f^\ast \colon {}^\mathsf{R} \cat{Set} \rightarrow {}^\mathsf{T}
\cat{Set}$.

\begin{Lemma}
  \label{lem:5}
  Let $f \colon \mathsf{T} \rightarrow \mathsf{R}$ in
  $\cat{Mnd}_a(\cat{Set})$, and let $\alg[S]$ be a
  $\mathsf{R}$-comodel. If $s \in \alg[S]$ has $\mathsf{R}$-behaviour
  $\beta$, then $s \in f^\ast \alg[S]$ has $\mathsf{T}$-behaviour
  $f^\ast \beta$ where $(f^\ast \beta)(t) = \beta(f(t))$.
\end{Lemma}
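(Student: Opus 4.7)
The plan is to unpack the definitions carefully: there is essentially no obstacle here, since the claim falls out once we unfold what $f^\ast$ does to a comodel and what ``behaviour'' means.

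First, I recall from (the dual of) Definition~\ref{def:4} that the functor $f^\ast \colon {}^\mathsf{R}\cat{Set} \to {}^\mathsf{T}\cat{Set}$ sends an $\mathsf{R}$-comodel $\alg[S]$ to the $\mathsf{T}$-comodel $f^\ast\alg[S]$ with the same underlying set $S$ and with co-operations
\begin{equation*}
  \dbr{t}_{f^\ast\alg[S]} = \dbr{f(t)}_{\alg[S]} \colon S \to A \times S
  \qquad \text{for each $t \in T(A)$.}
\end{equation*}
Then by Definition~\ref{def:29}, the $\mathsf{T}$-behaviour of $s \in f^\ast\alg[S]$ is the family $t \mapsto \pi_1(\dbr{t}_{f^\ast\alg[S]}(s))$. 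Substituting and using that $\beta$ is the $\mathsf{R}$-behaviour of $s \in \alg[S]$ gives
\begin{equation*}
  \pi_1(\dbr{t}_{f^\ast\alg[S]}(s)) = \pi_1(\dbr{f(t)}_{\alg[S]}(s)) = \beta(f(t)) = (f^\ast\beta)(t)\rlap{ ,}
\end{equation*}
which is exactly the claim.

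As a side remark, admissibility of $f^\ast\beta$ is then automatic, since every behaviour of a state in a comodel is admissible (as shown in the proof of Proposition~\ref{prop:33}). For completeness one could alternatively verify the axioms of Definition~\ref{def:35} directly: the first follows since $f$ preserves the unit, so $(f^\ast\beta)(a) = \beta(f(a)) = \beta(a) = a$; the second follows since $f$ preserves substitution, so $f(t(u)) = f(t)(f \circ u)$ and $f(t \gg u_{(f^\ast\beta)(t)}) = f(t) \gg f(u_{\beta(f(t))})$, after which admissibility of $\beta$ closes the loop.
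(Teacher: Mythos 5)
Your proof is correct and is essentially identical to the paper's: both simply unfold $\dbr{t}_{f^\ast\alg[S]} = \dbr{f(t)}_{\alg[S]}$ from Definition~\ref{def:4} and apply the definition of behaviour. The extra remark on admissibility of $f^\ast\beta$ is a welcome addition (the paper leaves it implicit, relying on the fact that any behaviour of a state in a comodel is admissible).
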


\begin{proof}
  For any $t \in T(A)$, we have $\pi_1(\dbr{t}_{f^\ast \alg[S]}(s)) = \pi_1(\dbr{f(t)}_{\alg[S]}(s)) = \beta(f(t))$.
\end{proof}

We now prove that the cofunctor $\mathbb{B}_f$ of
Proposition~\ref{prop:4} does indeed describe the action on morphisms
of the cosemantics functor.

\begin{Thm}
  \label{thm:1}
  The functor
  $\mathbb{B}_{(\thg)} \colon \cat{Mnd}_a(\cat{Set})^\mathrm{op}
  \rightarrow \cat{Cof}$ taking each accessible monad $\mathsf{T}$ to its
  behaviour category $\mathbb{B}_\mathsf{T}$, and each map of accessible monads
  $f \colon \mathsf{T} \rightarrow \mathsf{R}$ to the cofunctor
  $\mathbb{B}_f \colon \mathbb{B}_\mathsf{R} \rightsquigarrow
  \mathbb{B}_\mathsf{T}$ of Proposition~\ref{prop:4}, yields a within-isomorphism
  factorisation 
\begin{equation*}
  \cd[@C+0.2em]{
    \twocong[0.66]{dr}{} & \cat{Cof} \ar[d]^-{\mathsf{Q}_{(\thg)}} \\
    \cat{Mnd}_a(\cat{Set})^\mathrm{op} \ar[r]_-{\mathrm{Cosem}}
    \ar@{-->}[ur]^-{\mathbb{B}_{(\thg)}} & \cat{Cmd}_a(\cat{Set})\rlap{ .}
  }
\end{equation*}
\end{Thm}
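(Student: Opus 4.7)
The plan is to assemble three pieces: functoriality of $\mathbb{B}_{(\thg)}$, an object-level natural isomorphism $\mathrm{Cosem}(\mathsf{T}) \cong \mathsf{Q}_{\mathbb{B}_\mathsf{T}}$, and compatibility of these isomorphisms with morphisms.

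First, I would verify that $\mathbb{B}_{(\thg)}$ as described in Proposition~\ref{prop:4} is indeed a functor $\cat{Mnd}_a(\cat{Set})^\mathrm{op} \rightarrow \cat{Cof}$. On identities this is immediate: $\mathrm{id}^\ast \beta = \beta$ on objects, and the morphism action is $m \mapsto m$. For composition, given $f \colon \mathsf{T} \rightarrow \mathsf{R}$ and $g \colon \mathsf{R} \rightarrow \mathsf{S}$, the equality $(gf)^\ast \beta = f^\ast(g^\ast \beta)$ at the level of behaviours and $gf(m) = g(f(m))$ at the level of unary operations gives directly $\mathbb{B}_{gf} = \mathbb{B}_f \circ \mathbb{B}_g$ as cofunctors.

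Next, the object-level commutation: for each accessible monad $\mathsf{T}$ we need $\mathrm{Cosem}(\mathsf{T}) \cong \mathsf{Q}_{\mathbb{B}_\mathsf{T}}$ in $\cat{Cmd}_a(\cat{Set})$. Theorem~\ref{thm:2} provides an isomorphism ${}^\mathsf{T}\cat{Set} \cong \mathbb{B}_\mathsf{T}\text-\cat{Set}$ over $\cat{Set}$; Proposition~\ref{prop:19} provides an isomorphism $\mathbb{B}_\mathsf{T}\text-\cat{Set} \cong \cat{Coalg}(\mathsf{Q}_{\mathbb{B}_\mathsf{T}})$ over $\cat{Set}$. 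Composing gives an iso ${}^\mathsf{T}\cat{Set} \cong \cat{Coalg}(\mathsf{Q}_{\mathbb{B}_\mathsf{T}})$ over $\cat{Set}$, and by the full faithfulness of Eilenberg--Moore semantics (Lemma~\ref{lem:3}) this descends to a comonad isomorphism $\mathrm{Cosem}(\mathsf{T}) \cong \mathsf{Q}_{\mathbb{B}_\mathsf{T}}$ as required.

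Finally, for naturality, I would show that under the isomorphisms assembled above, the comonad morphism $\mathrm{Cosem}(f) \colon \mathrm{Cosem}(\mathsf{R}) \rightarrow \mathrm{Cosem}(\mathsf{T})$ corresponds to $\mathsf{Q}_{\mathbb{B}_f}$. By the faithfulness of $\mathrm{EM}$ combined with Proposition~\ref{prop:19}, it suffices to show that the functor $f^\ast \colon {}^\mathsf{R}\cat{Set} \rightarrow {}^\mathsf{T}\cat{Set}$ is translated by Theorem~\ref{thm:2} to the functor $\Sigma_{\mathbb{B}_f} \colon \mathbb{B}_\mathsf{R}\text-\cat{Set} \rightarrow \mathbb{B}_\mathsf{T}\text-\cat{Set}$ of Definition~\ref{def:39}. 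This is a direct comparison: for an $\mathsf{R}$-comodel $\alg[S]$, both sides have underlying set $S$; Lemma~\ref{lem:5} identifies the $\mathsf{T}$-behaviour of $s \in f^\ast \alg[S]$ as $f^\ast \beta_s$, matching the projection of $\Sigma_{\mathbb{B}_f}(X_{\alg[S]})$; and for the action of $m \colon f^\ast \beta_s \rightarrow \partial_m(f^\ast\beta_s)$, one has $\dbr{m}_{f^\ast\alg[S]}(s) = \dbr{f(m)}_{\alg[S]}(s) = (\mathbb{B}_f)_{\beta_s}(m) \ast s$, so the actions agree.

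The main obstacle is the compatibility check in the last paragraph: it requires care to pin down both sides of the correspondence (one via Theorem~\ref{thm:2} applied to $f^\ast \alg[S]$, the other via $\Sigma_{\mathbb{B}_f}$ applied to $X_{\alg[S]}$) and to verify that Lemma~\ref{lem:5} matches the projection components, while the formula for $(\mathbb{B}_f)_\beta$ from Proposition~\ref{prop:4} matches the action components. Once both squares are seen to agree, the within-isomorphism factorisation follows.
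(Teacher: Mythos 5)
Your proposal is correct and follows essentially the same route as the paper: the heart of the argument in both cases is the commuting square identifying $f^\ast \colon {}^\mathsf{R}\cat{Set} \rightarrow {}^\mathsf{T}\cat{Set}$ with $\Sigma_{\mathbb{B}_f}$ under the isomorphisms of Theorem~\ref{thm:2}, checked on underlying sets, projections (via Lemma~\ref{lem:5}) and actions (via $(\mathbb{B}_f)_\beta(m) = f(m)$), with Proposition~\ref{prop:19} and the full fidelity of Eilenberg--Moore semantics transporting this to a comonad isomorphism. Your explicit verification of functoriality of $\mathbb{B}_{(\thg)}$ is a harmless addition the paper leaves implicit.
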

\begin{proof}
  It suffices to show that, for any map of accessible monads $f \colon \mathsf{T} \rightarrow
  \mathsf{R}$, the associated cofunctor 
  $\mathbb{B}_f \colon \mathbb{B}_\mathbb{R} \rightsquigarrow
  \mathbb{B}_{\mathsf{T}}$ renders
  commutative the square
  \begin{equation*}
    \cd{
      {{}^\mathsf{R}\cat{Set}} \ar[r]^-{\cong} \ar[d]_{f^\ast} &
      {\mathbb{B}_\mathsf{R}\text-\cat{Set}} \ar[d]^{\Sigma_{\mathbb{B}_f}} \\
      {{}^\mathsf{T}\cat{Set}} \ar[r]^-{\cong} &
      {\mathbb{B}_\mathsf{T}\text-\cat{Set}}
    }
  \end{equation*}
  whose horizontal edges are the isomorphisms of Theorem~\ref{thm:2}.
  But indeed, given an $\mathsf{R}$-comodel $\alg[S]$, its image
  around the lower composite is by Lemma~\ref{lem:5} the $\mathbb{B}_\mathsf{T}$-set with
  underlying set $S$ and projection and action maps
  \begin{align*}
    S & \rightarrow B_{\mathsf{T}} & \textstyle\sum_{s \in S} T(1)
    \quot\sim_{f^\ast(\beta_s)} & \rightarrow S \\
    s & \mapsto f^\ast(\beta_s) & (s, m) & \mapsto \dbr{f(m)}_{\alg[S]}(s)\rlap{ .}
  \end{align*}
  On the other hand, the upper composite first sends $\alg[S]$ to the
  $\mathbb{B}_\mathsf{R}$-set with underlying set $S$ and projection
  and action maps
  \begin{align*}
    S & \rightarrow B_{\mathsf{R}} & \textstyle\sum_{s \in S} R(1)
    \quot\sim_{\beta_s} & \rightarrow S \\
    s & \mapsto \beta_s & (s, n) & \mapsto \dbr{n}_{\alg[S]}(s)\rlap{ ;}
  \end{align*}
  and then applies $\Sigma_{\mathbb{B}_f}$, which, by the definition
  of $\Sigma_{(\thg)}$ and of $\mathbb{B}_f$, yields the same
  $\mathbb{B}_{\mathsf{T}}$-set as above.
\end{proof}

\begin{Ex}
  \label{ex:3}
  Let $h \colon V \rightarrow W$ be a function, and let
  $f \colon \mathbb{T}_1 \rightarrow \mathbb{T}_2$ be the associated
  interpretation of $V$-valued output into $W$-valued state of
  Example~\ref{ex:4}. The induced cofunctor $f^\ast \colon
  \mathbb{B}_{\mathbb{T}_2} \rightarrow \mathbb{B}_{\mathbb{T}_1}$ has
  as domain the codiscrete category on $W$, and as codomain, the
  monoid $V^\ast$ seen as a one-object category. On objects, $f^\ast$
  acts in the unique possible way; while on morphisms, given $w \in
  \mathbb{B}_{\mathbb{T}_2}$ and a map $f^\ast(w) \rightarrow \ast$ in
  $\mathbb{B}_{\mathbb{T}_1}$---corresponding to an
  element $v_1 \dots v_n \in V^\ast$---we have $f^\ast_w(v_1\dots
  v_n)$ in $\mathbb{B}_{\mathbb{T}_2}$ given by the unique map $w \rightarrow v_n$.
\end{Ex}

\begin{Ex}
  \label{ex:11}
  Let $h \colon W \rightarrow V$ be a function between sets, and let
  $f \colon \mathbb{T}_1 \rightarrow \mathbb{T}_2$ be the associated
  interpretation of $V$-valued read-only state into $W$-valued state of
  Example~\ref{ex:5}. The induced cofunctor $f^\ast \colon
  \mathbb{B}_{\mathbb{T}_2} \rightarrow \mathbb{B}_{\mathbb{T}_1}$ has
  as domain the codiscrete category on $W$, and as codomain the
  discrete category on $V$. On objects, $f^\ast$
  acts by $w \mapsto h(v)$, while on morphisms it acts in the unique possible way.
\end{Ex}

\section{Calculating the costructure functor}
\label{sec:calc-costr-funct}

\subsection{The behaviour category of an accessible comonad}
\label{sec:behav-categ-an-2}
In this section, we give an explicit calculation of the
costructure functor from comonads to monads. Much as for the
cosemantics functor, we will see that cosemantics takes its values in
presheaf monads, and will explicitly associate to each accessible
comonad $\mathsf{Q}$ a small category $\mathbb{B}_\mathsf{Q}$, the
\emph{behaviour category}, such
that $\mathrm{Costr}(\mathsf{Q})$ is the presheaf monad of
$\mathbb{B}_\mathsf{Q}$. We begin with some preliminary observations.

\begin{Not}
  Let $\mathsf{Q}$ be an accessible comonad on $\cat{Set}$. For
  each $x \in Q1$, we write $\iota_x \colon Q_x \rightarrow Q$ for the
  inclusion of the subfunctor with
  \begin{equation*}
    Q_x(A) = \{a \in QA : (Q!)(a) = x \text{ in } Q1\}\rlap{ .}
  \end{equation*}
  We also write $\varepsilon_x \colon Q_x \rightarrow 1$ and
  $\delta_x \colon Q_x \rightarrow Q_x Q$ for the natural
  transformations such that
  $\varepsilon_x = \varepsilon \circ \iota_x$ and
  $\iota_x Q \circ \delta_x = \delta \circ \iota_x$; to see that
  $\delta \circ \iota_x$ does indeed factor through $\iota_x Q$, we
  note that, for any $a \in Q_xA$, the element $\delta(a) \in QQA$
  satisfies
  $(Q!)(\delta(a)) = (Q!)(Q\varepsilon(\delta(a))) = (Q!)(a) = x$ so
  that $\delta(a) \in Q_xQA$ as desired.
\end{Not}
\begin{Lemma}
  \label{lem:8}
  Let $\mathsf{Q}$ be an accessible comonad on $\cat{Set}$.
  \begin{enumerate}[(i),itemsep=0.25\baselineskip
    ]
  \item The inclusions $\iota_x \colon Q_x \rightarrow Q$ exhibit $Q$ as
  the coproduct $\sum_{x \in Q1}Q_x$;
  \item Any natural transformation $f \colon Q_x \rightarrow \sum_i F_i$ into a
    coproduct factors through exactly one coproduct injection $\nu_i
    \colon F_i \rightarrow \sum_i F_i$.
  \end{enumerate}
\end{Lemma}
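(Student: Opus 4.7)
The plan is to reduce both statements to elementary calculations in $\cat{Set}$, exploiting the fact that coproducts in the functor category $[\cat{Set},\cat{Set}]$ are computed componentwise.

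For part~(i), I would argue pointwise: for each set $A$, the function $Q! \colon QA \rightarrow Q1$ partitions $QA$ into its fibres, and by definition these fibres are precisely the subsets $Q_x(A) \subseteq QA$ for $x \in Q1$. Hence at each component $QA = \bigsqcup_{x \in Q1} Q_x(A)$ with injections $(\iota_x)_A$. Since coproducts in $[\cat{Set},\cat{Set}]$ are pointwise, this at once identifies $Q$ with $\sum_{x \in Q1} Q_x$ via the $\iota_x$.

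For part~(ii), the key observation is that $Q_x(1)$ is the singleton $\{x\}$: indeed $!_1 \colon 1 \rightarrow 1$ is the identity, so $Q!_1$ is too, and the only element of $Q1$ sent to $x$ is $x$ itself. Consequently $f_1(x) \in (\sum_i F_i)(1)$ lies in exactly one summand, say $F_{i_0}(1)$. I would then apply naturality of $f$ against the unique map $!_A \colon A \rightarrow 1$ to conclude that, for every $a \in Q_x(A)$, the element $f_A(a)$ must lie in the summand $F_{i_0}(A)$ of $(\sum_i F_i)(A)$: the action of $\sum_i F_i$ on $!_A$ preserves summand indices, and $(\sum_i F_i)(!_A)(f_A(a)) = f_1(Q_x(!_A)(a)) = f_1(x) \in F_{i_0}(1)$, so any other choice of summand would yield a contradictory index. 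This provides a (necessarily unique) natural factorisation $f = \nu_{i_0} \circ g$, with uniqueness of $i_0$ following from the disjointness of coproduct injections in $\cat{Set}$.

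I do not expect any genuine obstacle: both parts are essentially immediate once one notices that $Q_x(1) = \{x\}$, together with the pointwise nature of coproducts in $[\cat{Set},\cat{Set}]$. The only point requiring minor care is the naturality argument in (ii), where one tracks that the action of $\sum_i F_i$ on a morphism preserves the coproduct indexing; but this is a routine property of the set-valued coproduct functor, and accessibility of $\mathsf{Q}$ plays no role in the argument.
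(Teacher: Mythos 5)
Your proposal is correct and follows essentially the same route as the paper: part (i) by the pointwise computation of coproducts in $[\cat{Set},\cat{Set}]$, and part (ii) by noting that $Q_x 1 = \{x\}$ forces $f_1$ into a single summand and then propagating this along naturality with respect to $! \colon A \rightarrow 1$. The extra detail you supply (tracking that the coproduct functor preserves summand indices) is exactly what the paper leaves implicit.
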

\begin{proof}
  (i) holds as each $QA$ is the coproduct of the $Q_x A$'s, and
  coproducts in $[\cat{Set}, \cat{Set}]$ are componentwise.
  For (ii), the component
  $f_1 \colon \{x\} \rightarrow \sum_{i \in I} F_i 1$ of $f$ clearly factors
  through just one $\nu_i$; now naturality of $f$ with respect to the unique maps $! \colon A
  \rightarrow 1$
  shows that each $f_A$ factors through just the same $\nu_i$.
\end{proof}

\begin{Defn}[Behaviour category of a comonad]
\label{def:22}
  Let $\mathsf{Q}$ be an accessible comonad on $\cat{Set}$. The
  \emph{behaviour category} of $\mathsf{Q}$ is the small category
  $\mathbb{B}_\mathsf{Q}$ 
  in which:
  \begin{itemize}
  \item Objects are elements of $Q1$;
  \item Maps with domain $x \in Q1$ are natural transformations
    $\tau \colon Q_x \rightarrow \mathrm{id}$, and the codomain of
    such a $\tau$ is
    determined by the following factorisation, whose (unique) existence is
    asserted by
 Lemma~\ref{lem:8}:
    \begin{equation}\label{eq:34}
      \cd[@-0.3em]{
        Q_x \ar[d]_-{\delta_x} \ar@{-->}[r]^-{\tau^\sharp} & Q_{\mathrm{cod}(\tau)}
        \ar@{>->}[d]^-{\iota_{\mathrm{cod}(\tau)}} \\
        Q_x Q \ar[r]^-{\tau Q} & Q\rlap{ ;}
      }
    \end{equation}
  \item The identity on $x \in Q1$ is $\varepsilon_x \colon Q_x
    \rightarrow \mathrm{id}$;
  \item Binary composition is given as follows, 
where $\tau^\sharp$ is the factorisation in~\eqref{eq:34}:
    \begin{equation*}
      \smash{(Q_{\mathrm{cod}(\tau)} \xrightarrow{\upsilon} \mathrm{id}) \circ
        (Q_x \xrightarrow{\tau} \mathrm{id}) = (Q_x
    \xrightarrow{\tau^\sharp} Q_{\mathrm{cod}(\tau)}
    \xrightarrow{\upsilon} \mathrm{id})}\rlap{ .}
\end{equation*}
\end{itemize}
The axioms expressing unitality and associativity of
composition follow from the familiar and easily-checked identities
$\varepsilon_{\mathrm{cod}(\tau)} \circ \tau^\sharp = \tau$,
$\varepsilon_x^\sharp = 1_{Q_x}$ and $(\upsilon \circ
\tau^\sharp)^\sharp = \upsilon^\sharp \circ \tau^\sharp$.
\end{Defn}
We now show that the costructure monad associated to an accessible
comonad $\mathsf{Q}$ is isomorphic to the presheaf monad
$\mathsf{T}_{\mathbb{B}_\mathsf{Q}}$. In light of
Proposition~\ref{prop:20}, it suffices to construct an isomorphism
between $\mathsf{T}_{\mathbb{B}_\mathsf{Q}}$ and the \emph{dual monad}
$\mathsf{Q}^\circ$ of Definition~\ref{def:7}.
\begin{Prop}
  \label{prop:23}
  Let $\mathsf{Q}$ be an accessible comonad.
  For any $\tau \colon Q \rightarrow A \cdot \mathrm{id}$ and any
  $x \in Q1$, there is a unique $a_x \in A$ and 
  $\tau_x \colon Q_x \rightarrow \mathrm{id}$ for which we have a factorisation
  \begin{equation}\label{eq:29}
    \cd[@-0.3em]{
      Q_x \ar@{ >->}[d]_-{\iota_x} \ar@{-->}[r]^-{\tau_x} &
      \mathrm{id} \ar@{ >->}[d]_-{\nu_{a_x}} \\
      Q \ar[r]^-{\tau} & A \cdot \mathrm{id}\rlap{ .}
    }
  \end{equation}
  In this way, we obtain a monad isomorphism
  $\theta \colon \mathsf{Q}^\circ \cong \mathsf{T}_{\mathbb{B}_{\mathsf{Q}}}$ with components
  \begin{equation}\label{eq:25}
    \begin{aligned}
      \theta_A \colon [\cat{Set}, \cat{Set}](Q, A \cdot \mathrm{id}) &\rightarrow
      \textstyle\prod_{x
        \in \mathbb{B}_\mathsf{Q}}(A \times (\mathbb{B}_\mathsf{Q})_x)\\
      \tau & \mapsto \lambda x.\, (a_x, \tau_x)\rlap{ .}
    \end{aligned}
  \end{equation}
\end{Prop}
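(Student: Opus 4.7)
The plan is to obtain the factorization in~\eqref{eq:29} as a direct instance of Lemma~\ref{lem:8}(ii), deduce bijectivity of $\theta_A$ from Lemma~\ref{lem:8}, and then verify compatibility with the units and Kleisli extensions of $\mathsf{Q}^\circ$ and $\mathsf{T}_{\mathbb{B}_\mathsf{Q}}$.

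For the factorization, I observe that $A \cdot \mathrm{id} = \sum_{a \in A} \mathrm{id}$ is a coproduct, so Lemma~\ref{lem:8}(ii) applied to the composite $\tau \circ \iota_x \colon Q_x \to A \cdot \mathrm{id}$ yields a unique $a_x \in A$ and a unique $\tau_x \colon Q_x \to \mathrm{id}$ with $\nu_{a_x} \circ \tau_x = \tau \circ \iota_x$. Since by Definition~\ref{def:22} such a $\tau_x$ is exactly a morphism with domain $x$ in $\mathbb{B}_\mathsf{Q}$, the formula~\eqref{eq:25} is well-typed. Bijectivity of $\theta_A$ is then immediate: by Lemma~\ref{lem:8}(i), natural transformations $Q \to A \cdot \mathrm{id}$ are in bijection with families of natural transformations $Q_x \to A \cdot \mathrm{id}$ indexed by $x \in Q1$, and Lemma~\ref{lem:8}(ii) identifies such a family with a $T_{\mathbb{B}_\mathsf{Q}}(A)$-element $\lambda x.\, (a_x, \tau_x)$. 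Naturality in $A$ is clear.

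It remains to verify that $\theta$ commutes with units and Kleisli extensions. The unit of $\mathsf{Q}^\circ$ at $a \in A$ is $\nu_a \circ \varepsilon$, whose restriction along $\iota_x$ is $\nu_a \circ \varepsilon_x$; hence $\theta_A(\nu_a \circ \varepsilon)(x) = (a, \varepsilon_x) = (a, 1_x)$, in accordance with the unit of $\mathsf{T}_{\mathbb{B}_\mathsf{Q}}$ from Definition~\ref{def:10}. The main obstacle is Kleisli extension compatibility. For this, I would take $\tau \in Q^\circ A$ with $\theta_A(\tau)(x) = (a_x, \tau_x)$ and $u' \colon A \to Q^\circ B$ with $\theta_B(u'_a)(y) = (b_{a,y}, \sigma_{a,y})$, and restrict $(u')^\dagger(\tau) = \langle u'_a \rangle_{a \in A} \circ \tau Q \circ \delta$ along $\iota_x$ via four successive rewrites: first the identity $\delta \circ \iota_x = \iota_x Q \circ \delta_x$ from the definition of $\delta_x$; then the factorization $\tau \circ \iota_x = \nu_{a_x} \circ \tau_x$; then the defining equation $\tau_x Q \circ \delta_x = \iota_{\mathrm{cod}(\tau_x)} \circ \tau_x^\sharp$ of~\eqref{eq:34}; and finally the factorization of $u'_{a_x} \circ \iota_{\mathrm{cod}(\tau_x)}$. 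The result is $\nu_{b_{a_x,\mathrm{cod}(\tau_x)}} \circ \sigma_{a_x,\mathrm{cod}(\tau_x)} \circ \tau_x^\sharp$, and since by Definition~\ref{def:22} the composite $\sigma_{a_x,\mathrm{cod}(\tau_x)} \circ \tau_x^\sharp$ \emph{is} the composition in $\mathbb{B}_\mathsf{Q}$ of $\tau_x$ followed by $\sigma_{a_x,\mathrm{cod}(\tau_x)}$, this precisely matches the formula for the Kleisli extension of $\theta_A(\tau)$ by $\theta \circ u'$ in $\mathsf{T}_{\mathbb{B}_\mathsf{Q}}$ from Definition~\ref{def:10}.
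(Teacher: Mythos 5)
Your proposal is correct and follows essentially the same route as the paper: Lemma~\ref{lem:8} gives the factorisation and the bijectivity of $\theta_A$, the unit check is identical, and your four rewrites for Kleisli-extension compatibility are exactly the regions of the commutative diagram the paper uses to check compatibility with multiplication (the two formulations being equivalent for Kleisli triples). No gaps.
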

\begin{proof}
  The existence and uniqueness of the factorisation~\eqref{eq:29}
  is a consequence of Lemma~\ref{lem:8}(ii); now that the
  induced maps~\eqref{eq:25} constitute a natural isomorphism follows
  from Lemma~\ref{lem:8}(i).
  It remains to show that these maps are the components of a
  monad isomorphism $\mathsf{Q}^\circ \cong
  \mathsf{T}_{\mathbb{B}_\mathsf{Q}}$.
  
  For compatibility with units, we have on the one hand that
  $\eta^{\mathsf{T}_{\smash{\mathbb{B}_{\mathsf{Q}}}}}_A(a) = \lambda x.\, (a,
  \varepsilon_x)$. On the other hand, 
  $\eta^{\mathsf{Q}^\circ}_A(a) = \nu_a \circ \varepsilon \colon Q
  \rightarrow \mathrm{id} \rightarrow A \cdot \mathrm{id}$, whose
  factorisation as
  in~\eqref{eq:29} is clearly $\eta^{\mathsf{Q}^\circ}_A(a) \circ \iota_x = \nu_a \circ \varepsilon_x$,
  so that
  $\theta_A(\eta^{\mathsf{Q}^\circ}_A(a)) = \lambda x.\, (a,
  \varepsilon_{x})$ as desired.

  We now
  show compatibility with multiplication. To this end, consider an
  element $\sigma \colon Q \rightarrow Q^\circ A \cdot \mathrm{id}$ of
  $Q^\circ Q^\circ A$. For each $x \in Q1$, we have an element
  $\tau_x \in Q^\circ A$ and a natural transformation
  $\sigma_x \colon Q_x \rightarrow 1$ rendering commutative the square
  to the left in
  \begin{equation}\label{eq:26}
    \cd[@-0.3em]{
      Q_x \ar[r]^-{\sigma_x} \ar@{ >->}[d]_-{\iota_x} &
      \mathrm{id} \ar@{ >->}[d]^-{\nu_{\tau_x}} &  &
      Q_{y} \ar[r]^-{\tau_{xy}} \ar@{ >->}[d]_-{\iota_{}} &
      \mathrm{id} \ar@{ >->}[d]^-{\nu_{a_{xy}}}\\
      Q \ar[r]^-{\sigma} & Q^\circ A \cdot \mathrm{id} & &
      Q \ar[r]^-{\tau_x} & A \cdot \mathrm{id}\rlap{ .}
    }
  \end{equation}
  Considering now $\tau_x \in Q^\circ A$, we have for
  each $y \in Q1$ an element
  $a_{xy} \in A$ and 
  $\tau_{xy} \colon Q_{y} \rightarrow 1$ rendering
  commutative the square above right. With this notation, the
  composite $\mu_A^{\mathsf{T}_{\smash{\mathbb{B}_\mathsf{Q}}}} \circ
  (\theta\theta)_A$ acts on $\sigma
  \in Q^\circ Q^\circ A$ via
  \begin{equation*}
       \smash{\sigma \ \xmapsto{\ \ (\theta\theta)_A\ \ }\  \lambda x.\, (\sigma_x, \lambda y.\,
       (\tau_{xy}, a_{xy})) \ \xmapsto{\ \
         \mu_A^{\mathsf{T}_{\smash{\mathbb{B}_\mathsf{Q}}}}\ \ }\  \lambda x.\, (\tau_{x,
         \mathrm{cod}(\sigma_x)} \circ \sigma_x^\sharp, a_{x, \mathrm{cod}(\sigma_x)})\rlap{ .}}
  \end{equation*}
  We must show that this is equal to the image of $\sigma$ under the composite $\theta_A \circ
  \mu^{\mathsf{Q}^\circ}_A$. From the description
  of $\mathsf{Q}^\circ$ in Definition~\ref{def:7}, we can read off
  that $\mu^{\mathsf{Q^\circ}}_A(\sigma) \in Q^\circ A$ is the lower composite in
  the diagram
  \begin{equation*}
    \cd[@-0.4em]{
      & Q_x \ar[r]^-{\sigma_x^\sharp} \ar[d]|-{\delta_x}
      \ar[dl]_-{\iota_x} &
      Q_{\mathrm{cod}(\sigma_x)}
      \ar[d]|-{\iota_{\mathrm{cod}(\sigma_x)}} \ar[r]^-{\tau_{x,
          \mathrm{cod}(\sigma_x)}} &
      \mathrm{id} \ar[d]^-{\nu_{a_{x, \mathrm{cod}(\sigma_x)}}} \\
      Q \ar[dr]_-{\delta} &
      Q_x Q \ar[r]^-{\sigma_x Q} \ar[d]|-{\iota_x Q} &
      Q  \ar[d]|-{\nu_{\tau_x}} \ar[r]^-{\tau_x} &
      Q \cdot \mathrm{id} \rlap{ .}\\ &
      QQ \ar[r]^-{\sigma Q} & Q^\circ A \cdot Q \ar[ur]_-{\mathsf{ev}}
    }
  \end{equation*}
  where $\mathsf{ev}$ is unique such that
  $\mathsf{ev} \circ \nu_\tau = \tau$ for all $\tau \in Q^\circ A$. To
  calculate the image of
  $\mu^{\mathsf{Q^\circ}}_A(\sigma)$ under $\theta_A$,
  we observe that, in the displayed diagram, the far left region is
  the definition of $\delta_x$, the two upper squares are instances
  of~\eqref{eq:34} and~\eqref{eq:29}, the lower square is $(\thg)Q$ of
  another instance of~\eqref{eq:29}, and the triangle is definition of
  $\mathsf{ev}$. So by unicity in~\eqref{eq:29}, 
  $\theta_A(\mu^{\mathsf{Q^\circ}}_A(\sigma)) = \lambda x.\, (\tau_{x,
    \mathrm{cod}(\sigma_x)} \circ \sigma_x^\sharp, a_{x,
    \mathrm{cod}(\sigma_x)})$ as required.
\end{proof}

\subsection{Functoriality}
\label{sec:functoriality}

We now describe the manner in which the passage from an accessible comonad 
to its behaviour category is functorial.

\begin{Not}
  Let $f \colon \mathsf{P} \rightarrow \mathsf{Q}$ be a morphism of
  accessible comonads on $\cat{Set}$ and let $x \in Q1$. We write $f_x
  \colon P_x \rightarrow Q_{fx}$ for the unique natural transformation
  (whose unique existence follows from Lemma~\ref{lem:8}) such that $f
  \circ \iota_x = \iota_{fx} \circ f_x \colon P_x \rightarrow Q$.
\end{Not}

\begin{Prop}
  \label{def:19}
  Each morphism $f \colon \mathsf{P} \rightarrow \mathsf{Q}$ of
  accessible comonads on $\cat{Set}$ induces a cofunctor $\mathbb{B}_f
  \colon \mathbb{B}_\mathsf{P} \rightsquigarrow \mathbb{B}_\mathsf{Q}$
  on behaviour categories with action on objects $f_1 \colon P1
  \rightarrow Q1$, and with, for each $x \in P1$, the action on homs
  $(\mathbb{B}_\mathsf{Q})_{fx} \rightarrow
  (\mathbb{B}_\mathsf{P})_{x}$ given by $\tau \mapsto \tau \circ f_x$.
\end{Prop}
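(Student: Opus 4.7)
The plan is to verify the three cofunctor axioms of Definition~\ref{def:44} directly. Well-definedness at the morphism level is immediate: for $\tau \in (\mathbb{B}_\mathsf{Q})_{f_1 x}$, the composite $\tau \circ f_x \colon P_x \Rightarrow \mathrm{id}$ is defined since $f_x$ lands in $Q_{f_1 x}$. Axiom (ii), preservation of identities, should then follow immediately from the counit-preservation condition for the comonad morphism $f$:
\begin{equation*}
\varepsilon_{f_1 x} \circ f_x = \varepsilon^Q \circ \iota_{f_1 x} \circ f_x = \varepsilon^Q \circ f \circ \iota_x = \varepsilon^P \circ \iota_x = \varepsilon_x\rlap{ .}
\end{equation*}

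The substantive work lies in axioms (i) and (iii); both will follow from a comultiplication-side analogue of the identity above, namely
\begin{equation*}
\delta^Q_{f_1 x} \circ f_x = Q_{f_1 x} f \circ f_x P \circ \delta^P_x \colon P_x \Rightarrow Q_{f_1 x} Q\rlap{ .}
\end{equation*}
I would establish this by postcomposing both sides with the pointwise-monic $\iota_{f_1 x} Q$, at which point the left side becomes $\delta^Q \circ f \circ \iota_x$ using the defining relations $\iota_{f_1 x} Q \circ \delta^Q_{f_1 x} = \delta^Q \circ \iota_{f_1 x}$ and $\iota_{f_1 x} \circ f_x = f \circ \iota_x$, the right side becomes $Qf \circ fP \circ \delta^P \circ \iota_x$ similarly, and the resulting identity reduces to the comonad-morphism condition $\delta^Q \circ f = Qf \circ fP \circ \delta^P$. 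Postcomposing the displayed identity with $\tau Q$ and using the horizontal-naturality relation $\tau Q \circ Q_{f_1 x} f = f \circ \tau P$ (which is just naturality of $\tau$ at the components of $f$) yields the useful reformulation
\begin{equation*}
\tau Q \circ \delta^Q_{f_1 x} \circ f_x = f \circ (\tau \circ f_x) P \circ \delta^P_x\rlap{ .}
\end{equation*}

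Now setting $y = \mathrm{cod}(\tau) \in Q1$ and $x' = \mathrm{cod}(\tau \circ f_x) \in P1$, the defining factorisations of $\sharp$ will identify the left side of the last equation with $\iota_y \circ \tau^\sharp \circ f_x$ and the right side with $f \circ \iota_{x'} \circ (\tau \circ f_x)^\sharp = \iota_{f_1 x'} \circ f_{x'} \circ (\tau \circ f_x)^\sharp$. Uniqueness of factorisation through coproduct inclusions (Lemma~\ref{lem:8}(ii)) then forces $y = f_1 x'$, which is axiom (i); cancelling the monic $\iota_y$ delivers the $\sharp$-compatibility relation $\tau^\sharp \circ f_x = f_{x'} \circ (\tau \circ f_x)^\sharp$, and postcomposing this by an arbitrary $\upsilon \colon Q_y \Rightarrow \mathrm{id}$ yields axiom (iii). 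The hard part will be isolating this $\sharp$-compatibility identity; everything else is routine bookkeeping around the unique factorisations of Lemma~\ref{lem:8}.
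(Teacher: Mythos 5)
Your proposal is correct and follows essentially the same route as the paper's proof: the identity $\delta^{\mathsf{Q}}_{f_1x}\circ f_x = Q_{f_1x}f\circ f_xP\circ\delta^{\mathsf{P}}_x$ and the interchange relation $\tau Q\circ Q_{f_1x}f = f\circ\tau P$ are exactly the left and bottom faces of the commuting cube the paper draws, and the remaining steps (axiom (ii) from counit preservation, axiom (i) via Lemma~\ref{lem:8}(ii), axiom (iii) by cancelling the monic $\iota$ to extract $\tau^\sharp\circ f_x = f_{x'}\circ(\tau\circ f_x)^\sharp$ and postcomposing with $\upsilon$) coincide with the paper's. The only difference is presentational: you render the cube as a chain of equations and are slightly more explicit about why the comonad-morphism condition transfers to the restricted comultiplications.
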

\begin{proof}
  We first dispatch axiom (ii) for a cofunctor, which follows by the
  calculation that
  $\varepsilon^\mathsf{Q}_{fx} \circ f_x = \varepsilon^\mathsf{Q} \circ
  \iota_{fx} \circ f_x = \varepsilon^\mathsf{Q} \circ f \circ \iota_x
  = \varepsilon^\mathsf{P} \circ \iota_x = \varepsilon_x^\mathsf{P}
  \colon P_x \rightarrow \mathrm{id}$.
  We next deal with axiom (i). Let $\tau
  \colon Q_{fx} \rightarrow \mathrm{id}$ be an element of
  $(\mathbb{B}_\mathsf{Q})_{fx}$ with image $\tau \circ f_x$ in
  $(\mathbb{B}_{\mathsf{P}})_x$. We must show that $y \defeq
  \mathrm{cod}(\tau \circ f_x)$ is sent by $f$ to $z \defeq
  \mathrm{cod}(\tau)$. To this end, consider the diagram to the left in:
  \begin{equation*}
    \cd[@!@-2.2em@C+0.4em]{
      P_x \ar[rr]^-{(\tau \circ f_x)^\sharp} \ar[dd]_-{\delta_x}
      \ar[dr]^-{f_x} & &
      P_{y} \ar@{ >->}'[d][dd]^-{\iota_{y}} \\ &
      Q_{fx} \ar[rr]^(0.3){\tau^\sharp} \ar[dd]^(0.7){\delta_{fx}} & &
      Q_z \ar@{ >->}[dd]^-{\iota_z} \\
      P_x P \ar'[r][rr]^-{(\tau \circ f_x) P} \ar[dr]_-{f_x f} & & 
      P \ar[dr]^-{f} \\ &
      Q_{fx} Q \ar[rr]^-{\tau Q} & & Q
    } \qquad \qquad 
    \cd[@-0.3em]{
      P_x \ar[r]^-{(\tau \circ f_x)^\sharp} \ar[d]_-{f_x} & P_{y}
      \ar@{-->}[d]^-{f_{y}}
      \ar@{ >->}[r]^-{\iota_{y}} & P \ar[d]^-{f}\\
      Q_{fx} \ar[r]^-{\tau^\sharp} & Q_z \ar@{ >->}[r]^-{\iota_z} & Q\rlap{ .}
    }
  \end{equation*}
  The front and back faces are instances of~\eqref{eq:34}, the left
  face commutes since $f$ is a comonad morphism, and the bottom face
  commutes by naturality. We can thus read off that the outside of the diagram to the right
  commutes; as such, its upper composite (clearly) factors
  through $\iota_z$, but also through $\iota_{fy}$, since
  $f \circ \iota_{y} = \iota_{fy} \circ f_{y}$: whence by
  Lemma~\ref{lem:8}(ii) we have $z = f(y)$, giving the first cofunctor
  axiom.

  Finally, we address cofunctor axiom (iii). Note that we can now
  insert $f_{y}$ into the diagram right above; whereupon the right
  square commutes by definition of $f_{y}$, and the left square
  since it does so on postcomposition by the monic $\iota_z$.
  Postcomposing this left-hand square with some $\sigma \colon Q_z \rightarrow
  \mathrm{id}$ yields the final cofunctor axiom.
\end{proof}

\begin{Prop}
  \label{prop:25}
  For each morphism $f \colon \mathsf{P} \rightarrow \mathsf{Q}$ of
  accessible comonads, we have a commuting square of monad morphisms:
  \begin{equation}\label{eq:33}
    \cd[@-0.3em]{
      \mathsf{Q}^\circ \ar[r]^-{\theta} \ar[d]_-{f^\circ} &
      \mathsf{T}_{\mathbb{B}_\mathsf{Q}}
      \ar[d]^-{\mathsf{T}_{\mathbb{B}_f}} \\
      \mathsf{P}^\circ \ar[r]^-{\theta} &
      \mathsf{T}_{\mathbb{B}_\mathsf{P}}\rlap{ .}
    }
  \end{equation}
\end{Prop}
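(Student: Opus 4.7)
The plan is a routine verification at the level of underlying natural transformations. Since $\theta$ is already a monad isomorphism (Proposition~\ref{prop:23}), $\mathsf{T}_{\mathbb{B}_f}$ a monad morphism (Proposition~\ref{prop:22}) and $f^\circ$ a monad morphism (Definition~\ref{def:7}), commutativity of~\eqref{eq:33} as a square of monad morphisms reduces to commutativity of the underlying natural transformations; and this I would check componentwise, tracing an arbitrary $\tau \colon Q \Rightarrow A \cdot \mathrm{id}$ in $Q^\circ A$ around both sides.

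Going right-then-down, I would first apply $\theta_A$ to obtain the family $\lambda x.\,(a_x^\tau, \tau_x)$ determined by the unique factorisations $\tau \circ \iota_x = \nu_{a_x^\tau} \circ \tau_x$ of~\eqref{eq:29}. I would then apply $\mathsf{T}_{\mathbb{B}_f}$ using formula~\eqref{eq:32}, substituting the explicit description of the cofunctor $\mathbb{B}_f$ from Proposition~\ref{def:19}---action on objects $y \mapsto fy$ and on hom-sets $\tau_{fy} \mapsto \tau_{fy} \circ f_y$---to arrive at $\lambda y.\,(a_{fy}^\tau,\, \tau_{fy} \circ f_y)$ indexed by $y \in P1$.

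Going down-then-right, I would apply $f^\circ$ to get the natural transformation $\tau \circ f \colon P \Rightarrow A \cdot \mathrm{id}$, and then apply $\theta_A$ to obtain the family $\lambda y.\,(b_y, \sigma_y)$ determined by the factorisations $(\tau \circ f) \circ \iota_y = \nu_{b_y} \circ \sigma_y$. The key identity is the definition of $f_y$, namely $f \circ \iota_y = \iota_{fy} \circ f_y$, which gives
\begin{equation*}
(\tau \circ f) \circ \iota_y \;=\; \tau \circ \iota_{fy} \circ f_y \;=\; \nu_{a_{fy}^\tau} \circ \tau_{fy} \circ f_y\rlap{ ,}
\end{equation*}
and therefore the uniqueness half of Lemma~\ref{lem:8}(ii) forces $b_y = a_{fy}^\tau$ and $\sigma_y = \tau_{fy} \circ f_y$. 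Both paths hence return the same family, establishing commutativity of~\eqref{eq:33}.

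There is no real obstacle here: the work is all bookkeeping, and the single nontrivial ingredient is the uniqueness of the factorisation through a coproduct injection provided by Lemma~\ref{lem:8}(ii). The only point to be watchful about is matching the ordering convention of the product $A \times (\mathbb{B}_\mathsf{Q})_x$ used in~\eqref{eq:25} against the convention for $\mathbb{B}_b \times A$ used in~\eqref{eq:32}, but this is purely cosmetic.
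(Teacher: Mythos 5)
Your proposal is correct and is essentially the paper's own argument: the paper assembles your two computations into a single commuting rectangle whose left square is the definition of $f_x$ and whose right square is the factorisation of Proposition~\ref{prop:23}, then invokes the same unicity to identify the two images. The bookkeeping point you flag about the ordering of the factors in~\eqref{eq:25} versus~\eqref{eq:32} is indeed purely cosmetic.
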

\begin{proof}
  For each $\tau \colon Q \rightarrow A \cdot \mathrm{id}$ in
  $Q^\circ A$, and each $x \in P1$, we have a diagram
  \begin{equation*}
    \cd{
      {P_x} \ar[r]^-{f_x} \ar@{ >->}[d]_{\iota_x} &
      {Q_{fx}} \ar@{ >->}[d]^{\iota_{fx}} \ar[r]^-{\tau_{fx}} & \mathrm{id}
      \ar@{ >->}[d]^-{\nu_{a_{fx}}}\\
      {P} \ar[r]^-{f} &
      {Q} \ar[r]^-{\tau} & A \cdot \mathrm{id}
    }
  \end{equation*}
  whose right square is as in Proposition~\ref{prop:23}, and whose
  left square is the definition of $f_x$. It thus follows that the image of $\tau$ under $(T_{\mathbb{B}_f})_A \circ
  \theta_A^\mathsf{Q}$ is $\lambda x.\, (a_{fx}, \tau_{fx} \circ
  f_x)$. On the other hand, by unicity in
  Proposition~\ref{prop:23}, the image of
  $f^\circ(\tau) = \tau \circ f \in P^\circ A$ under
  $\theta^\mathsf{P}_A \colon P^\circ A \rightarrow
  T_{\mathbb{B}_\mathsf{P}}A$ is also $\lambda x.\, (a_{fx}, \tau_{fx}
  \circ f_x)$, as desired.
\end{proof}
Combining this result with Proposition~\ref{prop:20}, we obtain:

\begin{Thm}
  \label{thm:4}
    The functor
  $\mathbb{B}_{(\thg)} \colon \cat{Cmd}_a(\cat{Set})
  \rightarrow \cat{Cof}$ taking an accessible comonad $\mathsf{Q}$ to its
  behaviour category $\mathbb{B}_\mathsf{Q}$, and a map of accessible comonads
  $f \colon \mathsf{P} \rightarrow \mathsf{Q}$ to the cofunctor
  $\mathbb{B}_f \colon \mathbb{B}_\mathsf{P} \rightsquigarrow
  \mathbb{B}_\mathsf{Q}$ of Proposition~\ref{prop:25}, yields a within-isomorphism~factorisation 
\begin{equation*}
  \cd[@C+0.2em]{
    \twocong[0.66]{dr}{} & \cat{Cof} \ar[d]^-{\mathsf{Q}_{(\thg)}} \\
    \cat{Cmd}_a(\cat{Set}) \ar[r]_-{\mathrm{Costr}}
    \ar@{-->}[ur]^-{\mathbb{B}_{(\thg)}} & \cat{Mnd}_a(\cat{Set})^\mathrm{op}\rlap{ .}
  }
\end{equation*}
\end{Thm}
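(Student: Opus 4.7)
The plan is to assemble the theorem from the pieces already established in Section~\ref{sec:calc-costr-funct} together with Proposition~\ref{prop:20}. The key point is that Proposition~\ref{prop:23} produces, for each accessible comonad $\mathsf{Q}$, a monad isomorphism $\theta_\mathsf{Q} \colon \mathsf{Q}^\circ \cong \mathsf{T}_{\mathbb{B}_\mathsf{Q}}$, while Proposition~\ref{prop:25} shows that these isomorphisms are natural with respect to comonad morphisms $f \colon \mathsf{P} \rightarrow \mathsf{Q}$, provided we interpret the right-hand side of~\eqref{eq:33} using the action on morphisms defined in Proposition~\ref{def:19}. Combined with $\mathrm{Costr} \cong (\thg)^\circ$ from Proposition~\ref{prop:20}, this will immediately give the desired within-isomorphism factorisation, once we check that $\mathbb{B}_{(\thg)}$ is genuinely a functor.

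First, I would verify functoriality of $\mathbb{B}_{(\thg)}$. This amounts to checking that $\mathbb{B}_{\mathrm{id}_\mathsf{Q}} = \mathrm{id}_{\mathbb{B}_\mathsf{Q}}$ and $\mathbb{B}_{g \circ f} = \mathbb{B}_g \circ \mathbb{B}_f$. The former is clear since $(\mathrm{id}_\mathsf{Q})_x = \mathrm{id}_{Q_x}$, so the action on homs is $\tau \mapsto \tau$. For composition, one unwinds the definitions: given $f \colon \mathsf{P} \rightarrow \mathsf{Q}$ and $g \colon \mathsf{Q} \rightarrow \mathsf{R}$, one needs $(g \circ f)_x = g_{fx} \circ f_x$ (immediate from the uniqueness in Lemma~\ref{lem:8}(ii)), and then the hom action of $\mathbb{B}_{g \circ f}$ sends $\tau$ to $\tau \circ (g \circ f)_x = \tau \circ g_{fx} \circ f_x$, which is the composite of the hom actions.

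Next, I would package Proposition~\ref{prop:25} as the statement that $\theta \colon (\thg)^\circ \Rightarrow \mathsf{T}_{\mathbb{B}_{(\thg)}}$ is a natural isomorphism of functors $\cat{Cmd}_a(\cat{Set}) \rightarrow \cat{Mnd}_a(\cat{Set})^\mathrm{op}$; naturality is exactly the commuting square~\eqref{eq:33}, and invertibility at each object is Proposition~\ref{prop:23}. Finally, Proposition~\ref{prop:20} gives a natural isomorphism $\mathrm{Costr} \cong (\thg)^\circ$, and pasting these two natural isomorphisms together yields $\mathrm{Costr} \cong \mathsf{Q}_{(\thg)} \circ \mathbb{B}_{(\thg)}$, as required.

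The only mild subtlety — and really the main thing to be careful about — is the contravariance: $\mathrm{Costr}$ and $(\thg)^\circ$ have codomain $\cat{Mnd}_a(\cat{Set})^\mathrm{op}$, so the natural square~\eqref{eq:33} in Proposition~\ref{prop:25} has its vertical arrow $\mathsf{T}_{\mathbb{B}_f}$ going from $\mathsf{T}_{\mathbb{B}_\mathsf{Q}}$ to $\mathsf{T}_{\mathbb{B}_\mathsf{P}}$, which is indeed the direction induced by $\mathsf{Q}_{(\thg)} \circ \mathbb{B}_{(\thg)}$ applied to $f \colon \mathsf{P} \rightarrow \mathsf{Q}$ (since $\mathbb{B}_f$ has the same orientation as $f$, and then $\mathsf{Q}_{(\thg)}$ maps $\cat{Cof}$ into $\cat{Cmd}_a(\cat{Set})$, whose opposite is where $\mathrm{Costr}$ lives after dualising). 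Once the directions of all arrows are lined up correctly, no further work remains.
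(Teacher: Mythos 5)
Your proposal is correct and matches the paper's own (very terse) argument: the paper proves the theorem precisely by combining the naturality square of Proposition~\ref{prop:25} with the componentwise isomorphisms of Proposition~\ref{prop:23} and the identification $\mathrm{Costr} \cong (\thg)^\circ$ of Proposition~\ref{prop:20}. Your explicit check of functoriality of $\mathbb{B}_{(\thg)}$ via $(g \circ f)_x = g_{fx} \circ f_x$ is a detail the paper leaves implicit, and is a worthwhile addition.
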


\section{Idempotency of costructure--cosemantics}
\label{sec:idemp-monad-comon}

So far, we have seen that cosemantics takes values in presheaf
comonads, and costructure takes values in presheaf monads; to complete
our understanding of the costructure--cosemantics adjunction, we now
show that further application of either adjoint simply interchanges a
presheaf monad with its corresponding presheaf comonad. More
precisely, we will show that the costructure--cosemantics adjunction
is \emph{idempotent}, with the presheaf monads and 
comonads as fixpoints to either side.

\subsection{Idempotent adjunctions}
\label{sec:idemp-adjunct}

We begin by recalling standard category-theoretic background on
fixpoints and idempotency for adjunctions. To motivate this, recall
that any adjunction between posets induces an isomorphism between the
sub-posets of fixpoints to each side. Similarly, any adjunction of categories restricts to
an adjoint equivalence between
the full subcategories of \emph{fixpoints} in the following sense:

\begin{Defn}[Fixpoints]
  \label{def:11}
  Let $L \dashv R \colon \D \rightarrow \C$ be an adjunction. A
  \emph{fixpoint to the left} is an object $X \in \D$ at which the
  counit map $\varepsilon_X \colon LRX \rightarrow X$ is invertible; a
  \emph{fixpoint to the right} is $Y \in \C$ for which $\eta_Y \colon
  Y \rightarrow RLY$ is invertible. We write $\cat{Fix}(LR)$ and
  $\cat{Fix(RL)}$ for the full subcategories of fixpoints to the left
  and right.
\end{Defn}

In the posetal case, the fixpoints to the left and the right are
respectively coreflective and reflective in the whole poset. This is
not true in general for adjunctions between categories, but it is true in
the following situation:

\begin{Defn}[Idempotent adjunction]
  \label{def:12}
  An adjunction $L \dashv R \colon \D \rightarrow \C$ is called
  \emph{idempotent} if it satisfies any one of the following
  equivalent conditions:\vskip-1.5\baselineskip\leavevmode
  \begin{multicols}{2}\raggedright
  \begin{enumerate}[(i)]
  \item Each $RX$ is a fixpoint;
  \item $R$ inverts each counit component;
  \item The monad $RL$  is idempotent;
  \item Each $LY$ is a fixpoint;
  \item $L$ inverts each unit component;
  \item The comonad $LR$  is idempotent.
  \end{enumerate}
  \end{multicols}
\end{Defn}

The equivalence of these conditions is straightforward and well-known;
for the reader who has not seen it, we leave the proof as an
instructive exercise. Equally straightforward are the following
consequences of the definition:

\begin{Prop}
  \label{prop:3}
  If the adjunction $L \dashv R \colon \D \rightarrow \C$ is
  idempotent, then:
  \begin{enumerate}[(i)]
  \item $X \in \D$ is a fixpoint if and only if it is in the
    essential image of $L$;
  \item $Y \in \C$ is a fixpoint if and only if it is in the
    essential image of $R$;
  \item The fixpoints to the left are coreflective in $\D$ via $X \mapsto LRX$.
  \item The fixpoints to the right are reflective in $\C$ via $Y \mapsto RLY$;
  \end{enumerate}
\end{Prop}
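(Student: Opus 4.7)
My plan is to deduce all four parts of Proposition~\ref{prop:3} directly from the equivalent characterisations of idempotency collected in Definition~\ref{def:12}. Since (ii) and (iv) are strictly dual to (i) and (iii), I would state the proofs in the latter form and leave the dualisation implicit.

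For (i), the forward direction is immediate: any fixpoint $X \in \D$ satisfies $X \cong LRX$ via the invertible counit $\varepsilon_X$, so it lies in the essential image of $L$. For the converse, fixpoints are plainly closed under isomorphism (by naturality of $\varepsilon$), so it suffices to show that each object of the form $LY$ is a fixpoint. The triangle identity $\varepsilon_{LY} \circ L\eta_Y = 1_{LY}$ exhibits $\varepsilon_{LY}$ as a one-sided inverse of $L\eta_Y$; and condition (v) of Definition~\ref{def:12} guarantees that $L\eta_Y$ is invertible, from which invertibility of $\varepsilon_{LY}$ follows at once.

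For (iii), I would show that $\varepsilon_X \colon LRX \to X$ serves as the counit of a coreflection of $\cat{Fix}(LR)$ into $\D$. Part (i) already guarantees that $LRX$ is a fixpoint; it remains to establish the universal property. Given any fixpoint $X'$ and any map $f \colon X' \to X$, the naturality square $\varepsilon_X \circ LRf = f \circ \varepsilon_{X'}$ together with invertibility of $\varepsilon_{X'}$ furnishes the lift $\tilde f \defeq LRf \circ \varepsilon_{X'}^{-1}$. For uniqueness, a further lift $g \colon X' \to LRX$ with $\varepsilon_X \circ g = f$ satisfies $R\varepsilon_X \circ Rg = R\varepsilon_X \circ R\tilde f$ upon applying $R$; condition (ii) of Definition~\ref{def:12} makes $R\varepsilon_X$ invertible, so $Rg = R\tilde f$. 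Since $X' \cong LRX'$ lies (up to the canonical iso $\varepsilon_{X'}$) in the essential image of $L$, the adjoint transpose is a bijection on the relevant hom-sets, and $g = \tilde f$ follows.

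The only step that needs some care is uniqueness in the universal property, and it is exactly this step that brings both halves of the idempotency hypothesis into play: invertibility of $R\varepsilon$ to pin down the image of a lift under $R$, and invertibility of $\varepsilon_{X'}$ on the domain to put adjoint transposition at our disposal. Everything else is a short exercise in triangle identities, and I expect no genuine obstacles.
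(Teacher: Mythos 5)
Your proof is correct. The paper offers no argument for this proposition---it is dismissed as an ``equally straightforward'' consequence of Definition~\ref{def:12}---so there is nothing to compare against; your write-up simply supplies the omitted details, and does so accurately: the forward direction of (i) and the existence half of (iii) are the routine triangle-identity/naturality computations you describe, and your uniqueness argument (invert $R\varepsilon_X$ to get $Rg = R\tilde f$, then use that $R$ is injective on morphisms out of objects in the essential image of $L$) is sound. Two minor streamlinings you could note: condition (iv) of Definition~\ref{def:12} already asserts outright that each $LY$ is a fixpoint, so the detour through (v) is unnecessary; and uniqueness in (iii) also follows in one line from idempotency of the comonad $LR$, which forces $\varepsilon_{LRX} = LR\varepsilon_X$, whence $g \circ \varepsilon_{X'} = \varepsilon_{LRX} \circ LRg = LR\varepsilon_X \circ LRg = LRf$ by naturality.
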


\subsection{Presheaf monads and presheaf comonads are fixpoints}
\label{sec:from-presheaf-monads}

We aim to show that the costructure--cosemantics adjunction~\eqref{eq:4} is
idempotent, with the presheaf monads and comonads as the
fixpoints. We first show that
costructure and cosemantics interchange a presheaf monad
with the corresponding presheaf comonad.

\begin{Prop}
  \label{prop:15}
  We have isomorphisms of comonads, natural in $\mathbb{B}$, of the
  form
  \begin{equation}
    \label{eq:14}
    \alpha_\mathbb{B} \colon \mathsf{Q}_\mathbb{B} \rightarrow
    \mathrm{Cosem}(\mathsf{T}_\mathbb{B})
  \end{equation}
  characterised by the fact that they induce on categories of
  Eilenberg--Moore coalgebras the functor
  $\mathbb{B}\text-\cat{Set} \rightarrow {{}^{\mathsf{T}_\mathbb{B}}
    \cat{Set}}$ sending the left $\mathbb{B}$-set $(X,p,\ast)$ to the
  $\mathsf{T}_\mathbb{B}$-comodel $\alg$ with
  \begin{equation}
    \label{eq:12}
    \begin{aligned}
      \dbr{\lambda b.\, (f_b, a_b)}_{\alg} \colon X &\rightarrow A
      \times
      X\\
      x & \mapsto (a_{p(x)}, f_{p(x)} \ast x)\rlap{ .}
    \end{aligned}
  \end{equation}
\end{Prop}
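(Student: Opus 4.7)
The plan is to exploit the full faithfulness of Eilenberg--Moore semantics (Lemma~\ref{lem:3}) to reduce the construction of $\alpha_\mathbb{B}$ to producing an isomorphism of categories over $\cat{Set}$ between the respective categories of coalgebras. Both $\mathsf{Q}_\mathbb{B}$-coalgebras and $\mathrm{Cosem}(\mathsf{T}_\mathbb{B})$-coalgebras have already been identified with $\mathbb{B}\text-\cat{Set}$ over $\cat{Set}$: the former by Proposition~\ref{prop:19}, and the latter by combining Proposition~\ref{prop:2} (which identifies $\mathrm{Cosem}(\mathsf{T}_\mathbb{B})$-coalgebras with ${}^{\mathsf{T}_\mathbb{B}}\cat{Set}$) with Propositions~\ref{prop:21} and~\ref{prop:13} (which identify ${}^{\mathsf{T}_\mathbb{B}}\cat{Set}$ with $\mathbb{B}\text-\cat{Set}$ over $\cat{Set}$). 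Composing these gives the required isomorphism of strictly comonadic functors, whence a unique comonad isomorphism $\alpha_\mathbb{B} \colon \mathsf{Q}_\mathbb{B} \cong \mathrm{Cosem}(\mathsf{T}_\mathbb{B})$ inducing it.

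The formula~\eqref{eq:12} is then read off by chasing the definitions. Under the isomorphism $T \cong T_\mathbb{B}$ of Proposition~\ref{prop:21}, an element $\lambda b.\,(f_b, a_b) \in T_\mathbb{B}(A)$ corresponds to the $\mathbb{T}_\mathbb{B}$-term $\mathsf{get}(\lambda b.\,\mathsf{upd}_{f_b}(a_b))$. Applying the recursive clause~\eqref{eq:19} together with the co-operations prescribed in Proposition~\ref{prop:13}, we obtain that $\dbr{\mathsf{get}}_{\alg}(x) = (p(x), x)$, whence $\dbr{\mathsf{get}(\lambda b.\,\mathsf{upd}_{f_b}(a_b))}_{\alg}(x)$ reduces to $\dbr{\mathsf{upd}_{f_{p(x)}}(a_{p(x)})}_{\alg}(x) = \dbr{a_{p(x)}}_{\alg}(f_{p(x)} \ast x) = (a_{p(x)}, f_{p(x)} \ast x)$, yielding~\eqref{eq:12}.

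For naturality in $\mathbb{B}$, given a cofunctor $F \colon \mathbb{B} \rightsquigarrow \mathbb{C}$ we must check that the square in $\cat{Cmd}_a(\cat{Set})$ formed by the morphisms $\mathsf{Q}_F$ and $\mathrm{Cosem}(\mathsf{T}_F)$ commutes. By Lemma~\ref{lem:3}, it suffices to verify that they induce the same functor $\mathbb{B}\text-\cat{Set} \rightarrow \mathbb{C}\text-\cat{Set}$ over $\cat{Set}$, where the two sides are identified with coalgebra categories as above. Both of these functors should coincide with $\Sigma_F$ of Definition~\ref{def:39}: for $\mathsf{Q}_F$, this is the content of Proposition~\ref{prop:19}; for $\mathrm{Cosem}(\mathsf{T}_F)$, it follows from the explicit description of $\mathsf{T}_F$ given in~\eqref{eq:32} and~\eqref{eq:12}, since the co-operation on a comodel obtained from $(X,p,\ast)$ by reindexing along $\mathsf{T}_F$ acts as $x \mapsto (a_{F(p(x))}, F_{p(x)}(f_{F(p(x))}) \ast x)$, which is precisely~\eqref{eq:12} applied to $\Sigma_F(X,p,\ast)$ in view of the definition $f \ast^F x = F_{p(x)}(f) \ast x$.

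The main obstacle is the last step: matching the action of $\mathsf{T}_F$ on comodels with the cofunctorial action $\Sigma_F$ on left $\mathbb{B}$-sets. This is a direct unwinding of definitions but requires care because $\mathsf{T}_F$ acts by reindexing comodel structures, whereas $\Sigma_F$ is described at the level of presheaves of states and actions; the two descriptions meet precisely through the derivation of~\eqref{eq:12} above.
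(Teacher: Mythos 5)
Your proof is correct and follows essentially the same route as the paper's: identify both coalgebra categories with $\mathbb{B}\text-\cat{Set}$ over $\cat{Set}$ via Propositions~\ref{prop:19}, \ref{prop:2}, \ref{prop:21} and~\ref{prop:13}, read off the formula~\eqref{eq:12} from the term $\mathsf{get}(\lambda b.\,\mathsf{upd}_{f_b}(a_b))$, and conclude by full fidelity of Eilenberg--Moore semantics (Lemma~\ref{lem:3}). Your explicit verification of naturality in $\mathbb{B}$ is a welcome addition that the paper's proof leaves implicit (relying on the naturality clause of Proposition~\ref{prop:19}), and your computation matching $\mathsf{T}_F^\ast$ with $\Sigma_F$ is accurate.
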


In the statement of this result, we identify
$\cat{Coalg}(\mathsf{Q_\mathbb{B}})$ with $\mathbb{B}\text-\cat{Set}$
by Proposition~\ref{prop:19}, and
$\cat{Coalg}(\mathrm{Cosem}(\mathsf{T}_\mathbb{B}))$ with
${}^{\mathsf{T}_\mathbb{B}}\cat{Set}$ by Proposition~\ref{prop:2}.

\begin{proof}
  By Proposition~\ref{prop:21}, the associated monad of the theory of
  $\mathbb{B}$-valued dependently typed update is the presheaf monad
  $\mathsf{T}_\mathbb{B}$; so by Proposition~\ref{prop:37}, we have an
  isomorphism over $\cat{Set}$ of categories of comodels
  ${}^{\mathbb{T}_\mathbb{B}} \cat{Set} \cong
  {}^{\mathsf{T}_\mathbb{B}}\cat{Set}$, sending the
  $\mathbb{T}_\mathbb{B}$-comodel $\alg$ to the
  $\mathsf{T}_\mathbb{B}$-comodel structure on $X$ with
  $\dbr{\lambda b.\, (f_b, a_b)} = \dbr{\mathsf{get}(\lambda b.\,
    \mathsf{upd}_{f_b}(a_b))}$. Composing this isomorphism with the
  invertible~\eqref{eq:10} yields an invertible functor
  $\mathbb{B}\text-\cat{Set} \rightarrow {}^{\mathsf{T}_\mathbb{B}}
  \cat{Set}$ over $\cat{Set}$, which by inspection has the
  formula~\eqref{eq:12}. We conclude by the full fidelity of the
  Eilenberg--Moore semantics functor (Lemma~\ref{lem:3}).
\end{proof}

\begin{Prop}
  \label{prop:14}
  For any small category $\mathbb{B}$, the monad morphism 
  \begin{equation}\label{eq:9}
    \bar
    \alpha_{\mathbb{B}} \colon \mathsf{T}_\mathbb{B} \rightarrow
    \mathrm{Costr}(\mathsf{Q}_\mathbb{B})
  \end{equation}
  found as the adjoint transpose of the isomorphism~\eqref{eq:14}, is
  itself an isomorphism.
\end{Prop}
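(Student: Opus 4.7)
The plan is to exhibit an explicit inverse to each component of $\bar\alpha_\mathbb{B}$ using representable left $\mathbb{B}$-sets in the role of ``Yoneda witnesses''. For each object $b$ of $\mathbb{B}$, I consider the left $\mathbb{B}$-set $Y_b$ with underlying set $\mathbb{B}_b$ (the set of all $\mathbb{B}$-arrows out of $b$), projection $p = \mathrm{cod}$, and action $g \ast f = g \circ f$. A short direct check shows that $Y_b$ is freely generated by $\mathrm{id}_b$: given any left $\mathbb{B}$-set $(X, p, \ast)$ and any $x \in X$ with $p(x) = b$, the function $\tilde x \colon Y_b \to X$, $f \mapsto f \ast x$ is the unique $\mathbb{B}$-set morphism sending $\mathrm{id}_b$ to $x$.

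With this universal property in hand, I define the candidate inverse
\begin{equation*}
  \Phi_A \colon [\mathbb{B}\text-\cat{Set}, \cat{Set}](U^\mathbb{B}, A \cdot U^\mathbb{B}) \longrightarrow T_\mathbb{B}(A)
\end{equation*}
by $\Phi_A(\sigma) = \lambda b.\, \mathrm{swap}(\sigma_{Y_b}(\mathrm{id}_b))$, where the swap accounts for the fact that $A \cdot U^\mathbb{B}(Y_b) = A \times \mathbb{B}_b$ while $T_\mathbb{B}(A)$ uses the convention $\mathbb{B}_b \times A$. One composite can be read off immediately from the explicit formula~\eqref{eq:12} supplied by Remark~\ref{rk:1} and Proposition~\ref{prop:15}: for $t = \lambda b.\,(f_b, a_b)$, evaluation at $(Y_b, \mathrm{id}_b)$ gives $(\bar\alpha_\mathbb{B})_A(t)_{Y_b}(\mathrm{id}_b) = (a_b, f_b \circ \mathrm{id}_b)$, so $\Phi_A \circ (\bar\alpha_\mathbb{B})_A = \mathrm{id}$.

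For the reverse composite, given $\sigma$ and $x \in X$ with $p(x) = b$, naturality of $\sigma$ at the morphism $\tilde x \colon Y_b \to X$ yields $\sigma_X(x) = (A \cdot \tilde x)(\sigma_{Y_b}(\mathrm{id}_b))$; if $\sigma_{Y_b}(\mathrm{id}_b) = (a_b, f_b)$ then this equals $(a_b, f_b \ast x)$, which is exactly $(\bar\alpha_\mathbb{B})_A(\Phi_A(\sigma))_X(x)$ by~\eqref{eq:12}. Hence $(\bar\alpha_\mathbb{B})_A$ is bijective for every set $A$, so $\bar\alpha_\mathbb{B}$ is a natural isomorphism of endofunctors, and, being already a morphism of monads by construction as an adjoint transpose, is therefore an isomorphism of monads.

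There is no serious obstacle to this approach: the only care required lies in bookkeeping the conventions for copowers $A \cdot X$ versus the order of factors chosen in the definition of $T_\mathbb{B}(A)$. A parallel alternative would route through Proposition~\ref{prop:20} to reduce to showing that the induced map $\mathsf{T}_\mathbb{B} \to \mathsf{Q}_\mathbb{B}^\circ$ is an isomorphism, and then invoke the ordinary Yoneda lemma applied to $Q_\mathbb{B} = \sum_b (-)^{\mathbb{B}_b}$ to obtain the bijection $Q_\mathbb{B}^\circ(A) \cong \prod_b (A \times \mathbb{B}_b)$; but the direct coalgebraic route above has the advantage of using only the explicit data of~\eqref{eq:12}, and avoids any separate verification of compatibility with the monad structure.
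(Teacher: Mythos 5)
Your proof is correct and takes essentially the same route as the paper's: both arguments use the representable left $\mathbb{B}$-set on $\mathbb{B}_b$ together with its Yoneda/freeness property at $1_b$, evaluate a transformation there to extract the element of $T_\mathbb{B}(A)$, and use naturality along $\tilde x \colon y(p(x)) \to X$ to recover the general component. The only cosmetic difference is that you package the argument as an explicit two-sided inverse $\Phi_A$, whereas the paper phrases it as existence and uniqueness of $(f,a)$ with $\gamma = \bar\alpha(f,a)$.
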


\begin{proof}
  By Remark~\ref{rk:1} and~\eqref{eq:12}, we see that $\bar \alpha$
  sends the element $(f,a) = \lambda b.\, (f_b, a_b)$ of
  $T_\mathbb{B}(A) = \prod_b(\mathbb{B}_b \times A)$ to the
  transformation
  $\bar \alpha(f,a) \colon U^\mathbb{B} \Rightarrow A \cdot
  U^\mathbb{B} \colon \mathbb{B}\text-\cat{Set} \rightarrow \cat{Set}$
  whose component at a $\mathbb{B}$-set $(X, p, \cdot)$ is given by
  the function
  \begin{equation}
    \label{eq:11}
    \begin{aligned}
      \bar \alpha(f,a)_{(X,p, \cdot)} \colon X & \rightarrow A \times X
      \\ x & \mapsto (a_{p(x)}, f_{p(x)} \cdot x)\rlap{ .}
    \end{aligned}
  \end{equation}

  We must show every
  $\gamma \colon U^{\mathbb{B}} \Rightarrow A \cdot U^{\mathbb{B}}$
  takes this form for a unique $(f,a) \in T_\mathbb{B}(A)$. For each
  $b \in \mathbb{B}$, we have the representable left $\mathbb{B}$-set
  $y(b)$ with underlying set $\mathbb{B}_b$, projection to
  $\mathrm{ob}(\mathbb{B})$ given by codomain, and action given by
  composition in $\mathbb{B}$. The component of $\gamma$ at $y(b)$ is a
  function $\mathbb{B}_b \rightarrow A \times \mathbb{B}_b$, which, if
  we are to have $\gamma = \bar \alpha(f,a)$, must by~\eqref{eq:11}
  have its value at $1_b \in \mathbb{B}_b$ given by $(a_b, f_b)$. Thus,
  if we define $(a_b, f_b)$ to be $\gamma_{y(b)}(1_b)$ for each
  $b \in \mathbb{B}$, then it remains only to verify that indeed
  $\gamma = \bar \alpha(f,a)$. But for any $\mathbb{B}$-set
  $(X,p,\ast)$ and any $x \in X$, there is by the Yoneda lemma a
  unique map of $\mathbb{B}$-sets
  $\tilde{x} \colon y(p(x)) \rightarrow X$ sending $1_{px}$ to $x$; and
  now naturality of $\gamma$ ensures that
  \begin{equation*}
    \gamma_{(X,p,\ast)}(x) = \gamma_{(X,p,\ast)}(\tilde x(1_{px})) =
    (A \times \tilde x)(\gamma_{y(px)}(1_{px})) = (a_b, \tilde
    x(f_{px})) = (a_{px}, f_{px} \ast x)
  \end{equation*}
  so that $\gamma = \bar \alpha(f,a)$ as desired.
\end{proof}

Given the tight relationship between \eqref{eq:14}
and~\eqref{eq:9}, it is now easy to conclude that presheaf monads and
comonads are fixpoints.
\begin{Prop}
  \label{prop:12}
  Each presheaf monad is a fixpoint on the
  left of the costructure--cosemantics adjunction, while each presheaf comonad
  is a fixpoint on the right.
\end{Prop}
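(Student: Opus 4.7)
The plan is to leverage the two isomorphisms just constructed. Proposition~\ref{prop:15} gives, for each small category $\mathbb{B}$, a comonad isomorphism $\alpha_\mathbb{B} \colon \mathsf{Q}_\mathbb{B} \rightarrow \mathrm{Cosem}(\mathsf{T}_\mathbb{B})$, and Proposition~\ref{prop:14} gives a monad isomorphism $\bar\alpha_\mathbb{B} \colon \mathsf{T}_\mathbb{B} \rightarrow \mathrm{Costr}(\mathsf{Q}_\mathbb{B})$, the latter exhibited there as the adjoint transpose of the former under $\mathrm{Costr} \dashv \mathrm{Cosem}$. To conclude that $\mathsf{T}_\mathbb{B}$ is a fixpoint on the left and $\mathsf{Q}_\mathbb{B}$ is a fixpoint on the right, it suffices to show that the counit $\varepsilon_{\mathsf{T}_\mathbb{B}}$ and the unit $\eta_{\mathsf{Q}_\mathbb{B}}$ of the adjunction are invertible.

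For the first, the standard formula for the adjoint transpose expresses $\bar\alpha_\mathbb{B}$ in $\cat{Mnd}_a(\cat{Set})^{\mathrm{op}}$ as the composite
\begin{equation*}
  \mathrm{Costr}(\mathsf{Q}_\mathbb{B}) \xrightarrow{\mathrm{Costr}(\alpha_\mathbb{B})} \mathrm{Costr}(\mathrm{Cosem}(\mathsf{T}_\mathbb{B})) \xrightarrow{\varepsilon_{\mathsf{T}_\mathbb{B}}} \mathsf{T}_\mathbb{B}\rlap{ .}
\end{equation*}
Since both $\bar\alpha_\mathbb{B}$ and $\mathrm{Costr}(\alpha_\mathbb{B})$ are isomorphisms, so too is $\varepsilon_{\mathsf{T}_\mathbb{B}}$; whence $\mathsf{T}_\mathbb{B}$ is a fixpoint on the left. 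Dually, applying the double-transpose identity yields in $\cat{Cmd}_a(\cat{Set})$ the factorisation
\begin{equation*}
  \alpha_\mathbb{B} \;=\; \mathrm{Cosem}(\bar\alpha_\mathbb{B}) \circ \eta_{\mathsf{Q}_\mathbb{B}} \colon \mathsf{Q}_\mathbb{B} \rightarrow \mathrm{Cosem}(\mathrm{Costr}(\mathsf{Q}_\mathbb{B})) \rightarrow \mathrm{Cosem}(\mathsf{T}_\mathbb{B})\rlap{ ,}
\end{equation*}
from which invertibility of $\alpha_\mathbb{B}$ and $\mathrm{Cosem}(\bar\alpha_\mathbb{B})$ forces $\eta_{\mathsf{Q}_\mathbb{B}}$ to be invertible as well.

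The proof is essentially a formality once Propositions~\ref{prop:15} and~\ref{prop:14} are in hand: the triangle identities do all the real work. The only point requiring care is the opposite-category convention on the monad side, which governs the direction in which the unit and counit of $\mathrm{Costr} \dashv \mathrm{Cosem}$ are applied; I do not anticipate any genuine obstacle.
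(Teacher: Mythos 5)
Your proof is correct and is essentially the paper's own argument: the paper likewise observes that $\alpha_\mathbb{B} = \mathrm{Cosem}(\bar\alpha_\mathbb{B}) \circ \eta_{\mathsf{Q}_\mathbb{B}}$ and deduces invertibility of the unit from that of $\alpha_\mathbb{B}$ and $\bar\alpha_\mathbb{B}$, handling the monad side by the dual transposition formula you write out explicitly. The only cosmetic omission is the remark that a general presheaf (co)monad is merely \emph{isomorphic} to some $\mathsf{T}_\mathbb{B}$ or $\mathsf{Q}_\mathbb{B}$, and that fixpoints are stable under isomorphism.
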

\begin{proof}
  For each small category $\mathbb{B}$ we have a commuting triangle
  \begin{equation*}
    \cd[@!C@C-6em]{
      \mathsf{Q}_\mathbb{B} \ar[dr]_-{\alpha}
      \ar[rr]^-{\eta_{\mathsf{Q}_\mathbb{B}}} & &
      \mathrm{Cosem}(\mathrm{Costr}(\mathsf{Q}_\mathbb{B}))
      \ar[dl]^-{\mathrm{Cosem}(\bar \alpha)} \\
       & \mathrm{Cosem}(\mathsf{T}_\mathbb{B})
    }
  \end{equation*}
  where $\eta_{\mathsf{Q}_\mathbb{B}}$ is the unit of~\eqref{eq:4} and
  $\alpha$ and $\bar \alpha$ are as in~\eqref{eq:14} and~\eqref{eq:9}.
  Since both $\alpha$ and $\bar \alpha$ are invertible, it follows
  that $\eta_{\mathsf{Q}_\mathbb{B}}$ is too; and since every presheaf
  comonad is isomorphic to some $\mathsf{Q}_\mathbb{B}$, it follows
  that every presheaf comonad is a fixpoint on the right. The dual
  argument shows each presheaf monad is a fixpoint on the left.
\end{proof}

\subsection{Idempotency of the costructure--cosemantics adjunction}
\label{sec:idemp-monad-comon-1}

As an immediate consequence of the preceding result, we have:

\begin{Thm}
  \label{thm:5}
  The costructure--cosemantics adjunction~\eqref{eq:4} is idempotent.
  Its fixpoints to the left are the presheaf monads, while those to
  the right are the presheaf comonads.
\end{Thm}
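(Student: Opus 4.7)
The plan is to assemble the theorem directly from the three earlier results: Proposition~\ref{prop:40} (cosemantics factors through presheaf comonads), Theorem~\ref{thm:4} (costructure factors through presheaf monads), and Proposition~\ref{prop:12} (every presheaf monad/comonad is a fixpoint on the appropriate side). Together with the abstract Definition~\ref{def:12} and Proposition~\ref{prop:3}, these give the result with essentially no calculation.

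\textbf{First step.} To verify idempotency, I would invoke condition~(i) of Definition~\ref{def:12}: it suffices to show that every object of the form $\mathrm{Cosem}(\mathsf{T})$ is a fixpoint on the right. But Proposition~\ref{prop:40} tells us that $\mathrm{Cosem}(\mathsf{T})$ is always a presheaf comonad, and Proposition~\ref{prop:12} tells us that presheaf comonads are fixpoints on the right; so we are done. (Dually, condition~(iv) could be used instead, with $\mathrm{Costr}(\mathsf{Q})$ a presheaf monad by Theorem~\ref{thm:4}, hence a fixpoint on the left by Proposition~\ref{prop:12}; either argument suffices.)

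\textbf{Second step.} With idempotency established, the characterisation of the fixpoints drops out of Proposition~\ref{prop:3}. Parts~(i) and~(ii) of that proposition say that the fixpoints to the left are exactly the essential image of $\mathrm{Costr}$, and the fixpoints to the right are exactly the essential image of $\mathrm{Cosem}$. By Theorem~\ref{thm:4}, every object in the essential image of $\mathrm{Costr}$ is (isomorphic to) a presheaf monad $\mathsf{T}_\mathbb{B}$; conversely, by Proposition~\ref{prop:12}, every presheaf monad is a fixpoint on the left, and therefore (again by Proposition~\ref{prop:3}(i)) lies in the essential image of $\mathrm{Costr}$. The same argument on the other side identifies fixpoints to the right with presheaf comonads.

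\textbf{Main obstacle.} There really isn't one: all of the substantive work has been done in the preceding sections. The only potential subtlety is making sure that the two directions of Proposition~\ref{prop:12}, together with the image-factorisation results, cleanly yield the ``if and only if'' characterisation of fixpoints, which is why I would explicitly cite Proposition~\ref{prop:3} to close the loop. The proof is therefore a short one-paragraph assembly rather than a calculation.
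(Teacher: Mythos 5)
Your proof is correct and follows essentially the same route as the paper: idempotency via Definition~\ref{def:12}(i) using Proposition~\ref{prop:40} and Proposition~\ref{prop:12}, then the fixpoint characterisation by combining Proposition~\ref{prop:12} with the fact that costructure lands in presheaf monads. The only (immaterial) difference is that for the latter the paper cites Propositions~\ref{prop:40} and~\ref{prop:14} to see that $\mathrm{Costr}\circ\mathrm{Cosem}$ and $\mathrm{Cosem}\circ\mathrm{Costr}$ produce presheaf (co)monads, where you instead invoke Theorem~\ref{thm:4} together with Proposition~\ref{prop:3}.
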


\begin{proof}
  Each $\mathrm{Cosem}(\mathsf{T})$ is a presheaf comonad by
  Proposition~\ref{prop:40}, and each presheaf comonad is a fixpoint
  to the right by Proposition~\ref{prop:12}; thus
  Definition~\ref{def:12}(i) is satisfied and the adjunction is
  idempotent. For the remaining claims, one direction is
  Proposition~\ref{prop:12}; while the other follows on noting that,
  by the preceding result and Proposition~\ref{prop:14}, the composite
  $\mathrm{Cosem} \circ \mathrm{Costr}$ sends each comonad to a
  presheaf comonad, while $\mathrm{Costr} \circ \mathrm{Cosem}$ sends
  each monad to a presheaf monad.
\end{proof}

We may use this result to resolve some unfinished business:

\begin{Prop}
  \label{prop:26}
  The presheaf monad functor $\mathsf{T}_{(\thg)} \colon \cat{Cof} \rightarrow
  \cat{Mnd}_a(\cat{Set})^\mathrm{op}$ of Proposition~\ref{prop:22} is
  full and faithful.
\end{Prop}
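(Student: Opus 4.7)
The plan is to exploit the isomorphism of Proposition~\ref{prop:14} to realise $\mathsf{T}_{(\thg)}$ as a composite of two fully faithful functors.

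First, I observe that $\bar \alpha_\mathbb{B} \colon \mathsf{T}_\mathbb{B} \rightarrow \mathrm{Costr}(\mathsf{Q}_\mathbb{B})$ is natural in $\mathbb{B}$. This is automatic: it arises as the adjoint transpose of the natural isomorphism $\alpha_\mathbb{B}$ of Proposition~\ref{prop:15}, and the adjoint transpose of a natural transformation under a fixed adjunction is natural. It follows that $\bar \alpha$ constitutes a natural isomorphism
\begin{equation*}
  \mathsf{T}_{(\thg)} \cong \mathrm{Costr} \circ \mathsf{Q}_{(\thg)} \colon \cat{Cof} \rightarrow \cat{Mnd}_a(\cat{Set})^\mathrm{op}\rlap{ .}
\end{equation*}

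Next, I would argue that the right-hand side is full and faithful as a composite of two fully faithful functors. The functor $\mathsf{Q}_{(\thg)}$ is full and faithful by Proposition~\ref{prop:16}, and it takes values in presheaf comonads. So it suffices to show that $\mathrm{Costr}$ is full and faithful when restricted to presheaf comonads. But this is a consequence of the idempotency of the costructure--cosemantics adjunction established in Theorem~\ref{thm:5}: for any idempotent adjunction $L \dashv R$, the left adjoint $L$ restricts to an equivalence from $\cat{Fix}(RL)$ to $\cat{Fix}(LR)$, and in particular is full and faithful there. Applying this to $L = \mathrm{Costr}$, and using that the fixpoints to the right are precisely the presheaf comonads (Theorem~\ref{thm:5} again), we see that $\mathrm{Costr}$ is full and faithful on presheaf comonads.

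Combining these observations, $\mathsf{T}_{(\thg)} \cong \mathrm{Costr} \circ \mathsf{Q}_{(\thg)}$ is a composite of fully faithful functors, hence fully faithful. There is no real obstacle here; the slightly subtle point is simply to recognise that naturality in $\mathbb{B}$ of $\bar \alpha$ comes for free from the adjointness, so that the functorial factorisation really does exist, letting us leverage Theorem~\ref{thm:5} and Proposition~\ref{prop:16} rather than reasoning directly about cofunctors and natural transformations between presheaf monads.
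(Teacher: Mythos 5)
Your proof is correct and follows the same overall strategy as the paper's: both realise $\mathsf{T}_{(\thg)}$, via the isomorphisms $\bar\alpha_\mathbb{B}$ of Proposition~\ref{prop:14}, as (isomorphic to) the composite $\mathrm{Costr}\circ\mathsf{Q}_{(\thg)}$, and then deduce full fidelity from Proposition~\ref{prop:16} together with the fact that idempotency (Theorem~\ref{thm:5}) makes $\mathrm{Costr}$ fully faithful on the presheaf comonads. The one place you genuinely diverge is in identifying $\mathsf{T}_{(\thg)}$ with this composite \emph{as a functor}: the paper transports $\mathrm{Costr}\circ\mathsf{Q}_{(\thg)}$ along the isomorphisms $\bar\alpha_\mathbb{B}$ and then verifies by a direct calculation that the resulting action on morphisms, $F\mapsto(\bar\alpha_\mathbb{B})^{-1}\circ\mathrm{Costr}(\mathsf{Q}_F)\circ\bar\alpha_\mathbb{C}$, agrees with the formula~\eqref{eq:32}; you instead obtain the naturality of $\bar\alpha$ for free, as the mate under $\mathrm{Costr}\dashv\mathrm{Cosem}$ of the natural isomorphism $\alpha$ of Proposition~\ref{prop:15}. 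This is a legitimate and arguably cleaner route, and the mate calculus does give naturality componentwise-by-transposition exactly as you say. The only caveat is that it shifts the computational burden onto the naturality clause of Proposition~\ref{prop:15}, whose verification amounts to checking the same compatibility of $\mathsf{Q}_F$ with $\mathsf{T}_F$ that the paper's closing ``direct calculation'' performs; so nothing is obtained for free, but your argument does make the logical dependencies more transparent.
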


\begin{proof}
  Since the costructure--cosemantics adjunction is idempotent, the functor
  $\mathrm{Costr} \colon \cat{Cmd}_a(\cat{Set}) \rightarrow
  \cat{Mnd}_a(\cat{Set})^\mathrm{op}$ is fully faithful when
  restricted to the subcategory of presheaf comonads; and since
  $\mathsf{Q}_{(\thg)}$ takes its image in this subcategory, we see
  that
  $\mathord{\mathrm{Costr}} \circ \mathord{\mathsf{Q}_{(\thg)}} \colon
  \cat{Cof} \rightarrow \cat{Mnd}_a(\cat{Set})^\mathrm{op}$ is fully
  faithful. Now transporting the values of this composite functor
  along the isomorphisms
  $\bar \alpha_\mathbb{B} \colon \mathsf{T}_\mathbb{B} \cong
  \mathrm{Costr}(\mathsf{Q}_{\mathbb{B}})$ of
  Proposition~\ref{prop:14} yields a fully faithful functor
  $\cat{Cof} \rightarrow \mathrm{Mnd}_a(\cat{Set})^\mathrm{op}$ which
  acts on objects by $\mathbb{B} \mapsto \mathsf{T}_\mathbb{B}$, and
  on morphisms by
  $F \mapsto (\bar \alpha_{\mathbb{B}})^{-1} \circ
  \mathrm{Costr}(\mathsf{Q}_F) \circ \bar \alpha_{\mathbb{C}}$. Now
  direct calculation shows this action on morphisms to be precisely
  that of~\eqref{eq:18}.
\end{proof}

It follows from this and Proposition~\ref{prop:16} that the full
embeddings $\mathsf{Q}_{(\thg)} \colon \cat{Cof} \rightarrow
\cat{Cmd}_a(\cat{Set})$ and $\mathsf{T}_{(\thg)} \colon \cat{Cof}
\rightarrow \cat{Mnd}_a(\cat{Set})^\mathrm{op}$ exhibit $\cat{Cof}$ as
equivalent to the full subcategories of fixpoints to the left and to
the right; from which it follows that:

\begin{Prop}
  \label{prop:27}
  The presheaf monad and presheaf comonad functors of
  Propositions~\ref{prop:22} and~\ref{prop:16}, together with the
  behaviour functors of Theorems~\ref{thm:1} and~\ref{thm:4},
  participate in adjunctions
  \begin{equation*}
    \cd{
      {\cat{Cof}} \ar@<-4.5pt>[r]_-{\mathsf{T}_{(\thg)}}
      \ar@{<-}@<4.5pt>[r]^-{\mathbb{B}_{(\thg)}} \ar@{}[r]|-{\top} &
      {\cat{Mnd}_a(\cat{Set})^\mathrm{op}} & &
      {\cat{Cof}} \ar@<-4.5pt>[r]_-{\mathsf{Q}_{(\thg)}}
      \ar@{<-}@<4.5pt>[r]^-{\mathbb{B}_{(\thg)}} \ar@{}[r]|-{\bot} &
      {\cat{Cmd}_a(\cat{Set})} 
    } 
  \end{equation*}
  exhibiting the full subcategories of presheaf monads, respectively
  presheaf comonads, as reflective in $\cat{Mnd}_a(\cat{Set})$,
  respectively $\cat{Cmd}_a(\cat{Set})$.
\end{Prop}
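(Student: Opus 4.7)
The plan is to transport the idempotent costructure--cosemantics adjunction of~\eqref{eq:4} along the identifications of its fixpoint subcategories with $\cat{Cof}$. My starting point would be Theorem~\ref{thm:5} together with Proposition~\ref{prop:3}(iii)--(iv): these tell us that the presheaf monads form a coreflective full subcategory of $\cat{Mnd}_a(\cat{Set})^\mathrm{op}$ with coreflector $\mathrm{Costr}\circ\mathrm{Cosem}$, and that the presheaf comonads form a reflective full subcategory of $\cat{Cmd}_a(\cat{Set})$ with reflector $\mathrm{Cosem}\circ\mathrm{Costr}$ (so that, upon taking opposites, presheaf monads are reflective in $\cat{Mnd}_a(\cat{Set})$, as stated in the proposition).

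For the comonad side of Proposition~\ref{prop:27}, the fully faithful functor $\mathsf{Q}_{(\thg)}$ of Proposition~\ref{prop:16} has essential image exactly the presheaf comonads, by Proposition~\ref{prop:12} and the evident fact that every $\mathsf{Q}_\mathbb{B}$ is by definition a presheaf comonad; so $\mathsf{Q}_{(\thg)}$ realises the reflective inclusion noted above. To identify the reflector with $\mathbb{B}_{(\thg)}$, I would compose the isomorphism $\mathrm{Costr}(\mathsf{Q})\cong\mathsf{T}_{\mathbb{B}_\mathsf{Q}}$ of Theorem~\ref{thm:4} with the isomorphism $\mathrm{Cosem}(\mathsf{T}_\mathbb{B})\cong\mathsf{Q}_\mathbb{B}$ of Proposition~\ref{prop:15} to obtain a natural isomorphism $\mathrm{Cosem}\circ\mathrm{Costr}\cong\mathsf{Q}_{(\thg)}\circ\mathbb{B}_{(\thg)}$, and then cancel the fully faithful $\mathsf{Q}_{(\thg)}$ on the right to conclude $\mathbb{B}_{(\thg)}\dashv\mathsf{Q}_{(\thg)}$. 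The monad side proceeds dually: Propositions~\ref{prop:26} and~\ref{prop:12} identify $\mathsf{T}_{(\thg)}$ as the coreflective inclusion of presheaf monads into $\cat{Mnd}_a(\cat{Set})^\mathrm{op}$, and combining Theorem~\ref{thm:1} ($\mathrm{Cosem}(\mathsf{T})\cong\mathsf{Q}_{\mathbb{B}_\mathsf{T}}$) with Proposition~\ref{prop:14} ($\mathrm{Costr}(\mathsf{Q}_\mathbb{B})\cong\mathsf{T}_\mathbb{B}$) gives the natural isomorphism $\mathrm{Costr}\circ\mathrm{Cosem}\cong\mathsf{T}_{(\thg)}\circ\mathbb{B}_{(\thg)}$, whence the coreflector is $\mathbb{B}_{(\thg)}$ and the adjunction is $\mathsf{T}_{(\thg)}\dashv\mathbb{B}_{(\thg)}$.

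The argument is essentially formal; the one place requiring a moment's care is to check that the functor identified as the (co)reflector really is $\mathbb{B}_{(\thg)}$, rather than merely a functor naturally isomorphic to it on objects. This would follow provided the isomorphism $\mathrm{Cosem}\circ\mathrm{Costr}\cong\mathsf{Q}_{(\thg)}\circ\mathbb{B}_{(\thg)}$ is compatible with the adjunction unit $\eta\colon\mathrm{id}\Rightarrow\mathrm{Cosem}\circ\mathrm{Costr}$, so that the resulting composite $\mathrm{id}\Rightarrow\mathsf{Q}_{(\thg)}\circ\mathbb{B}_{(\thg)}$ is the unit of a genuine adjunction $\mathbb{B}_{(\thg)}\dashv\mathsf{Q}_{(\thg)}$; this is routine from the constructions in Theorem~\ref{thm:4} and Proposition~\ref{prop:15}, but is the one piece of nontrivial bookkeeping to verify.
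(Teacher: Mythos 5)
Your proposal is correct and follows essentially the same route as the paper, which deduces the result from the idempotency of the costructure--cosemantics adjunction, the general properties of idempotent adjunctions, the full fidelity of $\mathsf{T}_{(\thg)}$ and $\mathsf{Q}_{(\thg)}$, and the factorisations of costructure and cosemantics through $\cat{Cof}$; you simply make explicit the bookkeeping that the paper leaves implicit. (Your final worry is unnecessary: once the reflector is naturally isomorphic to $\mathsf{Q}_{(\thg)}\circ\mathbb{B}_{(\thg)}$ and $\mathsf{Q}_{(\thg)}$ is fully faithful onto the reflective subcategory, the adjunction follows formally, with its unit determined by that data.)
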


We can describe the units of these reflections explicitly. On the one
hand, if $\mathsf{Q}$ is an accessible comonad on $\cat{Set}$, then
its reflection in the full subcategory of presheaf comonads is the
presheaf comonad of the behaviour category $\mathbb{B}_\mathsf{Q}$, and
the reflection map
$\mathsf{Q} \rightarrow \mathsf{Q}_{\mathbb{B}_\mathsf{Q}}$ has
components
\begin{align*}
  \eta_A \colon Q(A) &\rightarrow \textstyle \sum_{x \in Q1}
  A^{[\cat{Set}, \cat{Set}](Q_x, \mathrm{id})}\\
  a & \mapsto (Q!(a), \lambda \tau.\, \tau_A(a))\rlap{ .}
\end{align*}

On the other hand, if $\mathsf{T}$ is an accessible monad on
$\cat{Set}$, then its reflection into the full subcategory of presheaf
monads is the presheaf monad of the behaviour category
$\mathbb{B}_\mathsf{T}$, and the reflection map
$\eta \colon \mathsf{T} \rightarrow
\mathsf{T}_{\mathbb{B}_\mathsf{T}}$ has components
\begin{align*}
  \eta_A \colon T(A) &\rightarrow \textstyle \prod_{\beta \in
    B_\mathsf{T}} (A \times T(1) \quot \sim_\beta)\\
  t & \mapsto \lambda \beta.\, (\beta(t), \tilde t)\rlap{ .}
\end{align*}

In fact, this reflection map exhibits
$\mathsf{T}_{\mathbb{B}_\mathsf{T}}$ as the result of adjoining to
$\mathsf{T}$ a new $B_\mathsf{T}$-ary operation $\mathsf{beh}$
satisfying the axioms of read-only state and the  axioms
\begin{equation*}
  t(u) \equiv_{\mathsf{beh}, \beta} t \gg u_{\beta(t)}
\end{equation*}
for all $t \in T(A)$ and $u \in T(B)^A$. From a computational
perspective, we understand the new operation $\mathsf{beh}$ as an
``oracle'' which allows the user to request complete information about
the future behaviour of the external system with which we are
interacting. Of course, since this future behaviour is typically
wildly non-computable, we immediately leave the realm of
computationally meaningful theories. In future work, we will see how
to rectify this, to some degree, by considering an adjunction between
accessible monads on $\cat{Set}$ and suitably accessible comonads on
the category of \emph{topological spaces}. In this refined setting, we
will see that the passage from monad to comonad and back adjoins new
operations which observe only \emph{finite} amounts of information
about the future behaviour of the system.

\section{Examples and applications: cosemantics}
\label{sec:cosem-exampl-appl}
In the final two sections of this paper, we give a range of examples
illustrating our main results. In this section, we calculate the
behaviour category, and the comodels classifying admissible
behaviours, for a range of examples of algebraic theories for
computational effects, and calculate some examples of cofunctors
between behaviour categories induced by computationally interesting
interpretations of algebraic theories.

\subsection{Reversible input}
\label{sec:reversible-input}
Given a set $V$, the theory of \emph{$V$-valued reversible input}
(first considered for $\abs V = 2$
in~\cite{Jonsson1961On-two-properties}) is generated by a $V$-ary
operation $\mathsf{read}$, and a $V$-indexed family of unary
operations $\mathsf{unread}_v$, satisfying the equations
\begin{equation}\label{eq:36}
  \mathsf{unread}_v(\mathsf{read}(x)) \equiv x_v \qquad
  \text{and} \qquad \mathsf{read}(\lambda u.\, \mathsf{unread}_u(x))
  \equiv x\rlap{ .}
\end{equation}
If $\mathsf{read}$ is thought of as reading the next value from an
input stream, then $\mathsf{unread}_v$ returns the value $v$ to the
front of that stream. A comodel of this theory comprises the data of a
set $S$, a function $\dbr{\mathsf{read}} = (g,n) \colon S \rightarrow V \times S$
and functions
$\dbr{\mathsf{unread}_v} \colon S \rightarrow S$, or equally a
single function $p \colon V \times S \rightarrow S$; while the
equations force $(g,n)$ and $p$ to be inverse to each other. Thus comodels of $V$-valued
reversible input are equally well
comodels of $V$-valued input whose structure map $\dbr{\mathsf{read}}
\colon S \rightarrow V
\times S$ is invertible. Since, in particular, this is true for the
final comodel $\alg[V^\mathbb{N}]$ of $V$-valued input by the well-known Lambek lemma, we
conclude that this is also the final
comodel of $V$-valued reversible input.

We now calculate the comodel associated to an admissible
behaviour $W \in V^\mathbb{N}$. We begin with some calculations
relating to $\sim_W$-equivalence. First, by Remark~\ref{rk:4}, any unary term is
$\sim_W$-equivalent to one of the form
$\sigma_1 \gg \dots \gg \sigma_n \gg\mathrm{id}$ where each $\sigma_i$
is either $\mathsf{read}$ or some $\mathsf{unread}_v$. Now the first
equation in~\eqref{eq:36} implies that
$\mathsf{unread}_v \gg \mathsf{read} \gg (\thg)$ is the identity
operator, and so any unary term is $\sim_W$-equivalent to one of the
form
\begin{equation*}
  [n, v_m, \dots, v_1] \defeq \smash{\overbrace{\mathsf{read} \gg \dots \gg \mathsf{read}}^{n}} \gg
  \mathsf{unread}_{v_m} \gg \dots \gg \mathsf{unread}_{v_1} \gg \mathrm{id}
\end{equation*}
for some $n \in \mathbb{N}$ and $v_m, \dots, v_1 \in V$. Since the
behaviour $W$ satisfies $W(\mathsf{read}) = W_0$, we have
$\mathsf{read} \gg \mathsf{unread}_{W_0} = \mathsf{read}(\lambda u.\,
\mathsf{unread}_{W_0}) \sim_W \mathsf{read}(\lambda u.\,
\mathsf{unread}_u) = \mathrm{id}$; whence by Lemma~\ref{lem:13}, also
$[n+1, W_n, v_m, \dots, v_1] \sim_W [n, v_m, \dots, v_1]$ for any
$n \in \mathbb{N}$ and $v_m, \dots, v_1 \in V$. Consequently, each
unary term is $\sim_W$-equivalent to an element of the set $S_W$ given
by
\begin{equation}\label{eq:40}
  \{[n, v_m, \dots, v_1] : n,m \in \mathbb{N}, v_i \in V \text{ and $W_{n-1} \neq v_m$ if $n,m >0$}\}\rlap{ .}
\end{equation}

We claim that $S_W$ is in fact a set of $\sim_W$-equivalence class
representatives. For this, it suffices by
Remark~\ref{rk:3} to endow $S_W$
with a comodel structure $\alg[S]_W$ satisfying~\eqref{eq:37}---which
will then make it the comodel classifying states of behaviour $W$. We do so
by taking $\dbr{\mathsf{read}}_{\alg[S]_W}$ to be given by
\begin{equation}\label{eq:44}
  [n, v_m, \dots, v_1] \mapsto
  \begin{cases}
    (W_n, [n+1]) & \text{ if $m=0$;}\\
    (v_1, [n, v_m, \dots, v_{2}]) & \text{ if $m>0$,}\\
  \end{cases}
\end{equation}
and taking $\dbr{\mathsf{unread}_v}_{\alg[S]_W}$ to be given by
\begin{equation}\label{eq:43}
  [n, v_1, \dots, v_m] \mapsto
  \begin{cases}
    [n-1] & \text{ if $m=0$, $n>0$, $v = W_{n-1}$;}\\
    [n,v_m, \dots, v_1, v] & \text{ otherwise.}
  \end{cases}
\end{equation}

We may now use the above calculations to identify maps $W \rightarrow W'$
in the behaviour category. These correspond to comodel homomorphisms $\alg[S]_{W'}
\rightarrow \alg[S]_W$ and so to states of $\alg[S]_W$ of
behaviour $W'$. Since the state $[n, v_m, \dots, v_1] \in \alg[S]_W$
has behaviour given by the stream of values $v_1 \dots v_m
W_{n} W_{n+1} W_{n+2} \dots$, we conclude that morphisms $W
\rightarrow W'$ in the behaviour category are states of the form $[n,
W'_{m-1}, \dots, W'_0]$ where $W'_k = W_{k+n-m}$ for all $k \geqslant
m$ but $W'_{m-1} \neq W_{n-1}$. Such a state is clearly uniquely
determined by the integer $i = n-m$, and so we arrive at:
\begin{Prop}
  \label{prop:5}
  The behaviour category of the theory of $V$-valued reversible input
  has object-set $V^\mathbb{N}$; morphisms $W \rightarrow W'$ are
  integers $i$ such that, for some $N \in \mathbb{N}$, we have
  $W'_k = W_{k+i}$ for all $k > N$; and composition is addition of
  integers. The comodel classifying states of behaviour
  $W \in V^\mathbb{N}$ has underlying set~\eqref{eq:40}, and
  co-operations as in~\eqref{eq:44} and~\eqref{eq:43}.
\end{Prop}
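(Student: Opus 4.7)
The plan is to apply Theorem~\ref{thm:2} and Corollary~\ref{cor:3} together with the concrete recipe of Remark~\ref{rk:3}. The object-set of $\mathbb{B}_\mathsf{T}$ is by definition the set of admissible behaviours, which by Proposition~\ref{prop:33} agrees with the underlying set of the final comodel; and as noted in the paragraph preceding the statement, the final comodel of reversible input coincides with that of plain input, namely $V^{\mathbb{N}}$. This deals with the object-set.

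Next I would pin down the hom-set with fixed domain $W \in V^{\mathbb{N}}$ by constructing the classifying comodel $\alg[\beta]$ from Corollary~\ref{cor:3} explicitly. The preceding discussion already isolates a candidate set $S_W$ of representatives for $T(1) \quot \sim_W$, together with candidate co-operations~\eqref{eq:44} and~\eqref{eq:43}. The first verification is that $S_W$ actually carries a comodel structure: I would check that the two reversible-input axioms~\eqref{eq:36} hold on the nose in $\alg[S]_W$, which is a routine case split on whether $m=0$ or $m>0$, and whether the newly unread value matches $W_{n-1}$. This is the key formal step, but is entirely mechanical. The second verification is the relation~\eqref{eq:37} of Remark~\ref{rk:3}, which says that the comodel structure returns first components computed by $W$ and second components in the same $\sim_W$-class as $s.t^\flat$; for this it is enough, by induction on terms, to check~\eqref{eq:37} for $t = \mathsf{read}$ and $t = \mathsf{unread}_v$, and both cases follow directly from~\eqref{eq:44} and~\eqref{eq:43}. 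By Remark~\ref{rk:3}, $\alg[S]_W$ is then the comodel classifying states of behaviour $W$.

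With $\alg[S]_W$ in hand, Corollary~\ref{cor:3} identifies $\mathbb{B}_\mathsf{T}(W, W')$ with the set of states of $\alg[S]_W$ having behaviour $W'$. The behaviour of $[n, v_m, \dots, v_1]$ can be read off from~\eqref{eq:44}: one first outputs $v_1, \dots, v_m$ (in order obtained by successive $\mathsf{read}$'s), then the tail $W_n, W_{n+1}, \dots$ of $W$. Equating this with $W'$ forces $v_j = W'_{m-j}$ and $W'_{m+k} = W_{n+k}$ for all $k \geqslant 0$; together with the constraint $v_m \neq W_{n-1}$ from~\eqref{eq:40} (which picks out a unique representative), this shows that such a state is determined by the integer $i \defeq n - m$, and that $i$ may be any integer for which $W$ and $W'$ agree on a shifted tail. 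This gives the claimed description of the hom-sets.

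For composition, I would just chase through the definition: composing $i \colon W \to W'$ with $j \colon W' \to W''$ in $\mathbb{B}_\mathsf{T}$ means composing the corresponding unary terms $[n,v_m,\dots,v_1]$ and $[n',v'_{m'},\dots,v'_1]$ under substitution, then reducing back into $S_W$ via the rewrites above~\eqref{eq:40}; the mutually-inverse cancellations $\mathsf{unread}_v \gg \mathsf{read}$ and $\mathsf{read} \gg \mathsf{unread}_{W_k}$ collapse the concatenated list to a new normal form whose associated integer is $i+j$. The only slightly delicate point is handling the case where the read/unread cancellations need to propagate across the join, but this is the same algebra as composition of integers, so the main obstacle (such as it is) is simply being careful with the bookkeeping in~\eqref{eq:40}.
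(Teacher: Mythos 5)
Your proposal is correct and follows essentially the same route as the paper: identify the final comodel with that of plain input via invertibility of $\dbr{\mathsf{read}}$, normalise unary terms to the set~\eqref{eq:40}, verify via Remark~\ref{rk:3} that~\eqref{eq:44} and~\eqref{eq:43} make it the classifying comodel, and then read off hom-sets and composition from Corollary~\ref{cor:3}; you are in fact slightly more explicit than the paper in promising to check the comodel axioms~\eqref{eq:36} and condition~\eqref{eq:37}, which the paper leaves as assertions. The only slip is the indexing $v_j = W'_{m-j}$, which by~\eqref{eq:44} should read $v_j = W'_{j-1}$ (the first $\mathsf{read}$ returns $v_1$); this does not affect the conclusion that the state is determined by $i = n-m$.
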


Note that this behaviour category is a groupoid; in fact, it is not
hard to show that it is the
\emph{free} groupoid on the behaviour category for $V$-valued input.
This groupoid is well-known in the study of Cuntz $C^\ast$-algebras:
for example, for finite $V$ it appear already
in~\cite[Definition~III.2.1]{Renault1980A-groupoid}. In this context,
it is important that the groupoid is not just as a groupoid of sets,
but a \emph{topological} groupoid; in a sequel to this paper, we will
explain how this topology arises very naturally via comodels.

\subsection{Stack}
\label{sec:stack}
Given a set $V$, the theory of a \emph{$V$-valued stack}---introduced
for a finite $V$ in~\cite{Goncharov2013Trace}---is generated
by a $V+\{\bot\}$-ary operation $\mathsf{pop}$, whose arguments we
group into an $V$-ary part and a unary part; and a $V$-indexed family
of unary operations $\mathsf{push}_v$ for $v \in V$, satisfying the
equations
\begin{equation*}
  \mathsf{push}_v(\mathsf{pop}(x,y)) \equiv x_v \quad
  \mathsf{pop}(\lambda v.\, \mathsf{push}_v(x), x) \equiv x \quad
  \mathsf{pop}(x, \mathsf{pop}(y,z)) \equiv
  \mathsf{pop}(x,z)\rlap{ .}
\end{equation*}
This theory captures the semantics of a stack of elements from $V$: we
read $\mathsf{push}_v(x)$ as ``push $v$ on the stack and continue as
$x$'', and $\mathsf{pop}(x,y)$ as ``if the stack is non-empty, pop its
top element $v$ and continue as $x_v$; else continue as $y$''.

Note the similarities with the theory of $V$-valued reversible input;
indeed, this latter theory could equally well be seen as the theory of
a $V$-valued \emph{infinite} stack. We can formalise this via an
interpretation of the theory
of $V$-valued stack into $V$-valued reversible input which maps
$\mathsf{push}_u$ to $\mathsf{unread}_u$ and $\mathsf{pop}(x,y)$ to
$\mathsf{read}(x)$.

A comodel of the theory of a $V$-valued stack comprises a set $S$ with
functions 
$(g,n) \colon S \rightarrow (V + \{\bot\}) \times S$ (modelling
$\mathsf{pop}$) and $p \colon V \times S \rightarrow S$ (modelling
the $\mathsf{push}_v$'s) subject to 
conditions corresponding to the three equations above:
\begin{enumerate}
\item $g(p(v,s)) = v$ and $n(p(v,s)) = s$;
\item If $g(s) = \{\bot\}$ then $n(s) = s$, while if $g(s) = v$ then
  $p(v,n(s)) = s$;
\item If $g(s) = \{\bot\}$ then $g(n(s)) = \{\bot\}$ and
  $n(n(s)) = n(s)$ (this is implied by (2)).
\end{enumerate}

Writing $E = \{s \in S : g(s) = \bot\}$ for the set of ``states in
which the stack is empty'', and $j \colon E \rightarrow S$
for the inclusion, (1) implies that
$p \colon V \times S \rightarrow S$ is an injection whose image is
disjoint from that of $j$, and (2) that every $s \in S$ lies either in
$E$ or in the image of $p$. So we have a coproduct diagram
\begin{equation*}
  \cd[@-1em@!C@C-1em]{
    V \times S \ar[dr]_-{p} & & E\rlap{ .} \ar[dl]^-{j} \\ & S
  }
\end{equation*}
In fact, any such coproduct diagram comes from a comodel: we may
recover $(g,n) \colon S \rightarrow (V + \{\bot\}) \times S$ as the
unique map whose composites with $p$ and $j$ are
$\lambda (v,s).\, (v, s)$ and $\lambda e.\, (\bot, j(e))$
respectively. Thus a comodel structure on a set $S$ is
equivalently given by a set $E$ and a coproduct diagram
$V \times S \rightarrow S \leftarrow E$.

The \emph{final} comodel of this theory is the set $V^{\leqslant
  \omega}$ of partial functions $\mathbb{N} \rightharpoonup V$ which
are defined on
some initial segment of $\mathbb{N}$,
under the comodel structure corresponding to the coproduct diagram
\begin{equation*}
  \cd[@-1em@!C@C-1em]{
    V \times V^{\leqslant \omega} \ar[dr]_-{(v,W) \mapsto v.W\,\,\,\,} & &
    \{\ast\}\rlap{ .} \ar[dl]^-{\mathord\ast\, \mapsto \varepsilon} \\ & S
  }
\end{equation*}
Here, we write $\varepsilon$ for the everywhere-undefined element
of $V^{\leqslant \omega}$, and write $v.W \in V^{\leqslant \omega}$ for the element
with $(v.W)_0 = v$ and $(v.W)_{i+1} \simeq W_i$\footnote{
We use \emph{Kleene equality} $a \simeq b$, meaning that $a$ is defined
just when $b$ is defined, and they are then equal.}.
In terms of the generating co-operations, this final comodel is given
by:
\begin{equation*}
  \dbr{\mathsf{push}_v}(W) = v.W \qquad \dbr{\mathsf{pop}}(v.W) = (v,W)
  \qquad \text{and} \qquad \dbr{\mathsf{pop}}(\varepsilon) = (\bot, \varepsilon)\rlap{ .}
\end{equation*}

We now calculate the comodel associated to an admissible
behaviour $W \in V^{\leqslant \omega}$. Given the similarity with the
theory of $V$-valued reversible input, we may argue as in the
previous section to see that any unary term is
$\sim_W$-equivalent to one
\begin{equation*}
  [n, v_m, \dots, v_1] \defeq \underbrace{\mathsf{pop} \gg \dots \gg \mathsf{pop}}_{n} \gg
  \mathsf{push}_{v_m} \gg \dots \gg \mathsf{push}_{v_1} \gg \mathrm{id}
\end{equation*}
for some $n \in \mathbb{N}$ and $v_m, \dots, v_1 \in V$. Now, if
$W_0$ is undefined, then $W(\mathsf{pop}) = \bot$, and so
$\mathsf{pop} \gg m = \mathsf{pop}(\lambda v.\, m, m) \sim_W
\mathsf{pop}(\lambda v.\, \mathsf{push}_v(m),m) = m$ for any $m \in
T(1)$. By Lemma~\ref{lem:13}, it follows that $[n+1, v_m, \dots, v_1]
\sim_W [n, v_m, \dots, v_1]$ whenever $W_n$ is undefined, and so we conclude
that each unary term is $\sim_W$-equivalent to some $[n, v_m, \dots,
v_1]$ for which $W$ is defined at all $k < n$.
At this point, by repeating the arguments of the preceding section,
\emph{mutatis mutandis}, we may show that any unary term is
$\sim_W$-equivalent to an element of the set
\begin{multline}\label{eq:45}
  \{[n, v_m, \dots, v_1] : n, m \in \mathbb{N}, v_i \in V, 
  W \text{ defined at
    all $k < n$,} \\
   \text{ and $W_{n-1} \neq v_m$ if $n,m >0$}\}\rlap{ .}
\end{multline}

We now show, like before, that this is a set of $\sim_W$-equivalence
class representatives, by making it into a comodel
satisfying~\eqref{eq:37}; again, this comodel will then classifying
states of behaviour $W$. This time, we take $\dbr{\mathsf{pop}}$ to be
given by
\begin{equation}\label{eq:39}
  [n, v_m, \dots, v_1] \mapsto
  \begin{cases}
    (W_n, [n+1]) & \text{ if $m=0$ and $W_n$ defined;}\\
    (\bot, [n]) & \text{ if $m=0$ and $W_n$ undefined;}\\
    (v_1, [n, v_m, \dots, v_{2}]) & \text{ if $m>0$,}\\
  \end{cases}
\end{equation}
and take $\dbr{\mathsf{push}_v}$ to be given exactly as in~\eqref{eq:43}.
Transcribing the calculations of the preceding section, we arrive at:
\begin{Prop}
  \label{prop:8}
  The behaviour category of the theory of a $V$-valued stack
  has object-set $V^{\leqslant \omega}$; morphisms $W \rightarrow W'$ are
  integers $i$ such that, for some $N \in \mathbb{N}$, we have
  $W'_k \simeq W_{k+i}$ for all $k > N$; and composition is addition of
  integers. The comodel classifying states of behaviour
  $W \in V^{\leqslant \omega}$ has underlying set~\eqref{eq:45}, and
  co-operations as in~\eqref{eq:39} and~\eqref{eq:43}.
\end{Prop}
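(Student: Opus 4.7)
The plan is to follow the template already established for $V$-valued reversible input in Proposition~\ref{prop:5}, with suitable adjustments for the partial-stream setting. The author has done most of the preparatory work: the final comodel has been identified as $V^{\leqslant \omega}$, so the object-set of the behaviour category is $V^{\leqslant \omega}$, and the set~\eqref{eq:45} has been shown to contain a $\sim_W$-representative of every unary term. What remains is to invoke Remark~\ref{rk:3}: if we can endow the set~\eqref{eq:45} with a comodel structure satisfying~\eqref{eq:37}, this structure is forced to coincide with~\eqref{eq:39}/\eqref{eq:43}, and it will then be the classifier $\alg[S]_W$ of states of behaviour $W$. Morphisms $W \rightarrow W'$ can then be read off from Corollary~\ref{cor:3} as states of $\alg[S]_W$ of behaviour $W'$.

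First I would check that the co-operations in~\eqref{eq:39} and~\eqref{eq:43} make $S_W$ into a $V$-valued stack comodel, by verifying the three axioms (1)--(3) listed just after the theory's introduction. This requires a straightforward case analysis on whether $m = 0$ or $m > 0$ in a state $[n,v_m,\dots,v_1]$, and in the $m = 0$ case on whether $W_n$ is defined. The delicate point is the axiom $\dbr{\mathsf{pop}(\lambda v.\,\mathsf{push}_v(x),x)} = \dbr{x}$: for $m=0$ with $W_n$ defined, one has $\dbr{\mathsf{pop}}([n]) = (W_n,[n+1])$, so we need $\dbr{\mathsf{push}_{W_n}}([n+1]) = [n]$, which is exactly the first clause of~\eqref{eq:43}; for $m = 0$ with $W_n$ undefined, one needs $\dbr{\mathsf{pop}}([n]) = (\bot, [n])$, which is the second clause of~\eqref{eq:39}. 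The other two axioms go through similarly.

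Next I would confirm~\eqref{eq:37}, i.e.\ that $\dbr{t}_{\alg[S]_W}([n,v_m,\dots,v_1]) = (W'(t),s')$ with $s' \sim_{W'} [n,v_m,\dots,v_1].t^\flat$ in $S_{W'}$ for $W' = \partial_{[n,v_m,\dots,v_1]}W$; this is automatic from how~\eqref{eq:39} and~\eqref{eq:43} were designed to track the behaviour along compositions. By Remark~\ref{rk:3}, $\alg[S]_W$ then classifies states of behaviour $W$. For the hom-description, observe that the state $[n,v_m,\dots,v_1] \in \alg[S]_W$ has behaviour $v_1.v_2.\cdots.v_m.(W_n.W_{n+1}.\cdots)$, which equals $W'$ precisely when $W'_0 = v_1,\ldots,W'_{m-1} = v_m$ and $W'_{k} \simeq W_{k+n-m}$ for all $k \geqslant m$. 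Thus such a state is uniquely recovered from the integer $i \defeq n - m$ (choosing $m$ minimal, which yields $m = 0$ or $W'_{m-1} \neq W_{n-1}$); conversely, any $i$ for which $W'_k \simeq W_{k+i}$ holds for all sufficiently large $k$ arises in this way. Composition in $\mathbb{B}_\mathsf{T}$ is given by substitution of unary terms; a short computation using~\eqref{eq:43} shows that substituting a state of relative offset $j$ into one of relative offset $i$ yields one of offset $i+j$, i.e.\ composition is addition in $\mathbb{Z}$.

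The main obstacle will be the verification of the second comodel axiom and the well-definedness of~\eqref{eq:43}, both of which branch on the defined/undefined status of $W_n$ and on whether one has just popped to the ``boundary'' between a pushed prefix and the underlying stream $W$. Once this case analysis is organised cleanly the rest follows the reversible-input calculation \emph{mutatis mutandis}, and the only genuinely new feature--–the handling of the empty-stack marker $\bot$--–is absorbed into the $m = 0$, $W_n$ undefined clause of~\eqref{eq:39}.
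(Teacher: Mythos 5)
Your proposal follows exactly the route the paper takes: reduce unary terms to the canonical forms of \eqref{eq:45}, invoke Remark~\ref{rk:3} to show that the co-operations \eqref{eq:39} and \eqref{eq:43} make this set into the comodel classifying behaviour $W$, and then read off morphisms $W \rightarrow W'$ via Corollary~\ref{cor:3} as states of $\alg[S]_W$ of behaviour $W'$, determined by the offset $i = n-m$ --- the paper itself presents all of this as a \emph{mutatis mutandis} transcription of the reversible-input calculation of Proposition~\ref{prop:5}. The one point to watch (present equally in the paper's own statement) is that for \emph{finite} $W,W'$ the Kleene-equality condition on $i$ holds vacuously for large $k$, so your converse clause ``any such $i$ arises in this way'' is only literally true on the infinite-stream component; on the finite component $i$ is forced to equal $\abs{W}-\abs{W'}$ and the hom-set is the singleton, matching the codiscrete component noted after the proposition.
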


In fact, it is easy to see that the behaviour category of a $V$-valued
stack is the disjoint union of the behaviour category for
$V^\ast$-valued state (modelling a finite stack) and for $V$-valued
reversible input (modelling an infinite stack). The cofunctor on
behaviour categories induced by the interpretation of the theory of a
$V$-valued stack into that of $V$-valued reversible input is simply
the connected component inclusion.

\subsection{Dyck words}
\label{sec:dyck-words}

A \emph{Dyck word} is a finite
list $W \in \{\mathsf{U},\mathsf{D}\}^\ast$ with the same number of
\textsf{U}'s as \textsf{D}'s, and with the property that the $i$th \textsf{U} in the list always
precedes the $i$th \textsf{D}. Here, \textsf{U} and \textsf{D} stand for ``up'' and
``down'', and the idea is that a Dyck word records a walk on the
natural numbers $\mathbb{N}$ with steps $\pm 1$ which starts and ends
at $0$. More generally, we can encode walks from
$n \in \mathbb{N}$ to $m \in \mathbb{N}$ by ``affine Dyck
words'':
\begin{Defn}[Affine Dyck words]
  \label{def:16}
  Given $n,m \in \mathbb{N}$, an \emph{affine Dyck word} from $n$ to
  $m$ is a word $W \in \{\mathsf{U}, \mathsf{D}\}^\ast$ such that
  $\#\{\text{\textsf{D}'s in $W$}\} - \#\{\text{\textsf{U}'s in $W$}\}
  = n-m$, and such that the $i$th \textsf{U} precedes the $(i+n)$th
  \textsf{D} for all suitable $i$. We may extend this notation by
  declaring \emph{any} word $W \in \{\mathsf{U}, \mathsf{D}\}^\ast$ to
  be an affine Dyck word from $\infty$ to $\infty$. If
  $n,m \in \mathbb{N} \cup \{\infty\}$, then we write
  $W \colon n \rightsquigarrow m$ to indicate that $W$ is an affine
  Dyck word from $n$ to $m$.
\end{Defn}

For example:
\begin{itemize}
\item UUDUDD is a Dyck word, but also an affine Dyck word $n
  \rightsquigarrow n$ for any $n$;
\item UDDUUU is an affine Dyck word $1 \rightsquigarrow 3$ and $2
  \rightsquigarrow 4$, but
  not $0 \rightsquigarrow 2$.
\end{itemize}

We now describe an algebraic theory which encodes the dynamics of the
walks encoded by affine Dyck words. It has two unary operations
$\mathsf{U}$ and $\mathsf{D}$; and an $\mathbb{N}$-indexed family of
binary operations $\mathsf{ht}_{> n}$ each satisfying the
axioms of read-only state; all subject to the following
axioms:
\begin{gather*}
  \mathsf{ht}_{> n}(x, \mathsf{ht}_{>
    m}(y,z)) \equiv \mathsf{ht}_{> m}(\mathsf{ht}_{> n}(x,y),z) \\
  \mathsf{ht}_{> 0}(x, \mathsf{D}(x)) \equiv x \qquad
  \mathsf{U}(\mathsf{ht}_{> 0}(x,y)) \equiv \mathsf{U}(x)\\
  \mathsf{U}(\mathsf{ht}_{>
    n+1}(x,y)) \equiv
  \mathsf{ht}_{> n}(\mathsf{U}(x), \mathsf{U}(y)) \qquad 
  \mathsf{D}(\mathsf{ht}_{>
    n}(x,y)) \equiv
  \mathsf{ht}_{> n+1}(\mathsf{D}(x), \mathsf{D}(y))
\end{gather*}
for all $m\leqslant n \in \mathbb{N}$. The theory of affine Dyck words
provides an interface for accessing a state machine with an internal
``height'' variable $h \in \mathbb{N}$, which responds to two commands
\textsf{U} and \textsf{D} which respectively increase and decrease $h$
by one, with the proviso that \textsf{D} should do nothing when
applied in a state with $h = 0$. With this understanding, we read the
primitive $\mathsf{ht}_{> n}(x,y)$ as ``if $h > n$
then continue as $x$, else continue as $y$''; read $\mathsf{U}(x)$ as
``perform \textsf{U} and continue as $x$''; and read $\mathsf{D}(x)$
as ``perform \textsf{D} (so long as $h>0$) and continue as $x$''.

Rather than compute the comodels by hand, we pass directly to a
calculation of the behaviour category. We begin by finding the
admissible behaviours. By Remark~\ref{rk:2} and the fact that
$\mathsf{ht}_{> n} \gg (\thg)$ is the identity operator, an admissible
behaviour $\beta$ is uniquely determined by the values
$\beta(\sigma_1 \gg \dots \gg \sigma_k \gg \mathsf{ht}_{> n})$ where
each $\sigma_i$ is either $\mathsf{U}$ or $\mathsf{D}$. Now, the last
three axioms imply that these values are determined in turn by the
values $\beta(\mathsf{ht}_{> n}) \in \{\mathsf{tt}, \mathsf{ff}\}$ for
each $n$. Finally, by the first axiom,
$\beta(\mathsf{ht}_{> m}) = \mathsf{tt}$ implies
$\beta(\mathsf{ht}_{> n}) = \mathsf{tt}$ whenever $m \leqslant n$. So
the possibilities are either that there is a \emph{least} $n$ with
$\beta(\mathsf{ht}_{> n}) = \mathsf{ff}$, or that
$\beta(\mathsf{ht}_{> n}) = \mathsf{tt}$ for all $n \in \mathbb{N}$.

In fact, each of these possibilities for $\beta$ does yield an
admissible behaviour. Indeed, identifying these possibilities with
elements of the set $\mathbb{N} \cup \{\infty\}$, we can try
to make this set into a comodel via the formulae of
Proposition~\ref{prop:33}, by taking:
\begin{equation*}
  \dbr{\mathsf{ht}_{> n}}(k) =
  \begin{cases}
    (\mathsf{tt}, k) & \text{if $k > n$;}\\
    (\mathsf{ff}, k) & \text{otherwise,}
  \end{cases} \quad \dbr{\mathsf{U}}(k) = k+1\text{ ,} \quad
  \dbr{\mathsf{D}}(k) =
  \begin{cases}
    0 & \text{if $k = 0$;}\\
    n-1 & \text{otherwise,}
  \end{cases}
\end{equation*}
where we take $\infty > n$ for any $n \in \mathbb{N}$,
and $\infty + 1 = \infty = \infty -1$. It is not hard to check that
these co-operations do in fact yield a comodel, which is then of
necessity the final comodel of the theory of affine Dyck words.

We now compute the comodel $\alg[k]$ associated to a behaviour
$k \in \mathbb{N} \cup \{\infty\}$. Because each operator
$\mathsf{ht}_{> n} \gg (\thg)$ is the identity, each unary operation
is $\sim_k$-equivalent to one in the submonoid generated by
$\mathsf{U}, \mathsf{D} \in T(1)$. Further, by the second axiom we
have $\mathsf{D} \sim_0 \mathrm{id}$, and so by Lemma~\ref{lem:13},
also $W \mathsf{D} W' \sim_k WW'$ for any $k \in \mathbb{N}$, any
$W' \in \{\mathsf{U}, \mathsf{D}\}^\ast$ and any affine Dyck word
$W \colon k \rightsquigarrow 0$ from $k$ to $0$. Applying this rewrite
rule repeatedly, we find that any unary term is $\sim_k$ equivalent to
an element of the set
\begin{equation}\label{eq:48}
  \{\,W \in \{\mathsf{U}, \mathsf{D}\}^\ast : W \colon k
  \rightsquigarrow \ell \text{ for some } \ell \in \mathbb{N} \cup \{\infty\}\,\}\rlap{ ,}
\end{equation}
and we may apply Remark~\ref{rk:3} to see that there are in fact no
further relations. Indeed, we may make~\eqref{eq:48} into a
classifying comodel $\alg[k]$ satisfying~\eqref{eq:37} by taking
\begin{equation}
\begin{gathered}
  \dbr{\mathsf{ht}_{> n}}(W) =
  \begin{cases}
    (\mathsf{tt}, W) & \text{if $W \colon k \rightsquigarrow \ell$
      with $\ell > n$;}\\
    (\mathsf{ff}, W) & \text{otherwise,}
  \end{cases} \\ \dbr{\mathsf{U}}(W) = W \mathsf{U}\text{ ,} \qquad
  \dbr{\mathsf{D}}(\mathsf{W}) =
  \begin{cases}
    W & \text{if $W \colon k \rightsquigarrow 0$;}\\
    W\mathsf{D} & \text{otherwise.}
  \end{cases}
\end{gathered}\label{eq:38}
\end{equation}
From the preceding calculations, we can now read off:
\begin{Prop}
  \label{prop:18}
  The behaviour category of the theory of affine Dyck words has
  object-set $\mathbb{N} \cup \{\infty\}$, and morphisms from $n$ to
  $m$ given by affine Dyck words $W \colon n \rightsquigarrow m$.
  Composition is given by concatenation of words. The comodel
  classifying the behaviour $k \in \mathbb{N} \cup \{\infty\}$ has
  underlying set~\eqref{eq:48}, and co-operations as in~\eqref{eq:38}.
\end{Prop}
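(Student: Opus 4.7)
The plan is to follow exactly the template established in the preceding examples of reversible input and stack, since most of the calculational work has already been done in the paragraphs preceding the statement. The argument falls naturally into two parts: first, identifying the admissible behaviours (which has in effect already been done, yielding the object-set $\mathbb{N} \cup \{\infty\}$ and the final comodel); and second, for each admissible behaviour $k$, constructing the classifying comodel $\alg[k]$ and reading off the hom-sets from its states.

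For the hom-description, I will invoke Remark~\ref{rk:3}: it suffices to check that~\eqref{eq:48}, equipped with the co-operations~\eqref{eq:38}, forms a comodel $\alg[k]$ of the theory of affine Dyck words, and that the generating state $\varepsilon \in \alg[k]$ has behaviour $k$. Verifying the comodel axioms is the bulk of the work but is essentially mechanical: one checks that the $\mathsf{ht}_{> n}$ component satisfies the rectangular band axioms for read-only state (immediate from the definition), that it commutes appropriately with $\mathsf{U}$ and $\mathsf{D}$ (using the fact that $W \mapsto W \mathsf{U}$ takes an affine Dyck word $k \rightsquigarrow \ell$ to one $k \rightsquigarrow \ell+1$, and similarly for $\mathsf{D}$ with a case split for $\ell = 0$), and the commutation axiom between $\mathsf{ht}_{> n}$ and $\mathsf{ht}_{> m}$ (which is trivial since $\dbr{\mathsf{ht}_{>n}}$ is side-effect free). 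The axiom $\mathsf{ht}_{> 0}(x, \mathsf{D}(x)) \equiv x$ is where the case $W \colon k \rightsquigarrow 0$ in the definition of $\dbr{\mathsf{D}}(W)$ is needed: when $W$ ends at height $0$, the branch $\dbr{\mathsf{D}}(W) = W$ coincides with the left branch, as required. That $\varepsilon$ has behaviour $k$ is a straightforward induction on terms using the explicit formulae.

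By Corollary~\ref{cor:3} and Theorem~\ref{thm:2}, maps $k \to \ell$ in $\mathbb{B}_\mathsf{T}$ then correspond to states of $\alg[k]$ of behaviour $\ell$. A short induction using~\eqref{eq:38} shows that the state $W \in \alg[k]$, where $W \colon k \rightsquigarrow \ell'$ for some $\ell'$, has behaviour equal to $\ell'$; hence maps $k \to \ell$ are precisely words $W$ with $W \colon k \rightsquigarrow \ell$. Finally, composition $m.n$ in the behaviour category is by definition substitution in $T(1)$, which under the identification of unary terms with affine Dyck words corresponds to concatenation of words; and the identity on $k$ is the empty word $\varepsilon \colon k \rightsquigarrow k$.

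The main obstacle, and the only step that needs care, will be the verification that~\eqref{eq:48} really is in bijection with $T(1) \quot \sim_k$: the preceding paragraphs sketched why every unary term reduces modulo $\sim_k$ to some element of~\eqref{eq:48}, but the \emph{injectivity} of this parametrisation is precisely what Remark~\ref{rk:3} buys us once we have produced the comodel $\alg[k]$. Thus the crux of the proof is the existence of a well-defined comodel structure on~\eqref{eq:38}, after which everything else is bookkeeping; no additional ideas beyond those already deployed for reversible input and for stack are required.
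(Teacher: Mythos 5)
Your proposal is correct and follows essentially the same route as the paper: identify the admissible behaviours via Remark~\ref{rk:2}, rewrite unary terms into the set~\eqref{eq:48}, use Remark~\ref{rk:3} to certify~\eqref{eq:48} as a set of $\sim_k$-representatives by exhibiting the comodel structure~\eqref{eq:38}, and then read off morphisms and composition via Corollary~\ref{cor:3}. The only point to be slightly careful about is that Remark~\ref{rk:3} asks for the condition~\eqref{eq:37} (in particular that $\dbr{W}_{\alg[k]}(\varepsilon) = W$ for $W$ in~\eqref{eq:48}) rather than merely the comodel axioms plus the behaviour of $\varepsilon$, but this is a direct computation with~\eqref{eq:38} and is implicit in your bookkeeping step.
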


The set of Dyck words of length $2n$ is well-known to have the
cardinality of the $n$th Catalan number
$C_n = \tfrac 1 {n+1}{2n \choose n}$. On the other hand, $C_n$ also
enumerates the set of well-bracketed expressions, such as
$((aa)a)(aa)$, composed of
$n+1$ $a$'s. In fact, there is a bijection between Dyck words of
length $2n$ and well-bracketed expressions of $(n+1)$ $a$'s which can
be obtained by interpreting a Dyck word $W$ as a set of instructions
for a stack machine, as follows:
\begin{enumerate}
\item Begin with a stack containing the single element $a$;
\item Read the next element of the Dyck word $W$:
  \begin{itemize}
  \item If it is $\mathsf{U}$, push an $a$ onto the stack;
  \item If it is $\mathsf{D}$, pop the top two elements $x,y$ of the stack
    and push $(xy)$ onto the stack.
  \end{itemize}
\item When $W$ is consumed, return the single element remaining on the stack.
\end{enumerate}

While the terms on which this stack machine operates are the
well-bracketed expressions of $a$'s, we can do something similar for
stacks of elements of
any set $V$ endowed with a constant $a$ and a binary operation $\ast$,
obtaining for each Dyck word $W$ an element of $V$ built from $a$'s
and $\ast$'s. We can understand this in terms of an
interpretation
$f \colon \mathbb{T}_{\mathsf{Dyck}} \rightarrow
\mathbb{T}_{\mathrm{Stack}}$ of the theory of affine Dyck words into
the theory of a $V$-valued stack, given as follows:
\begin{gather*}
  (\mathsf{ht}_{> 0})^f(x,y) = \mathsf{pop}(\lambda v.\, 
  \mathsf{push}_v(x),y) \\
  (\mathsf{ht}_{>n+1})^f(x,y) = \mathsf{pop}(\lambda v.\,
  (\mathsf{ht}_{>n})^f(\mathsf{push}_v(x), \mathsf{push}_v(y)),y)\\
  \mathsf{U}^f(x) = \mathsf{push}_a(x) \qquad 
  \mathsf{D}^f(x,y) = \mathsf{pop}(\lambda v.\,
  \mathsf{pop}(\lambda w.\, \mathsf{push}_{v \ast w}(x),
  x), y)
\end{gather*}

Here, for the (recursively defined) interpretation of the predicates
$\mathsf{ht}_{> n}$, we attempt to pop $n+1$ elements from the top of
our stack of $V$'s; if this succeeds, then we undo our pushes and
return $\mathsf{tt}$, while if it at any point fails, then we undo our
pushes and return $\mathsf{ff}$. For the interpretation of
$\mathsf{U}$ we simply push our constant $a \in V$ onto the stack;
while for the interpretation of $\mathsf{D}(x,y)$, we attempt to pop
the top two elements $v,w$ from the stack and push $v \ast w$ back on.
If this succeeds, we continue as $x$, but some care is needed if it
fails. By the fifth affine Dyck word equation, if our stack contains
exactly one element, then $\mathsf{D}^f$ should yield a stack with no
elements and continue as $x$; while by the second equation, if our
stack is empty, then $\mathsf{D}^f$ should do nothing and continue as
$y$. The forces the definition given above.

The \emph{dynamics} of the interpretation of (affine) Dyck words as
stack operations is captured by the induced cofunctor
$\mathbb{B}_f \colon \mathbb{B}_{\mathrm{Stack}} \rightarrow
\mathbb{B}_{\mathrm{Dyck}}$. It is an easy calculation to see that
this is given as follows:
\begin{itemize}
\item On objects, we map $S \in V^{\leqslant \omega}$ to the cardinality
  $\abs S \in \mathbb{N} \cup \{\infty\}$ of the initial segment of
  $\mathbb{N}$ on
  which $S$ is defined.
\item On morphisms, given $S \in V^{\leqslant \omega}$ and an affine
  Dyck word $W \colon \abs S \rightsquigarrow k$, we return the morphism $S
  \rightarrow S'$ which updates the stack $S$ via the sequence of
  $\mathsf{U}$'s and $\mathsf{D}$'s which specifies $W$.
\end{itemize}
In particular, we may consider the case where $(V, \ast, a)$ is the
set of well-bracketed expressions of $a$'s under concatenation.
Now give a Dyck word $W$, we may regard it as an affine Dyck word $W
\colon 1 \rightsquigarrow 1$; and now updating the singleton stack $a$
via $W \colon 1 = \abs{a} \rightsquigarrow 1$ yields precisely the
well-bracketed expression of $a$'s which corresponds to the given Dyck
word $W$.

\subsection{Store}
\label{sec:store}

Given a set $L$ of \emph{locations} and a family
$\vec V = (V_\ell : \ell \in L)$ of \emph{value} sets, the theory of
\emph{$\vec V$-valued store} comprises a copy of the theory of
$V_\ell$-valued state for each $\ell \in L$---with operations
$(\mathsf{put}_v^{(\ell)} : v \in V_\ell)$ and
$\mathsf{get}^{(\ell)}$, say---together with, for all
$\ell \neq k \in L$, all $v \in V_\ell$ and all $w \in V_k$, the
commutativity axiom:
\begin{equation}\label{eq:17}
  \mathsf{put}^{(\ell)}_v(\mathsf{put}^{(k)}_w(x)) \equiv
  \mathsf{put}^{(k)}_w(\mathsf{put}^{(\ell)}_v(x))\rlap{ .}
\end{equation}
By the argument of~\cite[Lemma~3.21 and~3.22]{Goncharov2020Toward},
these equations also imply the commutativity conditions
$\mathsf{put}^{(\ell)}_v(\mathsf{get}^{(k)}(\lambda w.\, x_w)) \equiv
\mathsf{get}^{(k)}(\lambda w.\, \mathsf{put}^{(\ell)}_v(x_w))$ and
$\mathsf{get}^{(\ell)}(\lambda v.\, \mathsf{get}^{(k)}(\lambda w.\,
x_{vw})) \equiv \mathsf{get}^{(k)}(\lambda w.\,
\mathsf{get}^{(\ell)}(\lambda v.\, x_{vw}))$.

A set-based comodel of this theory is a set $S$ endowed with an
$L$-indexed family of lens structures
$(g^{(\ell)} \colon S \rightarrow V_\ell,\, p^{(\ell)} \colon V_\ell
\times S \rightarrow S)$ which \emph{commute} in the sense that
\begin{equation*}
  p^{(k)}(v, p^{(\ell)}(u, s)) = p^{(\ell)}(v, p^{(k)}(u, s))
\end{equation*}
for all $\ell,k \in L$, $u \in V_\ell$ and $v \in V_k$. When $L$ is
finite and each $V_\ell$ is the same set $V$, this is the notion of
\emph{array} given in~\cite[\sec4]{Power2004From}. The final comodel
of this theory is the set $\prod_{\ell \in L}V_\ell$, under the
operations
\begin{equation*}
  \dbr{\mathsf{get}^{(\ell)}}(\vec v) = (v_\ell, \vec v) \qquad 
  \dbr{\mathsf{put}_v^{(\ell)}}(\vec v) = \vec v[v / v_\ell]\rlap{ .}
\end{equation*}

By similar arguments to those of the preceding
sections, we may now show that:

\begin{Prop}
  \label{prop:32}
  The behaviour category $\mathbb{B}_{\vec V}$ of $\vec V$-valued
  store has set of objects $\prod_{\ell \in L}V_\ell$, while a
  morphism $\vec v \rightarrow \vec w$ is unique when it exists, and
  exists just when $\vec v$ and $\vec w$ differ in only finitely many
  positions. The comodel classifying the behaviour $\vec v \in
  \prod_{\ell \in L}V_\ell$ is the sub-comodel of the final comodel on
  the set
  \begin{equation*}
    \{\vec w \in \textstyle\prod_{\ell \in L}V_\ell : \vec v \text{ and } \vec
    w \text{ differ in only finitely many positions}\}\rlap{ .}
  \end{equation*}
\end{Prop}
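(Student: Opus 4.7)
My plan is to follow the pattern established in Example~\ref{ex:35} for $V$-valued state: first normalise unary operations modulo $\sim_{\vec v}$, then use Remark~\ref{rk:3} to identify the classifying comodel, and finally read off the hom-sets via Corollary~\ref{cor:3}. The object-set of $\mathbb{B}_{\vec V}$ is immediate: by Theorem~\ref{thm:2} and Proposition~\ref{prop:33}, it is the underlying set of the final comodel, already computed above the proposition to be $\prod_{\ell \in L}V_\ell$.

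Fix $\vec v \in \prod_\ell V_\ell$. By Remark~\ref{rk:4}, every $m \in T(1)$ is $\sim_{\vec v}$-equivalent to a word $\sigma_1 \gg \cdots \gg \sigma_n \gg \mathrm{id}$ in the generators. Since each $\mathsf{get}^{(\ell)} \gg p \equiv p$ strictly in $T(1)$ by discardability, all $\mathsf{get}$s may be deleted, reducing $m$ to a pure word of $\mathsf{put}$s. The commutativity axiom~\eqref{eq:17} and the put--put law $\mathsf{put}^{(\ell)}_u \gg \mathsf{put}^{(\ell)}_w \equiv \mathsf{put}^{(\ell)}_w$ then further reduce $m$ to a word $\mathsf{put}^{(\ell_1)}_{u_1} \gg \cdots \gg \mathsf{put}^{(\ell_n)}_{u_n} \gg \mathrm{id}$ with pairwise distinct locations. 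The essential further step will be to establish
\begin{equation*}
  \mathsf{put}^{(\ell)}_{v_\ell} \gg p \sim_{\vec v} p \qquad \text{for all $p \in T(1)$,}
\end{equation*}
which I plan to derive from the second characterisation of $\sim_{\vec v}$ in Definition~\ref{def:36}: taking $t = \mathsf{get}^{(\ell)} \in T(V_\ell)$ and $m_u = \mathsf{put}^{(\ell)}_u \gg p$ for $u \in V_\ell$, the put--get axiom gives $t(m) = p$, whereas $t \gg m_{\vec v(t)} = \mathsf{get}^{(\ell)} \gg \mathsf{put}^{(\ell)}_{v_\ell} \gg p = \mathsf{put}^{(\ell)}_{v_\ell} \gg p$ by discardability. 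Combining this cancellation with commutativity to shuttle any redundant $\mathsf{put}^{(\ell_i)}_{v_{\ell_i}}$ to the front, each $m$ becomes $\sim_{\vec v}$-equivalent to a word $\mathsf{put}^{(\ell_{i_1})}_{u_{i_1}} \gg \cdots \gg \mathsf{put}^{(\ell_{i_k})}_{u_{i_k}} \gg \mathrm{id}$ in which the $\ell_{i_j}$ are distinct and each $u_{i_j} \neq v_{\ell_{i_j}}$, equivalently to a tuple $\vec w \in \prod_\ell V_\ell$ differing from $\vec v$ at finitely many coordinates.

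I then plan to apply Remark~\ref{rk:3} to the set $S_{\vec v}$ from the statement. Being a subset of $\prod_\ell V_\ell$ closed under the final-comodel operations---$\mathsf{get}^{(\ell)}$ leaves the state fixed and $\mathsf{put}^{(\ell)}_v$ alters one coordinate---it inherits a sub-comodel structure; under the identification of each $\vec w \in S_{\vec v}$ with its normal-form word, the distinguished state $\vec v$ corresponds to $\mathrm{id} \in T(1)$. To verify~\eqref{eq:37}: for $t \in T(A)$ and $\vec w \in S_{\vec v}$, the inherited co-operation gives $\dbr{t}_{S_{\vec v}}(\vec w) = (\vec w(t), \partial_t \vec w)$, and since $t$ is built from a finite $\mathbb{T}$-term, $\partial_t \vec w$ differs from $\vec w$ (and hence from $\vec v$) at only finitely many coordinates, so lies in $S_{\vec v}$ and matches the normal form of $\vec w.t^\flat$. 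Remark~\ref{rk:3} then identifies $S_{\vec v}$ as the classifying comodel $\alg[\vec v]$.

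Finally, Corollary~\ref{cor:3} will deliver the morphism claim: maps $\vec v \to \vec w$ in $\mathbb{B}_{\vec V}$ correspond to states of $\alg[\vec v] = S_{\vec v}$ of behaviour $\vec w$; and since every $\vec u \in S_{\vec v}$ inherits its behaviour $\vec u$ from the final comodel, such a state exists precisely when $\vec w \in S_{\vec v}$, and is then unique. The hard part will be the cancellation $\mathsf{put}^{(\ell)}_{v_\ell} \gg p \sim_{\vec v} p$ and the bookkeeping needed to iterate it under commutativity; once this is in place, Remark~\ref{rk:3} and Corollary~\ref{cor:3} supply the rest without further surprises.
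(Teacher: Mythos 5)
Your proof is correct and follows exactly the template the paper intends (the paper gives no details for this proposition, merely invoking ``similar arguments to those of the preceding sections''): normalise unary terms modulo $\sim_{\vec v}$ via get-discardability, commutativity, put--put, and the derived cancellation $\mathsf{put}^{(\ell)}_{v_\ell}\gg p\sim_{\vec v}p$ obtained from the put--get axiom and the defining relation $t(m)\sim_\beta t\gg m_{\beta(t)}$; then exhibit the classifying comodel via Remark~\ref{rk:3} and read off the hom-sets from Corollary~\ref{cor:3}. This is precisely the strategy carried out in detail for reversible input and the stack, so nothing further is needed.
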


For each $\ell \in L$, there is an obvious interpretation of the
theory of $V_\ell$-valued state into the theory of $\vec V$-valued
store, and this induces a cofunctor on behaviour categories
$\mathbb{B}_{\vec V} \rightarrow \nabla V_\ell$ which:
\begin{itemize}
\item On objects, maps $\vec v \in \mathbb{B}_{\vec V}$ to its
  component $v_\ell \in \nabla V_\ell$;
\item On morphisms, for each $\vec v \in \mathbb{B}_{\vec V}$, sends
  $v_\ell \rightarrow v'_\ell$ in $\nabla V_\ell$ to the unique map
  $\vec v \rightarrow \vec v[v'_\ell / v_\ell]$ in
  $\mathbb{B}_{\vec V}$.
\end{itemize}
This captures exactly the ``view update''
paradigm in database theory: on the one hand, projecting from the
state $\vec v$ to its component $v_\ell$ provides a \emph{view} on the
data encoded by $\vec v$; while lifting the morphism
$v_\ell \rightarrow v'_\ell$ to one
$\vec v \rightarrow \vec v[v'_\ell / v_\ell]$ encodes \emph{updating}
the state in light of the given update of the view. The pleasant
feature here is that all of this is completely automatic once we
specify the way in which $V_\ell$-valued state is to be interpreted
into $\vec V$-valued store.

\subsection{Tape}
\label{sec:tape-1}

Our final example is a variant on a particular case of the previous
one; it was introduced \emph{qua} monad
in~\cite{Goncharov2014Towards}, with the presentation given here due
to~\cite{Goncharov2020Toward} and,
independently,~\cite{Pattinson2016Program}. Given a set $V$, we
consider the constant $\mathbb{Z}$-indexed family of sets
$V^{(\mathbb{Z})} = (V : \ell \in \mathbb{Z})$. The theory of a
\emph{$V$-valued tape} is obtained by augmenting the theory of
$V^{(\mathbb{Z})}$-valued store with two new unary operations
$\mathsf{right}$ and $\mathsf{right}^{-1}$ satisfying the axioms
$\mathsf{right}^{-1}(\mathsf{right}(x))$,
$\mathsf{right}(\mathsf{right}^{-1}(x)) \equiv x$, and
$\mathsf{right}(\mathsf{put}_u^{(\ell)}(x)) \equiv
\mathsf{put}_u^{(\ell+1)}(\mathsf{right}(x))$ for all
$\ell \in \mathbb{Z}$. By arguing much as before, we see that this
last axiom implies also that
$\mathsf{right}(\mathsf{get}^{(\ell)}(x)) \equiv
\mathsf{get}^{(\ell+1)}(\lambda v.\, \mathsf{right}(x_v))$.

This theory encapsulates interaction with an doubly-infinite tape,
each of whose locations $\ell \in \mathbb{Z}$ contains a value in $V$
which can be read or updated via $\mathsf{get}$ and $\mathsf{put}$,
and whose head position may be moved right or left via
$\mathsf{right}$ and $\mathsf{right}^{-1}$. A comodel structure on a
set $S$ comprises a $\mathbb{Z}$-indexed family of commuting lens
structures
$(g^{(\ell)} \colon S \rightarrow V, p^{(\ell)} \colon V \times S
\rightarrow S)$ together with a bijection $r \colon S \rightarrow S$
such that $r(p^{(\ell+1)}(u,\vec v)) = p^{(\ell)}(u,r(\vec v))$ for
each $\ell \in \mathbb{Z}$. It is easy to see that the final comodel
of this theory is the final comodel $\alg[V^\mathbb{Z}]$ of
$\vec V$-valued store, augmented with the co-operations
$\dbr{\mathsf{right}}(\vec v)_k = v_{k+1}$ and
$\dbr{\mathsf{right}^{-1}}(\vec v)_{k} = v_{k-1}$. By similar
calculations to before, we now find that:
  
\begin{Prop}
  \label{prop:34}
  The behaviour category of the theory of $V$-valued tape has
  object-set $V^\mathbb{Z}$, while a map $\vec v \rightarrow \vec w$
  is an integer $i$ such that $\vec v_{(\thg) + i}$ and $\vec w$
  differ in only finitely many places. The comodel classifying the
  behaviour $\vec v \in V^\mathbb{Z}$ has underlying set
  \begin{equation*}
    \{(i \in \mathbb{Z}, \vec w \in V^\mathbb{Z}) : \vec v \text{ and } \vec
    w \text{ differ in only finitely many positions}\}
  \end{equation*}
  with operations
  $\dbr{\mathsf{right}^{\pm 1}}(i, \vec w) = (i \pm 1, \vec w)$,
  $ \dbr{\mathsf{get}^{(\ell)}}(i, \vec w) = (w_{i+\ell}, (i, \vec
  w))$, and
  $ \dbr{\mathsf{put}^{(\ell)}_v}(i, \vec w) = (i, \vec
  w[v/w_{i+\ell}])$.
\end{Prop}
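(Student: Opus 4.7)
The proof follows the same template as Proposition~\ref{prop:32} for $\vec V$-valued store, augmented to track the free $\mathbb{Z}$-action on tape positions coming from $\mathsf{right}^{\pm 1}$. The object-set and final-comodel description have already been identified in the paragraph preceding the statement; what remains is to (a) reduce unary terms modulo $\sim_{\vec v}$ to the proposed canonical forms $(i,\vec w)$, (b) verify via Remark~\ref{rk:3} that the set described in the statement, equipped with the given co-operations, is the comodel classifier $\alg[\vec v]$, and (c) read off the morphisms of $\mathbb{B}_\mathsf{T}$ using Corollary~\ref{cor:3}.

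\textbf{Canonical forms.} By Remark~\ref{rk:4}, every $m\in T(1)$ is $\sim_{\vec v}$-equivalent to a composite of the generators $\mathsf{right}^{\pm 1}$, $\mathsf{put}_u^{(\ell)}$ and $\mathsf{get}^{(\ell)}$; since each $\mathsf{get}^{(\ell)} \gg (\thg)$ is the identity operator, we may delete the $\mathsf{get}$'s. Repeated use of the shift axiom $\mathsf{right}(\mathsf{put}_u^{(\ell)}(x)) \equiv \mathsf{put}_u^{(\ell+1)}(\mathsf{right}(x))$ together with $\mathsf{right} \circ \mathsf{right}^{-1} \equiv \mathrm{id}$ migrates every occurrence of $\mathsf{right}^{\pm 1}$ to the tail, producing $m \sim_{\vec v} \mathsf{put}_{u_1}^{(\ell_1)} \gg \dots \gg \mathsf{put}_{u_n}^{(\ell_n)} \gg \mathsf{right}^i$ for some $i \in \mathbb{Z}$. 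The commutativity axioms~\eqref{eq:17} then permit us to merge entries with the same $\ell_j$ using the state $\mathsf{put}$--$\mathsf{put}$ axiom, yielding pairwise distinct $\ell_j$'s. Finally, since $\mathsf{get}^{(\ell)}$ on behaviour $\vec v$ returns $v_\ell$, we have $\mathsf{put}_{v_\ell}^{(\ell)} \sim_{\vec v} \mathrm{id}$ (as in the state calculation of Example~\ref{ex:33}), so we may assume $u_j \neq v_{\ell_j}$ throughout. Each residue thus encodes a unique pair $(i,\vec w)$ with $\vec w$ differing from $\vec v$ only at $\ell_1,\dots,\ell_n$.

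\textbf{Classifier and morphisms.} To justify that these are genuine equivalence-class representatives, we endow the set $S = \{(i,\vec w) : \vec v \text{ and } \vec w \text{ differ finitely}\}$ with the declared co-operations and verify the axioms of the theory of $V$-valued tape: the state axioms at each location, their pairwise commutativity, the involutivity of $\mathsf{right}^{\pm 1}$, and the shift axiom. These are all routine verifications on pairs, essentially because both coordinates update independently in the prescribed way. One then checks that the state $(0,\vec v) \in S$ has behaviour $\vec v$, and that for each of the normal forms from the previous paragraph, the evaluation at $(0,\vec v)$ lands in the expected equivalence class in the sense of~\eqref{eq:37}; Remark~\ref{rk:3} then identifies $(S,\dbr{\thg})$ with the classifier $\alg[\vec v]$. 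By Corollary~\ref{cor:3}, maps $\vec v \rightarrow \vec w$ in $\mathbb{B}_\mathsf{T}$ are exactly the states of $\alg[\vec v]$ of behaviour $\vec w$; since $\dbr{\mathsf{get}^{(\ell)}}(i,\vec x) = (x_{i+\ell},(i,\vec x))$, the behaviour of $(i,\vec x)$ is $\vec x_{(\thg)+i}$, so the state is determined by $i$ (taking $x_k = w_{k-i}$), and the condition $\vec x \sim_{fin}\vec v$ reduces to the stated condition that $\vec v_{(\thg)+i}$ and $\vec w$ agree cofinitely.

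\textbf{Main obstacle.} The principal technical work sits in the canonical-form step: one must carefully interleave the non-abelian shift action of $\mathsf{right}$ with the abelian commutation of $\mathsf{put}$s, ensuring that every rewrite in fact respects $\sim_{\vec v}$ (not just $\equiv_\mathbb{T}$). Once the normal forms are in hand, Remark~\ref{rk:3} does the heavy lifting, so the remaining verifications are mechanical transcriptions of those carried out for $\vec V$-valued store.
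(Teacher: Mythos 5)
Your proposal is correct and follows exactly the route the paper intends: the paper's own ``proof'' of this proposition is the single remark ``By similar calculations to before'', deferring to the canonical-form-plus-Remark~\ref{rk:3} template of the reversible-input, stack and store sections, and your write-up fills in precisely those calculations (normal forms $\mathsf{put}_{u_1}^{(\ell_1)} \gg \dots \gg \mathsf{put}_{u_n}^{(\ell_n)} \gg \mathsf{right}^i$ via the shift axiom, the classifier via Remark~\ref{rk:3}, and the morphisms via Corollary~\ref{cor:3}). No gaps; the identification of the behaviour of $(i,\vec x)$ as $\vec x_{(\thg)+i}$ and the resulting description of the hom-sets match the statement.
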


If the set $V$ comes endowed with a bijective pairing function
$\spn{\thg, \thg} \colon V \times V \rightarrow V$, say with inverse
$(p, q) \colon V \rightarrow V \times V$, then there is an
interpretation $f$ of the theory of $V$-valued reversible input into the
theory of $V$-valued tape, given by
\begin{align*}
  \mathsf{read}^f(\lambda v.\, x_v) \defeq \mathsf{get}^{(0)}(\lambda
  w.\, \mathsf{put}^{(0)}_{p(w)}(\mathsf{right}(x_{q(w)}))) \\
  (\mathsf{unread}_u)^f(x) \defeq \mathsf{left}(\mathsf{get}^{(0)}(\lambda
  w.\, \mathsf{put}^{(0)}_{\spn{w,u}}(x)))\rlap{ .}
\end{align*}
This induces a cofunctor on behaviour categories which acts as
follows. 
\begin{itemize}
\item On objects, for each $\vec v \in V^\mathbb{Z}$ in the behaviour
  category for a $V$-valued tape, we produce the object
  $f^\ast(\vec v) \in V^\mathbb{N}$ given by
  $f^\ast(\vec v)_n = q(v_n)$;
\item On morphisms, if we are given $\vec v \in V^\mathbb{Z}$ and a
  map $i \colon f^\ast(\vec v) \rightarrow W$ in the behaviour
  category for $V$-valued reversible input then our cofunctor lifts
  this to the morphism $i \colon \vec v \rightarrow \vec w$ in the
  behaviour category of $V$-valued tape where
  \begin{equation*}
    w_k =
    \begin{cases}
      v_{k+i} & \text{ if $k < -i$ or $k > N$;}\\
      p(v_{k+i}) & \text{ if $-i \leqslant k < 0$;}\\
      \spn{p(v_{k+i}), W_k} & \text{ if $0 \leqslant k$.}
    \end{cases}
  \end{equation*}
\end{itemize}

\section{Examples and applications: costructure}
\label{sec:costr-exampl}

In this final section, we illustrate our understanding of the
costructure functor by providing some sample calculations of the
behaviour categories associated to comonads on $\cat{Set}$.

\subsection{Coalgebras for polynomial endofunctors}
\label{sec:coalg-endof}

For any endofunctor $F \colon \cat{Set} \rightarrow \cat{Set}$, we can
consider the category $F\text-\mathrm{coalg}$ of \emph{$F$-coalgebras}, i.e., sets $X$ endowed
with a map $\xi \colon X \rightarrow FX$. As is well-known, for
suitable choices of $F$, such coalgebras can model diverse kinds of
automaton and transition system; see~\cite{Rutten2000Universal} for an
overview.

If $F$ is accessible, then the forgetful functor
$F\text-\mathrm{coalg} \rightarrow \cat{Set}$ will have a right
adjoint and be strictly comonadic, meaning that we can identify
$F\text-\mathrm{coalg}$ with the category of Eilenberg--Moore
coalgebras of the induced comonad $\mathsf{Q}_F$ on $\cat{Set}$; in
light of this, we call $\mathsf{Q}_F$ the \emph{cofree comonad} on the
endofunctor $F$. Explicitly, the values of the cofree comonad can be
described via the greatest fixpoint formula
\begin{equation}
  Q_F(V) = \nu X.\, V \times F(X)\rlap{ .}\label{eq:35}
\end{equation}

The objective of this section is to calculate the behaviour categories
of cofree comonads $\mathsf{Q}_F$ for some natural choices of $F$. To
begin with, let us assume that $F$ is \emph{polynomial} in the sense
of Section~\ref{sec:presheaf-comonads}, meaning that it is can be
written as a coproduct of representables
\begin{equation}\label{eq:41}
  F(X) = \textstyle\sum_{\sigma \in \Sigma} X^{\abs{\sigma}}
\end{equation}
for some set $\Sigma$ and family of sets
$(\abs{\sigma} : \sigma \in \Sigma)$. In this case, as is well-known,
the fixpoint~\eqref{eq:35} can be described as a set of \emph{trees}.

\begin{Defn}[$F$-trees]
  \label{def:21}
  Let $F$ be a polynomial endofunctor~\eqref{eq:41}.
  \begin{itemize}
  \item An \emph{$F$-path of length $k$} is a sequence
    $P = \sigma_0 e_1 \sigma_1 \cdots e_k \sigma_k$ where each
    $\sigma_i \in \Sigma$ and each $e_i \in \abs{\sigma_{i-1}}$.
  \item An \emph{$F$-tree} is a subset $T$ of the set of $F$-paths
    such that:
    \begin{enumerate}[(i)]
    \item $T$ contains a unique path of length $0$, written
      $\ast \in T$;
    \item If $T$ contains
      $\sigma_0 e_1 \cdots e_k \sigma_k e_{k+1}\sigma_{k+1}$, then it
      contains $\sigma_0 e_1 \cdots e_{k}\sigma_{k}$;
    \item If $T$ contains $\sigma_0 e_1 \cdots e_k\sigma_k$, then for
      each $e_{k+1} \in \abs{\sigma_k}$, it contains a unique path of
      the form
      $\sigma_0 e_1 \cdots e_k \sigma_k e_{k+1} \sigma_{k+1}$.
    \end{enumerate}
  \item If $V$ is a set, then a \emph{$V$-labelling} for an $F$-tree
    $T$ is a function $\ell \colon T \rightarrow V$.
  \item If $T$ is an $F$-tree and
    $P = \sigma_0 e_1 \cdots e_k \sigma_k\in T$, then $T_{P}$ is the
    $F$-tree
    \begin{equation*}
      T_{P} = \{\sigma_k e_{k+1} \cdots e_{m} \sigma_m : \sigma_0 e_1
      \cdots e_m \sigma_m \in T\}\rlap{ .}
    \end{equation*}
    If $\ell \colon T \rightarrow V$ is a labelling for $T$, then
    $\ell_P \colon T_P \rightarrow V$ is the labelling with
    $\ell_P(\sigma_k e_{k+1} \cdots e_{m} \sigma_m) = \ell(\sigma_0 e_1
    \cdots e_m \sigma_m)$.
  \end{itemize}
\end{Defn}

\begin{Lemma}
  \label{lem:6}
  The cofree comonad $\mathsf{Q}_F$ on a polynomial $F$ as
  in~\eqref{eq:41} is given as follows:
  \begin{itemize}
  \item $Q_F(V)$ is the set of $V$-labelled $F$-trees;
  \item The counit $\varepsilon_V \colon Q_F(V) \rightarrow V$ sends
    $(T, \ell)$ to $\ell(\ast) \in V$;
  \item The comultiplication
    $\delta_V \colon Q_F(V) \rightarrow Q_FQ_F(V)$ sends $(T, \ell)$
    to $(T, \ell^\sharp)$, where
    $\ell^\sharp \colon T \rightarrow Q_F(V)$ sends $P$ to
    $(T_P, \ell_P)$.\qed
  \end{itemize}
\end{Lemma}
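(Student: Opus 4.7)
The plan is to identify $Q_F(V)$ with the final coalgebra of the endofunctor $G_V(X) = V \times F(X)$ on $\cat{Set}$---which by the very formula~\eqref{eq:35} is its carrier---then exhibit the set $\mathcal{T}(V)$ of $V$-labelled $F$-trees as carrying such a final $G_V$-coalgebra structure, and finally read off the counit and comultiplication from the universal properties.

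First I would equip $\mathcal{T}(V)$ with a $G_V$-coalgebra structure. Given $(T,\ell)$, axiom (i) picks out a unique path $\ast = \sigma_0 \in T$ of length zero, and axiom (iii) then furnishes, for each $e \in \abs{\sigma_0}$, a unique extension $\sigma_0 e \sigma_e \in T$ and hence a subtree $(T_{\sigma_0 e \sigma_e}, \ell_{\sigma_0 e \sigma_e}) \in \mathcal{T}(V)$. Packaging these gives
\begin{equation*}
  \gamma_V \colon \mathcal{T}(V) \to V \times F(\mathcal{T}(V)), \qquad (T,\ell) \mapsto \bigl(\ell(\ast),\, \sigma_0,\, e \mapsto (T_{\sigma_0 e \sigma_e}, \ell_{\sigma_0 e \sigma_e})\bigr)\rlap{ .}
\end{equation*}
To see this is final, given any $G_V$-coalgebra $(X, \langle v, \sigma, g\rangle)$ with $g(x) \colon \abs{\sigma(x)} \to X$, define $h_\xi \colon X \to \mathcal{T}(V)$ by sending $x$ to the tree whose paths are generated by iterated application of $\sigma$ and $g$---i.e., the path $\sigma_0 e_1 \sigma_1 \cdots e_k \sigma_k$ reached from $x$ receives label $v(x_k)$ where $x_0 = x$, $\sigma_i = \sigma(x_i)$, $x_{i+1} = g(x_i)(e_{i+1})$. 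A straightforward induction on path length confirms both that $h_\xi$ is a $G_V$-coalgebra morphism, and that it is the unique such. This identifies $Q_F(V)$ with $\mathcal{T}(V)$; and since the counit $\varepsilon_V$ is by definition the first projection of the final coalgebra structure, we immediately obtain $\varepsilon_V(T,\ell) = \ell(\ast)$.

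For the comultiplication, recall that $\delta_V$ is characterised as the unique map of sets whose adjoint transpose $\mathcal{T}(V) \to R(\mathcal{T}(V))$ (where $R$ is the cofree $F$-coalgebra functor) is an $F$-coalgebra morphism from the canonical coalgebra structure on $\mathcal{T}(V)$; equivalently, as the unique $\delta_V \colon \mathcal{T}(V) \to \mathcal{T}(\mathcal{T}(V))$ satisfying $\varepsilon_{\mathcal{T}(V)} \circ \delta_V = \mathrm{id}$ together with $\psi_{\mathcal{T}(V)} \circ \delta_V = F(\delta_V) \circ \psi_V$, where $\psi$ denotes the $F$-component of $\gamma$. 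I would verify that the proposed assignment $(T,\ell) \mapsto (T, \ell^\sharp)$ satisfies both conditions: the first amounts to $\ell^\sharp(\ast) = (T_\ast, \ell_\ast) = (T,\ell)$, which is immediate.

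The main obstacle---though still routine---is the second, compatibility with $\psi$, which reduces to a bookkeeping identity: for any path $P \in T$ and extension $Pe\sigma \in T$, we have $(T_P)_{e\sigma} = T_{Pe\sigma}$ as $F$-trees and the labellings $(\ell_P)_{e\sigma}$ and $\ell_{Pe\sigma}$ coincide, so that applying $\ell^\sharp$ to the child subtree of $(T,\ell)$ along $e$ yields the child subtree of $(T, \ell^\sharp)$ along $e$. Once this compatibility is checked, uniqueness in the characterisation of $\delta_V$ forces $\delta_V(T,\ell) = (T, \ell^\sharp)$, completing the proof.
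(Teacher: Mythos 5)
Your argument is correct and is precisely the standard verification that the paper elides (the lemma is stated without proof as ``well-known''): you realise $Q_F(V)$ as the final coalgebra of $V \times F({-})$ per~\eqref{eq:35}, exhibit the set of $V$-labelled $F$-trees as a final coalgebra by unfolding, and then pin down $\varepsilon_V$ and $\delta_V$ by their universal characterisations, which is exactly the route the surrounding text gestures at. The only nitpick is that the ``bookkeeping identity'' must hold for suffixes $Q$ of arbitrary length --- i.e.\ $(T_P)_Q = T_{P\cdot Q}$ and $(\ell_P)_Q = \ell_{P \cdot Q}$, with the paths glued at the last vertex of $P$ --- since the required equation $(\ell^\sharp)_P = (\ell_P)^\sharp$ quantifies over all paths of $T_P$; your one-step version yields the general case by an immediate induction, so nothing is lost.
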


We may use this result to calculate the behaviour category
$\mathbb{B}_{F}$ of the comonad $\mathsf{Q}_F$. Clearly, objects of
$\mathbb{B}_F$ are elements of $Q_F(1)$, i.e., (unlabelled) $F$-trees.
Morphisms of $\mathbb{B}_F$ with domain $T$ are, by definition,
natural transformations $Q_T \Rightarrow \mathrm{id}$; but the functor
$Q_T$ is visibly isomorphic to the representable functor $(\thg)^T$,
so that by the Yoneda lemma, maps in $\mathbb{B}_F$ with domain $T$
correspond bijectively with elements $P \in T$. Given this, we may
easily read off the remainder of the structure in
Definition~\ref{def:22} to obtain:

\begin{Prop}
  \label{prop:28}
  Let $F$ be a polynomial endofunctor of $\cat{Set}$. The behaviour
  category $\mathbb{B}_F$ of the cofree comonad $\mathsf{Q}_F$ has:
  \begin{itemize}
  \item Objects given by $F$-trees $T$;
  \item Morphisms $P \colon T \rightarrow T'$ are elements $P \in T$
    such that $T_P = T'$;
  \item Identities are given by $1_T = \ast \colon T \rightarrow T$;
  \item Composition is given by
    $(\sigma_k e_{k+1} \cdots e_m \sigma_m) \circ (\sigma_0 e_1 \cdots
    e_{k} \sigma_k) = \sigma_0 e_1 \cdots e_m \sigma_m$.
  \end{itemize}
\end{Prop}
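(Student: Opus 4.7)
The plan is to work directly from Definition~\ref{def:22} applied to $\mathsf{Q}_F$, using the concrete description of the cofree comonad from Lemma~\ref{lem:6}.

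First I would identify the objects of $\mathbb{B}_F$. By Definition~\ref{def:22}, these are the elements of $Q_F(1)$; and by Lemma~\ref{lem:6}, $Q_F(1)$ is the set of $1$-labelled $F$-trees, which (since the labelling is trivial) is just the set of $F$-trees $T$. Next, to identify the morphisms, I would give an explicit description of the subfunctor $Q_T \hookrightarrow Q_F$ for each $T \in Q_F(1) = F\text-\mathrm{trees}$. Unwinding Lemma~\ref{lem:6}, $Q_T(V)$ is the set of those $V$-labelled trees whose underlying tree is $T$, that is, the set of labellings $\ell \colon T \to V$; so there is an isomorphism $Q_T \cong (\thg)^T$ natural in $V$, exhibiting $Q_T$ as representable.

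Once this is in hand, the remaining claims follow by Yoneda. A map in $\mathbb{B}_F$ with domain $T$ is, by definition, a natural transformation $Q_T \Rightarrow \mathrm{id}$, so by Yoneda such maps correspond bijectively to elements $P \in T$; the corresponding natural transformation $\tau^P \colon Q_T \Rightarrow \mathrm{id}$ sends a labelling $\ell \colon T \to V$ to $\ell(P) \in V$. To compute the codomain of $P \colon T \to ?$, I would use the characterisation~\eqref{eq:34} of $\mathrm{cod}(\tau^P)$ via the factorisation $\tau^P_\sharp \colon Q_T \to Q_{\mathrm{cod}(\tau^P)}$ of $\tau^P Q \circ \delta_T$. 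Tracing this through Lemma~\ref{lem:6}: $\delta_T$ sends a $Q_F(V)$-labelling $\lambda \colon T \to Q_F(V)$ of $T$ to the evident iterated labelling, and evaluating $\tau^P Q \circ \delta_T$ at $\lambda$ yields $\lambda(P) = (T_P, \ell_P) \in Q_F(V)$; hence this transformation factors uniquely through $Q_{T_P}$, giving $\mathrm{cod}(P \colon T \to \thg) = T_P$ as claimed.

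Finally the identities and composition fall out by direct computation. The identity $1_T \colon T \to T$ in $\mathbb{B}_F$ is by Definition~\ref{def:22} the natural transformation $\varepsilon_T \colon Q_T \to \mathrm{id}$, whose Yoneda element is $\varepsilon_T(\mathrm{id}_T) = \ast$, matching the claim $1_T = \ast$. For composition, given $P \colon T \to T_P$ and $P' \colon T_P \to (T_P)_{P'}$ with $P = \sigma_0 e_1 \cdots e_k \sigma_k$ and $P' = \sigma_k e_{k+1} \cdots e_m \sigma_m$, I would compute the Yoneda element of $\tau^{P'} \circ (\tau^P)^\sharp \colon Q_T \Rightarrow \mathrm{id}$ by evaluating at $\mathrm{id}_T \in Q_T(T)$, tracing through the factorisation to see that the result is precisely the concatenated path $\sigma_0 e_1 \cdots e_m \sigma_m \in T$, and noting that $(T_P)_{P'} = T_{PP'}$. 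The main obstacle is really only notational bookkeeping: verifying that the Yoneda-theoretic calculations align with the concrete tree-path description requires careful unwinding of the identification $Q_T \cong (\thg)^T$, but once this identification is set up, everything else is routine.
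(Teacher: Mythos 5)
Your proposal is correct and follows essentially the same route as the paper: identify the objects with $Q_F(1)$, observe that $Q_T\cong(\thg)^T$ is representable so that Yoneda identifies morphisms out of $T$ with elements $P\in T$, and then read off codomains, identities and composition from Definition~\ref{def:22}. The paper compresses the last step into ``we may easily read off the remainder of the structure''; your tracing of the factorisation~\eqref{eq:34} through $\delta$ to get $\mathrm{cod}(P)=T_P$ and of $\tau^{P'}\circ(\tau^P)^\sharp$ to get concatenation of paths is exactly the intended verification.
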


It is not hard to see that $\mathbb{B}_F$ is, in fact, the free
category on a graph: the generating morphisms are those of the form
$\sigma_0 e_1 \sigma_1$.

\begin{Rk}
  \label{rk:5}
  When $F$ is polynomial, the cofree comonad $Q_F$ is again
  polynomial: indeed, we have
  $Q_F(V) \cong \sum_{T \in F\text-\mathrm{tree}} V^T$. Thus
  $\mathsf{Q}_F$ is a presheaf comonad, and it will follow from
  Proposition~\ref{prop:27} below that it is in fact the presheaf
  comonad of the behaviour category $\mathbb{B}_F$. Thus, we arrive at
  the (not entirely obvious) conclusion that, for $F$ polynomial, the
  category of $F$-coalgebras is equivalent to the presheaf category
  $[\mathbb{B}_F, \cat{Set}]$.
\end{Rk}

\begin{Ex}
  \label{ex:2}
  Let $E$ be an alphabet. A \emph{deterministic automaton over $E$} is
  a set $S$ of states together with a function
  $(t,h) \colon S \rightarrow S^E \times \{\top, \bot\}$. For a state
  $s \in S$, the value $h(s)$ indicates whether or not $h$ is an
  accepting state; while $t(s)(e) \in S$ gives the state reached from
  $s$ by transition along $e \in E$.

  Deterministic automata are $F$-coalgebras for the polynomial functor
  $F(X) = \sum_{a \in \{\bot, \top\}} X^E$. It is easy to see that, in
  this case, the set of $F$-trees can be identified with the power-set
  $\P(E^\ast)$ via the assignment
  \begin{equation*}
    T \mapsto \{ e_1 \cdots e_n \in E^\ast : \sigma_0 e_1
    \dots
    \sigma_{n-1} e_n \top \in T\}\rlap{ .}
  \end{equation*}
  In these terms, the behaviour category $\mathbb{B}_F$ can be
  identified with the free category on the graph whose vertices are
  subsets of $E^\ast$, and whose edges are
  $e \colon L \rightarrow \partial_e L$ for each $L \subseteq E^\ast$
  $e \in E$, where
  $\partial_e L = \{ e_1 \cdots e_n \in E^\ast : ee_1 \cdots e_n \in L
  \}$. Note that this is precisely the transition graph of the final
  deterministic automaton over $E$.
\end{Ex}

\subsection{Coalgebras for non-polynomial endofunctors}
\label{sec:coalg-non-polyn}

When we consider cofree comonads over \emph{non-}polynomial
endofunctors $F$, things become more delicate.  To illustrate this,
let us consider the case of the \emph{finite multiset} endofunctor
\begin{equation*}
  M(X) = \textstyle\sum_{n \in \mathbb{N}} X^n / \mathfrak{S}_n\rlap{ .}
\end{equation*}

An $F$-coalgebra can be seen as a kind of non-deterministic transition
system. As in the preceding section, we have a description of the
associated cofree comonad in terms of trees:

\begin{Defn}[Symmetric trees]
  \label{def:25}
  A \emph{symmetric tree} $T$ is a diagram of finite sets and
  functions
  \begin{equation*}
    \cdots \xrightarrow{\partial} T_n
    \xrightarrow{\partial} \cdots \xrightarrow{\partial} T_1
    \xrightarrow{\partial} T_0
  \end{equation*}
  where $T_0 = \{\ast\}$. We may write $\abs{T}$ for the set
  $\sum_{k} T_k$. A \emph{$V$-labelling} for a symmetric tree $T$ is a
  function $\ell \colon \abs T \rightarrow V$. Given a $V$-labelled tree
  $(T, \ell)$ and $t \in T_k$, we write $(T_t, \ell_t)$ for the labelled
  tree with $(T_t)_n = \{ u \in T_{n+k} : \partial^k(u) = t\}$, and with
  $\partial$'s and labelling inherited from $T$. An \emph{isomorphism}
  $\theta \colon (T, \ell) \rightarrow (T', \ell')$ of $V$-labelled
  trees is a family of functions $\theta_n \colon T_n \rightarrow T'_n$
  commuting with the $\partial$'s and the maps to $V$.
\end{Defn}

\begin{Lemma}
  \label{lem:7}
  The cofree comonad on the finite multiset endofunctor $M$ has:
  \begin{itemize}
  \item $Q_M(V)$ given by the set of isomorphism-classes of
    $V$-labelled symmetric trees;
  \item The counit $\varepsilon_V \colon Q_M(V) \rightarrow V$ given
    by $(T, \ell) \mapsto \ell(\ast)$;
  \item The comultiplication
    $\delta_V \colon Q_M(V) \rightarrow Q_MQ_M(V)$ given by
    $(T, \ell) \mapsto (T, \ell^\sharp)$, where
    $\ell^\sharp \colon \abs T \rightarrow Q_M(V)$ sends $t \in T_k$
    to $(T_t, \ell_t)$.\qed
  \end{itemize}
\end{Lemma}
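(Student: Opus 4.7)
The plan is to identify $Q_M(V)$ with iso-classes of $V$-labelled symmetric trees by leveraging the general formula $Q_M(V) = \nu X.\, V \times M(X)$ from~\eqref{eq:35}, and then to read the counit and comultiplication directly off the induced coalgebra structure. Since $M$ is a countable coproduct of representables quotiented by finite group actions, it is finitary, and so the endofunctor $H_V(X) \defeq V \times M(X)$ is finitary too; Ad\'amek's theorem then ensures its terminal coalgebra exists and is computed as the limit of the terminal sequence $1 \leftarrow H_V(1) \leftarrow H_V^2(1) \leftarrow \cdots$.

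First I would construct a bijection between this limit and the set of iso-classes of $V$-labelled symmetric trees. In the forward direction, a labelled tree $(T, \ell)$ has a canonical truncation to depth $n$ at each stage; reading the root label and passing to the multiset of immediate subtrees recursively converts the first $n$ levels into an element of $H_V^n(1)$, and the symmetric-group quotients in $M$ automatically make this assignment isomorphism-invariant. Conversely, a coherent family in the limit determines, by induction on depth and after choosing representatives of each multiset, finite sets $T_n$ with boundary maps and labels; different choices yield isomorphic trees, so we obtain a well-defined iso-class.

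With this bijection in hand, the universal coalgebra structure on $Q_M(V)$ transports to the map sending $(T, \ell)$ to $\bigl(\ell(\ast), [(T_{t_1}, \ell_{t_1}), \ldots, (T_{t_k}, \ell_{t_k})]\bigr)$, where $t_1,\dots,t_k$ enumerate $T_1$ and the bracket denotes the multiset class. Postcomposing with the projection to $V$ yields the stated counit $\varepsilon_V$. For the comultiplication, the general theory of cofree comonads characterises $\delta_V$ as the unique $H_V$-coalgebra morphism $Q_M(V) \to Q_M Q_M(V)$ extending the structure map; by coinduction this must relabel each $t \in \abs{T}$ by its subtree $(T_t, \ell_t)$, giving exactly $(T, \ell) \mapsto (T, \ell^\sharp)$.

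The main obstacle will be the careful treatment of the symmetric-group quotients in the bijection of the second step: one must verify that the inductive choice of representatives in $M$ at each level assembles into a single iso-class, independent of the choices made. The polynomial case of Lemma~\ref{lem:6} sidesteps this issue precisely because no quotients are taken; here, the argument is a standard but fiddly coherence check, best handled by observing that each finite truncation $H_V^n(1)$ has only finitely many classes and that the transition maps between truncations are surjective, so that any coherent family admits a simultaneously compatible choice of representatives.
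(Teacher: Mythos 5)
The paper states this lemma without proof (it is given as standard, with a \qed), so there is no argument of the author's to compare against; judged on its own terms, your proof has a genuine gap at its foundation. You deduce from finitariness of $H_V(X) = V \times M(X)$ that its terminal coalgebra is the limit of the chain $1 \leftarrow H_V(1) \leftarrow H_V^2(1) \leftarrow \cdots$. That is the wrong dual of Ad\'amek's theorem: finitariness (preservation of filtered colimits) gives the \emph{initial algebra} as the colimit of the initial $\omega$-chain, and guarantees that a terminal coalgebra \emph{exists}, but it does not imply that the terminal chain converges at $\omega$; for that one needs preservation of limits of $\omega^{\mathrm{op}}$-chains. The finite powerset functor $P_f$ --- finitary, and discussed at the end of this very subsection of the paper --- is the standard counterexample: by a theorem of Worrell its terminal chain converges only at $\omega + \omega$, and the $\omega$-limit strictly contains the terminal coalgebra. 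The principle you invoke would therefore equally ``prove'' the false statement that the terminal coalgebra of $P_f$ is the set of iso-classes of all finitely branching trees. Relatedly, your closing coherence check rests on the claim that each $H_V^n(1)$ has only finitely many elements, which already fails at $n=2$ even for $V = 1$ (finite multisets of natural numbers form an infinite set).

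The gap is fillable, but only by exploiting a feature specific to $M$ rather than generic finitariness: the maps $M(f)$ preserve the cardinality (counted with multiplicity) of a multiset, so in any coherent family the multiset attached to a given node has constant finite size along the chain, and the problem of choosing compatible liftings or matchings level by level is an everywhere-nonempty, finitely branching tree of choices. K\"onig's lemma then yields both surjectivity and injectivity of the canonical comparison $H_V(\lim_n H_V^n(1)) \rightarrow \lim_n H_V^{n+1}(1)$, so the terminal chain does converge at $\omega$ for this functor; this is precisely what breaks for $P_f$, where cardinalities can drop along the chain. With that argument supplied --- or, alternatively, by verifying directly that the set of iso-classes of $V$-labelled symmetric trees with structure map $(T,\ell) \mapsto \bigl(\ell(\ast), [\,(T_t,\ell_t) : t \in T_1\,]\bigr)$ is a terminal $H_V$-coalgebra, where the same K\"onig-type argument is still needed to show that trees agreeing on all finite truncations are isomorphic --- the rest of your proof, reading off $\varepsilon_V$ and $\delta_V$ from the universal property by coinduction, goes through.
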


Given a symmetric tree $T$, we call $t \in T_k$ \emph{rigid} if any
automorphism of $T$ fixes $t$.

\begin{Prop}
  \label{prop:9}
    The behaviour category of the cofree comonad
  $\mathsf{Q}_M$ has: 
  \begin{itemize}
  \item Objects given by isomorphism-class representatives of
    symmetric trees $T$;
  \item Morphisms $t \colon T \rightarrow T'$ are rigid elements $t \in T$
    such that $T_t \cong T'$;
  \item The identity on $T$ is $\ast \colon T \rightarrow T$;
  \item The composite of $t \colon T \rightarrow T'$ and $u \colon
    T' \rightarrow T''$ is $\theta(u) \colon T \rightarrow T''$, where
    $\theta$ is any tree isomorphism $T' \rightarrow T_t$.
  \end{itemize}
\end{Prop}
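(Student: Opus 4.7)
The plan is to apply Definition~\ref{def:22} directly, using Lemma~\ref{lem:7} to unfold $\mathsf{Q}_M$ concretely. Objects pose no problem, since $Q_M(1)$ is by definition the set of isomorphism classes of symmetric trees; all the real work lies in identifying the hom-sets.

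The key technical observation is that, for each representative symmetric tree $T$, the subfunctor $(Q_M)_T \colon \cat{Set} \to \cat{Set}$ is naturally isomorphic to the quotient $(\thg)^{\abs{T}}/\mathrm{Aut}(T)$, where $\mathrm{Aut}(T)$ acts by precomposition on labellings. Indeed, an element of $(Q_M)_T(A)$ is an isomorphism class $[T',\ell]$ with $T' \cong T$, and after fixing an iso $T' \cong T$ it becomes a labelling $\ell \colon \abs T \to A$ well-defined modulo the $\mathrm{Aut}(T)$-action. A Yoneda-style argument then identifies natural transformations $\tau \colon (Q_M)_T \Rightarrow \mathrm{id}$ with $\mathrm{Aut}(T)$-fixed elements $t \in \abs T$, i.e., precisely the rigid elements: any such $\tau$ is determined by $t \defeq \tau_{\abs T}([1_{\abs T}])$, and well-definedness on the quotient forces $g(t) = t$ for every $g \in \mathrm{Aut}(T)$.

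Given this bijection, I would read off the codomain of the morphism $\tau_t$ associated to a rigid $t$ from the factorisation~\eqref{eq:34}: evaluating $\tau_t Q \circ \delta_T$ on $[T, 1_{\abs T}] \in (Q_M)_T(\abs T)$ and applying Lemma~\ref{lem:7} yields $[T_t, \iota] \in Q_M(\abs T)$, where $\iota$ is the inclusion $\abs{T_t} \hookrightarrow \abs T$; this lies precisely in the subfunctor $(Q_M)_{T_t}$, so $\mathrm{cod}(\tau_t) = T_t$. The identity on $T$ is the counit $\varepsilon_T \colon [T,\ell] \mapsto \ell(\ast)$, corresponding to the tautologically rigid $\ast \in T_0$. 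For composition, $\tau_t^\sharp \colon (Q_M)_T \Rightarrow (Q_M)_{T_t}$ sends $[T,\ell]$ to $[T_t, \ell\vert_{T_t}]$; and to feed this into $\upsilon_u$, whose domain is $(Q_M)_{T'}$ for the chosen representative $T'$ of the iso class of $T_t$, one must transport $u \in T'$ along any iso $\theta \colon T' \to T_t$, producing the element $\theta(u) \in T_t \subseteq \abs T$. The rigidity of $u$ in $T'$ ensures $\theta(u)$ is independent of the choice of $\theta$, and a direct unwinding gives $\upsilon_u \circ \tau_t^\sharp = \tau_{\theta(u)}$.

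The main obstacle will be the bookkeeping around representatives versus isomorphism classes, concretely the verification that $\theta(u)$ is rigid as an element of $T$ and not merely of $T_t$. Every automorphism of $T$ fixes $t$ by rigidity of $t$, hence restricts to an automorphism of $T_t$ fixing $\theta(u)$ (by rigidity of $u$ in $T'$, transported along $\theta$); conversely, any automorphism of $T_t$ extends to an automorphism of $T$ by the identity on the complement of the subtree rooted at $t$, a straightforward compatibility check with the parent maps $\partial$. Thus rigidity in $T_t$ and rigidity in $T$ coincide on elements of $T_t$, making $\theta(u)$ rigid in $T$; and the identification $(T_t)_{\theta(u)} \cong T'_u \cong T''$ via the restriction of $\theta$ exhibits the composite as the claimed morphism $\theta(u) \colon T \to T''$.
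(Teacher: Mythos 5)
Your proposal is correct and follows essentially the same route as the paper: identify $Q_T$ with the quotient $(\thg)^{\abs{T}}/\mathrm{Aut}(T)$, apply a Yoneda/colimit argument to match natural transformations $Q_T \Rightarrow \mathrm{id}$ with rigid elements, and read the codomain off the factorisation~\eqref{eq:34}. Your explicit treatment of composition (independence of $\theta(u)$ from the choice of $\theta$, and rigidity of $\theta(u)$ in $T$) just fills in details the paper dismisses as ``tracing through the remaining aspects''.
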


\begin{proof}
  Let us write $\mathsf{Q} = \mathsf{Q}_M$. Clearly the object-set
  $Q(1)$ of the behaviour category can be identified with a set of
  isomorphism-class representatives of symmetric trees. Now, for any
  such representative $T$, the subfunctor $Q_T \subseteq Q$ sends a
  set $V$ to the set of all isomorphism-classes of $V$-labellings of
  $T$, which is easily seen to be the quotient
  $V^{\abs T} \quot \mathrm{Aut}(T)$ of $V^{\abs T}$ by the evident
  action of the group of tree automorphisms of $T$. Thus, by the
  Yoneda lemma, the set of natural transformations
  $Q_T \Rightarrow \mathrm{id}$ can be identified with the set of
  elements $t \in \abs{T}$ which are fixed by the $\mathrm{Aut}(T)$
  action, i.e., the rigid elements of $T$. For a given rigid element
  $t \in \abs{T}$, the corresponding $Q_T \Rightarrow \mathrm{id}$
  sends a $V$-labelling $\ell \colon \abs{T} \rightarrow V$ in
  $Q_T(V)$ to $\ell(t) \in V$; and it follows that the unique
  factorisation in~\eqref{eq:34} is of the form
  $Q_T \Rightarrow Q_{T'}$ where $T' \cong T_t$. Tracing through the
  remaining aspects of the definition of behaviour category yields the
  result.
\end{proof}

For a similar example in this vein, we may calculate the behaviour
category of the cofree comonad generated by the finite powerset
functor $P_f$. In this case, things are even more degenerate: the
behaviour category turns out to be the \emph{discrete} category on the
final $P_f$-coalgebra.

\subsection{Local homeomorphisms}
\label{sec:local-homeomorphisms}

For our final example, we compute the behaviour category of the
comonad classifying local homeomorphisms over a topological space.

\begin{Defn}[Reduced power]
  \label{def:28} If $A,X$ are sets and $\F$ is a filter of subsets of
  $X$, then two maps $\varphi, \psi \colon X \rightarrow A$ are
  \emph{$\F$-equivalent} when
  $\{ x \in X : \varphi(x) = \psi(x)\} \in \F$. The \emph{reduced
    power} $A^\F$ is the quotient of $A^X$ by $\F$-equivalence.
\end{Defn}

\begin{Defn}[Sheaf comonad]
  \label{def:26}
  Let $X$ be a topological space. The \emph{sheaf comonad}
  $\mathsf{Q}_{X}$ is the accessible comonad on $\cat{Set}$ induced by
  the adjunction
  \begin{equation}\label{eq:49}
    \cd{
      {\cat{Lh} / X}
      \ar@<-4.5pt>[r]_-{U}
      \ar@{<-}@<4.5pt>[r]^-{C} \ar@{}[r]|-{\top} &
      {\cat{Set} / X} \ar@<-4.5pt>[r]_-{\Sigma} \ar@{<-}@<4.5pt>[r]^-{\Delta} \ar@{}[r]|-{\top} &
      {\cat{Set}} \rlap{ ,}
    }
  \end{equation}
  where $\cat{Lh}/X$ is the category of local homeomorphisms over $X$,
  where $U$ is the evident forgetful functor, and where $C$ sends
  $p \colon A \rightarrow X$ to the space of germs of partial sections
  of $p$. If we write $\N_x$ for the filter of open neighbourhoods of
  $x \in X$, then then this comonad has
  $Q_{X}(A) = \sum_{x \in X} A^{\N_x}$, and counit and comultiplication
  \begin{equation}\label{eq:50}
    \begin{aligned}
      \varepsilon_A \colon \textstyle\sum_{x} A^{\N_x} & \rightarrow A & \quad
      \delta_A \colon \textstyle\sum_{x} A^{\N_x} &
      \rightarrow \textstyle\sum_{x} \bigl(\sum_{x'}
      A^{\N_{x'}}\bigr)^{\N_x}\\
      (x,\varphi) & \mapsto \varphi(x) & (x, \varphi) & \mapsto \bigl(x,
      \lambda y.\, (y, \varphi)\bigr)\rlap{ .}
    \end{aligned}
  \end{equation}
\end{Defn}

The adjunction in~\eqref{eq:49} is in fact \emph{strictly} comonadic,
so that we can identify the category of $\mathsf{Q}_X$-coalgebras with
the category of local homeomorphisms (=sheaves) over $X$.

\begin{Prop}
  \label{prop:29}
  Let $X$ be a topological space. The behaviour category of the sheaf
  comonad $\mathsf{Q}_X$ is the poset $(X, \leqslant)$ of points of
  $X$ under the specialisation order: thus $x \leqslant y$ just when
  every open set containing $x$ also contains $y$.
\end{Prop}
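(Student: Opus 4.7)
The plan is to unpack the behaviour category $\mathbb{B}_{\mathsf{Q}_X}$ directly from Definition~\ref{def:22}, using the explicit formulae in~\eqref{eq:50}. For the objects, note that $Q_X(1) = \sum_{x \in X} 1^{\N_x}$, and since each $1^{\N_x}$ is a singleton, the object-set of $\mathbb{B}_{\mathsf{Q}_X}$ can be canonically identified with the underlying set of $X$. Under this identification, the subfunctor $(Q_X)_x \subseteq Q_X$ is simply the ``germs at $x$'' functor $A \mapsto A^{\N_x}$.

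The main step is to compute the hom-sets: namely, to show that $\cat{Nat}((\thg)^{\N_x}, \mathrm{id})$ is in bijection with the set $\{y \in X : x \leqslant y\}$ of specialisations of $x$. For this I would express $A^{\N_x}$ as the filtered colimit $\colim_{U \in \N_x^{\mathrm{op}}} A^U$ (indexed by $\N_x$ ordered by reverse inclusion, with transition maps given by restriction). A natural transformation $(\thg)^{\N_x} \Rightarrow \mathrm{id}$ is then a compatible family $(\tau_U \colon (\thg)^U \Rightarrow \mathrm{id})_{U \in \N_x}$, and by the Yoneda lemma each $\tau_U$ is given by evaluation at some $y_U \in U$. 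The compatibility condition—that $\tau_V$ agrees with $\tau_U$ precomposed with restriction along $V \hookrightarrow U$ for every $V \subseteq U$—immediately forces $y_V = y_U$; hence all the $y_U$'s are a common element $y \in \bigcap_{U \in \N_x} U$, which by the paper's convention is precisely the condition $x \leqslant y$. Thus $\mathbb{B}_{\mathsf{Q}_X}(x, \thg)$, as a disjoint union over $y$ of hom-sets $\mathbb{B}_{\mathsf{Q}_X}(x, y)$, has exactly one element when $x \leqslant y$ and is empty otherwise.

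It remains to check that the codomain assignment and composition of the behaviour category are the expected ones for the poset $(X, \leqslant)$. For the codomain, I would trace the defining square~\eqref{eq:34}: given $\tau_y \colon (Q_X)_x \Rightarrow \mathrm{id}$ corresponding to $y \geqslant x$, the composite $\tau_y Q_X \circ \delta_x$ sends $(x, \varphi) \in (Q_X)_x(A)$ first to $(x, \lambda z.\, (z,\varphi))$ and then, via evaluation of the germ at $y$, to $(y, \varphi) \in Q_X(A)$; this factors through $\iota_y$, identifying the codomain as $y$. Unitality is immediate because $\varepsilon_x$ corresponds to evaluation at $y = x$, and associativity follows from a direct verification that composing the natural transformations corresponding to $y$ and $z \geqslant y$ gives the one corresponding to $z$, matching the unique composite $x \leqslant z$ in the poset.

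The one point where care is needed is the filtered-colimit/Yoneda argument, since it depends on the fact that the restriction maps in the colimit diagram are genuine restriction of set-theoretic functions; once this is correctly set up, the rest of the verification is mechanical. No additional obstacle is foreseen.
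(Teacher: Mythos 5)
Your proposal is correct and follows essentially the same route as the paper's proof: identifying the objects with $Q_X(1)=X$, writing $(Q_X)_x = (\thg)^{\N_x}$ as a filtered colimit of representables so that Yoneda identifies $\mathrm{Nat}((Q_X)_x,\mathrm{id})$ with $\bigcap_{U \in \N_x}U$, i.e.\ the specialisation upset of $x$, and then reading off the codomain from the factorisation~\eqref{eq:34}. The only difference is that you spell out the compatibility and codomain computations in slightly more detail than the paper does.
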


\begin{proof}
  Writing $\mathsf{Q}$ for $\mathsf{Q}_X$, we clearly have $Q(1) = X$,
  so that objects of the behaviour category are points of $X$. To
  characterise the morphisms with domain $x \in X$, we observe that
  the subfunctor $Q_x \subseteq Q$ is the reduced power functor
  $(\thg)^{\N_x}$, which can also be written as the directed colimit
  of representable functors $\mathrm{colim}_{U \in \N_x} (\thg)^U$.
  Thus, by the Yoneda lemma, the set of natural transformations
  $Q_x \Rightarrow \mathrm{id}$ can be identified with the filtered
  intersection $\bigcap_{U \in \N_x} U$, i.e., with the upset
  $\{y \in X : x \leqslant y\}$ of $x$ for the specialisation order.
  Given $y \geqslant x$, the corresponding natural transformation
  $Q_x \Rightarrow \mathrm{id}$ has components
  $\varphi \mapsto \varphi(y)$; whence the unique factorisation
  in~\eqref{eq:34} is of the form $Q_x \Rightarrow Q_{y}$. By
  definition of behaviour category, we conclude that
  $\mathbb{B}_\mathsf{Q}$ is the specialisation poset of $X$.
\end{proof}

What we learn from this is that a local homeomorphism
$p \colon S \rightarrow X$ can be seen as a coalgebraic structure with
set of ``states'' $S$, with the ``behaviour'' associated to a state
$s$ given by $p(s)$, and with the possibility of transitioning
uniformly from a state $s$ of behaviour $x$ to a state $s'$ of
behaviour $y$ whenever $x \leqslant y$. This is intuitively easy to
see: given $s \in S$ of behaviour $x$, we pick an open neighbourhood
$U \subseteq S$ mapping homeomorphically onto an open
$V = p(U) \subseteq X$. Since $x = p(s) \in V$ and $x \leqslant y$,
also $y \in V$ and so we may define $s'$ of behaviour $y$ to be
$(\res p U)^{-1}(y) \in U$.

\bibliographystyle{acm}
\bibliography{bibdata}
\end{document}